\newcolumntype{R}{>{$}\num{r}<{$}} % math-mode version of "l" column type
\newcommand{\N}{\mathbb{N}}
\newcommand{\R}{\mathbb{R}}
\newcommand{\NQP}{\mathcal{NQP}}
\renewcommand{\P}{\mathcal{P}}
\DeclareMathAlphabet{\mymathbb}{U}{BOONDOX-ds}{m}{n}
\newcommand{\dom}{\preceq_{D}}
\newcommand{\ndom}{\npreceq_{D}}
\newcommand{\lexsmaller}{\prec_{lex}}
\newcommand{\lexsmallereq}{\preceq_{lex}}
\newcommand{\incoming}[1]{\delta^{-}(#1)}
\newcommand{\outgoing}[1]{\delta^{+}(#1)}
\newcommand{\bigoh}[1]{\mathcal{O}\left(#1\right)}
\newcommand{\lpl}{\mathtt{lastProcessedPath}}
\newcommand{\Null}{\mathtt{NULL}}
\newcommand{\mda}{\hyperref[algo:mda*]{T-MDA}}
\newcommand{\bda}{\hyperref[algo:bda*]{T-BDA}}
\newcommand{\boaClassic}{$\text{BOA}^*$}
\newcommand{\boaEnh}{$\text{BOA}^*_{enh}$}
\newcommand{\boaBidi}{$\text{BOBA}^*$}
\definecolor{mintgreen}{rgb}{0.0, 0.47, 0.44}
\definecolor{alizarin}{rgb}{0.82, 0.1, 0.26}
\definecolor{applegreen}{rgb}{0.55, 0.71, 0.0}
\newtheorem{theorem}[]{Theorem}[]
\newtheorem{lemma}[theorem]{Lemma}
\newtheorem{corollary}[theorem]{Corollary}
\newtheorem{proposition}[theorem]{Proposition}
\theoremstyle{definition}
\newtheorem{definition}[theorem]{Definition}
\newtheorem{example}[theorem]{Example}
\newtheorem{remark}{Remark}
\title{Targeted Multiobjective Dijkstra Algorithm}
\newcommand\blfootnote[1]{%
  \begingroup
  \renewcommand\thefootnote{}\footnote{#1}%
  \addtocounter{footnote}{-1}%
  \endgroup
}
\author{ \href{https://orcid.org/0000-0002-4197-0893}{\includegraphics[scale=0.06]{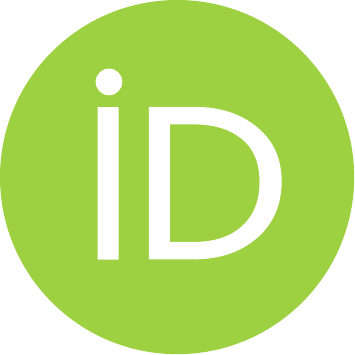}\hspace{1mm}Pedro Maristany de las Casas}\\
	%\thanks{Use footnote for providing further information about author (webpage, alternative address)---\emph{not} for acknowledging funding agencies.} \\
	Network Optimization \\
	Zuse Institute Berlin\\
	Germany, Berlin, 14195\\
	\texttt{maristany@zib.de} \\
	\And
	\href{https://orcid.org/0000-0002-5735-9415}{\includegraphics[scale=0.06]{orcid.pdf}\hspace{1mm}Luitgard Kraus} \\
	Network Optimization \\
	Zuse Institute Berlin\\
	Germany, Berlin, 14195\\
	\texttt{kraus@zib.de} \\
	%% examples of more authors
	\And
	\href{https://orcid.org/0000-0003-0681-4585}{\includegraphics[scale=0.06]{orcid.pdf}\hspace{1mm}Antonio Sedeño-Noda} \\
	Departamento de Matemáticas, Estadística e Investigación Operativa\\
	Universidad de La Laguna\\
	Spain, San Cristóbal de La Laguna, 38271\\
	\texttt{asedeno@ull.edu.es} \\
	\And
	\href{https://orcid.org/0000-0001-7223-9174}{\includegraphics[scale=0.06]{orcid.pdf}\hspace{1mm}Ralf Borndörfer} \\
	Network Optimization \\
	Zuse Institute Berlin\\
	Germany, Berlin, 14195\\
	\texttt{borndoerfer@zib.de} \\
}
\begin{document}
	\maketitle
    \blfootnote{The researches affiliated to the Zuse Institute Berlin conducted this work within the Research Campus MODAL - Mathematical Optimization and Data Analysis Laboratories -, funded by the German Federal Ministry of Education and Research (BMBF) (fund number 05M20ZBM).}
	\begin{abstract}
		In this paper, we introduce the Targeted Multiobjective Dijkstra Algorithm (T-MDA), a label setting algorithm for the One-to-One Multiobjective Shortest Path (MOSP) Problem.
		The T-MDA is based on the recently published Multiobjective Dijkstra Algorithm (MDA) and equips it with A*-like techniques. The resulting speedup is comparable to the speedup that the original A* algorithm achieves for  Dijkstra's algorithm.
		%In this paper, we introduce the Multiobjective A* Algorithm (\mda), a label setting algorithm for the One-to-One Multiobjective Shortest Path (MOSP) Problem. 
		%It extends the recently published Multiobjective Dijkstra algorithm (MDA), which is currently the best general MOSP algorithm, with the ideas used to design the classical A* algorithm to achieve an analogous performance improvement. 
		Unlike other methods from the literature, which rely on special properties of the biobjective case, the T-MDA works for any dimension. To the best of our knowledge, it gives rise to the first efficient implementation that can deal with large scale instances with more than two objectives.
		A version tuned for the biobjective case, the T-BDA, outperforms state-of-the-art methods on almost every instance of a standard benchmark testbed that is not solvable in fractions of a second.
	\end{abstract}

	% keywords can be removed
	\keywords{Multiobjective Shortest Path \and Label Setting Algorithm \and Dijkstra's Algorithm \and Heuristic Search \and Dynamic Programming \and Multicriteria Optimization}

	\section{Introduction}
	The Multiobjective Shortest Path (MOSP) Problem is an extension of the classical Shortest Path Problem in which every arc bears a multidimensional vector of cost attributes.
	The multidimensional costs of a path are then the sum of the cost vectors along its arcs. 
	If two paths $p$ and $q$ connect the same end nodes in the graph, $p$ \emph{dominates} $q$ if the cost vector of $p$ is not worse than the cost vector of $q$ in any dimension and it is better than the cost vector of $q$ in at least one dimension. 
	A path is optimal, also called \emph{efficient}, if it is not dominated by any other path connecting the same end nodes. The cost vector of an efficient path is called a \emph{non-dominated} cost vector. In our setting, MOSP problems ask for a representative efficient path for every non-dominated cost vector; the number of these representatives can be exponential.  
%	A model for shortest path problems in which the relevant cost components for users or companies are considered separately is a powerful tool that motivates research efforts made in the field.
	Despite this principal intractability, MOSP problems arising from real world scenarios can often be solved efficiently,  % in practice.
	indeed, significant advances in this respect have been made in recent years.
	Sedeño-Noda and Colebrook \cite{Sedeno19} introduced the Biobjective Dijkstra Algorithm (BDA) for the Biobjective Shortest Path (BOSP) Problem that improved the known complexity bounds for this problem and is also efficient in practice.
	Maristany et al. \cite{Casas2021a} introduced the Multiobjective Dijkstra Algorithm (MDA), which generalizes the BDA to the multidimensional case. 
	Its running time and space complexity bounds improve the state of the art in theory and practice. 
	
    This article considers the One-to-One MOSP Problem, which aims at determining a complete set of optimal paths between two designated nodes in the input graph.
	One-to-One MOSP algorithms can be equipped with different pruning techniques that discard irrelevant paths early during the search (e.g., \cite{Pulido14}, \cite{Duque15}, \cite{Sedeno19}, \cite{Casas2021a}).
	In addition, the search can be guided towards the target. 
	A first step in this direction was taken by Mandow and Pérez de la Cruz \cite{Mandow05, Mandow2010}, who extend the classical label-setting MOSP algorithm by Martins \cite{Martins84} with likewise classical A* techniques. % to reduce the number of iterations.
	
   	Using a property known as \emph{dimensionality reduction} \cite{Pulido15}, it is possible to decide in constant time whether newly explored paths are efficient in the biobjective case.
   	The algorithms considered in this paper store explored paths in a lexicographically sorted priority queue.
    Ulloa et al. \cite{Ulloa20} use a lazy queue management in conjunction with the dimensionality reduction property in their Biobjective $\text{A}^*$ (\boaClassic{}) algorithm that is again based on Martins' algorithm. 
    In Martins' algorithm, the queue can contain exponentially many paths but all of them are guaranteed to be efficient. Paths are removed from the queue as soon as the algorithm explores a path that dominates them in what is called a \emph{merge operation}. 
    These operations are known to be costly in practice, which motivates the lazy queue management in the \boaClassic{} algorithm. There, the queue may contain dominated paths but thanks to the dimensionality reduction, these are discarded in constant time immediately after their extraction from the queue. Thus, unnecessary expansions of dominated paths are avoided.
    Combining this lazy path management with $\text{A}^*$ makes the \boaClassic{} algorithm competitive. Ahmadi et al.'s \cite{Ahmadi21} \boaEnh{} algorithm improves this method further. % significantly. % and further extends it to an even faster bidirectional version. 
	
	The first contribution in this paper is an extension of the MDA by the A* technique to obtain the \emph{Targeted Multiobjective Dijkstra Algorithm} \mda{}. %Its straightforward implementation has the same theoretical complexity as MDA. 
	We propose a variant that trades memory consumption for run time without harming the running time bounds, but improving practical performance.
	Indeed, and to the best of our knowledge, this version of the \mda{} is the first efficient implementation of a One-to-One MOSP algorithm that can deal with 
	large scale instances such as road networks. The second contribution is the \emph{Targeted Biobjective Dijkstra Algorithm} (T-BDA), a version of the \mda{} tuned for the Biobjective Shortest Path (BOSP) Problem. It outperforms state-of-the-art methods on a comprehensive benchmark set \cite{Sedeno19, Ulloa20, Ahmadi21} unless the considered instance can be solved in fractions of a second. % and can thus be considered as our second contribution in this paper.
	%Our second contribution given the progress in solving BOSP problems in recent years \cite{Sedeno19, Ulloa20, Ahmadi21} is the design and implementation of the \bda{} that beats state of the art BOSP algorithms on hard instances consistently.

	The paper is structured as follows. In \Cref{sec:o2o-mosp} we give a formal definition of the One-to-One Multiobjective Shortest Path Problem.
	We then proceed to describe the \mda{} algorithm, prove its correctness, and analyze its theoretical space and running time complexity bounds in \Cref{sec:Algorithm}.
	Furthermore, we discuss how to integrate the enhancements for BOSP problems that were discussed in \cite{Ahmadi21} and \cite{Cabrera20} into the \bda{} in Section \ref{sec:bidimensional}.
	Finally, in \Cref{sec:experiments} we report on extensive computational results on different types of graphs for two and three dimensional MOSP problems. 
	
	\section{The Targeted Multiobjective Dijkstra Algorithm}
		
	This section serves as an introduction to the \emph{Targeted Multiobjective Dijkstra Algorithm} (\mda).
	This algorithm solves the One-to-One Multiobjective Shortest Path Problem described in \Cref{sec:o2o-mosp}. 
	In \Cref{sec:Heuristics and Path Pruning}, we introduce some concepts needed for the algorithm. 
	In \Cref{sec:Algorithm}, we introduce the algorithm itself before proving its correctness in \Cref{sec:Correctness}.
	Whenever we write $x \leq y$ for vectors $x, \, y \in \R^d$, $d\in \N$, we mean $x_i \leq y_i$ for all $i \in \{1, \dots, d\}$.
			
	\subsection{One-to-One Multiobjective Shortest Path Problem}\label{sec:o2o-mosp}
	
	Consider a digraph $G = (V,A)$.
	Given multidimensional arc cost vectors $c_a \in \R^d_{\geq 0}$, $d \in \N$, for every arc $a \in A$, the costs $c(p)$ of a path $p$ are defined as the sum of its arc costs, i.e., $c(p) := \sum_{a \in p} c_a$.
	For a set $\mathcal{P}$ of paths, we use $c(\mathcal{P})$ to refer to the set of costs of all paths in $\mathcal{P}$, i.e., $c(\mathcal{P}) := \{c(p) \ |\  p \in \mathcal{P}\}\subseteq \R^d_{\geq0}$.
	The following definition formalizes the notion of optimality to the multiobjective case.
	
	\begin{definition}[Dominance and Efficiency]\label{def:dominance}
		Consider two nodes $v,\,w \in V$ and two $v$-$w$-paths $p$ and $q$ in $G$.
		Then, $p$ \emph{dominates} $q$ if $c_i(p) \leq c_i(q)$ for all $i \in \{1, \dots, d\}$ and there is a $j \in \{1, \dots, d\}$ s.t. $c_j(p) < c_j(q)$.
		The path $p$ is called \emph{efficient} if there is no %other 
		$v$-$w$-path in $G$ that dominates it. Otherwise, $p$ is called a \emph{dominated} path.
		If $p$ is efficient, its cost vector $c(p)$ is called \emph{non-dominated}. Otherwise it is called \emph{dominated}.
	\end{definition}
	
	Let $\mathcal{P}_{st}$ be the set of efficient $s$-$t$-paths in $G$.
	In general, for any non-dominated cost vector in $c(\mathcal{P}_{st})$, there can be multiple efficient paths in $G$.
	In this paper, we want to find a \emph{minimal complete set} of efficient $s$-$t$-paths, i.e., a set of $s$-$t$-paths such that for every non-dominated cost vector in $c(\mathcal{P}^*_{st})$, the presented algorithms outputs exactly one efficient $s$-$t$-path.
	
	\begin{definition}[One-to-One Multiobjective Shortest Path Problem]
		Given a digraph $G=~(V,A)$, a source node $s \in V$, a target node $t \in V$ and $d$-dimensional arc costs vectors $c_a \in \R^d_{\geq}$, $a \in A$, the \emph{One-to-One Multiobjective Shortest Path (MOSP) Problem} is to find a minimal complete set of efficient $s$-$t$-paths. We denote a One-to-One MOSP instance by $(G,s,t,c)$.
	\end{definition}
	
	\subsection{Heuristics and Dominance Bounds}\label{sec:Heuristics and Path Pruning}
	The following definitions and results are a generalization of those valid for the single criteria Shortest Path Problem. Throughout this section, we assume that a $d$-dimensional One-to-One MOSP instance $\mathcal{I} = (G,s,t,c)$ is given.
	
	\begin{definition}[(Admissible and Monotone) Heuristic]\label{def:monotone_heuristic}
		A multiobjective heuristic $\pi: V \mapsto \R_\geq^d$ is called \textit{an admissible and monotone heuristic for $\mathcal{I}$}, respectively, if
		\begin{description}
			\item[\textnormal{Admissibility:}]
				\begin{itemize}
				    \item[]
					\item[i.]
					    $\pi(v) \leq c(p),\, \quad \forall v \in V,\, p$ efficient $v$-$t$-path,
				\end{itemize}
			\item[\textnormal{Monotonicity:}]
    			\begin{itemize}
    			    \item[]
    				\item[i.]
    				    $\pi(t) = 0$, and
    				\item[ii.]
    				    $\pi(u) \leq c_{uv} + \pi(v), \quad \forall (u,v) \in A$. 
			\end{itemize}
		\end{description}
	\end{definition}
	
	\begin{remark}
		As in the single objective case, a monotone heuristic is always admissible. 
	\end{remark}
	
	Throughout the paper, we only consider monotone and admissible heuristics. For the sake of brevity, we refer to them just as \emph{heuristics} from now on. The notation $\pi$ to refer to the heuristics is motivated by the dual linear program of the single criteria Shortest Path Problem.
	There, the set of feasible solutions corresponds to the set of monotone heuristics, often denoted by $\pi$. 
	According to the notation from this field, the \emph{reduced costs} of an arc $(u,v) \in A$ are $\bar{c}_{uv} := c_{uv} + \pi(v) - \pi(u)$. For an $s$-$v$-path $p$, $v \in V$, we have 
	\[ 
		\sum_{a \in p} \bar{c}(a) = c(p) + \pi(v) - \pi(s)
	\]
	and since all relevant paths in this paper start at the source node $s$ of a given MOSP instance, the constant term $\pi(s)$ can be ignored.
	Thus, we define 
	\begin{equation}
		\bar{c}(p) := c(p) + \pi(v).
	\end{equation}
	%Let $p$ be an $s$-$v$-path for some $v \in V$ and $\pi$ an heuristic. For the remainder of the paper, we use the notation $\bar{c} (p) := c(p) + \pi(v)$.
	\begin{lemma}
		Let $\pi$ be a heuristic for $\mathcal{I}$. For every arc $a \in A$, we have $\bar{c}_a \geq 0$.
	\end{lemma}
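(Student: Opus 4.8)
The plan is to unwind the definition of reduced costs and apply the monotonicity axiom directly. Fix an arbitrary arc $a = (u,v) \in A$. By definition, $\bar{c}_a = \bar{c}_{uv} = c_{uv} + \pi(v) - \pi(u) \in \R^d$, so the claim $\bar{c}_a \geq 0$ is, by the componentwise reading of $\leq$ fixed at the start of \Cref{sec:o2o-mosp}, the assertion that $(\bar{c}_{uv})_i \geq 0$ for every coordinate $i \in \{1,\dots,d\}$.

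First I would invoke monotonicity condition~ii of \Cref{def:monotone_heuristic}, which for the arc $(u,v)$ states $\pi(u) \leq c_{uv} + \pi(v)$; again read componentwise, this means $\pi_i(u) \leq (c_{uv})_i + \pi_i(v)$ for all $i \in \{1,\dots,d\}$. Subtracting $\pi_i(u)$ from both sides yields $0 \leq (c_{uv})_i + \pi_i(v) - \pi_i(u) = (\bar{c}_{uv})_i$, and since this holds for every coordinate $i$ we conclude $\bar{c}_{uv} \geq 0$. As $a$ was an arbitrary arc, the claim follows. There is essentially no obstacle here: the only point requiring care is keeping track that all vector inequalities are meant componentwise, so that the monotonicity inequality can be rearranged coordinate by coordinate; no property of the heuristic beyond monotonicity~ii (and hence not even admissibility) is needed.
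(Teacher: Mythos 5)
Your proof is correct and follows exactly the paper's (one-line) argument: the paper also derives the claim directly from monotonicity condition ii of \Cref{def:monotone_heuristic}, and you have simply spelled out the componentwise rearrangement explicitly.
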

	\begin{proof}
		The statement follows directly from the monotonicity of $\pi$.
	\end{proof}
	
	The following is a trivial consequence of the last lemma and is used in the proofs from \Cref{sec:Correctness}.
	\begin{corollary}\label{lem:increasing_potential}
		Let $\pi$ be a heuristic for $\mathcal{I}$. Assume that node $u$ comes before node $v$ along any simple path $p$ in $G$ starting at $s$.
		Then,
		$$\bar{c}(p_{su}) \leq \bar{c}(p_{sv}).$$
	\end{corollary}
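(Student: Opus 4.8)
The plan is to reduce the statement to a telescoping of reduced costs along the portion of $p$ joining $u$ to $v$. Since $p$ is simple and $u$ precedes $v$ on $p$, the sub-path $p_{uv}$ of $p$ from $u$ to $v$ is well defined (and itself simple), and the arc multiset of $p_{sv}$ is the disjoint union of the arc multisets of $p_{su}$ and $p_{uv}$. Hence $c(p_{sv}) = c(p_{su}) + c(p_{uv})$. This is the only structural fact about paths that the argument needs.

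Next I would just unfold the definition $\bar{c}(q) := c(q) + \pi(w)$ for an $s$-$w$-path $q$. Applying it with $w=v$ and substituting the decomposition above yields
\[
\bar{c}(p_{sv}) = c(p_{su}) + c(p_{uv}) + \pi(v).
\]
Adding and subtracting $\pi(u)$ and regrouping, the term $c(p_{su}) + \pi(u)$ is exactly $\bar{c}(p_{su})$, while the remaining term $c(p_{uv}) + \pi(v) - \pi(u)$ equals $\sum_{a \in p_{uv}} \bar{c}_a$ by the identity $\sum_{a \in q}\bar{c}(a) = c(q) + \pi(w) - \pi(w')$ recorded earlier in the text (here with $w' = u$ and $w = v$). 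Therefore
\[
\bar{c}(p_{sv}) = \bar{c}(p_{su}) + \sum_{a \in p_{uv}} \bar{c}_a.
\]

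Finally I would invoke the preceding lemma, which gives $\bar{c}_a \geq 0$ for every arc $a \in A$; since a finite sum of componentwise-nonnegative vectors in $\R^d$ is again componentwise nonnegative, we get $\sum_{a \in p_{uv}} \bar{c}_a \geq 0$ and thus $\bar{c}(p_{su}) \leq \bar{c}(p_{sv})$, as claimed. There is essentially no obstacle here: the argument is a one-line consequence of nonnegativity of reduced costs plus additivity of $c$ along sub-paths. The only point worth a word of care is that all inequalities are to be read componentwise, and that the monotonicity of $\pi$ enters only indirectly, through the cited lemma $\bar{c}_a \geq 0$.
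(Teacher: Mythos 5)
Your argument is correct and is exactly the route the paper intends: the text presents this corollary as ``a trivial consequence of the last lemma'' (namely $\bar{c}_a \geq 0$ for every arc), and your telescoping of reduced costs along the $u$-$v$ sub-path is precisely the spelled-out version of that one-line deduction. Nothing further is needed.
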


	Deciding whether partial solutions can become efficient solutions is particularly difficult at the beginning of an algorithm since little information about the set of efficient solutions is known a priori. 
	The following definition describes cost vectors that act as a threshold to prevent MOSP algorithms from processing paths that can not be extended towards the target node to produce an efficient $s$-$t$-path.

	\begin{definition}[Dominance bound]
		A \emph{dominance bound for $\mathcal{I}$} is a vector $\beta \in \R^d$ s.t. there is no efficient $s$-$t$-path that is dominated by $\beta$.
	\end{definition}

	In \Cref{sec:Algorithm} we assume that a heuristic and a dominance bound are part of the input of the described algorithm.
	In \Cref{sec:computeHeuristics} we describe how to find heuristics and dominance bounds for any One-to-One MOSP instance in a preprocessing stage.
	
	\subsection{The Algorithm.}\label{sec:Algorithm}
	
	The \mda{} is an adaptation of the label-setting Multiobjective Dijkstra Algorithm (MDA) \cite{Casas2021a}. 
	Since it is a One-to-One MOSP algorithm, the \mda{} does not need to store efficient $s$-$v$-paths for intermediate nodes $v \in V\backslash\{s,t\}$ if these paths are provably not extendable to become efficient $s$-$t$-paths. 
	Moreover, the \mda{} is designed with practical performance in mind. 
	Given the MDA's running time and space complexity, we found that in the \mda{}, we can increase the space complexity without harming the theoretical running time bound but achieving better results in practice.
	The details of this trade-off are discussed in \Cref{sec:nclListsComplexity}.
	
	The input of the \mda{} is a $d$-dimensional MOSP instance $(G,s,t,c)$, a heuristic $\pi$, and dominance bound $\beta_t$.
	The paths in the priority queue $Q$ used in the algorithm are sorted in lexicographically (lex.) non-decreasing order w.r.t. $\bar{c}$. 
	
	\begin{definition}[$\lexsmaller$]
		Given two cost vectors $x,\, y \in \R^d$, $x$ is lexicographically smaller than $y$ if the first cost component $j \in \{1, \dots d\}$, in which both vectors differ, fulfills $x_j < y_j$.
		In this case, we write $x \lexsmaller y$. 
		For paths $p$ and $q$, we say that $p$ is lex. smaller than $q$ if $\bar{c}(p) \lexsmaller \bar{c}(q)$. 
		We use $\lexsmallereq$ canonically.
	\end{definition}
	
	The output of the \mda{} is a minimal complete set of efficient $s$-$t$-paths.
	Let $v\in V$ and consider two efficient $s$-$v$-paths $p$ and $q$ with $\bar{c}(p) = \bar{c}(q)$.
	By \Cref{def:dominance}, $p$ and $q$ do not dominate each other.
	However, if the \mda{} algorithm already found and stored $p$, $q$ can be discarded when it is processed.
	Thus, a reflexive binary relation is needed to check dominance and/or equivalence of cost vectors.
	We use the following notation to stress the importance of these checks in our algorithms. 
	%For this purpose, we use the following reflexive binary relation.
	\begin{definition}[Dominance or Equivalence Operator $\dom$]
		%Given two cost vectors $x,\, y \in \R^d$, we write $x \dom y$ if and only if $x \leq y$.
		Consider $x, y \in \R^d$ and $\mathcal{X} \subseteq  \R^d$. We write $x \dom y$ if $x \leq y$ and $\mathcal{X} \dom y$ if there exists a $z \in \mathcal{X}$ such that $z \dom y$.
	\end{definition}
	For ease of notation, from now on we say that a path $p$ dominates a path $q$ if $c(p) \dom c(q)$.
	%We have $c(p) \dom c(q)$, which avoids the storage of $q$ when it is found after $p$.
	
	\begin{algorithm}
		\small{
			\SetKwInOut{Input}{Input} \SetKwInOut{Output}{Output}
			\Input{MOSP instance $\mathcal{I} = (G, s, t, c)$, heuristic $\pi$ for $\mathcal{I}$, dominance bound $\beta_t$ for $\mathcal{I}$.}
			\BlankLine
			\Output{Minimal complete set $\mathcal{P}_{st}$ of efficient $s$-$t$-paths.}
			\BlankLine
			
			Priority Queue $Q \leftarrow{} \emptyset{}$\label{algo:mda*:initStart}\tcp*{$Q$ is sorted according to the $\bar{c}$ values of the contained paths.}
			$\forall (u,v)\in A$ -- Promising explored $s$-$v$-paths via $u \in V$ $\NQP_{uv} \leftarrow \emptyset$\label{algo:mda*:initNcl}\;
			$\forall v \in V$ -- Efficient Paths $\mathcal{P}_{sv} \leftarrow \emptyset$ \label{algo:mda*:initPerm}\;
			$p_{\text{init}} \leftarrow{} ()$\;
			$Q \leftarrow{} Q.\mathtt{insert}(p_{\text{init}})$\;
			\BlankLine
			\While{$Q \neq \emptyset$}{
				$p \leftarrow{} Q.\mathtt{extractMin}()$ \label{algo:mda*:extraction}\;
				$v \leftarrow{} \mathit{head}(p)$\tcp*[r]{\texttt{head} returns the last node of $p$. If $p = p_{\text{init}}$, it returns $s$.}
				$\mathcal{P}_{sv}  \leftarrow (\mathcal{P}_{sv}, p)$\label{algo:mda*:linePushBack2}\;
				$p^{new}_v \leftarrow{} \ref{algo:ncla*}(p, \NQP, \incoming{v}, \mathcal{P}, \beta_t)$ \label{algo:mda*:ncl}\;
				\lIf{$p^{new}_v \neq \Null$} 
					{$Q.\mathtt{insert}(p^{new}_v)$\label{algo:mda*:insert}}
				\If{$v == t$} {
%					$\mathcal{P}_{st} \leftarrow (\mathcal{P}_{st}, p)$\;
					\textbf{continue}\;
				}
				\lFor{$w \in \outgoing{v}$}{
					$Q \leftarrow{} \ref{algo:propagate}(p, w, Q, \mathcal{P}, \NQP, \beta_t)$\label{algo:mda*:propagate}
				}
				%$\mathcal{P}_{sv}  \leftarrow (\mathcal{P}_{sv}, p)$\label{algo:mda*:linePushBack2}
			}
			\Return $\P_t$;
			
			\caption[caption]{Multiobjective Dijkstra A*}\label{algo:mda*}}
	\end{algorithm}
	
	We proceed with the description of the \mda{}~(\Cref{algo:mda*}).
	It distinguishes two types of $s$-$v$-paths for any node $v \in V$. %, we distinguish two types of $s$-$v$-paths. 
	The set $\mathcal{P}_{sv}$ contains already found efficient $s$-$v$-paths, which we call \emph{permanent paths}.
	%The second type of $s$-$v$-paths are \emph{explored paths}.
	Let $p$ be an $s$-$v$-path with $(u,v) \in A$ as its last arc.
	It is an \emph{explored path} if its $s$-$u$-subpath is already permanent and $p$ itself has already been considered by the algorithm without having been able to definitively decide whether $p$ is a relevant subpath of an efficient $s$-$t$-path.
	The sets of explored and permanent paths are always disjoint.
	This means that whenever an explored path is made permanent or finally discarded, it is no longer regarded as an explored path.
	
	The \mda{} uses a priority queue $Q$ to store a lex. minimal explored $s$-$v$-path for any $v \in V$. 
	
	\begin{definition}[Queue paths]\label{def:queuePath}
		%For any node $v \in V$, there is at most one $s$-$v$-path $q$ in the priority queue $Q$. 
		An $s$-$v$-path $q$ is called \emph{queue path for $v$} if
		\begin{enumerate}[i.]
			\item 
			    $q$ is an explored path (i.e., not permanent but the subpath until its penultimate node is permanent),
			\item 
			    the cost vector $c(q)$ is not dominated by any cost vector in $c(\mathcal{P}_{sv})$,
			\item 
			    the cost vector $\bar{c}(q)$ is not dominated by any cost vector in $c(\mathcal{P}_{st})$, and
			\item 
			    $q$ is lex. minimal w.r.t.\ $\bar{c}$ among all explored $s$-$v$-paths.
		\end{enumerate}
	\end{definition}
	
	Other explored $s$-$v$-paths that are not queue paths, are stored in a list called \emph{nextQueuePath} ($\NQP$). 
	There is such a list for every arc in the graph.
	At any iteration of the \mda{}, for $(u,v) \in A$, $\NQP_{uv}$ stores already explored $s$-$v$-paths whose last arc is $(u,v)$ and that are, at the moment, not queue paths for $v$.
	The data structures of the \mda{} are initialized in the first lines:
	\begin{description}
		\item[Line \ref{algo:mda*:initStart}] %\ref{algo:mda*:initStart}:
    		The priority queue $Q$ contains at most one queue path for every node $v \in V$ during the algorithm's execution; it is initially empty.
    		The paths in $Q$ are lex. sorted w.r.t.\  $\bar{c}$.
		\item[Line \ref{algo:mda*:initNcl}] %Line \ref{algo:mda*:initNcl}:
    		For any $(u,v) \in A$, the list $\NQP_{uv}$ stores explored $s$-$v$-paths whose last arc is $(u,v)$ and that are not the queue path for $v$.
    		These lists are initially empty.
		\item[Line \ref{algo:mda*:initPerm}] %Line \ref{algo:mda*:initPerm}:
    		The lists of permanent $s$-$v$-paths $\mathcal{P}_{sv}$.
    		These lists are initially empty.
	\end{description}
	
	The last step before the main loop of the algorithm begins is to insert the empty path $p_{init}$ into $Q$.
	The path $p_{init}$ goes from $s$ to itself with zero costs.
	%We write $c(\mathcal{P}_{sv}) := \{ c(p) \, | \, p \in \mathcal{P}_{sv} \}$ to denote the set of cost vectors of the efficient paths in $\mathcal{P}_{sv}$.
	%Then, we use the notation $c(\mathcal{P}_{sv}) \dom{} c(p)$ to check if there is a path $q \in \mathcal{P}_{sv}$ s.t. $c(q) \dom c(p)$ for an $s$-$v$-path $p$.
	The main loop finishes when $Q$ is empty at the beginning of an iteration.
	Until then, every iteration performs three tasks: 
	\begin{enumerate}[i.]
		\item 
		    Extraction of a path $p$ from $Q$ that is lex. minimal w.r.t. $\bar{c}$. The path $p$ is immediately added to the list $\mathcal{P}_{sv}$ of permanent $s$-$v$-paths in Line \ref{algo:mda*:linePushBack2}.
		\item 
		    Assuming that $p$ is an $s$-$v$-path, the search for a new queue path for $v$ in \ref{algo:ncla*}.
		\item 
		    The propagation of $p$ along the outgoing arcs of $v$ in \ref{algo:propagate}.
	\end{enumerate}
	
	Even though \ref{algo:ncla*} appears prior to \ref{algo:propagate} in the pseudocode of the \mda, we describe \ref{algo:propagate} first, since \ref{algo:ncla*} returns an empty path (NULL) in the first iteration(s) of the \mda.
		
	\paragraph{Propagation}
	
	The extracted $s$-$v$-path $p$ is propagated along the arcs $(v,w) \in \outgoing{v}$.
	For such an arc $(v,w)$, we call the resulting path $p^{new}_w := p \circ (v,w)$.
%	The path $p^{new}_w$ is a new explored $s$-$w$-path.
	In case $\bar{c}(p^{new}_w)$ is dominated by $\beta_t$, by an element in $c(\mathcal{P}_{st})$, or by an element in $c(\mathcal{P}_{sw})$, it is immediately discarded.
	Otherwise, it is a new explored $s$-$w$ path and we have to distinguish whether it is a queue path for $w$ or whether it is stored in $\NQP_{vw}$. Three scenarios are possible:
	\begin{description}
		\item[Case 1:]
    	    There is an $s$-$w$-path $q \in Q$ with $\bar{c}(q) \lexsmaller \bar{c}(p^{new}_w)$. 
    	    Then, $p^{new}_w$ is appended to $\NQP_{vw}$.
    		In other words: $p^{new}_w$ is an explored path that can not yet be discarded or made permanent but it is currently not the lex. minimal, and hence most promising explored $s$-$w$-path.
		\item[Case 2:]
    		There is an $s$-$w$-path $q \in Q$ with $\bar{c}(p^{new}_w) \lexsmaller \bar{c}(q)$.
    		Then, $p^{new}_w$ is inserted via a \texttt{decreaseKey} operation into $Q$, where it replaces $q$.
    		Assume $(x, w) \in A$ is the last arc  of $q$. 
    		Subsequently, $q$ is prepended to the list $\NQP_{xw}$.
    		This means that even though $q$ is no longer in $Q$, it might re-enter it later.
		\item[Case 3:]
		    There is no $s$-$w$-path in $Q$.
		    Then, $p^{new}_w$ is always inserted into $Q$.
	\end{description}
	The procedure \ref{algo:propagate} returns the possibly updated priority queue $Q$.
	
	\begin{procedure}
		\small{
	\SetKwInOut{Input}{Input}
	\SetKwInOut{Output}{Output}
	\Input{$s$-$v$-path $p$, node $w \in \outgoing{v}$, priority queue $Q$, permanent paths $\mathcal{P}$, lists of promising explored paths $\NQP$, dominance bound $\beta_t$.}
	\Output{Updated priority queue $Q$.}
		\BlankLine
		$p_w^{new} \leftarrow p \circ (v,w)$\;
		\If{$c(\mathcal{P}_{st}) \dom \bar{c}(p_w^{new})$ \text{ or }$\beta_t \dom \bar{c}(p_w^{new})$}{\label{algo:propagate:pruneatfront}
		    \Return $Q$\;
		    }
		    \If{$c(\mathcal{P}_{sw}) \ndom c(p_w^{new})$ \label{algo:propagate:domCheck}}{
			    \uIf{\textbf{not} $Q.\mathtt{contains}(w)$}{
				    $Q.\mathtt{insert}\left( p_w^{new} \right)$\label{algo:propagate:insert}\;
			    }
			    \uElse{
			    	$q \leftarrow Q.\mathit{getPath}(w)$\;
			    	\If{$\bar{c}(p_w^{new}) \lexsmaller \bar{c}(q)$\label{algo:propagate:flexCheck}}{
				    	$Q.\mathtt{decreaseKey}(w, \, p_w^{new})$\label{algo:propagate:decrease}\;
				    	$(x,w) \leftarrow \mathit{lastArc}(q)$\;
				    	$\NQP_{xw}.\mathtt{prepend}(q)$\label{algo:propagate:pushFront}\tcp*[r]{Insert $q$ at the beginning of $\NQP_{xw}$.}
			    	}
		    		\Else{
		    			$\mathcal{NQP}_{vw}.\mathtt{append}(p_w^{new})$\label{algo:propagate:pushBack}\tcp*[r]{Insert $p_w^{new}$ at the end of $\NQP_{xw}$.}
	    			}
		    	}
		    }
		
		\Return $Q$\;
		\caption{propagate()}\label{algo:propagate}}
    \end{procedure}
	
	\paragraph{Next queue path.} 
	A key invariant in the \mda{} is that at any point in time during the execution, $Q$ contains at most one queue path for $v$ for any node $v \in V$.
	Thus, after $p$ is extracted from $Q$ in \Cref{algo:mda*:extraction} there is no $s$-$v$-path in $Q$.
	The procedure \ref{algo:ncla*} aims to find a new queue path for $v$.
	The search is performed among the paths in the lists $\NQP_{uv}$ of explored $s$-$v$-paths via $u$, $u \in \incoming{v}$. 
	These lists are sorted in lex. increasing order w.r.t. $\bar{c}$ (see \Cref{lem:orderedNcl}).
	%For every predecessor node $u$ of $v$, \ref{algo:ncla*} explores the paths in $\NQP_{uv}$ starting with the first one, since it is lex. minimal w.r.t. $\bar{c}$ in the list.
	Then, starting with the first path in each list, paths in $\NQP_{uv}$ are removed until the first path in the list fulfilling the queue path conditions (i)-(iii) is found or the list is emptied. 
	Thus, \ref{algo:ncla*} identifies at most one candidate $s$-$v$-path via each predecessor node of $u$.
	The path $p_v^{new}$ returned by the search is a lex. smallest one among these candidates (queue path condition (iv)), if there is any.
	Then, $p_v^{new}$ is a queue path for $v$ and is inserted into $Q$.
	
	\begin{procedure}
		\small{
        \SetKwInOut{Input}{Input}
        \SetKwInOut{Output}{Output}
        \Input{$s$-$v$-path $p^*$, lists of promising explored paths $\NQP$, node set $\incoming{v}$, permanent paths $\mathcal{P}$, dominance bound $\beta_t$.}
        \Output{New queue path for $v$, if one exists.}
        \BlankLine
        $p_v^{new} \leftarrow{} \Null$\tcp*[r]{Assume $\bar{c}(p_v^{new}) = [\infty]_{i=1}^d$.}
        \For{$u \in \incoming{v}$}{
            \For{$p_v \in \NQP_{uv}$}{
                \If{$c(\mathcal{P}_{st}) \dom \bar{c}(p_v)$ \text{ or }$\beta_t \dom \bar{c}(p_v)$}{\label{algo:ncla*:pruneatfront}
                    $\NQP_{uv}.\mathtt{remove}(p_v)$\label{algo:ncla*:delete}\;
                    \textbf{continue}\;
                    }
                    \If{\textbf{not} $c(\mathcal{P}_{sv}) \dom c(p_v)$ \textbf{and not} $c(p^*) \dom c(p_v)$\label{algo:ncla*:domCheck}}{
                    	\lIf{$\bar{c}(p_v) \lexsmaller \bar{c}(p_v^{new})$  \label{algo:ncla*:flex}}{
                            $p_v^{new} \leftarrow p_v$} \textbf{break}\label{algo:ncla*:break}\;
                    }
                
            }    					
        }
        \Return $p_v^{new}$;
        \caption{nextQueuePath()}\label{algo:ncla*}}
    \end{procedure}
	
	\subsection{Correctness}\label{sec:Correctness}
	
	The following remark is important for the proofs in this section. 
	\begin{remark}\label{rem:orderInvariant}
		For any $v \in V$, the lex. order of two $s$-$v$-paths $p$ and $q$ is the same w.r.t. $c$ and $\bar{c}$:
		$$ \bar{c}(p) \lexsmaller \bar{c}(q) \Leftrightarrow c(p) + \pi(v) \lexsmaller c(q) + \pi(v) \Leftrightarrow c(p) \lexsmaller c(q).$$
	\end{remark}

	The correctness of the \mda{} is proven in three steps. 
	We begin by showing that the elements of the lists $\mathcal{P}_{sv}$, $v\in V$ are sorted in lex. increasing order w.r.t. $\bar{c}$ and w.r.t. $c$ (Lemma~\ref{lem:ordered_extraction}).
	Then we proceed to prove that every path that is extracted from $Q$ is efficient (Lemma~\ref{lem:efficient_extract}). 
	Finally, the correctness follows from \Cref{thm:min_complete_set}.
	It states that the set $\mathcal{P}_{st}$ corresponds to a minimal complete set of efficient $s$-$t$-paths at the end of the algorithm.
	
	\begin{lemma}\label{lem:sortedQ}
		As long as the $\NQP$ lists of explored paths are sorted in lex. non-decreasing order w.r.t. $\bar{c}$, the \mda{} extracts paths from $Q$ in Line \ref{algo:mda*:extraction} in lex. non-decreasing order w.r.t. $\bar{c}$.
	\end{lemma}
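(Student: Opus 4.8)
The plan is to argue by induction on the iterations of the main loop, maintaining the invariant that the sequence of $\bar c$-values of the paths extracted in Line~\ref{algo:mda*:extraction} is lex.\ non-decreasing. The essential point is that $Q$ is a priority queue sorted by $\bar c$, so within a single iteration the extracted path is lex.\ minimal among all paths currently in $Q$; the real work is to show that no path \emph{inserted} during or after the current iteration can be lex.\ smaller than the path just extracted. I would isolate the three ways a path enters $Q$: (a) the new queue path $p_v^{new}$ returned by \ref{algo:ncla*} in Line~\ref{algo:mda*:ncl}; (b) a propagated path $p_w^{new}=p\circ(v,w)$ inserted in \ref{algo:propagate} (Case~2 or Case~3); and (c) a path $q$ that is \emph{re-inserted} into $Q$ in a later iteration after having been displaced into some $\NQP_{xw}$ by a \texttt{decreaseKey} in Case~2.

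For (b), let $p$ be the path extracted in the current iteration, with $\bar c(p)$ lex.\ minimal in $Q$ at that moment. Any propagated path has the form $p\circ(v,w)$, and by \Cref{lem:increasing_potential} we have $\bar c(p)=\bar c(p_{sv})\leq \bar c(p\circ(v,w))=\bar c((p\circ(v,w))_{sw})$, hence in particular $\bar c(p)\lexsmaller \bar c(p\circ(v,w))$ or the two are equal; either way the extraction order is not violated. For (a), the new queue path $p_v^{new}$ is selected from the lists $\NQP_{uv}$, $u\in\incoming{v}$; every path there was produced by an earlier propagation step as an extension $p'\circ(u,v)$ of some path $p'$ that was extracted no later than the current iteration, so again by \Cref{lem:increasing_potential}, $\bar c(p_v^{new})\geq \bar c(p')\geq \bar c(p)$ where the last inequality uses the induction hypothesis on the extraction order. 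Here I would use the hypothesis of the lemma that the $\NQP$ lists are kept in lex.\ non-decreasing order, so that \ref{algo:ncla*} indeed scans candidates in increasing $\bar c$ order and returns a lex.\ minimal admissible one; this guarantees $p_v^{new}$ is well-defined and its $\bar c$-value is at least that of the path whose extension it is.

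Case~(c) is the subtle one and I expect it to be the main obstacle. When a path $q$ is evicted from $Q$ by \texttt{decreaseKey} in Case~2 and prepended to $\NQP_{xw}$, it may later be chosen again by \ref{algo:ncla*} and re-inserted. I must show $\bar c(q)$ is still no smaller than the $\bar c$-value of whatever was extracted in the meantime. The key observations: $q$ was in $Q$ at the moment it was evicted, so $\bar c(q)$ was $\geq$ the $\bar c$-value of the path extracted in \emph{that} iteration (it was not the minimum — the path $p_w^{new}$ that replaced it was); and $q$ can only be re-extracted in a strictly later iteration, by which point, by the induction hypothesis, every extracted value is $\geq$ that earlier extracted value. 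One has to be careful that prepending $q$ to the front of $\NQP_{xw}$ preserves the lex.\ order of that list — this is exactly the content of the companion statement \Cref{lem:orderedNcl}, which I would invoke (or, if proving both together, verify that $q$ is lex.\ no larger than the current front of $\NQP_{xw}$, because that front entered $\NQP_{xw}$ while $q$ was still sitting in $Q$ as the unique queue path for $w$, and queue paths are lex.\ minimal among explored paths by \Cref{def:queuePath}(iv)).

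Assembling these cases: at the start of each iteration, the minimum of $Q$ has $\bar c$-value at least that of the previously extracted path, because every path currently in $Q$ either was already there and survived (its value unchanged), or entered via (a)–(c), all of which were just bounded below by the previous extraction value. The induction therefore closes, and the lemma follows. I would keep the write-up at the level of these three case bounds plus the \texttt{decreaseKey}/re-insertion bookkeeping, citing \Cref{lem:increasing_potential}, \Cref{rem:orderInvariant}, and \Cref{lem:orderedNcl} rather than re-deriving them.
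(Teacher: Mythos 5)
Your overall strategy --- induction on iterations, bounding every channel through which a path can enter $Q$ from below by the value of the current extraction --- is a legitimate alternative to the paper's argument, which instead assumes an out-of-order pair of extractions and derives a contradiction by locating the last node $j$ of the later-extracted path $p$ that has a representative $p'_j$ in $Q$ at the earlier extraction. Your case (b) is correct: a path propagated from the freshly extracted $p$ satisfies $\bar{c}(p) \leq \bar{c}(p\circ(v,w))$ by \Cref{lem:increasing_potential}. The problem is that the bounds you give for the two \ref{algo:ncla*} channels do not hold up as written.

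In case (a) you assert $\bar c(p_v^{new})\geq \bar c(p')\geq \bar c(p)$, justifying the last inequality ``by the induction hypothesis on the extraction order.'' But $p'$ was extracted \emph{no later} than the current iteration, so the induction hypothesis gives $\bar c(p') \lexsmallereq \bar c(p)$ --- the opposite direction --- and $\bar c(p')\leq\bar c(p_v^{new})$ alone says nothing about how $p_v^{new}$ compares to $p$. In case (c) you bound $\bar c(q)$ from below by the extraction value of the iteration in which $q$ was evicted from $Q$; that is an earlier, hence lex.\ smaller-or-equal, extraction value, and it does not imply that $\bar c(q)$ is lex.\ at least the extraction value of the strictly later iteration in which $q$ is re-inserted. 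What both cases actually need is the invariant that the path sitting in $Q$ for a node $v$ is lex.\ minimal among \emph{all} currently explored $s$-$v$-paths (condition (iv) of \Cref{def:queuePath}): then, when that queue path $p$ is extracted, every candidate remaining in the lists $\NQP_{uv}$, $u\in\incoming{v}$ --- freshly propagated or previously evicted --- is lex.\ greater than or equal to $p$, and so is whatever \ref{algo:ncla*} returns. You invoke this minimality only in passing, to argue that prepending an evicted path preserves the order of $\NQP_{xw}$, but it is a property of the algorithm's state that must itself be carried through the induction (it is precisely what the paper's Cases 1--3 establish by examining the \texttt{decreaseKey} logic of \ref{algo:propagate} and the lex.\ minimal selection in \ref{algo:ncla*}); it cannot be cited as a definitional fact. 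Until that invariant is added to your induction and verified, cases (a) and (c) remain open.
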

	
	\begin{proof}
		%Recall that $Q$ is a sorted priority queue. Thus, when two paths coincide in $Q$, the lex. smaller one w.r.t. $\bar{c}$ is extracted first.
		Let $p$ be the path extracted from $Q$ in iteration $l \in \N$ of the \mda{}. Assume that in a prior iteration $k < l$, a path $q$ is extracted from $Q$ such that
		$$ \bar{c}(p) \lexsmaller \bar{c}(q).$$
		The \mda{} extracts a lex. minimal path from $Q$ in every iteration. 
		Thus, neither $p$ nor any subpath of $p$ can be in $Q$ in iteration $l$ since by \Cref{lem:increasing_potential} they are lex. smaller than $p$ w.r.t. $c$ and $\bar{c}$.
		However, there must be at least one node $j \in p$ such that an $s$-$j$-path $p'_j$ is in $Q$ in iteration $k$.
		Let $j \in V$ be the last such node along $p$.
		By \Cref{lem:increasing_potential} we know that the $s$-$j$-subpath $p_j$ of $p$ is lex. smaller than $p$ and we must have $\bar{c}(p_j) \lexsmallereq \bar{c}(p'_j)$. Otherwise, we would have
		$$  \bar{c}(p'_j) \lexsmaller \bar{c}(p_j) \lexsmaller \bar{c}(p) \lexsmaller \bar{c}(q). $$
		Since $p'_j$ and $q$ are in $Q$ in iteration $k$, this would contradict $q$'s extraction.
		
		Recall that $p_j$ is an explored $s$-$j$-path but it is not the queue path for $j$ in iteration $k$.
		We discuss the different cases that could lead to $p'_j$ being in $Q$ instead of $p_j$ and conclude that none of them can actually happen.
		\begin{description}
			\item[Case 1:]
    			The path $p_j$ is replaced by $p'_j$ in $Q$ in a \texttt{decreaseKey} operation in Line \ref{algo:propagate:decrease} of \ref{algo:propagate}. 
    			This can not happen since $p_j$ is lex. smaller than $p'_j$ and \texttt{decreaseKey} operations only happen if the lex. check in Line \ref{algo:propagate:flexCheck} succeeds.
			\item[Case 2:]
    			During the iteration wherein $p_j$ is explored for the first time, $p'_j$ is in $Q$.
    			In this situation, a \texttt{decreaseKey} operation is triggered in Line \ref{algo:propagate:decrease} of \ref{algo:propagate} to replace $p'_j$ with $p_j$ in $Q$.
    			After this happens, $p_j$ only leaves $Q$ if it is either extracted or if a newly explored $s$-$j$-path is lex. smaller (cf. Case 1.).
			Thus, $p'_j$ can not be extracted from $Q$ prior to $p_j$. 
			\item[Case 3:]
    			In some iteration prior to iteration $k$, an $s$-$j$-path $p''_j$ is extracted from $Q$ and a new queue path for $j$ is searched in \ref{algo:ncla*}.
    			Let $(i,j) \in A$ be the last arc of $p_j$ and $(i', j) \in A$ the last arc of $p'_j$. 
    			We can assume that $p_j$ and $p'_j$ are in the lists $\NQP_{ij}$ and $\NQP_{i'j}$ respectively and are considered as queue path candidates during the search.
    			Since $p'_j$ is in $Q$ during iteration $k$ and $p_j$ is extracted from $Q$ in an iteration before the $l$th, but after the $k$th (if not, $p$ would not be extracted in iteration $l$), both $s$-$j$-paths are neither dominated at $\mathcal{P}_{sj}$ w.r.t. $c$ nor at $\mathcal{P}_{st}$ w.r.t. $\bar{c}$ (Line \ref{algo:ncla*:pruneatfront} of \ref{algo:ncla*}). 
    			Otherwise, they would be permanently discarded in Line \ref{algo:ncla*:delete} of \ref{algo:ncla*}.
    			Recall that the new queue path $p_v^{new}$ for node $j$ computed by the search is lex. minimal among the queue path candidates in the $\NQP$ lists.
    			Given that the lex. ordering of the $\NQP$ lists is assumed in the statement from the lemma, it can not happen that $p_v^{new} = p'_j$ at the end of the search if $p_j$ is in $\NQP_{ij}$ as assumed.
		\end{description}
		We conclude that the existence of the explored path $p_j$ avoids $p'_j$ or any other $s$-$j$-path that is lex. greater than $p_j$  from being in $Q$ in iteration $k$.
		However, there must be an $s$-$j$-path in $Q$ during this iteration.
		This implies that $q$ would not be extracted from $Q$ since the $s$-$j$-path in $Q$ is lex. smaller.
	\end{proof}
	
	\begin{lemma}\label{lem:orderedNcl}
		If paths are extracted from $Q$ in Line \ref{algo:mda*:extraction} of the \mda{} in lex. non-decreasing order w.r.t. $\bar{c}$, the $\NQP$ lists of explored paths remain sorted in lex. non-decreasing order w.r.t. $\bar{c}$.
	\end{lemma}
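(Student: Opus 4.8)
The plan is an induction on the iteration counter of the \mda{}, carried out jointly with \Cref{lem:sortedQ}. I would assume that through the current iteration every extraction from $Q$ in Line~\ref{algo:mda*:extraction} has been lex.\ non-decreasing w.r.t.\ $\bar{c}$ (this is the hypothesis of the lemma) and that, at the start of the iteration, every list $\NQP_{uv}$ is sorted lex.\ non-decreasingly w.r.t.\ $\bar{c}$, and then show that the iteration preserves this. The only operations that modify an $\NQP$ list are the tail-append in Line~\ref{algo:propagate:pushBack} of \ref{algo:propagate}, the head-prepend in Line~\ref{algo:propagate:pushFront} of \ref{algo:propagate}, and the removals performed by \ref{algo:ncla*} (the one in Line~\ref{algo:ncla*:delete} and, as described for \ref{algo:ncla*}, the removal of the dominated prefix of each scanned list). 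Deleting elements from a sorted list keeps it sorted, so only the two insertions must be inspected.

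The tail-append is routine. The appended path is $p^{new}_w = p\circ(v,w)$, with $p$ the path extracted in this iteration and $v=\mathit{head}(p)$. Every path $r$ already in $\NQP_{vw}$ has last arc $(v,w)$ and, being explored, a permanent $s$-$v$-subpath $\tilde{p}\in\mathcal{P}_{sv}$; since $p$ is the most recent extraction, $\tilde{p}$ was extracted strictly earlier, so $\bar{c}(\tilde{p})\lexsmallereq\bar{c}(p)$ by the induction hypothesis on the extraction order. Concatenating the common arc $(v,w)$ shifts the $\bar{c}$-value of every $s$-$v$-path by the same vector, hence preserves lex.\ order by \Cref{rem:orderInvariant}, so $\bar{c}(r)=\bar{c}(\tilde{p}\circ(v,w))\lexsmallereq\bar{c}(p^{new}_w)$. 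Thus $p^{new}_w$ is lex.\ at least the current last element of $\NQP_{vw}$, and appending it keeps the list sorted.

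The head-prepend is the main obstacle. There the current queue path $q$ of $w$, with last arc $(x,w)$, is removed from $Q$ (it is replaced there by a lex.\ smaller path) and placed at the front of $\NQP_{xw}$, so I must show $\bar{c}(q)\lexsmallereq\bar{c}(r)$ for every $r$ then in $\NQP_{xw}$. I would carry, jointly with sortedness, the auxiliary invariant: \emph{whenever a node $w$ currently has a queue path $q_w$ in $Q$, one has $\bar{c}(q_w)\lexsmallereq\bar{c}(r)$ for every explored $s$-$w$-path $r$ stored in some $\NQP_{uw}$; moreover, whenever some $\NQP_{uw}$ is non-empty, $w$ does have a queue path in $Q$}. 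The second clause holds because a list $\NQP_{uw}$ is only ever populated by \ref{algo:propagate} while $w$ has a queue path, while whenever \ref{algo:ncla*} fails to return a new queue path for $w$ it has emptied every $\NQP_{uw}$; the first clause then makes the head-prepend claim immediate, since $q$ is exactly $w$'s queue path at that moment. The invariant is preserved event by event: a \texttt{decreaseKey} lowers $q_w$ lexicographically and sends the old $q_w$ — which by the old invariant is $\lexsmallereq$ every stored path — to the front of its list; inserting a newly explored path $\rho$ into $Q$ against an existing $q_w$ happens only when $\bar{c}(\rho)\lexsmaller\bar{c}(q_w)$ (Line~\ref{algo:propagate:decrease}), and against no queue path only when all $\NQP_{uw}$ are empty; a tail-append adds a path $\rho$ with $\bar{c}(q_w)\lexsmallereq\bar{c}(\rho)$ because the test in Line~\ref{algo:propagate:flexCheck} failed; and in \ref{algo:ncla*} the new queue path is a lex.\ minimum over the candidates, each being the first non-dominated path of some $\NQP_{uw}$, while every path that survives in a list lies weakly behind its candidate and hence, by sortedness, is lex.\ at least as large.

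Everything except the third paragraph reduces, via \Cref{rem:orderInvariant} and the assumed extraction order, to the single fact that concatenating a fixed arc preserves lex.\ order on $s$-$v$-paths; the real work is the bookkeeping of the auxiliary invariant under the interaction of \ref{algo:ncla*} (which purges dominated prefixes and promotes one candidate into $Q$) and \ref{algo:propagate} (which can push a former queue path back to the front of an $\NQP$ list). Finally, combining this lemma with \Cref{lem:sortedQ} in a single induction on the iteration count discharges the hypotheses of both statements, so that unconditionally the \mda{} extracts from $Q$ in lex.\ non-decreasing $\bar{c}$-order and all $\NQP$ lists stay sorted.
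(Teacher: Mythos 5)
Your proof is correct and follows the same overall structure as the paper's: you isolate the two operations that insert into an $\NQP$ list, handle the tail-append in Line~\ref{algo:propagate:pushBack} exactly as the paper's Case~2 does (the $s$-$v$-prefixes of all paths in $\NQP_{vw}$ were extracted no later than the current extraction $p$, and concatenating the common arc $(v,w)$ shifts $\bar{c}$ by a fixed vector, preserving lex.\ order via \Cref{rem:orderInvariant}), and handle the head-prepend in Line~\ref{algo:propagate:pushFront} by arguing that the displaced path was lex.\ minimal among the explored $s$-$w$-paths. The one genuine difference is that where the paper simply cites ``the fourth queue path condition'' to conclude that the path sitting in $Q$ for $w$ is lex.\ $\lexsmallereq$ everything stored in the lists $\NQP_{uw}$, you state this as an explicit auxiliary invariant (together with the companion clause that a non-empty $\NQP_{uw}$ forces a queue path for $w$ to exist) and verify it event by event across \texttt{decreaseKey}, fresh insertions, tail-appends, and the candidate selection in \ref{algo:ncla*}. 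That bookkeeping is exactly the step the paper leaves implicit, so your version is somewhat more self-contained; the paper's version is shorter because it treats the queue-path definition as an already-established property of the paths in $Q$. Your induction is on the iteration counter run jointly with \Cref{lem:sortedQ}, whereas the paper inducts on the number of elements in the $\NQP$ lists and argues about the first iteration in which a list receives a second element; both framings discharge the mutual dependence between the two lemmas in the same way that \Cref{lem:ordered_extraction} later makes explicit.
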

	
	\begin{proof}
		Clearly, as long as the $\NQP$ lists contain at most one element, they are sorted.
		Let the current iteration be the first one wherein a second path is added to an $\NQP$ list, say $\NQP_{vw}$, $(v,w) \in A$.
		There are two ways in which this can occur.
		\begin{description}
		    \item[Case 1:]
		        A new $s$-$w$-path $q$ is prepended to $\NQP_{vw}$ in Line \ref{algo:propagate:pushFront} of \ref{algo:propagate}.
        		The path $q$ was in $Q$ until the current iteration and is now replaced by a newly explored path $p_w^{new}$ s.t. $\bar{c}(p_w^{new}) \lexsmaller \bar{c}(q)$ (Line \ref{algo:propagate:flexCheck} of \ref{algo:propagate}).
        		Let $q'$ be the $s$-$v$-path that is already in $\NQP_{vw}$. 
        		Since $q'$ is not lex. smaller than $q$, $q'$ is already in $\NQP_{vw}$ at the beginning of the current iteration.
        		This follows from the fourth queue path condition.
        		Thus, adding $q$ at the beginning of $\NQP_{vw}$ does not alter its ordering.
		    \item[Case 2:]
		        A newly explored $s$-$w$-path $p_w^{new}$ is appended to $\NQP_{vw}$ in Line \ref{algo:propagate:pushBack} of \ref{algo:propagate}.
                %Thus, $p_w^{new}$ does not enter $Q$ because the current queue path for $w$ is lex. smaller.
        		By construction, $(v,w)$ is the last arc of $p_w^{new}$ and of the path $q'$, the $s$-$w$-path via $v$ that is already in $\NQP_{vw}$.
        		Note that the $s$-$v$-subpath $p$ of $p_w^{new}$ is extracted from $Q$ in the current iteration.
        		Since the $\NQP$ lists were sorted until the current iteration, \Cref{lem:sortedQ} guarantees that paths are extracted from $Q$ in lex. non decreasing order w.r.t. $\bar{c}$. 
        		As a consequence, the $s$-$v$-subpath $q'_v$ of $q'$ must have been extracted prior to $p$ and $\bar{c}(q'_v) \lexsmallereq \bar{c}(p)$. 
        		This implies $c(q'_v) \lexsmallereq c(p)$ and since both $s$-$v$-subpaths are expanded along the same arc $(v,w)$ to obtain $q'$ and $p_w^{new}$, we have $c(q') \lexsmallereq c(p_w^{new})$.
        		Thus, adding $p_w^{new}$ after $q'$ to the list $\NQP_{vw}$ does not alter the ordering of the list.
		\end{description}
		An induction over the number of elements in the $\NQP$ lists proves the lemma.
	\end{proof}
	
	The last two lemmas show that the ordering of the $\NQP$ lists and of the priority queue $Q$ influence each other.
	We can combine both statements to obtain the following key lemma for the correctness of the \mda. 
	
	\begin{lemma}\label{lem:ordered_extraction}
		Let $p$ be a path extracted in Line \ref{algo:mda*:extraction} of Algorithm \ref{algo:mda*}.
		Then, for all paths $q$ that were extracted from $Q$ prior to $p$ we have 
		$$ \bar{c}(q) \lexsmallereq \bar{c}(p). $$
	\end{lemma}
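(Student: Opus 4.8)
The plan is to break the apparent circularity between \Cref{lem:sortedQ} and \Cref{lem:orderedNcl} by a single induction over the iteration count of the \mda{}, carrying both invariants simultaneously. For $l \in \N$, let $P(l)$ be the statement: \emph{(a)} the paths extracted in Line \ref{algo:mda*:extraction} during iterations $1, \dots, l-1$ form a lex.\ non-decreasing sequence w.r.t.\ $\bar{c}$, and \emph{(b)} the $\NQP$ lists are sorted in lex.\ non-decreasing order w.r.t.\ $\bar{c}$ throughout the first $l-1$ iterations (in particular at the start of iteration $l$). Establishing $P(l)$ for every iteration reached by the algorithm immediately yields the lemma: if $p$ is extracted in iteration $l$, then part \emph{(a)} of $P(l+1)$ says exactly that $\bar{c}(q) \lexsmallereq \bar{c}(p)$ for every $q$ extracted earlier.

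For the base case, before the first iteration all $\NQP$ lists are empty by Line \ref{algo:mda*:initNcl} and no path has been extracted, so $P(1)$ holds vacuously.

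For the inductive step, assume $P(l)$ and that the algorithm performs iteration $l$. The extraction in Line \ref{algo:mda*:extraction} happens before \ref{algo:ncla*} and \ref{algo:propagate} modify any $\NQP$ list, so by \emph{(b)} the $\NQP$ lists are sorted at the moment of this extraction and have been in every earlier iteration; \Cref{lem:sortedQ} therefore applies and gives $\bar{c}(q) \lexsmallereq \bar{c}(p)$ for the extracted path $p$ and every earlier extracted $q$. Combined with \emph{(a)}, the extractions in iterations $1, \dots, l$ are lex.\ non-decreasing, which is part \emph{(a)} of $P(l+1)$. The $\NQP$-list updates carried out during iteration $l$ depend only on $p$ and on data produced in iterations $\le l$; since the extraction order through iteration $l$ is lex.\ non-decreasing, \Cref{lem:orderedNcl} guarantees the lists are still sorted afterwards, i.e.\ part \emph{(b)} of $P(l+1)$. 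This proves $P(l+1)$ and closes the induction.

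The one point that needs care — and the reason the two preceding lemmas are not being used circularly — is the ordering of events within a single iteration: the extraction must be seen to occur while the $\NQP$ lists still carry their sortedness from the previous iteration, so that \Cref{lem:sortedQ} is invoked for iteration $l$ strictly before \Cref{lem:orderedNcl} is invoked to re-establish the $\NQP$ invariant for iteration $l+1$.
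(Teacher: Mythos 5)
Your proposal is correct and takes essentially the same approach as the paper: both combine \Cref{lem:sortedQ} and \Cref{lem:orderedNcl} in an induction over the iterations of the \mda{}, alternately re-establishing the sortedness of the extractions and of the $\NQP$ lists. Your version merely makes explicit (via the simultaneous invariant $P(l)$ and the within-iteration ordering of the extraction before the $\NQP$ updates) what the paper's proof states informally as ``repeating these arguments until the last iteration.''
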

	
	\begin{proof}
		\Cref{lem:sortedQ} trivially holds until the first iteration wherein an $\NQP$ list gets a second path.
		Then, \Cref{lem:orderedNcl} guarantees that this path is inserted correctly into its $\NQP$ list, which then satisfies the condition of \Cref{lem:sortedQ} to guarantee that in the upcoming iterations, paths are still extracted in lex. non decreasing order w.r.t. $\bar{c}$.
		Repeating these arguments until the last iteration of the \mda{} proves the lemma.
	\end{proof}
	
	As a consequence of the last lemma, the permanent $s$-$v$-paths in $\mathcal{P}_{sv}$ are also sorted in lex. non-decreasing order w.r.t. $\bar{c}$ and w.r.t. $c$. To finally prove the correctness of Algorithm \ref{algo:mda*}, we still need to show that all permanent paths are efficient (Lemma \ref{lem:efficient_extract}) and that the list $\mathcal{P}_{st}$ corresponds to a minimal complete set of efficient $s$-$t$-paths at the end of the \mda{} (Theorem \ref{thm:min_complete_set}).
	
	\begin{lemma}\label{lem:efficient_extract}
		Consider a node $v \in V$. Any $s$-$v$-path $p$ extracted from $Q$ in Line \ref{algo:mda*:extraction} of the \mda{} is an efficient $s$-$v$-path. 
	\end{lemma}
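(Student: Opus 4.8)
The plan is to argue by contradiction, exploiting the lexicographic order in which paths leave $Q$. Suppose some extracted path is not efficient and, among all such events, pick the \emph{first}: let $p$ be an $s$-$v$-path extracted in Line~\ref{algo:mda*:extraction} in iteration~$\ell$ that is dominated, so that every path extracted before iteration~$\ell$ — and hence every path ever stored in a list $\mathcal{P}_{sw}$, $w\in V$, up to that point, in particular in $\mathcal{P}_{st}$ — is efficient. Fix an efficient $s$-$v$-path $p^*$ with $c(p^*)\le c(p)$ and $c(p^*)\neq c(p)$; by \Cref{rem:orderInvariant} this gives $\bar c(p^*)\lexsmaller\bar c(p)$, and by \Cref{lem:ordered_extraction} no path with $\bar c$-value lexicographically below $\bar c(p)$ remains in $Q$ when $p$ is extracted. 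I would then follow the prefixes of $p^*$: call a node $u$ on $p^*$ \emph{settled} at the start of iteration~$\ell$ if some permanent $s$-$u$-path $r$ with $c(r)\le c(p^*_{su})$ exists. Since $p_{\text{init}}\in\mathcal{P}_{ss}$ has cost $0=c(p^*_{ss})$, the node $s$ is settled, so there is a last settled node $u$ on $p^*$. A small preliminary observation is that no prefix $p^*_{su}$ is \emph{strictly} dominated — otherwise concatenating a strictly dominating $s$-$u$-path with the $u$-to-$v$ part of $p^*$ and stripping cycles (arc costs are nonnegative, so the strict gain survives) yields an $s$-$v$-path dominating $p^*$. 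In particular any $r$ as above actually satisfies $c(r)=c(p^*_{su})$.

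Assume first $u\neq v$ and let $u'$ be the successor of $u$ on $p^*$. The $s$-$u$-path $r$ witnessing settledness of $u$ was extracted in some iteration $<\ell$ (indeed $\bar c(r)=\bar c(p^*_{su})\lexsmallereq\bar c(p^*)\lexsmaller\bar c(p)$, so \Cref{lem:ordered_extraction} applies), hence it was propagated along the arc $(u,u')$ in Line~\ref{algo:mda*:propagate}, producing the $s$-$u'$-path $r':=r\circ(u,u')$ with $c(r')=c(p^*_{su'})$. The crux is that $r'$ survives every check in \ref{algo:propagate}. It is not strictly dominated, so it cannot be purged against $c(\mathcal{P}_{su'})$ unless some permanent $s$-$u'$-path has cost exactly $c(p^*_{su'})$ — but then $u'$ is settled, contradicting maximality of $u$. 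For the lazy pruning against $\mathcal{P}_{st}$ and $\beta_t$ in Line~\ref{algo:propagate:pruneatfront} I would use monotonicity: telescoping $\pi(u')\le c(p^*_{u'v})+\pi(v)$ along the $u'$-to-$v$ part of $p^*$ gives $\bar c(r')=c(p^*_{su'})+\pi(u')\le c(p^*)+\pi(v)=\bar c(p^*)\lexsmaller\bar c(p)$ componentwise; the $\beta_t$ part is then immediate since $\beta_t$ is a constant and $p$ itself passed that guard, while for the $\mathcal{P}_{st}$ part any permanent $s$-$t$-path $\rho$ with $c(\rho)\le\bar c(r')$ also has $c(\rho)\le\bar c(p)$, so it would have pruned $p$ as well via the identical guards in Line~\ref{algo:propagate:pruneatfront} of \ref{algo:propagate} and Line~\ref{algo:ncla*:pruneatfront} of \ref{algo:ncla*}. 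So, right after $r$'s extraction, $r'$ lies in $Q$ (possibly after a \texttt{decreaseKey}) or in $\NQP_{uu'}$, in both cases with $\bar c$-value $\bar c(r')\lexsmaller\bar c(p)$.

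It then remains to show that $u'$ becomes settled before iteration~$\ell$, the contradiction. Since $p$ is extracted as the lexicographic minimum of $Q$ in iteration~$\ell$, $r'$ is not in $Q$ at that time; it was therefore extracted earlier, or removed in Line~\ref{algo:ncla*:delete}, or it still sits in $\NQP_{uu'}$. Removal in Line~\ref{algo:ncla*:delete} happens only through the $\mathcal{P}_{st}$/$\beta_t$ guard, already ruled out. For the remaining possibility I would combine \Cref{lem:orderedNcl} (the $\NQP$ lists stay lexicographically sorted) with the mechanics of \ref{algo:ncla*}: whenever an $s$-$u'$-path is extracted, $\NQP_{uu'}$ is scanned from the front, dominated paths are skipped, and the lexicographically smallest surviving candidate is promoted into $Q$; a once-promoted path, once made permanent, is dominated by $c(\mathcal{P}_{su'})$ and no longer blocks the scan. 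Since $r'$ never becomes dominated by $\mathcal{P}_{st}$/$\beta_t$ and is not strictly dominated, it eventually becomes the lexicographically smallest surviving candidate and is promoted into $Q$ — where, having $\bar c$-value below $\bar c(p)$, it is extracted before iteration~$\ell$ by \Cref{lem:ordered_extraction}; the only way this is pre-empted is the appearance of a permanent $s$-$u'$-path of cost $\le c(p^*_{su'})$, i.e.\ of cost $c(p^*_{su'})$. Either way $u'$ is settled before iteration~$\ell$, contradicting the choice of $u$. The case $u=v$ is similar: then $\mathcal{P}_{sv}$ already contains a path $r$ with $c(r)\le c(p^*)$, and since $c(p^*)\neq c(p)$ we get $c(r)\neq c(p)$, so $c(r)$ dominates $c(p)$; as $r$ was extracted before $p$ and remains in $\mathcal{P}_{sv}$, the dominance guards against $c(\mathcal{P}_{sv})$ in Line~\ref{algo:propagate:domCheck} of \ref{algo:propagate} and Line~\ref{algo:ncla*:domCheck} of \ref{algo:ncla*} show that $p$ cannot have re-entered, nor still reside in, $Q$ at iteration~$\ell$.

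The hard part will be the third paragraph: turning the informal ``$r'$ eventually gets promoted and extracted'' into a rigorous statement, i.e.\ showing that a surviving, lexicographically small explored path cannot remain stuck in some $\NQP_{uu'}$ list forever but must be moved into $Q$ and extracted before $p$, despite the interplay of \texttt{decreaseKey} moves in \ref{algo:propagate} and the draining loop in \ref{algo:ncla*}. This is precisely where \Cref{lem:sortedQ}, \Cref{lem:orderedNcl} and \Cref{lem:ordered_extraction} must be combined carefully; the rest reduces to short consequences of monotonicity, admissibility, and nonnegativity of the (reduced) arc costs.
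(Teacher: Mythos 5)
Your proposal is correct in substance and reaches the same contradiction as the paper, but it organizes the argument quite differently, and the comparison is instructive. The paper's proof also takes the first non-efficient extraction $p$ and a dominating efficient path $p'$, but then splits only on whether $p'$ is ever extracted: if yes, the dominance guards at $\mathcal{P}_{sv}$ block $p$ (this is your $u=v$ case, and both versions handle it the same way, via the fact that $Q$ holds one path per node so $p$ must pass a $\mathcal{P}_{sv}$-check after the dominator becomes permanent); if no, the paper asserts in a single sentence that ``the only reason why $p'$ is not extracted is that some prefix is dominated at $\mathcal{P}_{st}$'' and derives the contradiction from $\bar c(p') \leq \bar c(p)$. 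Your settled-node induction along the prefixes of $p^*$ is precisely a rigorous scaffolding for that one sentence: it makes explicit that, absent pruning, every prefix of an efficient path acquires an equally good permanent representative before any lex-greater path is extracted. What you correctly identify as ``the hard part'' --- that a surviving candidate $r'$ cannot linger in $\NQP_{uu'}$ while $p$ is extracted --- is exactly the claim the paper takes for granted; it can be closed by observing that whenever $r'$ enters $\NQP_{uu'}$, $Q$ simultaneously holds an $s$-$u'$-path with key $\lexsmallereq \bar c(r') \lexsmaller \bar c(p)$, and the $u'$-slot of $Q$ is only ever vacated by an extraction, which immediately triggers \ref{algo:ncla*} and repopulates the slot with a candidate at least as good as $r'$ (or settles $u'$). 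So your approach buys a genuinely tighter argument at the price of this extra bookkeeping, which \Cref{lem:sortedQ} and \Cref{lem:orderedNcl} support.

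One caveat, which you inherit from the paper rather than introduce: the step ``any $\rho \in \mathcal{P}_{st}$ with $c(\rho) \leq \bar c(r')$ would have pruned $p$ as well'' is the same move as the paper's ``this contradicts the insertion of $p$ into $Q$,'' and in both cases it silently assumes that $\rho$ was already permanent when $p$ last passed its guard in Line \ref{algo:propagate:pruneatfront} of \ref{algo:propagate} or Line \ref{algo:ncla*:pruneatfront} of \ref{algo:ncla*}; since $\mathcal{P}_{st}$ grows over time and queue members are not re-checked, the temporal order of these events deserves an explicit sentence. Likewise, both proofs quietly assume the dominating path does not pass through $t$ (where propagation stops). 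Neither point is a defect of your proposal relative to the paper's own standard of rigor.
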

	\begin{proof}
		Assume $p$ is the first path extracted from $Q$ that is not efficient, i.e., there exists an efficient $s$-$v$-path $p'$ that dominates $p$.
		Then $p'$ has to be lex. smaller than $p$.
		\begin{description}
			\item[Case 1:]
    			$p'$ is extracted from $Q$.
    			Following \Cref{lem:ordered_extraction} it is extracted prior to $p$.
    			After its extraction $p'$ is added to $\mathcal{P}_{sv}$ and there is no other $s$-$v$-path in $Q$.
    			For any new queue path candidate for $v$, the dominance checks in Line \ref{algo:propagate:domCheck} of \ref{algo:propagate} and Line \ref{algo:ncla*:domCheck} of \ref{algo:ncla*} make sure that it is not dominated by any path in $\mathcal{P}_{sv}$.
    			Thus, $p$ can not become a queue path for $v$ and consequently, it is never extracted from $Q$.
			\item[Case 2:]
    			$p'$ is not extracted from $Q$. 
    			Since $p'$ is an efficient $s$-$v$-path, all its subpaths are efficient. 
    			%By \Cref{lem:increasing_potential} they are not lex. greater than $p'$. 
    			Hence, the only reason why $p'$ is not extracted from $Q$ is because the reduced costs $\bar{c}$ a subpath of $p'$ (it can be $p'$ itself) are dominated by the cost vector of a path in $\mathcal{P}_{st}$ s.t. the subpath is not added to $Q$.
    			However, since $p'$ dominates $p$, we have 
    			$$ \bar{c}(p') = c(p') + \pi_v \leq c(p) + \pi_v = \bar{c}(p),$$
    			which implies that $\bar{c}(p)$ is also dominated by the same cost vector in $\mathcal{P}_{st}$. 
    			This contradicts the insertion of $p$ into $Q$ and consequently its extraction.
		\end{description}
	\end{proof}
	
	\begin{theorem}\label{thm:min_complete_set}
		At the end of Algorithm \ref{algo:mda*}, the set $\mathcal{P}_{st}$ is a minimal complete set of efficient $s$-$t$-paths.
	\end{theorem}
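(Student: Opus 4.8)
Here is the plan I would follow to prove \Cref{thm:min_complete_set}.

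\medskip

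\textbf{Overview.} The statement splits into three claims: (a) every path in $\mathcal{P}_{st}$ is an efficient $s$-$t$-path (soundness); (b) no two paths in $\mathcal{P}_{st}$ have the same cost vector (minimality); (c) for every non-dominated $x\in c(\mathcal{P}^*_{st})$ some path in $\mathcal{P}_{st}$ has cost $x$ (completeness). Claim (a) is immediate: a path is appended to $\mathcal{P}_{st}$ in Line~\ref{algo:mda*:linePushBack2} only after being extracted from $Q$ with head $t$, and \Cref{lem:efficient_extract} says it is then efficient. For (b) I would argue by contradiction: if $p\neq q$ with $c(p)=c(q)$ both enter $\mathcal{P}_{st}$, say $q$ first, then — since $\pi(t)=0$ gives $\bar c(p)=c(p)=c(q)$ — from that moment on the test $c(\mathcal{P}_{st})\dom\bar c(p)$ in Line~\ref{algo:propagate:pruneatfront} of \ref{algo:propagate} (and Line~\ref{algo:ncla*:pruneatfront} of \ref{algo:ncla*}) fires for every further copy of $p$, while \Cref{lem:ordered_extraction} rules out an earlier extraction of such a copy; so $p$ can never be made permanent, a contradiction. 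Thus the work is entirely in (c).

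\medskip

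\textbf{Safety of the pruning.} Before the induction I would record that the two pruning conditions never discard a needed path. A path $p^{new}_w$ with head $w$ is dropped only when $\beta_t\dom\bar c(p^{new}_w)$ or $c(\mathcal{P}_{st})\dom\bar c(p^{new}_w)$. Telescoping the monotonicity inequality of $\pi$ along any $w$-$t$-path $r$ yields $\pi(w)\le c(r)$, hence every $s$-$t$-path $P$ extending $p^{new}_w$ satisfies $c(P)\ge c(p^{new}_w)+\pi(w)=\bar c(p^{new}_w)\ge z$, where $z$ is either $\beta_t$ or a cost already in $c(\mathcal{P}_{st})$. If $P$ is efficient, $c(P)$ is non-dominated, so $z\ge$-dominating it forces $z=c(P)$ (any $z\ge$ a non-dominated $s$-$t$-cost either equals it or strictly dominates it, and a strict dominator is itself not an $s$-$t$-cost, which is excluded for $z\in c(\mathcal{P}_{st})$ and, for $z=\beta_t$, by \Cref{def:dominance} and the defining property of a dominance bound). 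So discarding $p^{new}_w$ can only lose an $s$-$t$-path whose cost is already represented in $\mathcal{P}_{st}$ or equals $\beta_t$; I would close the latter case using whatever property of $\beta_t$ is established in \Cref{sec:computeHeuristics} (it should amount to $\beta_t$ weakly dominating only dominated $s$-$t$-paths, or $\mathcal{P}_{st}$ already containing a path of cost $\beta_t$). I flag this $\beta_t$-equality bookkeeping as the secondary delicate point.

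\medskip

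\textbf{Completeness.} Suppose at termination some non-dominated $x\in c(\mathcal{P}^*_{st})$ is realised by no path in $\mathcal{P}_{st}$, and fix an efficient $s$-$t$-path $P=(s=v_0,v_1,\dots,v_k=t)$ with $c(P)=x$. I would prove by induction on $i$, up to the first index $m$ with $v_m=t$ (so $m\le k$), that at termination $c(P_{sv_i})\in c(\mathcal{P}_{sv_i})$, where $P_{sv_i}$ is the prefix of $P$ at $v_i$. Base $i=0$: $p_{init}$ is extracted first and made permanent at $s$. Step $i\to i+1$ (so $v_i\neq t$): by hypothesis some $r\in\mathcal{P}_{sv_i}$ has $c(r)=c(P_{sv_i})$; $r$ was extracted in some iteration, and since $v_i\neq t$, \ref{algo:propagate} runs along $(v_i,v_{i+1})$, producing $p':=r\circ(v_i,v_{i+1})$ with $c(p')=c(P_{sv_{i+1}})$, a non-dominated $s$-$v_{i+1}$-cost because $P_{sv_{i+1}}$ is a subpath of the efficient path $P$. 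If any of the checks against $c(\mathcal{P}_{sv_{i+1}})$, $c(\mathcal{P}_{st})$ or $\beta_t$ succeeds, then by the safety argument and because $c(p')$ is non-dominated it must already be the cost of a permanent $s$-$v_{i+1}$-path, and we are done with the step. Otherwise $p'$ becomes an explored $s$-$v_{i+1}$-path (in $Q$ or in an $\NQP_{\cdot v_{i+1}}$ list), and it stays "alive and selectable" until a permanent $s$-$v_{i+1}$-path of cost $c(p')$ appears: the only ways an explored path is abandoned are a \texttt{decreaseKey} (it merely moves to an $\NQP$ list), a removal in \ref{algo:ncla*} under the $\beta_t$/$\mathcal{P}_{st}$ test (impossible here, as in the safety argument, since $\bar c(p')\le x$), or being dominated by $c(\mathcal{P}_{sv_{i+1}})$ or by the path just extracted at $v_{i+1}$ in \ref{algo:ncla*} (possible only by equality, which is the desired situation). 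Then, using \Cref{lem:ordered_extraction} (extractions with head $v_{i+1}$ in lex.\ non-decreasing $\bar c$-order), the finiteness of the set of efficient $s$-$v_{i+1}$-costs lex.\ below $c(p')$, and the fact that \ref{algo:ncla*} refills $Q$ with the lex.\ minimal surviving explored $s$-$v_{i+1}$-path after each such extraction, I would conclude that before $Q$ empties some explored $s$-$v_{i+1}$-path of cost $\le c(p')$ — hence, being efficient, of cost exactly $c(p')=c(P_{sv_{i+1}})$ — is extracted and made permanent. This closes the step; evaluating at $i=m$ gives $x=c(P_{sv_m})\in c(\mathcal{P}_{st})$ (if $m<k$, via non-domination $c(P_{sv_m})=x$), contradicting the choice of $x$. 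For termination I would note that $Q$ releases paths in lex.\ non-decreasing $\bar c$-order (\Cref{lem:ordered_extraction}) and only finitely many efficient paths under the standing assumptions, so the loop halts; together with (a), (b), (c) this shows $\mathcal{P}_{st}$ is then a minimal complete set.

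\medskip

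\textbf{Main obstacle.} The genuinely hard part is the "alive $\Rightarrow$ eventually extracted" step: one must show that the coupled maintenance of $Q$ and the $\NQP$ lists never permanently loses a still-relevant explored path, so that the lex.\ sweep really reaches every needed efficient cost vector before $Q$ empties. This is exactly the correctness mechanism of the MDA that the \mda{} inherits, and carrying it over here leans on \Cref{lem:sortedQ}, \Cref{lem:orderedNcl}, \Cref{lem:ordered_extraction} and \Cref{def:queuePath}; the new ingredients are only the safety of the $\beta_t$/$\mathcal{P}_{st}$ pruning and the harmlessness of stopping expansion at $t$, both addressed above.
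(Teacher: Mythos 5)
Your proposal is correct and follows essentially the same route as the paper: soundness via \Cref{lem:efficient_extract}, minimality via the reflexive dominance check at $\mathcal{P}_{st}$, and completeness by arguing along an efficient $s$-$t$-path with unrepresented cost that admissibility of $\pi$ prevents the pruning and dominance checks from discarding its prefixes (the paper phrases this as a unique ``frontier'' arc $(i,j)$ rather than your forward induction, but the argument is the same). Your explicit treatment of the ``alive $\Rightarrow$ eventually extracted'' liveness step and of the $\beta_t$-equality edge case is if anything more careful than the paper, which leaves both implicit.
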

	
	\begin{proof}
		By Lemma \ref{lem:efficient_extract} it is clear that $\mathcal{P}_{st}$ contains only efficient $s$-$t$-paths.
		Let $c_t$ be a non-dominated point. 
		
		First, we discard that $\mathcal{P}_{st}$ may contain two paths $p$ and $p'$ with $c(p) = c(p') = c_t$.
		W.l.o.g., assume that $p$ is added to $\mathcal{P}_{st}$ first.
		Then, whenever $p'$ is processed as a candidate to enter $Q$, the dominance checks in Line \ref{algo:propagate:domCheck} of \ref{algo:propagate} or in Line \ref{algo:ncla*:domCheck} of \ref{algo:ncla*} would fail since the dominance relation $\dom$ is irreflexive.
		Thus, $p'$ is not inserted into $\mathcal{P}_{st}$.
		
		Assume that at the end of the \mda{} there is no path $p \in \mathcal{P}_{st}$ such that $c(p) = c_t$.
		Since $c_t$ is a non-dominated point, there exists at least one $s$-$t$-path $p$ in $G$ with costs $c_t$. 
		Then, at the end of the \mda{}, there is exactly one arc $(i,j) \in p$ such that the $s$-$i$-subpath $p_{s i}$ of $p$ is in the corresponding set $\mathcal{P}_{si}$ and there is no $s$-$j$-path in $\mathcal{P}_{sj}$ with costs $c(p_{s j})$, where $p_{s j}$ is the $s$-$j$-subpath of $p$.
		The extreme case is $i = s$. 
		Since the heuristic $\pi$ is admissible, we have 
		$$\bar{c}(p_{sj}) = c(p_{sj}) + \pi_j = c(p_{si}) + c_{ij} + \pi_j \leq c_t.$$
		Thus, since $c_t$ is a non-dominated point, $c(\mathcal{P}_{st}) \dom (c(p_{si}) + c_{ij} + \pi_j)$ is not true.
		This implies that the pruning conditions checked in Line \ref{algo:propagate:pruneatfront} of \ref{algo:propagate} and in Line \ref{algo:ncla*:pruneatfront} of \ref{algo:ncla*} do not avoid that $p_{sj} = p_{si} \circ (i,j)$ is inserted into $Q$.
		Finally, since $p$ is an efficient $s$-$t$-path, $p_{sj}$ is also an efficient $s$-$j$-path by subpath optimality.
		As a consequence of Lemma \ref{lem:efficient_extract}, the dominance checks in Line \ref{algo:propagate:domCheck} of \ref{algo:propagate} and in Line \ref{algo:ncla*:domCheck} of \ref{algo:ncla*} do not prevent $p_{sj}$ from being processed unless there is a different $s$-$j$-path in $\mathcal{P}_{sj}$ with equivalent costs.
		In any case, this is a contradiction to the existence of a pair of nodes $i$ and $j$ as described earlier.	
	\end{proof}
	
	\subsection{Complexity of the searches for new queue paths}\label{sec:nclListsComplexity}
	Let $n$ and $m$ be the number of nodes and arcs in $G$, respectively.
	Additionally, we set $N$ to be the total number of extracted paths during the \mda{} and $N_{\max} := \max_{v \in V}{|\mathcal{P}_{sv}|}$ the maximum cardinality of a list of permanent paths at the end of the algorithm.
	
	Up until now, we could ignore the mandatory question of how paths should be stored efficiently.
	If $p$ is an $s$-$v$-path and $(u,v)$ its last arc, $p$ can be uniquely encoded as a tuple that stores $p$'s end node $v$ and (a reference to) the tuple encoding the $s$-$u$-subpath of $p$.
	The possibility to store paths using these tuples, often called \emph{labels}, allows us to assume that a path's memory consumption in the \mda{} algorithm is $\mathcal{O}(1)$.
	Given a path's label, the path itself can be reconstructed recursively after the algorithm terminates.
	Since the \mda{} algorithm only outputs simple paths, this can be achieved in $\mathcal{O}(n)$ time per path.
	
	In the original exposition of the MDA, there are no $\NQP$ lists for arcs $(v,w) \in A$.
	Thus, in the \textit{propagate} routine of the MDA, a new explored $s$-$w$-path $p_w^{new}$ is ignored if there is a lex. smaller queue path for $w$ in $Q$.
	But $p_w^{new}$ is not discarded forever: the next time an $s$-$w$-path is extracted from $Q$, a subroutine called \textit{nextCandidatePath} iterates over the permanent paths of the predecessors of $w$ and expands these paths along the incoming arcs of $w$.
	During these expansions the path $p_w^{new}$ is rebuilt and reconsidered as a candidate queue path for $w$.
	
	The MDA works correctly as described in the last paragraph.
	To enhance its performance, an index $\lpl_{uv} \in [0, |\mathcal{P}_{su}|)$ for every arc $(u,v) \in A$ tells the \textit{nextCandidatePath} searches at which path in the list $\mathcal{P}_{sv}$ of permanent $s$-$v$-paths it shall start generating new candidate queue paths for $w$.
	Thus, these indices correspond to the beginning of the $\NQP_{uv}$ lists in the \mda{}.
	%since we delete dominated $s$-$v$-paths whose last arc is $(u,v)$ in Line \ref{algo:ncla*:delete} of \ref{algo:ncla*}.
	
	Both methods are different regarding the resulting space complexity bound.
	The storage of $\lpl_{vw}$ indices uses $\bigoh{m}$ space.
	As already mentioned, the $\NQP$ lists avoid having to repeat the extension of explored paths from their permanent predecessor paths.
	Thus, for every node $v \in V$, we do not only store its permanent paths in the list $\mathcal{P}_{sv}$: for each outgoing arc $(v,w)$ of $v$, the $\NQP_{vw}$ list stores the expansions of the permanent $s$-$v$-paths towards $w$ that are still not dominated.
	In the worst case, these are the expansions of all paths in $\mathcal{P}_{sv}$.
	Thus, if $N_{\max}$ is the maximum cardinality of a list of permanent paths, the $\NQP$ lists require an additional storage space of $\bigoh{m N_{\max}}$.
	
	Using the $\NQP$ lists has no impact on the running time bounds for the original MDA or the \mda{}. In the original MDA exposition the explored paths need to be expanded from the permanent predecessor paths in every call to \textit{nextCandidatePath}.
	The required sums to build the new cost vectors can be neglected when analyzing the running time.
	The $\NQP$ lists need to remain sorted, but as shown in \Cref{lem:orderedNcl}, their ordering can be maintained by always prepending or appending new paths to the lists.
	Thus, there is no extra computational effort required to sort the lists.
	
	In contrast to the One-To-All scenario, it is not possible to design an algorithm for the One-to-One MOSP Problem that runs in polynomial time w.r.t. to the input size and the output size of any given instance (cf. \cite{Boekler18}).
	Still, it is worth noting that the running time bounds derived in \cite{Sedeno19} and \cite{Casas2021a} for the MDA are still valid for the new \mda{} despite of the increased memory consumption caused by the $\NQP$ lists.
	In the biobjective case ($d=2$), the \mda{} runs in $\mathcal{O}(N\log n + N_{\max}m)$ (see \Cref{sec:bidimensional}). For the multiobjective case ($d>2$), the running time bound is $\mathcal{O}(d N_{\max} (n \log n + N_{\max}m))$.
	
	\subsection{Obtaining Heuristics and Dominance Bounds}\label{sec:computeHeuristics}
	
	The \mda{} requires a heuristic $\pi$ and a dominance bound $\beta_t$ as part of its input. In this section we show how to obtain both for a $d$-dimensional One-to-One MOSP instance $(G,s,t,c)$.
	
	For a set $\mathcal{X} \subseteq{\R^d}$ let $\mathcal{X}^*$ be the set of non-dominated vectors in $\mathcal{X}$. Then, the \emph{ideal point} of $\mathcal{X}$
	$$ x^* := \left(\min_{x \in \mathcal{X}}x_1, \dots, \min_{x \in \mathcal{X}}x_d\right).$$ 
%	and the \emph{nadir point} of $\mathcal{X}$
%	$$ \bar{x} := \left(\max_{x \in \mathcal{X}^*}x_1, \dots, \max_{x \in \mathcal{X}^*}x_d\right) $$
	is a lower bound on the values of the non-dominated vectors in $\mathcal{X}^*$ (cf. \cite{Ehrgott2005}). For ease of exposition, we introduce the following concept.
	
	\begin{definition}[Set of Preprocessing Orders]
		A set $\Sigma \subset \N^d$ is called a \emph{set of preprocessing orders} if
		\begin{enumerate}
			\item $|\Sigma| = d$
			\item All elements in $\Sigma$ are permutations of $(1, \dots, d)$.
			\item For every $i \in \{1, \dots, d\}$ there is exactly one permutation vector $\sigma \in \Sigma$ s.t. $\sigma_1 = i$.
		\end{enumerate}
	\end{definition}
	
	For any $v \in V$, we denote the set of efficient $v$-$t$-paths in $G$ by $\mathcal{P}_{vt}$.
	Moreover, given a permutation $\sigma$ of $(1, \dots, d)$, we call a \emph{$t$-to-all $\sigma$-lexicographic Dijkstra query} a run of the classical one-to-all Dijkstra Algorithm \cite{Dijkstra59} on the reversed digraph of $G$ rooted at $t$ if it processes the explored paths in lex. non decreasing order w.r.t. $\sigma$.
	The query finds a reversed shortest path tree $T_{\sigma}$.
	We denote the unique (reversed) $v$-$t$-path in $T_{\sigma}$ by $T_{\sigma}(v)$ for every $v \in V$.
	$T_{\sigma}(v)$ is a shortest $v$-$t$-path w.r.t. the $\sigma_1$th cost component of the arc cost vectors $c_{a}$, $a \in A$. 
	%the $\sigma_1$th cost component of $c(T_{\sigma}(v))$ is minimal among all $v$-$t$-paths in $G$. 
	The following statement is clear:
	
	\begin{theorem}
		The ideal points $c^*_{v,t}$ of all sets $c(\mathcal{P}_{vt})$ can be computed by running a $\sigma$-lexicographic $t$-to-all Dijkstra query for every $\sigma \in \Sigma$. 
	\end{theorem}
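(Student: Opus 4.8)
The plan is to unpack the definitions and observe that the statement reduces to the correctness of Dijkstra's algorithm applied component-wise on the reversed graph. First I would recall what the ideal point $c^*_{v,t}$ of the set $c(\mathcal{P}_{vt})$ is: by definition its $i$-th coordinate is $\min_{q \in \mathcal{P}_{vt}} c_i(q)$, i.e.\ the minimum of the $i$-th cost component over all efficient $v$-$t$-paths. Since every $v$-$t$-path is either efficient or dominated by an efficient one (and dominance can only make components smaller or equal), this minimum equals $\min_{q} c_i(q)$ taken over \emph{all} $v$-$t$-paths in $G$; so the $i$-th coordinate of $c^*_{v,t}$ is exactly the length of a single-criterion shortest $v$-$t$-path with respect to the scalar arc weights $c_i(a)$, $a \in A$.

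Next I would fix $i \in \{1,\dots,d\}$ and invoke the third property of a set of preprocessing orders: there is exactly one $\sigma \in \Sigma$ with $\sigma_1 = i$. For this $\sigma$, the $\sigma$-lexicographic $t$-to-all Dijkstra query runs Dijkstra's algorithm on the reversed digraph rooted at $t$, breaking ties lexicographically by $\sigma$; as already noted in the excerpt, the resulting tree $T_\sigma$ has the property that $T_\sigma(v)$ is a shortest $v$-$t$-path with respect to the $\sigma_1 = i$-th cost component. Hence the $i$-th component of $c(T_\sigma(v))$ equals the scalar shortest-path distance from $v$ to $t$ in the $i$-th coordinate, which by the previous paragraph is precisely the $i$-th coordinate of $c^*_{v,t}$. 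Running one such query for every $\sigma \in \Sigma$ therefore recovers, coordinate by coordinate, the full vector $c^*_{v,t}$ for every $v \in V$ simultaneously, which is what the theorem claims.

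The only substantive point — and the reason the excerpt can legitimately call the statement "clear" — is the first reduction: that restricting the minimum in the ideal point from efficient paths to all paths does not change its value. This follows because the nonnegativity of arc costs guarantees that a shortest path in coordinate $i$ exists and is attained by a simple path, and any such minimizer, if dominated, is dominated by an efficient path whose $i$-th component is no larger, so an efficient minimizer of coordinate $i$ always exists. I do not expect any real obstacle here; the remaining content is just the textbook correctness of Dijkstra's algorithm on nonnegative weights, which the excerpt is entitled to assume. I would state the argument in two short paragraphs mirroring the two reductions above, cite \cite{Dijkstra59} for the Dijkstra correctness and \cite{Ehrgott2005} for the ideal-point bound, and conclude.
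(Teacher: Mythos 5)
Your proposal is correct: the paper offers no proof at all (it introduces the theorem with ``The following statement is clear''), and your two-step argument --- first that the $i$-th coordinate of the ideal point of $c(\mathcal{P}_{vt})$ equals the single-criterion shortest $v$-$t$-distance w.r.t.\ $c_i$ because any dominated minimizer is dominated by an efficient path with no larger $i$-th component, and second that the unique $\sigma \in \Sigma$ with $\sigma_1 = i$ delivers exactly that distance via $c_i(T_\sigma(v))$ --- is precisely the reasoning the paper's surrounding text and \Cref{tab:ideal} implicitly rely on. A marginally slicker closing observation, which you do not need but which the lexicographic tie-breaking provides for free, is that $T_\sigma(v)$ is itself an efficient $v$-$t$-path, so the minimum over $\mathcal{P}_{vt}$ is attained by the very path the query returns.
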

	
%	\begin{proof}
%		For any $\sigma \in \Sigma$ and $v \in V$, let $c^*_{\sigma}(v,t) \in \R^d$ be the costs of the cost minimal $v$-$t$-path found during the $\sigma$-lexicographic $t$-to-all Dijkstra query.
%		Since this query minimizes the costs of a $v$-$t$-path according to the first cost component in $\sigma$, $c^*_{\sigma}(v,t)$ is the cost vector of a cost minimal $v$-$t$-path w.r.t. $\sigma_1$.
%		Since $\Sigma$ is a set of preprocessing orders, the lexicograhic Dijkstra queries compute one cost minimal $v$-$t$-path w.r.t. to each cost dimension in $\{1, \dots, d\}$.
%		Arranging the cost components correctly, we can construct the ideal point of $c(\mathcal{P}_{vt})$ for all $v \in V$ after the $d$ lexicographic Dijkstra queries.
%	\end{proof}
	
	\begin{example}
		\Cref{tab:ideal} shows how the ideal point for $c(\mathcal{P}_{vt})$ is built after running $d=3$ lexicographic Dijkstra queries, one for each element of a set of preprocessing orders $\Sigma$.
		\begin{table}
			\centering
			\caption{Computing an ideal point out of the results of the $\sigma$-lexicographic $t$-to-all Dijkstra queries dictated by a set of preprocessing orders $\Sigma$.}\label{tab:ideal}
			\begin{tabular}{l l}
				$\Sigma$ & $c(T_{\sigma}(v))$ \\
				\midrule
				$\sigma = (\textcolor{red}{1}, 2, 3)$ & $(\textcolor{red}{4}, 7, 3)$ \\
				$\sigma = (\textcolor{red}{2}, 1, 3)$ & $(8, \textcolor{red}{2}, 4)$ \\
				$\sigma = (\textcolor{red}{3}, 1, 2)$ & $(5, 3, \textcolor{red}{2})$ \\
				\midrule
				\multicolumn{2}{c}{$c^*(v,t) = (4, 2, 4)$}
			\end{tabular}
		\end{table}
	\end{example}
	
	The following statement makes clear why we need the ideal points of the sets $c(\mathcal{P}_{vt})$.
	
	\begin{proposition}
		The node potential $\pi(v) := c^*_{v,t}$, $v\in V$, is a heuristic.
	\end{proposition}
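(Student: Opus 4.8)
The plan is to verify the two conditions of \Cref{def:monotone_heuristic} one coordinate at a time, reducing each to a standard single-objective shortest-path fact. For $i \in \{1,\dots,d\}$ write $\ell_i(v)$ for the length of a shortest $v$-$t$-path in $G$ with respect to the scalar cost component $c_i$. The key observation, which is exactly the content of the preceding discussion, is that for every $v \in V$ and every coordinate $i$ one has $\pi(v)_i = (c^*_{v,t})_i = \ell_i(v)$: the path $T_\sigma(v)$ produced by the $\sigma$-lexicographic $t$-to-all Dijkstra query with $\sigma_1 = i$ is a shortest $v$-$t$-path w.r.t.\ $c_i$, and the $i$-th coordinate of the ideal point is precisely $c_i(T_\sigma(v))$. (For nodes $v$ from which $t$ is unreachable we set $\pi(v) = [\infty]_{i=1}^d$; the claims below are then vacuous or trivially true under the usual $\infty$ conventions, so we may assume $t$ is reachable from $v$.)

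First I would check admissibility. Let $p$ be an efficient $v$-$t$-path, so $c(p)\in c(\mathcal{P}_{vt})$. Since the ideal point of a set is a componentwise lower bound on that set (cf.\ \cite{Ehrgott2005}), and since in each coordinate the minimum over $c(\mathcal{P}_{vt})$ equals the minimum over the costs of \emph{all} $v$-$t$-paths (any dominated path is dominated by an efficient one, hence has no smaller cost in any coordinate), we obtain $\pi(v) = c^*_{v,t} \leq c(p)$, which is the admissibility condition. The first monotonicity condition is immediate: the unique efficient $t$-$t$-path is the empty path, of cost $0$, so $\pi(t) = c^*_{t,t} = 0$.

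It remains to establish $\pi(u) \leq c_{uv} + \pi(v)$ for every arc $(u,v)\in A$. Fix a coordinate $i$ and let $q$ be a shortest $v$-$t$-path with respect to $c_i$, so $c_i(q) = \ell_i(v)$. Then $(u,v)\circ q$ is a $u$-$t$-walk of scalar cost $(c_{uv})_i + \ell_i(v)$, and deleting its cycles yields a $u$-$t$-path of no larger $c_i$-cost (costs are nonnegative); hence $\ell_i(u) \leq (c_{uv})_i + \ell_i(v)$. As this holds for every $i$, we conclude $\pi(u) = c^*_{u,t} \leq c_{uv} + c^*_{v,t} = c_{uv} + \pi(v)$, i.e.\ the second monotonicity condition, so $\pi$ is a monotone heuristic (and, by the remark, admissible as well — though we proved this directly). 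The argument is essentially routine; the only point requiring care is that the ideal point need not be attained by a single $v$-$t$-path, which is exactly why the monotonicity inequality must be argued separately in each coordinate using a coordinate-specific witness path rather than with one common path, together with the bookkeeping for nodes from which $t$ is unreachable.
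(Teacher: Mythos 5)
Your proof is correct. The paper offers no proof of this proposition (it is stated as evident after the discussion of the $\sigma$-lexicographic Dijkstra queries), and your coordinate-wise reduction to the single-objective triangle inequality for the distances $\ell_i$ --- together with the correct observation that the ideal point need not be attained by a single $v$-$t$-path, which is why monotonicity must be verified with a coordinate-specific witness path --- is exactly the standard argument the authors implicitly rely on.
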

	
%	Computing the \emph{nadir point} of the sets $c(\mathcal{P}_{vt})$ can not be done with only $d$ lexicographic Dijkstra queries. Instead, $d!$ such queries are needed.
%	We want to avoid this super-polynomial effort in a preprocessing phase of the \mda{} and still need an upper bound on the non-dominated cost vectors in $c(\mathcal{P}_{vt})$.
	
	Furthermore, we can use the results of the lexicographic Dijkstra queries to construct a dominance bound. The following theorem holds for:
	$$ \beta_t := \left( \max_{\sigma \in \Sigma} c_1(T_{\sigma}(s)) + \varepsilon, \dots, \max_{\sigma \in \Sigma} c_d(T_{\sigma}(s)) + \varepsilon \right)$$
	and $\varepsilon > 0$.
%	\begin{proposition}[\cite{Ehrgott2005}, Section 2.2]
%		The point
%		$$ ub(v,t) := \left( \max_{\sigma \in \Sigma} c^*_{\sigma,1}(v,t), \dots, \max_{\sigma \in \Sigma} c^*_{\sigma,d}(v,t) \right)$$
%		is a feasible upper bound on the set of efficient $v$-$t$-paths in $G$.
%	\end{proposition}

	\begin{theorem}\label{thm:pruningUb}
		$\beta_t$ is a dominance bound, i.e., there is no efficient $s$-$t$-path $p$ that is dominated by $\beta_t$.
	\end{theorem}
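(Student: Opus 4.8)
The plan is a short argument by contradiction that simply exploits the fact that each lexicographic query already produces a concrete $s$-$t$-path whose cost lies componentwise strictly below $\beta_t$.

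First I would fix an arbitrary $\sigma\in\Sigma$ and isolate the two properties of $T_\sigma(s)$ that are needed. By construction of the reversed shortest path tree $T_\sigma$, the object $T_\sigma(s)$ is a genuine $s$-$t$-path in $G$. Moreover, since $(\beta_t)_i=\max_{\sigma'\in\Sigma}c_i(T_{\sigma'}(s))+\varepsilon$ and $\varepsilon>0$, for every coordinate $i\in\{1,\dots,d\}$ we have
\[
  c_i\bigl(T_\sigma(s)\bigr)\ \le\ \max_{\sigma'\in\Sigma}c_i\bigl(T_{\sigma'}(s)\bigr)\ <\ (\beta_t)_i .
\]
Hence $c\bigl(T_\sigma(s)\bigr)<\beta_t$ in every coordinate.

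Next I would assume, for contradiction, that some efficient $s$-$t$-path $p$ is dominated by $\beta_t$. By the (vector-to-path version of the) dominance relation in \Cref{def:dominance}, this means $(\beta_t)_i\le c_i(p)$ for all $i$. Combining this with the displayed inequality yields $c_i\bigl(T_\sigma(s)\bigr)<(\beta_t)_i\le c_i(p)$ for every $i$, so $c\bigl(T_\sigma(s)\bigr)<c(p)$ strictly in all $d$ coordinates. In particular $T_\sigma(s)\neq p$, and $T_\sigma(s)$ dominates $p$ in the sense of \Cref{def:dominance}. This contradicts the efficiency of $p$, which completes the argument. (There is one degenerate case to mention: if $t$ is not reachable from $s$ there is no $s$-$t$-path at all, the queries yield no path $T_\sigma(s)$, and the statement holds vacuously.)

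I do not expect a real obstacle here; the only point that needs care is the orientation of the dominance relation. The bound $\beta_t$ is deliberately chosen to be ``large'', so a dominated efficient path would have to be componentwise at least $\beta_t$ — and the componentwise-cheaper witness path $T_\sigma(s)$, which exists whenever $t$ is reachable, makes that impossible. Note that efficiency of $T_\sigma(s)$ is not even required for this argument; it suffices that it is some $s$-$t$-path.
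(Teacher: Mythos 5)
Your proof is correct and takes essentially the same route as the paper's: both exhibit a concrete tree path $T_{\sigma}(s)$ as a genuine $s$-$t$-path whose cost lies componentwise strictly below that of any path dominated by $\beta_t$, so such a path would itself be dominated and hence not efficient. If anything, your write-up is slightly more careful than the paper's about where the $\varepsilon$ enters to convert the non-strict inequality coming from the dominance assumption into the strict componentwise inequality needed for the contradiction.
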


	\begin{proof}
		Assume the statement is false and let $p$ be an efficient $s$-$t$-path dominated by $\beta_t$. Then, for any $i \in \{1,\dots,d\}$
		$$c_i(p) > \beta_{t,i} = \max_{\sigma \in \Sigma} c_i(T_{\sigma}(s))$$
		and, for fixed $\omega \in \Sigma$,
		$$\max_{\sigma \in \Sigma} c_i(T_{\sigma}(s)) > c_i(T_{\omega}(s)).$$
		The construction of $\beta_t$ does not imply that there is an $s$-$t$-path in $G$ with costs $\beta_t$. However, $T_{\omega}(s)$ is an $s$-$t$-path in $G$ and
		$$c_i(p) > c_i(T_{\omega}(s))$$
		implies that $p$ is a dominated path, which contradicts the assumption.
	\end{proof}
	
	The following example explains why we need to add $\varepsilon$ to every dimension of $\beta_t$ to obtain a feasible bound.
	
	\begin{example}
		Suppose the graph $G$ only contains the nodes $s$ and $t$ and an arc connecting both.
		Regardless of the chosen set of preprocessing orders $\Sigma$, we have $\max_{\sigma \in \Sigma} c(T_{\sigma}(s)) = c_{(s,t)}$ and $c_{(s,t)}$ is the cost of the only efficient $s$-$t$-path in $G$. Hence, 
			$$ \left( \max_{\sigma \in \Sigma} c_1(T_{\sigma}(s)), \dots, \max_{\sigma \in \Sigma} c_d(T_{\sigma}(s))\right) \dom c_{(s,t)}$$
		holds and would invalidate \Cref{thm:pruningUb}.
	\end{example}
	
	All in all, $d$ lexicographic Dijkstra queries are enough to compute a heuristic and a dominance bound for any One-to-One MOSP instance.
	The heuristic is used to influence the ordering of the paths in $Q$ to guide the search towards the target node.
	Moreover, $\pi$ and $\beta_t$ are used in Line \ref{algo:propagate:pruneatfront} of \ref{algo:propagate} and Line \ref{algo:ncla*:pruneatfront} of \ref{algo:ncla*} to apply pruning techniques that discard provably irrelevant paths before they are further expanded.
	For a path $p$, the dominance check $c(\mathcal{P}_{st}) \dom \bar{c}(p)$ is called \emph{pruning by target efficient set} and the check $\beta_t \dom \bar{c}(p)$ is called \emph{pruning by dominance bound}.
	
	\section{Enhancements for the biobjective case}\label{sec:bidimensional}
	The \bda{} is the biobjective version of the \mda. Even though the \mda{} solves any Biobjective Shortest Path (BOSP) instance, multiple enhancements that we discuss in this section give rise to a more efficient biobjective algorithm with respect to both theory and practice. The pseudocode of the \bda{} is in \Cref{apendix:pseudocode}.

	To build a heuristic and an upper-bound as described in \Cref{sec:Heuristics and Path Pruning}, a set of preprocessing orders $\Sigma$ is needed.
	The only possible set of preprocessing orders in the biobjective case is $\Sigma = \{(1,2), \, (2,1)\}$.
	Then, for any $v \in V$, the two $\sigma$-lexicographic $t$-to-all Dijkstra queries compute $v$-$t$-paths $T_{(1,2)}(v)$ and $T_{(2,1)}(v)$ with costs
	\begin{equation}\label{eq:shortcutCosts}
		c(T_{(1,2)}(v)) = (\pi_1(v), \beta_{v,2}) \text{ and } c(T_{(2,1)}(v)) = (\beta_{v,1}, \pi_2(v)).
	\end{equation}
	In this context, the resulting points $\pi(v)$ and $\beta_v$ are the ideal point and the \emph{nadir point} (cf. \cite{Ehrgott2005}) of $c(\mathcal{P}_{vt})$, respectively.
	
	As noted in \Cref{sec:nclListsComplexity}, the time complexity bound of the BDA, the biobjective version of the MDA is better than the bound for the general multiobjective case.
	The reason for this is the so called \emph{dimensionality reduction} (cf. \cite{Pulido15}) that allows dominance checks in constant time.
	
	\begin{theorem}\label{prop:dimensionality}
		For a node $v \in V$ let the lists $\mathcal{P}_{sv}$ of permanent $s$-$v$-paths be sorted in lex. increasing order.
		Consider an $s$-$v$-path $p$ that is lex. greater than every path in $\mathcal{P}_{sv}$.
		Then, $c(\mathcal{P}_{sv}) \dom c(p)$ can be checked in constant time in the biobjective case. 
	\end{theorem}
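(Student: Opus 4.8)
The plan is to exploit the ``staircase'' shape of non-dominated cost vectors in the plane. Write $\mathcal{P}_{sv} = (p_1,\dots,p_k)$ in the order in which the paths were made permanent; by Lemma~\ref{lem:ordered_extraction} (and the remark following it) this list is sorted lexicographically non-decreasingly, both w.r.t.\ $c$ and w.r.t.\ $\bar c$ (Remark~\ref{rem:orderInvariant}), so in particular $c_1(p_1)\le c_1(p_2)\le\dots\le c_1(p_k)$. By Lemma~\ref{lem:efficient_extract} every $p_i$ is an efficient $s$-$v$-path, hence no $p_i$ dominates any $p_j$. From this I would derive that the second coordinate is non-increasing along the list: if $c_2(p_i) < c_2(p_j)$ for some $i<j$, then together with $c_1(p_i)\le c_1(p_j)$ we get $c(p_i)\le c(p_j)$ with a strict inequality in the second component, i.e.\ $p_i$ dominates $p_j$ — a contradiction. (No distinctness of cost vectors is needed; equal cost vectors just give equal second coordinates.) Consequently $c_1(p_k) = \max_{q\in\mathcal{P}_{sv}} c_1(q)$ and $c_2(p_k) = \min_{q\in\mathcal{P}_{sv}} c_2(q)$.

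Next I would use the hypothesis that $p$ is lexicographically greater than every path in $\mathcal{P}_{sv}$, in particular than $p_k$: if $c$ and $c(p)$ first differ from $c(p_k)$ in the first coordinate then $c_1(p) > c_1(p_k)$, otherwise $c_1(p) = c_1(p_k)$; either way $c_1(p)\ge c_1(p_k)$. Hence $c_1(q)\le c_1(p_k)\le c_1(p)$ for \emph{every} $q\in\mathcal{P}_{sv}$, so the first-coordinate part of the dominance test is automatically satisfied by every permanent path. Therefore
\[
  c(\mathcal{P}_{sv}) \dom c(p)
  \iff \exists\, q\in\mathcal{P}_{sv}:\ c_2(q)\le c_2(p)
  \iff \min_{q\in\mathcal{P}_{sv}} c_2(q)\le c_2(p)
  \iff c_2(p_k)\le c_2(p),
\]
with the convention that an empty $\mathcal{P}_{sv}$ makes the left-hand side false, consistently with the right-hand side. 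So the entire dominance query collapses to a single scalar comparison against the last element of the sorted list $\mathcal{P}_{sv}$.

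Finally I would settle the ``constant time'' claim on the data-structure side: since $\mathcal{P}_{sv}$ is kept as a list sorted in lex.\ order, its lex.-maximal element $p_k$ is its last element and can be accessed in $\mathcal{O}(1)$; evaluating $c_2(p_k)\le c_2(p)$ is then one comparison, giving the $\mathcal{O}(1)$ bound.

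I do not expect a serious obstacle here. The two points to handle carefully are (i) the monotonicity argument, including the edge case where two permanent paths share their first coordinate (the lex.\ order still yields $c_2$ non-increasing without invoking strictness), and (ii) making explicit \emph{why} the first-coordinate comparison is vacuous for every $q\in\mathcal{P}_{sv}$ — this vacuity is exactly the ``dimensionality reduction'' phenomenon and is the reason the test is $\mathcal{O}(1)$ rather than $\mathcal{O}(\log|\mathcal{P}_{sv}|)$ or $\mathcal{O}(|\mathcal{P}_{sv}|)$.
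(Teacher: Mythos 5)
Your proposal is correct and follows essentially the same route as the paper's proof: both reduce the dominance query to the single comparison $c_2(p_k)\le c_2(p)$ against the last (lex.\ greatest) permanent path, justified by the fact that efficiency plus lexicographic sorting forces the second cost component to be non-increasing along $\mathcal{P}_{sv}$ while the first-coordinate test is vacuous. Your write-up is merely more explicit about the equal-first-coordinate edge case and the empty-list convention, which the paper leaves implicit.
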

	\begin{proof}
		Let $q$ be the last path in $\mathcal{P}_{sv}$ and recall that $\mathcal{P}_{sv}$ contains only efficient $s$-$v$-paths.
		Since $p$ is lex. greater than $q$, we have $c(q) \dom c(p)$ if and only if $c_2(q) \leq c_2(p)$.
		Moreover, if $q$ does not dominate $p$, then no other path $q' \in \mathcal{P}_{sv}$ does, since $q'$ is lex. smaller than $q$ and thus $c_2(q') \geq c_2(q)$. 
	\end{proof}
	
	\Cref{prop:dimensionality} is ubiquitous in the remainder of this section.
	In the \bda{} pseudocode and its implementation we only set $\gamma_{sv}^* = c_2(p)$ every time an $s$-$v$-path $p$ is extracted from $Q$ (Line \ref{algo:bda*:permanent}) rather than updating $\mathcal{P}_{sv}$ as in Line \ref{algo:mda*:linePushBack2} of the \mda.
	Thus, $\gamma_{sv}^*$ is a scalar set to the second cost component of the lex. greatest $s$-$v$-path extracted so far. 
	The second cost component of new explored $s$-$v$-paths is then compared with $\gamma_{sv}^*$ to decide in constant time whether they are dominated by any efficient $s$-$v$-path that was extracted before during the \bda.

	Moreover, in order to be able to store all efficient paths that are found in the biggest instances used in the biobjective experiments from \Cref{sec:experiments}, we decided to design the \bda{} without using the $\NQP$ lists described in \Cref{sec:Algorithm}. 
	Instead, the original version of the \texttt{nextCandidatePath} routine (cf. \cite{Sedeno19, Casas2021a}) described in \Cref{sec:nclListsComplexity} is used since it is less memory consuming. 
	Consider a node $v \in V$ and an extracted $s$-$v$-path $p$. 
	In the \mda, the extensions of $p$ along the outgoing arcs of $v$ are only added to the $\NQP$ lists in \ref{algo:propagate} if they are not dominated. 
	The \bda{} mirrors this behavior by only adding an extracted path $p$ to $\mathcal{P}_{sv}$ if its expansion in \ref{algo:propagate2d} is successful along at least one outgoing arc of $v$. 
	If all expansions are dominated, then it is guaranteed that $p$ is not a subpath of any new queue path for a neighbor of $v$ and is no longer needed in the \bda{}.
	
	\begin{remark}
		A naive pseudocode and implementation of the \bda{} does not require the scalars $\gamma_{sv}^*$.
		Every extracted $s$-$v$-path can be added to $\mathcal{P}_{sv}$ directly after its extraction like in the \mda{} and dominance checks can be assumed to implicitly access the second cost component of the last path in $\mathcal{P}_{sv}$ only.
		However, we found that in practice, storing extracted paths that are not successfully propagated has a negative impact on the running time of \ref{algo:ncl2d}. Thus, we decided to present the \bda{} using the additional scalars $\gamma_{sv}^*$ and make extracted paths permanent in \Cref{algo:bda*:permanentForNcl} of the \bda{} only after a successful propagation. Then, the \ref{algo:ncl2d} searches do less iterations that only expand unpromising paths.
	\end{remark}
	
	For any node $v \in V$, every $s$-$v$-path $p$ extracted from $Q$ during the \bda{} is guaranteed to be an efficient $s$-$v$-path. However, $\bar{c}(p)$ could be dominated at $c(\mathcal{P}_{st})$. 
	In this case, it is not necessary to propagate $p$ along the outgoing arcs of $p$. 
	Including an extra dominance check after \ref{algo:ncla*} in the \mda{} did not improve the algorithm's performance. 
	However, since dominance is checked in constant time in the biobjective case, Line \ref{algo:bda*:dominanceAfterExtract} in the \bda{} aborts an iteration if the reduced costs of an extracted path are dominated at $c(\mathcal{P}_{st})$. 
	
	\subsection{Shortcuts to Target}\label{sec:shortcuts}
	The larger $\mathcal{P}_{st}$ becomes, the more likely it is that a newly explored path $p$ is discarded using the pruning by target efficient set.
	Thus, finding efficient $s$-$t$-paths in early iterations is crucial to discard irrelevant paths early.
	
	The following \emph{shortcut techniques} are discussed in \cite{Cabrera20} and in \cite{Ahmadi21}. 
	In \cite{Cabrera20} the Constrained Shortest Path problem is solved using the \emph{Pulse} Algorithm. In \cite{Ahmadi21} a label-setting algorithm for BOSP problems that is based on the classical algorithm by Martins \cite{Martins84} is presented.
	We show how shortcuts can be included in the \bda{} and explain the theoretical drawbacks of this technique in a general multiobjective scenario.
	
	For any $v \in V$, let $p$ be an $s$-$v$-path extracted from $Q$. $p$ can be concatenated with the path $T_{(1,2)}(v)$ computed during the $(1,2)$-lexicographic $t$-to-all Dijkstra query in the preprocessing stage to obtain an $s$-$t$-path $p_{st}$.
	Using \eqref{eq:shortcutCosts}, we have
	\begin{equation}\label{eq:costs}
		c(p_{st}) = \left(c_1(p) + \pi_1(v), \  c_2(p) + \beta_{v,2}\right) = \left(\bar{c}_1(p), \  c_2(p) + \beta_{v,2}\right).
	\end{equation}
	
	If $p_{st}$ is not dominated by any path in $\mathcal{P}_{st}$, it can be added to $\mathcal{P}_{st}$ (see Lines \ref{algo:bda*:initShortcut}-\ref{algo:bda*:endShortcut} of \Cref{algo:bda*}).
	
	\begin{remark}[Overtaking via shortcut]\label{rem:overtaking}
		\Cref{lem:ordered_extraction} guarantees that paths are extracted from $Q$ in lex. non decreasing order w.r.t. $\bar{c}$.
		The first cost component of the path $p_{st}$ is consistent with this ordering but the second cost component is not since we add $\beta_{v,2}$ rather than $\pi_2(v)$ to $c_2(p)$ (see \eqref{eq:costs}).
		Thus, an explored path $q$ with $c_1(p_{st}) = c_1(q)$ and $c_2(p_{st}) > c_2(q)$ could exist when $p_{st}$ is made permanent.
		Concerning the order in which paths are made permanent dictated by \Cref{lem:ordered_extraction}, $p_{st}$ has overtaken $q$.
	\end{remark}
	
	\begin{example}[Overtaken $s$-$t$-paths can be dominated]\label{ex:shortcutError}
	\begin{figure}
		\centering
		\begin{tikzpicture}[scale=1.25]
		\tikzstyle{arc}=[->,thick]
		\tikzstyle{nodo}=[]
		\node (s) at (-2,0) [nodo] {$s$};
		\node (v) at (2.5,1.5) [nodo] {$v$};
		\node (w) at (2.5,-1.5) [nodo] {$w$};
		\node (t) at (0.5,0) [nodo] {$t$};
		
		%\draw [->,thick, fill=white]      (1) to (3) {$1/1/1$};
		\draw (s) -- (v) [arc] node [midway,fill=white] {\footnotesize{$(1,1)$}};
		\draw (s) -- (t) [arc] node [midway,fill=white] {\footnotesize{$(1,10)$}};
		\draw (s) -- (w) [arc] node [midway,fill=white] {\footnotesize{$(2,2)$}};
		\draw (v) -- (t) [arc] node [midway,fill=white] {\footnotesize{$(2,4)$}};
		\draw (w) -- (t) [arc] node [midway,fill=white] {\footnotesize{$(1,2)$}};
		\draw (v) -- (w) [arc] node [midway,fill=white] {\footnotesize{$(2,0)$}};
		\end{tikzpicture}
		\caption{Biobjective Shortest Path Instance.}\label{fig:shortcutError}
	\end{figure}
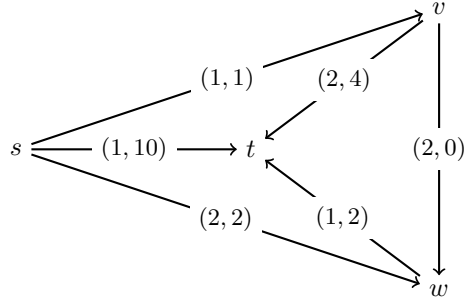
	Consider the BOSP instance in \Cref{fig:shortcutError}.
	We have $\pi(v) = (2, 2)$, $\beta_{v} = (3, 4)$, and $\pi(w) = (1,2)$. 
	The path $q = (s, w)$ has costs $c(q) = (2,2)$ and $\bar{c}(q) = (3,4)$. 
	The path $p = (s,v)$ has costs $c(p) = (1,1)$ and $\bar{c}(p) = (3,3)$. $p$ and $q$ are added to $Q$ during the first iteration of the \bda.
	Since $\bar{c}(p) \lexsmaller \bar{c}(q)$, $p$ is extracted prior to $q$. 
	The corresponding shortcut $p_{st} = p \circ T_{(1,2)}(v)$ has costs $(3,5)$ and is added to $\mathcal{P}_{st}$ since the only path therein has costs $(1,10)$.
	In the iteration after $p$'s extraction, the path $q$ is extracted.
	The path $q \circ T_{(1,2)}(w)$ has costs $(3, 4)$ and dominates $p_{st}$.
	Thus, $q \circ T_{(1,2)}(w)$ has to replace $p_{st}$ in $\mathcal{P}_{st}$.
	\end{example}
	
	Correcting the erroneous addition of shortcut paths to $\mathcal{P}_{st}$ as described in \Cref{ex:shortcutError} is done in constant time in the biobjective scenario.
	A shortcut path $p_{st} = p \circ T_{(1,2)(v)}$ is only added to $P_{st}$ if it is not dominated by any path therein.
	Thus, the addition of $p_{st}$ can only be unwanted if in a later iteration an $s$-$t$-path that dominates $p_{st}$ is found.
	Since $p_{st}$'s first cost component is consistent with the order in which the \bda{} extracts explored paths from $Q$, $p_{st}$ can only be dominated by an $s$-$t$-path $q$ with $c_1(q) = c_1(p_{st})$ and $c_2(q) < c_2(p_{st})$. 
	Thus, before adding an $s$-$t$-path $q$ to $\mathcal{P}_{st}$, the \bda{} checks if $q$ dominates the last path in $\mathcal{P}_{st}$.
	If so, the last path in $\mathcal{P}_{st}$ is replaced by $q$. Otherwise, $q$ is appended to $\mathcal{P}_{st}$ (see Lines \ref{algo:bda*:initShortcut}-\ref{algo:bda*:endShortcut} of \Cref{algo:bda*}). 
	Note that since $T_{(1,2)}(t) = (t)$ and $c(T_{(1,2)}(t)) = (0,0)$, an $s$-$t$-path extracted from $Q$ in the \bda{} can be concatenated with $T_{(1,2)}(t)$ without altering the path and can thus be treated as a shortcut path to make it permanent in Lines \ref{algo:bda*:initShortcut}-\ref{algo:bda*:endShortcut}.
	
	Further iterations of the \bda{} can be avoided if the first cost components of the ideal point $\pi(v)$ and the nadir point $\beta_{v}$ of $c(\mathcal{P}_{vt})$ coincide.
	In this case, if $p \circ T_{(1,2)}(v)$ is an efficient $s$-$t$-path, then it is guaranteed that no further efficient $s$-$t$-path can be built via $v$. 
	Hence, $p$ does not need to be further expanded along the outgoing arcs of $v$ (cf. Line \ref{algo:bda*:abortIteration} of \Cref{algo:bda*}).
 
 	\begin{remark}[Shortcuts in a multiobjective scenario]
 		In a general multiobjective scenario, the wrong addition of overtaking paths to $\mathcal{P}_{st}$ comes at a higher costs.
 		In the worst case, the newly extracted path $q'$ has to be compared to all paths $q \in \mathcal{P}_{st}$ such that $c_1(q) = c_1(q')$. 
 		There is no known a priori bound for the number of such paths in $\mathcal{P}_{st}$ and thus, the complexity of the deletion of dominated paths in $\mathcal{P}_{st}$ is in $\mathcal{O}(dN_{\max})$.
 	\end{remark}
	
	\subsection{Bidirectional Search in Parallel}
	
	For a BOSP instance $(G,s,t,c)$, the instance introduced in the following definition swaps the roles of $s$ and $t$ and the roles of the first and second arc cost components.
	
	\begin{definition}[Reversed BOSP instance]
		Consider a BOSP instance $\mathcal{I} = (G, s, t, c)$. Let $\overleftarrow{G}$ be the reversed digraph of $G$ and for any arc $a \in A$ let $\overleftarrow{c}_a := (c_{a,2}, c_{a,1})$ be the cost vector of $a$ with swapped cost components. 
		The BOSP instance $\overleftarrow{\mathcal{I}} := (\overleftarrow{G}, t, s, \overleftarrow{c})$ is called the \emph{reversed BOSP instance} of $\mathcal{I}$.
	\end{definition}

	Recall that the output $\mathcal{P}_{st}$ of the \bda{} is a minimal complete set of efficient $s$-$t$-paths and that the paths therein are sorted in lex. increasing order. The following statement is clear.

	\begin{proposition}\label{prop:reverseSearch}
		The lists $c(\mathcal{P}_{st})$ and $c(\mathcal{P}_{ts})$ put out by the \bda{} after solving $\mathcal{I}$ and $\overleftarrow{\mathcal{I}}$, respectively, contain the same cost vectors but in reversed order.
	\end{proposition}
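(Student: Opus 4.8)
The plan is to exhibit an explicit cost-preserving bijection between efficient $s$-$t$-paths in $\mathcal{I}$ and efficient $t$-$s$-paths in $\overleftarrow{\mathcal{I}}$, and then argue that the lexicographic ordering imposed on the output of the \bda{} in the two instances corresponds, after the coordinate swap, to \emph{reversed} lexicographic order. First I would observe the path-level correspondence: reversing a path $p = (s, \dots, t)$ in $G$ yields a path $\overleftarrow{p} = (t, \dots, s)$ in $\overleftarrow{G}$ that uses exactly the reversed arcs, so by the definition of $\overleftarrow{c}$ we get $c(\overleftarrow{p}) = (c_2(p), c_1(p))$; that is, the cost vector is the original one with its two components transposed. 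Since dominance in Definition~\ref{def:dominance} is symmetric under a simultaneous permutation of all cost components of all paths, $p$ is efficient for $\mathcal{I}$ if and only if $\overleftarrow{p}$ is efficient for $\overleftarrow{\mathcal{I}}$. Hence $c(\mathcal{P}_{ts})$, as a \emph{set}, is obtained from $c(\mathcal{P}_{st})$ by swapping the two coordinates of every vector, and in particular both lists have the same cardinality and the minimality/completeness of the output is inherited.

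The second step handles the \emph{ordering}. By Lemma~\ref{lem:ordered_extraction} the paths in $\mathcal{P}_{st}$ are sorted in lexicographically non-decreasing order w.r.t.\ $c$ (using Remark~\ref{rem:orderInvariant} to pass from $\bar c$ to $c$), and likewise $\mathcal{P}_{ts}$ is sorted lexicographically w.r.t.\ $\overleftarrow{c}$-costs. For two efficient $s$-$t$-paths $p, q$ with distinct cost vectors, non-dominance forces $c_1(p) \neq c_1(q)$ and $c_2(p) \neq c_2(q)$, and moreover $c_1(p) < c_1(q)$ holds exactly when $c_2(p) > c_2(q)$. Therefore $c(p) \lexsmaller c(q)$ $\iff$ $c_1(p) < c_1(q)$ $\iff$ $c_2(p) > c_2(q)$ $\iff$ $(c_2(q), c_1(q)) \lexsmaller (c_2(p), c_1(p))$, i.e.\ $c(\overleftarrow{q}) \lexsmaller c(\overleftarrow{p})$. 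So the bijection $p \mapsto \overleftarrow p$ is order-reversing between the two sorted output lists, which is precisely the claim once we identify a vector with its coordinate-swapped image.

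The only mild subtlety — the "hard part," though it is genuinely easy here — is to make sure that the \bda{} really does return a full minimal complete set for \emph{both} $\mathcal{I}$ and $\overleftarrow{\mathcal{I}}$ so that the two lists are guaranteed to be in exact index-by-index correspondence, and that the heuristic/dominance-bound preprocessing used for $\overleftarrow{\mathcal{I}}$ does not discard any efficient $t$-$s$-path; this follows from Theorem~\ref{thm:min_complete_set} applied to each instance separately, since that theorem's guarantee is independent of which admissible heuristic and dominance bound are supplied. I would also note that the antisymmetry argument in the previous paragraph uses non-dominance only to rule out ties in a single coordinate; if one prefers, the argument can instead be phrased purely in terms of the fact that the efficient frontier in two dimensions is a strictly decreasing staircase, which is standard (cf.\ \cite{Ehrgott2005}). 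Assembling the three steps — cost-swap bijection, preservation of efficiency and completeness, and order reversal — gives the proposition.
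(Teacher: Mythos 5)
Your proof is correct; the paper itself offers no proof of Proposition~\ref{prop:reverseSearch}, simply declaring the statement ``clear,'' and your argument (cost-swapping bijection, preservation of efficiency, and order reversal via the strictly decreasing two-dimensional efficient frontier) is exactly the natural justification the authors are implicitly relying on.
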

	
%	The correctness of \Cref{prop:reverseSearch} is easy to see. The non-dominated cost vectors of $\mathcal{I}$ and $\overleftarrow{\mathcal{I}}$ coincide. Moreover, the arc cost vectors $c$ and $\overleftarrow{c}$ influence the lexicographic ordering of the paths in $Q$ and the $\mathcal{NCL}$ lists causing the reversed extraction order of the efficient $s$-$t$-paths (resp. $t$-$s$-paths on $\overleftarrow{G}$) when the \mda{} solves $\mathcal{I}$ and $\overleftarrow{\mathcal{I}}$.
	
	Similar to how it is done in \cite{Ahmadi21}, we propose a bidirectional framework to solve a BOSP instance $\mathcal{I}$. It is 
	This framework is run in parallel: two \bda{} searches run simultaneously to partially solve $\mathcal{I}$ and $\overleftarrow{\mathcal{I}}$.
	The \bda{} solving $\mathcal{I}$ is called the \emph{forward search} and the \bda{} solving $\overleftarrow{\mathcal{I}}$ is called the \emph{backward search}.
	To better understand this method, consider the paths' costs vectors of both $\mathcal{I}$ as well as $\overleftarrow{\mathcal{I}}$ in a single $c_1$-$c_2$-cartesian coordinate system.
    Then, the forward search finds efficient $s$-$t$-paths from the upper left to the lower right corner and the backward search finds efficient reversed $s$-$t$-paths from the lower right to the upper left corner.
	W.l.o.g., suppose that the \bda{} solving $\mathcal{I}$ makes an $s$-$t$-path $p$ with costs $c(p)$ permanent. 
	If the \bda{} solving $\overleftarrow{\mathcal{I}}$ already found a $t$-$s$-path with costs $\overleftarrow{c}(p)$, the whole search can stop since the union of the sets $\mathcal{P}_{st}$ and $\mathcal{P}_{ts}$ now builds a minimal complete set of efficient $s$-$t$-paths.
	The resulting parallelized bidirectional BOSP algorithm is called \emph{Bidirectional-Targeted-BDA} (BT-BDA).
	
	Besides the described stopping condition, both searches in the BT-BDA do not require further modifications other than two more lexicographic Dijkstra queries in the preprocessing phase. Moreover, the forward search and the backward search need to share $\beta_t$, $\pi$, and $\overleftarrow{\pi}$ to enhance their performance.
	
	\paragraph{$s$-to-all Dijkstra queries during preprocessing}
	Since the roles of $s$ and $t$ are exchanged and the ordering of the cost components in $\mathcal{I}$ and $\overleftarrow{\mathcal{I}}$ is reversed, two $s$-to-all lexicographic Dijkstra queries have to be performed in the preprocessing phase to determine the heuristic $\overleftarrow{\pi}$ as well as the shortcuts for the backward search.
	The dominance bound computed in these queries coincides with $\beta_t$.
	
	\paragraph{Sharing bounds between searches}
	The forward and the backward searches in the BT-BDA need to share information to check the stopping condition.
	This communication can also be beneficial when it comes to updating bounds.
	To check the stopping condition, both searches need to access $\beta_t$ which at the beginning of the algorithm is the nadir point of $c(\mathcal{P}_{st})$ and $c(\mathcal{P}_{ts})$.
	Then, every time the forward search finds a new $s$-$t$-path, it updates $\beta_{t,2}$ and every time the backward search finds a new $t$-$s$-path, it updates $\beta_{t,1}$.
	In Line \ref{algo:bda*:breakCondition} of the \bda{}, each search compares the reduced costs $\bar{c}$ of the extracted path with the updated bound from the opposite search to decide whether the whole bidirectional search can be stopped. 
	If the \bda{} did not use shortcut paths as described in \Cref{sec:shortcuts}, every node $v \in V$ reached by the search would be reached for the first time along a lex. smallest $s$-$v$-path. 
	However, using shortcut paths, the first extracted $s$-$v$-path $p$ is possibly not lex. minimal. 
	In \Cref{algo:bda*:updateHeuristic} of the \bda{}, $\overleftarrow{\pi}_1(v)$ is updated to $c_1(p)$.
	Recall that the heuristic $\overleftarrow{\pi}(v)$ for the reversed search stores the costs of shortest $s$-$v$-paths w.r.t. both cost components. 
	Moreover, since the reversed search processes explored paths in lex. order w.r.t. $\overleftarrow{c}$, newly explored $t$-$v$-paths $q$ are dominated if
	\begin{align*}
		\beta_{t,1} &\leq \bar{\overleftarrow{c}}_1(q)\\
		\Leftrightarrow \beta_{t,1} &\leq \overleftarrow{c}_1(q) + \overleftarrow{\pi}_1(v).
	\end{align*}
 	Thus, increasing the value of $\overleftarrow{\pi}_1(v)$ in the forward search if $v$ is reached for the first time by a path that is not lex. minimal, increases the likelihood for newly explored $t$-$v$-path to be discarded immediately because they are dominated.
 	This argument is symmetric concerning the updates in the backward search of the $\pi_2(v)$ values for the forward search.
 	
	\section{Experiments}\label{sec:experiments}
	
	%\subsection{Benchmark Algorithms}
	In \Cref{sec:mdaComparison}, we compare the \mda{} with the Multiobjective Dijkstra Algorithm (MDA) introduced in \cite{Casas2021a} on instances with $3$ dimensional cost vectors.
	The MDA is originally a one-to-all MOSP algorithm.
	Using the pruning by target efficient set and the pruning by dominance bound, it is adapted to perform well in the One-to-One scenario.
	Both algorithms are implemented using the $\NQP$ lists introduced in this paper to manage explored paths that are not queue paths.
	It turns out that this implementation is faster in practice and has negligible memory overhead.
	
	In \Cref{sec:biobjExperiments}, we report on the results of our biobjective experiments. 
	It appears clear that the \boaEnh algorithm first presented in \cite{Ahmadi21} as well as its parallelized bidirectional version, the \boaBidi{} algorithm are state-of-the-art BOSP algorithms as they outperformed the BOA* algorithm from \cite{Ulloa20} and the BDA algorithm from \cite{Sedeno19}. 
	Thus, we compare the new \bda{} and the \bda{} with the \boaEnh{} algorithm and the \boaBidi{} algorithm, respectively.
	We run the experiments using two different types of priority queues: binary heaps and so called buckets \cite{Denardo79}.
	The latter can only be used for integer costs and in theory, an exponential number of buckets can be non-empty at the same time. 
	However, they perform better in our experiments than ordinary binary heaps.
	
	In all experiments, we focus on the comparison of algorithmic ideas rather than on the comparison of different heuristics or different priority queues.
	Thus, whenever two algorithms are compared against each other, they use the same heuristic and the same type of priority queue.
	Moreover, we always benchmark the time needed to find a minimal complete set only, i.e., no preprocessing times are benchmarked in our experiments.
	This is reasonable since for interesting instances that are not solved within a few seconds, the running time of the lexicographic Dijkstra queries that are run in the preprocessing stage can be neglected.
	
	All algorithms are implemented in C++. We use the publicly available implementation of the \boaEnh{} and \boaBidi{} algorithms from \cite{Ahmadi21} for the BOSP experiments.
	We used the gcc $7.5.0$ compiler with the optimization flag O3 to build the binaries.
	The MOSP experiments were performed on a computer with an Intel Xeon Gold $5122$ \@ $3.60$GHz processor and $96$ GB memory. 
	The time limit for MOSP instances was set to $5$h.
	The BOSP experiments were performed on a computer with an Intel Xeon CPU $E5-2660$ v$3$ \@ $2.60$GHz processor and $128$ GB memory.
	The time limit for BOSP instances was set to $2$h.
	
	\subsection{Instance Description}
	
	\paragraph{NetMaker Instances $d = 3$} 
	Netmaker graphs are synthetic graphs with $5000$ to $30000$ nodes and $29591$ to $688398$ arcs.
	These graphs have been used in multiple publications \cite{Skriver00, Raith09, Raith18, Casas2021a}.
	In every such graph, all nodes are connected via a Hamiltonian cycle to ensure connectivity.
	Then, arcs between the nodes along the cycle are added randomly. The arc costs are $3$ dimensional and each cost component lies between $1$ and $1000$. 
	For any arc there is a cost component between $1$ and $333$, a cost component between $334$ and $666$, and a cost component between $667$ and $1000$. Which cost component lies in which interval and the actual value of the cost component was randomly set when the graphs were first created.
	We group the NetMaker graphs into groups called \emph{NetMaker-$n$}.
	Each group is defined by the number $n$ of nodes of the graphs in it.
	Within each groups, graphs have varying number of arcs.
	We use the same $s$-$t$-pairs used in \cite{Casas2021a}: for each graph, $20$ randomly generated $s$-$t$-pairs are considered.
	
	\paragraph{Road Networks $d = 2$ and $d = 3$} We use the well known road networks of (parts of) the United States available from the DIMACS Implementation Challenge on Shortest Paths \cite{Dimacs09}. 
	The original instances are two-dimensional: each arc in the network represents a street and the costs of the arc are the street's length and traversal time. 
	For the three dimensional experiments, we add a third cost component that is equal to one on every arc. 
	In \Cref{tab:roadInstances} we list the sizes of the used networks.
	For the experiments we use the same $s$-$t$-pairs ($100$ per road network) that were used in \cite{Ahmadi21}.
	
	\begin{table}
		\small
		\centering
		\caption{Overview of the size of the used road networks.}\label{tab:roadInstances}
		\begin{tabular}{@{}lrrl@{}}
			Instance Name & Nodes   & Arcs     & Considered $s$-$t$-pairs \\
			\midrule
			NY            & $264,346$  & $733,846$   & \multirow{9}{*}{20}     \\
			BAY           & $321,270$  & $800,172$   &                          \\
			COL           & $435,666$  & $1,057,066$  &                          \\
			FLA           & $1,070,376$ & $2,712,798$  &                          \\
			NE            & $1,524,453$ & $3,897,636$ &                          \\
			LKS           & $2,758,119$ & $6,885,658$  &                          \\
			E             & $3,598,623$ & $8,778,114$ &                          \\
			W             & $6,262,104$ & $15,248,146$ &                          \\
			CTR             & $14,081,816$ & $34,292,496$ &                          \\
			\bottomrule
		\end{tabular}
	\end{table}

	\paragraph{Grid Graphs $d = 2$} We consider grid graphs of width $x \in \N$ and height $y \in \N$ where $x$ and $y$ assume the value of every multiple of $50$ between $300$ and $600$. Every such grid graph ($49$ in total) has a source node $s$ connected to all nodes in the leftmost column of the grid. Similarly, all nodes in the rightmost column of the grid are connected to a target node $t$. All other nodes are connected to all their neighbors in the grid. These instances were already used in \cite{Sedeno15, Sedeno19}.
	
	\subsection{MDA vs T-MDA}\label{sec:mdaComparison}
	The \mda{} and the MDA use the same pruning techniques to discard provably irrelevant subpaths of efficient $s$-$t$-paths early during the search. 
	To do so, the dominance bound $\beta_t$ and the front $\mathcal{P}_{st}$ of already found efficient $s$-$t$-paths are used.
	The pruning using the dominance bound in Line \ref{algo:propagate:pruneatfront} of \ref{algo:propagate} and in Line \ref{algo:ncla*:pruneatfront} of \ref{algo:ncla*} is usually not very aggressive since the dominance bound is weak in most practical instances.
	Hence, to enhance the pruning techniques, it is crucial to add paths to $\mathcal{P}_{st}$ as early as possible. 
	Any path therein is a better dominance bound than $\beta_t$. 
	This is precisely what the heuristic $\pi$ is used for: in the \mda{}, the first $s$-$t$-paths are extracted and added to $\mathcal{P}_{st}$ much earlier than in the MDA because processing paths according to their reduced costs $\bar{c}$ guides the search towards the target. 
	This general remark explains the superiority of the \mda{} when compared with the MDA in our experiments.
	
	\Cref{tab:netmaker-results} is an overview of the results obtained for the NetMaker instances. 
	$N_t$ denotes the cardinality of the solution set $\mathcal{P}_{st}$ and gives a hint regarding the difficulty of the considered instances.
	Both algorithms solved all instances within the time limit. 
	The implementation of the MDA is faster than the one used in \cite{Casas2021a}. 
	This can be seen by looking at the NetMaker-$5000$ and the NetMaker-$10000$ instances.
	In \cite{Casas2021a}, the MDA solved all instances from these groups but was slower than the current MDA implementation, even though a faster processor was used. 
	Despite this improvement, the \mda{} is $\times2.16$ to $\times3.88$ faster on average than the MDA. 
	The speedup increases with the size of the networks. On the biggest instances from the NetMaker-$25000$ and the NetMaker-$30000$ groups the maximum speedup is more than $2$ orders of magnitude. 
	The plots in \Cref{fig:3d-net-5000} to \Cref{fig:3d-net-30000} show how the running times of both algorithms in each group of NetMaker instances.
	\begin{table}
		\centering\small
		\caption{Overview of the results collected on NetMaker graphs of different sizes with $d=3$ cost components. Depending on the graph class, $200$ to $240$ $s$-$t$-pairs were randomly built.	The number depends on the number of graphs in each class. $N_t$ refers to the cardinality of the solution set, i.e., the cardinality of the set of cost vectors of efficient $s$-$t$-paths. All instances were solved by both algorithms before the time limit was reached.}\label{tab:netmaker-results}
		\begin{tabular}{ll rr rr rr r}
			\toprule
			& \begin{tabular}[c]{@{}l@{}}\mda{} sol.\\time in $[s]$\end{tabular}  & Instances & $N_t$ & \multicolumn{2}{c}{\mda} & \multicolumn{2}{c}{MDA}  & Speedup \\
			\midrule
			&    & &   &                 Inserted   &  Time     & Inserted  & Time      &           \\
			\cmidrule(lr){5-6}          \cmidrule(lr){7-8}
			\multirow{5}{*}{netMaker-5000n} &(0, 0.5] & 85 & 233 & 22242 & 0.04 & 73310 & 0.14 & 3.37 \\
			& (0.5, 5] & 62 & 789 & 248797 & 1.63 & 542056 & 3.72 & 2.29 \\
			& (5, 50] & 78 & 1528 & 780908 & 15.71 & 1437358 & 30.54 & 1.94 \\
			& (50, 500] & 15 & 3394 & 1712258 & 82.13 & 2803596 & 137.36 & 1.67 \\
			\midrule
			\multirow{5}{*}{netMaker-10000n} &(0, 0.5] & 65 & 288 & 39586 & 0.07 & 181726 & 0.34 & 5.08 \\
			& (0.5, 5] & 60 & 782 & 323734 & 1.77 & 912542 & 5.55 & 3.13 \\
			& (5, 50] & 55 & 1878 & 1134095 & 15.70 & 2505383 & 37.05 & 2.36 \\
			& (50, 500] & 38 & 3217 & 3370375 & 137.92 & 5990825 & 261.42 & 1.90 \\
			& (500, 18000] & 2 & 5200 & 7410039 & 605.85 & 11013419 & 931.79 & 1.54 \\
			\midrule
			\multirow{5}{*}{netMaker-15000n} &(0, 0.5] & 57 & 297 & 59043 & 0.10 & 321345 & 0.63 & 6.53 \\
			& (0.5, 5] & 57 & 840 & 364690 & 1.72 & 1235804 & 6.90 & 4.01 \\
			& (5, 50] & 62 & 1708 & 1246466 & 16.54 & 3231331 & 48.27 & 2.92 \\
			& (50, 500] & 55 & 2986 & 4097699 & 153.54 & 7965304 & 319.57 & 2.08 \\
			& (500, 18000] & 9 & 4876 & 9288820 & 687.00 & 15173429 & 1184.24 & 1.72 \\
			\midrule
			\multirow{5}{*}{netMaker-20000n} &(0, 0.5] & 50 & 324 & 83073 & 0.15 & 458753 & 1.03 & 6.92 \\
			& (0.5, 5] & 48 & 794 & 389028 & 1.58 & 1357132 & 6.50 & 4.12 \\
			& (5, 50] & 67 & 1656 & 1432648 & 17.81 & 3909492 & 55.89 & 3.14 \\
			& (50, 500] & 63 & 2884 & 4083009 & 129.84 & 8798571 & 300.66 & 2.32 \\
			& (500, 18000] & 12 & 5714 & 12125157 & 950.69 & 20908533 & 1690.19 & 1.78 \\
			\midrule
			\multirow{5}{*}{netMaker-25000n} &(0, 0.5] & 44 & 274 & 49470 & 0.10 & 357449 & 0.85 & 8.93 \\
			& (0.5, 5] & 43 & 725 & 416954 & 1.54 & 1710524 & 8.15 & 5.29 \\
			& (5, 50] & 38 & 1214 & 1424486 & 12.30 & 4144590 & 45.11 & 3.67 \\
			& (50, 500] & 53 & 2742 & 5079058 & 159.23 & 11233614 & 399.93 & 2.51 \\
			& (500, 18000] & 22 & 4610 & 12778316 & 827.36 & 22215895 & 1550.70 & 1.87 \\
			\midrule
			\multirow{5}{*}{netMaker-30000n} &(0, 0.5] & 39 & 281 & 57984 & 0.11 & 472935 & 1.22 & 11.08 \\
			& (0.5, 5] & 55 & 719 & 437503 & 1.61 & 2051738 & 10.08 & 6.27 \\
			& (5, 50] & 55 & 1608 & 1473406 & 15.69 & 4670018 & 58.13 & 3.70 \\
			& (50, 500] & 65 & 2911 & 4961126 & 148.45 & 11108866 & 358.13 & 2.41 \\
			& (500, 18000] & 26 & 5239 & 13930103 & 946.90 & 26853793 & 1932.81 & 2.04 \\
			\midrule

			\bottomrule
		\end{tabular}
	\end{table}

 	\begin{figure}
 		\begin{minipage}{.48\linewidth}
 			\captionsetup{type=figure}\includegraphics[width=\textwidth]{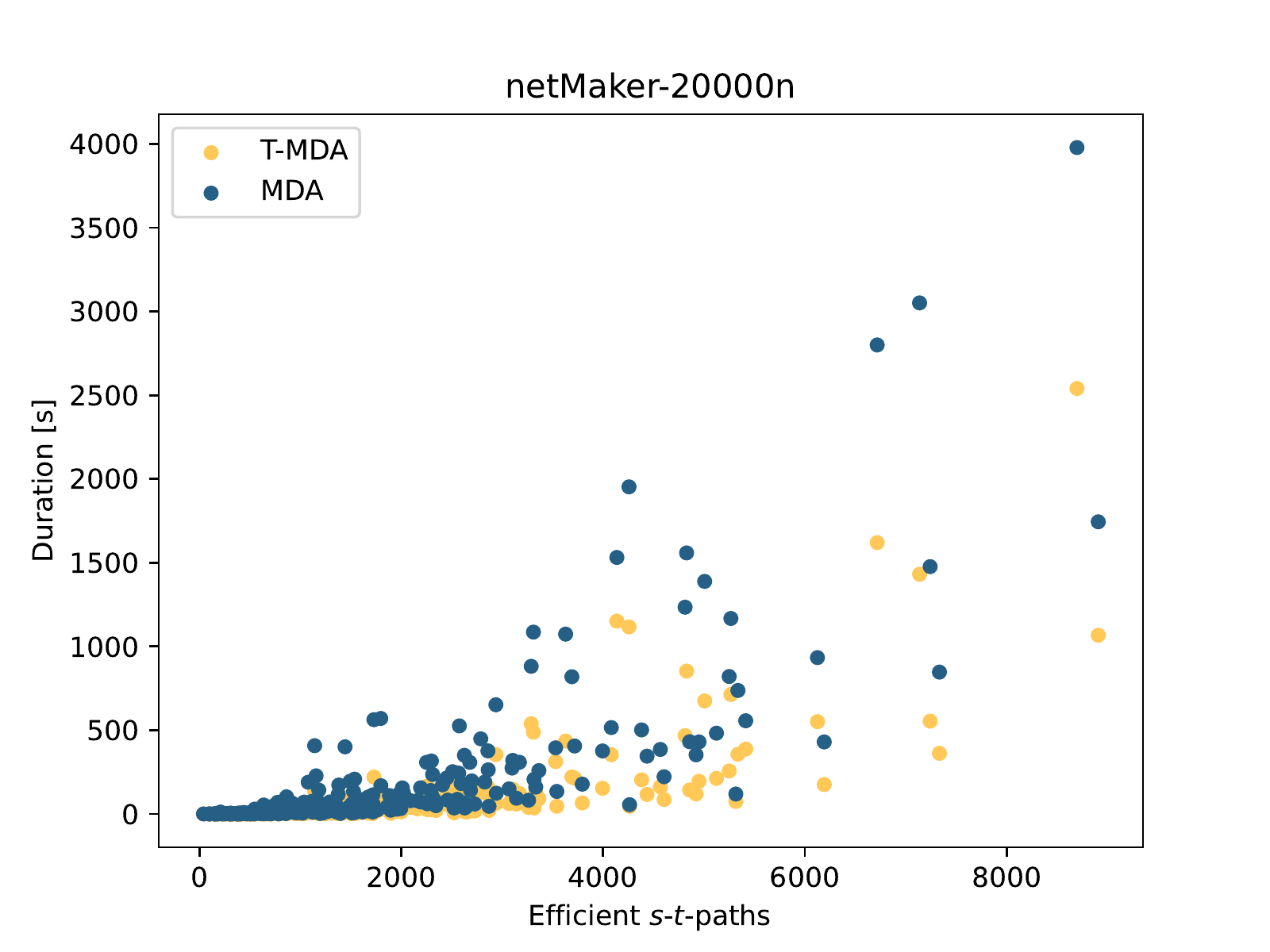}
 			\captionof{figure}{NetMaker graphs with $20,000$ nodes.}\label{fig:3d-net-20000}
 		\end{minipage}
 		\begin{minipage}{.48\linewidth}
 			\captionsetup{type=figure}\includegraphics[width=\textwidth]{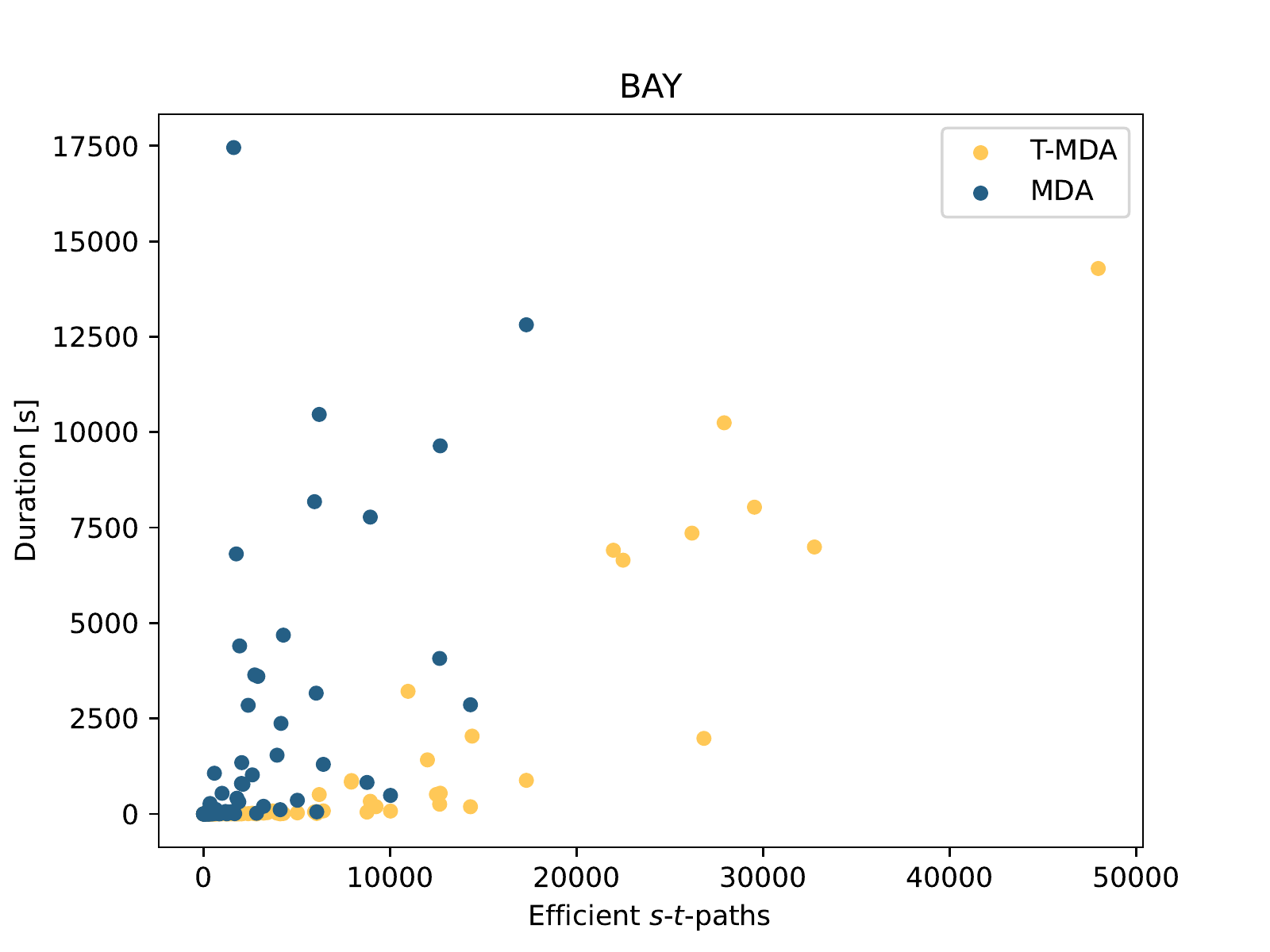}
 			\captionof{figure}{MDA vs. \mda{} on BAY network.}\label{fig:3d-road-BAY}
 		\end{minipage}
 	\end{figure}
	
	In \Cref{tab:multi-road} we summarize the comparison of the MDA and the \mda{} on road networks.
	If one of the algorithms does not solve an instance within the time limit, we set the corresponding solution time to the time limit ($5$h). 
	Moreover, we only consider instances that were solved by at least one algorithm. 
	As the size of the instances grows, the difficulties of both algorithms to solve a considerable percentage of the instances become apparent. 
	On all considered networks, the \mda{} solves more instances than the MDA. 
	On the smaller NY, BAY, and COL networks, the \mda{} solves almost every instance. 
	The average speedup w.r.t. the MDA is more than one order of magnitude. In \Cref{fig:3d-road-NY} to \Cref{fig:3d-road-COL} it becomes clear that the speedup increases as the number of computed efficient paths grows. 
	On the FLA, NE and LKS networks the \mda{} solves at most half of the considered instances but always more than the MDA.
	The average speedups are always greater than an order of magnitude, sometimes two orders of magnitude.
	As far as we know, the road networks E, W, and CTR are considered for the first time with $3$ dimensional arc costs. The \mda{} solves $28/100$ instances on the E network, $9/100$ on the W network, and $5/100$ on the CTR network. On each of these graphs, the number of solved instances at least doubles the number of instances solved by the MDA. The other instances can not be solved due to memory limitations.

\begin{table}
	\centering\small
	\caption{\mda{} vs. MDA on road networks. For every graph, the results of the $100$ instances are reported after dividing the instances into disjoint time intervals based on the time it took the T-MDA to solve them. All reported numbers are rounded geometric means. Scatter plots for each graph in \Cref{fig:3d-road-NY} - \Cref{fig:3d-road-CTR}.}\label{tab:multi-road}
	\begin{tabular}{ll r rrr rrr r}
		\toprule
		& \begin{tabular}[c]{@{}l@{}}\mda{} sol.\\time in $[s]$\end{tabular}  &   $N_t$      & \multicolumn{3}{c}{\mda} & \multicolumn{3}{c}{MDA}  & Speedup \\
		\midrule
		&       &               & Solved &    Inserted   &  Time     & Solved & Inserted  & Time      &           \\
		\cmidrule(lr){4-6}          \cmidrule(lr){7-9}
\multirow{5}{*}{BAY} & (0, 0.5] & 158 & 35/35 & 34475 & 0.05 & 35/35 & 1155634 & 1.82 & 33.66 \\
& (0.5, 5] & 1414 & 14/14 & 816555 & 1.52 & 14/14 & 23109806 & 105.44 & 69.21 \\
& (5, 50] & 3238 & 16/16 & 3277216 & 15.64 & 15/16 & 110689803 & 1652.72 & 105.69 \\
& (50, 500] & 7968 & 9/9 & 10154740 & 119.88 & 7/9 & 157039057 & 4277.69 & 35.68 \\
& (500, 18000] & 17100 & 17/26 & 62043438 & 4907.91 & 3/26 & 411635693 & 16986.79 & 3.46 \\
\midrule
\multirow{5}{*}{COL} & (0, 0.5] & 114 & 26/26 & 18029 & 0.07 & 26/26 & 607911 & 1.84 & 27.75 \\
& (0.5, 5] & 1437 & 15/15 & 809614 & 1.55 & 15/15 & 48424432 & 365.76 & 235.51 \\
& (5, 50] & 3158 & 11/11 & 3232778 & 14.77 & 9/11 & 75683391 & 1263.36 & 85.54 \\
& (50, 500] & 11252 & 13/13 & 11315066 & 194.45 & 8/13 & 271408641 & 8793.20 & 45.22 \\
& (500, 18000] & 19715 & 15/35 & 65005505 & 7650.24 & 1/35 & 326871394 & 17562.17 & 2.30 \\
\midrule
\multirow{2}{*}{CTR} & (50, 500] & 5719 & 2/2 & 12674561 & 125.40 & 1/2 & 346465533 & 12129.81 & 96.73 \\
& (500, 18000] & 15142 & 3/98 & 54537221 & 16774.62 & 0/98 & - & - & - \\
\midrule
\multirow{5}{*}{E} & (0, 0.5] & 299 & 3/3 & 53006 & 0.21 & 3/3 & 1198001 & 1.50 & 7.20 \\
& (0.5, 5] & 1188 & 5/5 & 570246 & 1.11 & 5/5 & 32064930 & 166.16 & 149.49 \\
& (5, 50] & 4050 & 4/4 & 3762525 & 21.83 & 4/4 & 82633250 & 729.67 & 33.42 \\
& (50, 500] & 5840 & 4/4 & 12873184 & 122.09 & 2/4 & 238198157 & 7979.71 & 65.36 \\
& (500, 18000] & 28354 & 12/84 & 69271421 & 14453.27 & 0/84 & - & - & - \\
\midrule
\multirow{5}{*}{FLA} & (0, 0.5] & 83 & 12/12 & 8573 & 0.08 & 12/12 & 209759 & 0.60 & 7.69 \\
& (0.5, 5] & 1032 & 4/4 & 653896 & 1.11 & 3/4 & 109357306 & 3681.38 & 3330.52 \\
& (5, 50] & 4388 & 5/5 & 3121485 & 19.13 & 4/5 & 72448102 & 1603.72 & 83.85 \\
& (50, 500] & 7416 & 16/16 & 11985005 & 158.29 & 6/16 & 315643174 & 13059.13 & 82.50 \\
& (500, 18000] & 21767 & 20/63 & 65802886 & 9914.55 & 1/63 & 229690247 & 17747.98 & 1.79 \\
\midrule
\multirow{5}{*}{LKS} & (0, 0.5] & 18 & 2/2 & 2413 & 0.15 & 2/2 & 18052 & 0.78 & 5.09 \\
& (0.5, 5] & 1667 & 5/5 & 742097 & 1.98 & 5/5 & 26358173 & 131.73 & 66.60 \\
& (5, 50] & 3421 & 1/1 & 2031939 & 8.72 & 1/1 & 43969216 & 427.01 & 48.99 \\
& (50, 500] & 8153 & 4/4 & 11299622 & 193.13 & 2/4 & 154637009 & 6614.20 & 34.25 \\
& (500, 18000] & 17755 & 8/88 & 87552362 & 15742.96 & 1/88 & 266325484 & 17857.19 & 1.13 \\
\midrule
\multirow{5}{*}{NE} & (0, 0.5] & 173 & 6/6 & 26661 & 0.11 & 6/6 & 1246047 & 2.60 & 24.21 \\
& (0.5, 5] & 1406 & 4/4 & 546287 & 1.10 & 4/4 & 16329886 & 60.16 & 54.77 \\
& (5, 50] & 3521 & 12/12 & 2776277 & 14.29 & 10/12 & 107434232 & 1808.56 & 126.58 \\
& (50, 500] & 6799 & 11/11 & 10296214 & 115.25 & 8/11 & 188332397 & 5635.33 & 48.90 \\
& (500, 18000] & 26275 & 16/67 & 81952703 & 13060.69 & 1/67 & 194569520 & 17676.86 & 1.35 \\
\midrule
\multirow{5}{*}{NY} & (0, 0.5] & 130 & 45/45 & 17848 & 0.04 & 45/45 & 392201 & 0.63 & 16.35 \\
& (0.5, 5] & 1642 & 12/12 & 740197 & 1.77 & 12/12 & 17226422 & 88.88 & 50.33 \\
& (5, 50] & 3660 & 19/19 & 3019715 & 18.32 & 19/19 & 109603246 & 2109.26 & 115.13 \\
& (50, 500] & 8992 & 13/13 & 8657858 & 138.75 & 10/13 & 158382727 & 5570.26 & 40.15 \\
& (500, 18000] & 16469 & 10/11 & 41812104 & 1962.75 & 0/11 & - & - & -\\
\midrule
\multirow{4}{*}{W} & (0.5, 5] & 2086 & 2/2 & 1149508 & 2.72 & 2/2 & 85235464 & 878.82 & 323.60 \\
& (5, 50] & 4710 & 1/1 & 2872447 & 21.56 & 1/1 & 29093544 & 211.30 & 9.80 \\
& (50, 500] & 11506 & 2/2 & 11340303 & 173.60 & 1/2 & 206900146 & 7125.17 & 41.04 \\
& (500, 18000] & 19414 & 4/95 & 75540488 & 16784.36 & 0/95 & - & - & - \\
		\bottomrule
	\end{tabular}
\end{table}
	
	\subsection{Biobjective Experiments}\label{sec:biobjExperiments}
	In the computational experiments presented in \cite{Ahmadi21} the \boaEnh{} and \boaBidi{} algorithms outperformed the \boaClassic{} algorithm from \cite{Ulloa20} and the BDA algorithm from \cite{Sedeno19}. 
	The \boaEnh{} algorithm is based on the \boaClassic algorithm introduced in \cite{Ulloa20}.
	Both algorithms can be understood as variations of the classical label-setting MOSP algorithm by Martins \cite{Martins84}. 
	They are equipped with improvements for the biobjective case that turn out to perform well in practical applications.
	Most significantly, the algorithms introduced in \cite{Ulloa20} and \cite{Ahmadi21} utilize lazy queue management.
	This means that a newly explored path $p$ is introduced into the priority queue while omitting both to check if the queue contains paths that dominate $p$ as well as if $p$ dominates existing paths in the queue.
	These dominance checks are commonly referred to as \emph{merge operations} in the literature.
	The lazy queue management approach might be surprising due to the intractability of the MOSP problem.
	But in our experiments it becomes apparent that, this leads to an increased number of extractions from the queue.
	However, since dominance is checked in constant time in the biobjective case, irrelevant paths are discarded directly after their extraction without triggering a whole iteration of the algorithms' main loop. 
	A main takeaway from \cite{Ulloa20} and \cite{Ahmadi21} is that for BOSP problems, the \emph{merge operation} is less efficient than a lazy queue management in practice. 
	
	We refer to the \bda{} and its parallelized bidirectional version as \emph{\bda-based} algorithms.
	Both algorithms are compared with the \boaEnh{} and the \boaBidi{} algorithms to which we refer as \emph{\boaClassic-based} algorithms. 
	The comparisons are done with two different types of priority queues: binary heaps and bucket queues \cite{Denardo79}.
	The latter can only be used for integer costs and in theory, an exponential number of buckets can be non-empty at the same time.
	However, as in \cite{Ahmadi21}, buckets yield better running times in our experiments.
	
	Still, binary heaps are relevant. 
	BOSP instances with rational costs or Time-Dependent BOSP instances with piecewise linear arc cost functions can not be solved using bucket queues.
	Moreover, \bda{}-based algorithm using binary heaps require less iterations to solve any BOSP instance.
	In situations wherein the evaluation of arc costs is costly, this advantage becomes relevant.
	The reason for the reduced number of iterations is that bucket queues yield better performance if the list of explored paths in each bucket is not required to be lex. sorted.
	
	For both types of queues we run the same set of experiments.
	On average, the parallel bidirectional searches are faster than their unidirectional siblings. 
	However, due to the synchronization of threads in a parallel environment, counting the number of iterations is not deterministic in these cases: small variations can happen between different runs of the same instance. 
	Thus, to compare the workload performed by the algorithms, the unidirectional versions are better suited.
	
		\begin{table}
		\centering\small
		\caption{Overview of the results collected on two dimensional grids of different sizes. $N_t$ refers to the geometric mean of the cardinality of the solution sets. All instances were solved before the time limit was reached. The subdivision of the instances into time intervals was done based on the time it took the BDA-based algorithms to solve the instances. Corresponding plots: \Cref{fig:2d-grid-bi-bucket}-\Cref{fig:2d-grid-uni-heap}.}\label{tab:grids-results}
		\begin{tabular}{ll rr rr rr r}
			\toprule
			& \begin{tabular}[c]{@{}l@{}}sol. time in $[s]$\\BDA based\end{tabular}  & Instances & $N_t$ & \multicolumn{2}{c}{BDA based} & \multicolumn{2}{c}{BOA based}  & Speedup \\
			\midrule
			&    & &   &                 Inserted   &  Time     & Inserted  & Time      &           \\
			\cmidrule(lr){5-6}          \cmidrule(lr){7-8}
			\multirow{3}{*}{Grids-Bidi-Bucket} &(0, 0.5] & 2 & 434 & 3544743 & 0.44 & 5671696 & 0.53 & 1.22 \\
			& (0.5, 5] & 30 & 599 & 11202640 & 1.73 & 17626082 & 2.24 & 1.30 \\
			& (5, 50] & 17 & 902 & 36165995 & 7.72 & 56578747 & 10.36 & 1.34 \\
			\midrule
			\multirow{2}{*}{Grids-Bidi-Heap} & (0.5, 5] & 26 & 553 & 8386782 & 1.82 & 15612581 & 3.24 & 1.78 \\
			& (5, 50] & 23 & 864 & 30789899 & 9.55 & 57730831 & 17.77 & 1.86 \\
			\midrule
			\multirow{2}{*}{Grids-Uni-Bucket} & (0.5, 5] & 32 & 587 & 7357548 & 1.89 & 10945046 & 2.06 & 1.09 \\
			& (5, 50] & 17 & 902 & 25355482 & 8.66 & 37587174 & 10.67 & 1.23 \\
			\midrule
			\multirow{2}{*}{Grids-Uni-Heap} & (0.5, 5] & 25 & 546 & 5836388 & 2.14 & 9843951 & 3.02 & 1.41 \\
			& (5, 50] & 24 & 859 & 21198599 & 10.34 & 36295558 & 16.80 & 1.63 \\
			\bottomrule
		\end{tabular}
	\end{table}
	
	In \Cref{tab:grids-results} and in Figures \ref{fig:2d-grid-bi-bucket}-\ref{fig:2d-grid-uni-heap} we see that on two dimensional grids, the BDA-based algorithms always outperform the BOA-based algorithms.
	Interestingly, the bidirectional parallel algorithms using bucket queues are the only ones that need more than $5$s to solve some instances. 
	At the same time, these scenario is the only one in which $2$ instances are solved in less than $0.5$s.
	All in all, there are no big differences between the unidirectional and the bidirectional versions of the algorithms.
	The \emph{Grids-Bidi-Heap} and \emph{Grids-Uni-Heap} rows in \Cref{tab:grids-results} give a good intuition: the set of instances solved in less and in more than $5$s are almost identical and the average  number of queue insertions performed by the bidirectional search is much higher than in the unidirectional scenario. 
	This happens because the calculated heuristic does not effectively guide the searches towards their target nodes and this in turn is due to the randomly generated arc costs in the grid. 
	
	On road networks the situation changes.
	The calculated heuristics capture the metrical structure of the arc costs (road length and traversal time) and effectively guide the searches.
	On average, the bidirectional algorithms using bucket queues outperform their unidirectional siblings and the versions of the algorithms using binary heaps. 
	We focus on a head to head comparison of the different versions of the BDA-based and BOA-based algorithms. 
	In each comparison, both considered algorithms are either bidirectional or unidirectional and they always use the same type of priority queue. 
	\Cref{tab:overview2d} gives an overview about where to find the relevant information and plots for every comparison. 
	
	\begin{table}
		\centering \small
		\caption{Tables and figures reporting the results of the experiments on bidimensional road networks.}\label{tab:overview2d}
		\begin{tabular}{c c c c c c}
			&& \multicolumn{2}{c}{Bucket Heap} & \multicolumn{2}{c}{Binary Heap}\\
			&& Overview & Plots & Overview & Plots\\
			\midrule
			\multirow{2}{*}{Road Networks} & Bidirectional 	& \Cref{tab:bi-bi-bucket}	& \Cref{fig:2d-bbda-boba-bucket-BAY} - \Cref{fig:2d-bbda-boba-bucket-CTR} & \Cref{tab:bi-bi-heap} & \Cref{fig:2d-bbda-boba-heap-BAY} - \Cref{fig:2d-bbda-boba-heap-CTR}\\
			& Unidirectional	& \Cref{tab:bi-uni-bucket}	& \Cref{fig:2d-bda-boa-bucket-BAY} - \Cref{fig:2d-bda-boa-bucket-CTR} & \Cref{tab:bi-uni-heap} & \Cref{fig:2d-bda-boa-heap-BAY} - \Cref{fig:2d-bda-boa-heap-CTR}\\
			\bottomrule
		\end{tabular}
	\end{table}
	
	\paragraph{Instances solved by \bda{}-based algorithms in $(0, 0.5]s$} These instances always favor the \boaEnh{} and the \boaBidi{} algorithm. 
	This is because on small instances the percentage of needless insertions into the queue that are performed due to the lazy heap management is small.
	
	\paragraph{Instances solved by \bda{}-based algorithms in $(0.5, 5]s$}
	In the bidirectional scenario using buckets, the TB-BDA slightly outperforms the \boaBidi{} algorithm on all road networks except the CTR network, where the average speedup is $\times 0.74$.
	In the unidirectional scenario using buckets, the \boaBidi{} algorithm is better than the TB-BDA algorithm on all road networks. Using binary heaps, the \bda{}-based algorithms always outperform the BOA* based algorithms.
	The greatest speedup of $\times 1.42$ is achieved on the FLA network in the unidirectional scenario.
	
	\paragraph{Instances solved by \bda{}-based algorithms in $(5, 50]s$} 
	The \bda{}-based algorithms almost always outperform the BOA* based algorithms by $\times 1.07$ (CTR) to $\times 1.15$ (W) in the bidirectional scenario and $\times 0.97$ (COL) to $\times 1.07$ (FLA) in the unidirectional scenario.
	Using binary heaps, the speedups increase: $\times 1.41$ (COL) to $\times 1.65$ (LKS) in the bidirectional scenario and $\times 1.43$ (COL) to $\times 2.11$ (E) in the unidirectional scenario.
	
	\paragraph{Instances solved by \bda{}-based algorithms in $(50, 500]s$}
	The results favor the \bda{}-based algorithms. 
	Using buckets, the greatest speedup of $\times 1.18$ in the bidirectional scenario is reached on the FLA instances. 
	Using binary heaps, the speedups are again greater than for buckets.
	On the E road network, the speedup on  instances that the \bda{}solves between $50s$ and $500s$ reaches $\times 2.42$. 
	
	\paragraph{Instances solved by \bda{}-based algorithms in $(500, 7200]s$} 
	We observe a slight decrease of the speedup factor but still the \bda{}-based algorithms outperform the BOA* based algorithms in every scenario. 
	The implementation of the BOA* based algorithms in \cite{Ahmadi21} relies on a so called \emph{memory pool}.
	It allocates big chunks of memory a priori to reduce reallocation of memory during the algorithm and also to avoid memory fragmentation. 
	Moreover, labels that are discarded can be reused.
	These advantages become more useful as the workload of the algorithm increases.
	We tried a similar memory management technique for \bda{}-based algorithms.
	The running time gains were limited because these algorithms are designed to have a possibly small set of explored active paths at any point in time (at most one path per node in the priority queue $Q$).
	Since we wanted to deliver an implementation of the \bda{} algorithms that follows the theoretical description in this paper and the derived bounds on the space consumption, we decided to report the results without using memory pools in the \bda{}-based algorithms.
	
	As already mentioned, the number of extractions made by each algorithm is better assessed by looking at the unidirectional results (\Cref{tab:bi-uni-bucket} and \Cref{tab:bi-uni-heap}). 
	The \bda{}-based algorithms always perform less iterations than the BOA based algorithms.
	This is to be expected because of the lazy heap management that is used in the BOA based algorithms.
	Moreover, the extra effort done to maintain binary heaps sorted (single buckets are not sorted in a bucket queue) pays off when it comes to the number of iterations. 
	All in all, the \bda{}-based algorithms using binary heaps need a minimal number of iterations to solve every instance.
	
	\section{Conclusion}
	In this paper, we introduced the \emph{Targeted Multiobjective Dijkstra Algorithm} (\mda) for the One-to-One Multiobjective Shortest Path (MOSP) Problem. 
	The \mda{} is a variant of the Multiobjective Dijkstra Algorithm (MDA) that guides the search towards the target node using a heuristic.
	The underlying idea is similar to the one used in the design of the $\text{A}^*$ algorithm for the classical One-to-One Shortest Path Problem and achieves an analogous effect. 
	Making use of the dimensionality reduction that enables dominance checks in constant time in the biobjective case, we present a tuned version of the \mda{}, the Targeted Biobjective Dijkstra Algorithm (\bda{}) for the Biobjective Shortest Path (BOSP) Problem. 
	The resulting algorithms are benchmarked against state of the art algorithms on different types of graphs from the literature. 
	%The heuristic needed for the \mda{} can be computed in the same preprocessing phase that was also applied to the MDA in its original publication.
	To the best of our knowledge, this paper is the first to present the solution of large scale MOSP instances, e.g., road networks, with more than $d=2$ objectives by $\text{A}^*$-like algorithms. 
	Our experiments show that the \mda{} clearly outperforms the state of the art MDA on all considered instances.
	In the experiments on BOSP instances, we benchmarked the \bda{}-based algorithms (unidirectional and parallelized bidirectional) against recently published state of the art BOSP algorithms. 
	These algorithms are based on the classical label setting algorithm for MOSP by Martins and use a lazy queue management instead of the time consuming merge operations required in the original version of the algorithm. On average, the BDA based algorithms achieved better running times on all instances except those solvable within fractions of a second.
	
	%\printbibliography
	\bibliographystyle{plain}
	\bibliography{literature}
	
	\appendix
	\newpage
	\section{Pseudocode of the Targeted Biobjective Dijkstra Algorithm (\bda)}\label{apendix:pseudocode}
		\begin{algorithm}[H]
		%\DontPrintSemicolon
		\small{
			\SetKwInOut{Input}{Input} \SetKwInOut{Output}{Output}
			\Input{MOSP instance $(G, s, t, c)$,\\
				heuristics $\pi$ and $\overleftarrow{\pi}$ ($\overleftarrow{\pi}$ only for BT-BDA),\\
				nadir point $\beta_t$ of $c(\mathcal{P}_{st})$,\\
				$(1,2)$-lexicographic $t$-to-all shortest path tree $T_{(1,2)}$.}
			\BlankLine
			\Output{Minimal complete set $\mathcal{P}_{st}$ of efficient $s$-$t$-paths. In the bidirectional BT-BDA, it is not complete and has to be combined with the set $\mathcal{P}_{ts}$ computed by the backward search.}
			\BlankLine
			
			Priority Queue: $Q \leftarrow{} \emptyset{}$\label{algo:bda*:initStart}\tcp*{$Q$ is sorted according to the $\bar{c}$ values of the contained paths.}
			Efficient $s$-$v$-paths: $\mathcal{P}_{sv} \leftarrow \emptyset$, $\forall v \in V$\label{algo:bda*:initPerm}\;
			\tcp{Due to the dimensionality reduction (\Cref{prop:dimensionality}), $\beta_{v,2}$ is the only relevant information for the dominance checks at $\mathcal{P}_{sv}$ until a new $s$-$v$-path is extracted.}
			Second cost component of the latest $s$-$v$-path extracted from $Q$: $\gamma_{sv}^* \leftarrow \infty$, $\forall v \in V\backslash\{t\}$\;
			$p_{\text{init}} \leftarrow{} ()$\;
			$Q \leftarrow{} Q.\mathtt{insert}(p_{\text{init}})$\;
			\BlankLine
			\While{$Q \neq \emptyset$}{
				$p \leftarrow{} Q.\mathtt{extractMin}()$ \label{algo:bda*:extraction}\;
				$v \leftarrow{} \mathit{head}(p)$\tcp*[]{\texttt{head} returns the last node of $p^*_v$. If $p^*_v = p_{\text{init}}$, it returns $s$.}
				\BlankLine
				\tcp{Stopping condition for bidirectional search. Opposite search lowers $\beta_{t,1}$ when it finds an efficient $t$-$s$-path. When this condition holds,
				the current search will not find any $s$-$t$-path with an efficient cost vector that is unknown to the opposite search.}
				\lIf{$\bar{c}_1(p) \geq \beta_{t,1}$}{\textbf{break}\label{algo:bda*:breakCondition}}
				$\gamma_{sv}^* \leftarrow c_2(p)$\label{algo:bda*:permanent}\;
				\BlankLine
				$p^{new}_v \leftarrow{} \ref{algo:ncl2d}(p, \incoming{v}, \mathcal{P}, \beta_t)$ \label{algo:bda*:ncl}\;
				\lIf{$p^{new}_v \neq \Null$} 
				{$Q.\mathtt{insert}(p^{new}_v)$\label{algo:bda*:insert}}
				
				\BlankLine
				\If{$\beta_{t,2} \leq \bar{c}_2(p)$}{\textbf{continue}\label{algo:bda*:dominanceAfterExtract}}
				\Else{
					\lIf{$\mathcal{P}_{sv} == \emptyset$}{$\overleftarrow{\pi}_1(v) \leftarrow c_1(p)$\label{algo:bda*:updateHeuristic}}
				}
%				\If{$v == t$} {
%					\tcp{See \Cref{rem:overtaking}. The next lines get rid of the possibly erroneous addition of $s$-$t$-paths.}
%					$q \leftarrow$ last path in $\mathcal{P}_{st}$\;
%					\lIf{$c_1(q) == c_1(p)$ and $c_2(q) > c_2(p)$} {
%						$\mathcal{P}_{st} \leftarrow ((\mathcal{P}_{st} \backslash q), p)$ \label{algo:bda*:correctShortcutInit}
%					}
%					\lElse{
%						$\mathcal{P}_{st} \leftarrow (\mathcal{P}_{st}, p)$\label{algo:bda*:correctShortcutEnd}
%					}
%					$\beta_{t,2} \leftarrow c_2(p)$\label{algo:bda*:improveUb}\tcp*{Strengthen upper bound.}
%				}
%				\Else {

				\BlankLine
				\tcp{Lines \ref{algo:bda*:initShortcut}-\ref{algo:bda*:endShortcut} possibly add shortcuts to $\mathcal{P}_{st}$. See \Cref{sec:shortcuts}.}
				$p_{st} \leftarrow p \circ T_{(1,2)}(v)$\label{algo:bda*:initShortcut}\;
				
				\If(\tcp*[f]{Check if $p_{st}$ is dominated at $\mathcal{P}_{st}$.}){$\beta_{t,2} > c_2(p_{st})$} {
					$q \leftarrow$ last path in $\mathcal{P}_{st}$\;
					\lIf{$c_1(q) == c_1(p_{st})$ and $c_2(q) > c_2(p_{st})$} {
						$\mathcal{P}_{st} \leftarrow ((\mathcal{P}_{st} \backslash q), p_{st})$ %\tcp*[r]{Replace $q'$ with $q$ in the last entry of $\mathcal{P}_{st}$}
					}
					\lElse{
						$\mathcal{P}_{st} \leftarrow (\mathcal{P}_{st}, p_{st})$
					}
					$\beta_{t,2} \leftarrow c_2(p_{st})$\label{algo:bda*:improveUb}\tcp*{Strengthen upper bound.}
					\lIf(\tcp*[f]{No further efficient $s$-$t$-path via $v$ exists.}){$\pi_1(v) == \beta_{v,1}$}{\textbf{continue}\label{algo:bda*:abortIteration}\label{algo:bda*:endShortcut}}
				}
%				}
				\BlankLine
				$success \leftarrow{} \textbf{False}$\;
				\lFor{$w \in \outgoing{v}$}{
					$(Q, success) \leftarrow{} \ref{algo:propagate2d}(p, w, Q, \mathcal{P}, \gamma_{sw}^*, \beta_t)$
				}
				\lIf{$success == \textbf{True}$}{$\mathcal{P}_{sv}  \leftarrow (\mathcal{P}_{sv}, p)$\label{algo:bda*:permanentForNcl}}
			}
			\Return $\P_{st}$;
			
			\caption[caption]{Targeted Biobjective Dijkstra Algorithm}\label{algo:bda*}}
		\end{algorithm}
	
		\begin{procedure}
		\small{
			\SetKwInOut{Input}{Input}
			\SetKwInOut{Output}{Output}
			\Input{$s$-$v$-path $p$, \\
					Node $w \in \outgoing{v}$,\\
					Priority Queue $Q$, \\
					Permanent paths $\mathcal{P}$, \\
					$2$nd cost component of last extracted $s$-$w$-path $\gamma_{sw}^*$, \\
					upper bound $\beta_t$.}
			\Output{Updated Priority Queue $Q$, Boolean $success$ indicating whether $p$ was successfully propagated along $(v,w)$.}
			\BlankLine
			$p_w^{new} \leftarrow p \circ (v,w)$\;
			$success \leftarrow \textbf{False}$\;
			\tcp{These checks are the dominance checks of $p_w^{new}$ at the sets of efficient paths at $t$ and at $w$.}
			\If{$\beta_{t,2} \leq \bar{c}_2(p_w^{new}) \text{ ot } \beta_{w,2} \leq c_2(p_w^{new})$}{\label{algo:propagate2d:pruneatfront}
				\Return $(Q, \,success)$\;
			}
			\Else{
				$success \leftarrow \textbf{True}$\;
				\uIf{\textbf{not} $Q.\mathtt{contains}(w)$}{
					$Q.\mathtt{insert}\left( p_w^{new} \right)$\label{algo:propagate2d:insert}\;
				}
				\uElse{
					$q \leftarrow Q.\mathtt{getPath}(w)$\;
					\If{$\bar{c}(p_w^{new}) \lexsmaller \bar{c}(q)$\label{algo:propagate2d:flexCheck}}{
						$Q.\mathtt{decreaseKey}(w, \, p_w^{new})$\label{algo:propagate2d:decrease}\;
					}
				}
			}
			
			\Return $(Q, \,success)$\;
			\caption{propagate2d()}\label{algo:propagate2d}}
	\end{procedure}
	
	\begin{procedure}
		\small{
			\SetKwInOut{Input}{Input}
			\SetKwInOut{Output}{Output}
			\Input{$s$-$v$-path $p$, \\
				Neighborhood $\incoming{v}$,\\
				$\forall (u,v) \in A$, $\lpl_{uv}$: first path in $\mathcal{P}_{su}$ that can be a queue path for $v$ after expansion along $(u,v)$, \\
				Permanent paths $\mathcal{P}$,\\
				Upper bound $\beta_t$.}
			\Output{New queue path for $v$, if one exists.}
			\BlankLine
			$p_v^{new} \leftarrow{} \Null$\tcp*[r]{Assume $\bar{c}(p_v^{new}) = [\infty]_{i=1}^d$.}
			\For{$u \in \incoming{v}$}{
				\For{$p_u \in (\mathcal{P}_{su}[\lpl_{uv}], \, |\mathcal{P}_{su}|)$}{
					$p_v \leftarrow p_u \circ (u,v)$\;
					\tcp{First condition is a dominance check of $\bar{c}(p_v)$ at $c(\mathcal{P}_{st})$. The second condition ensures that $p$ is lex. smaller than $p_v$ but does not dominate it. As a consequence, $p_v$ would not be dominated at $\mathcal{P}_{sv}$.}
					\If{$\beta_{t,2} \leq \bar{c}_2(p_v)$ or \textbf{not} $(c_1(p_v) > c_1(p) \text{ and } c_2(p_v) < c_2(p))$}{\label{algo:ncla*:pruneatfront}
						$\lpl_{uv} \leftarrow \lpl_{uv} + 1$\;
						\textbf{continue}\;
					}
					\lIf{$\bar{c}(p_v) \lexsmaller \bar{c}(p_v^{new})$  \label{algo:ncla*:flex:2d}}{
						$p_v^{new} \leftarrow p_v$} 
					\textbf{break}\tcp*[r]{Even if the lex. check fails, we stop and do not increase $\lpl_{uv}$.}\label{algo:ncla*:break}
					
				}    					
			}
			\Return $p_v^{new}$;
			\caption{nextCandidatePath2d()}\label{algo:ncl2d}}
	\end{procedure}
	
	\newpage
	\section{Scatter plots MDA vs. \mda{}}
		\subsection{NetMaker instances}
		\begin{figure}[H]
			\begin{minipage}{.48\linewidth}
				\captionsetup{type=figure}\includegraphics[width=\textwidth]{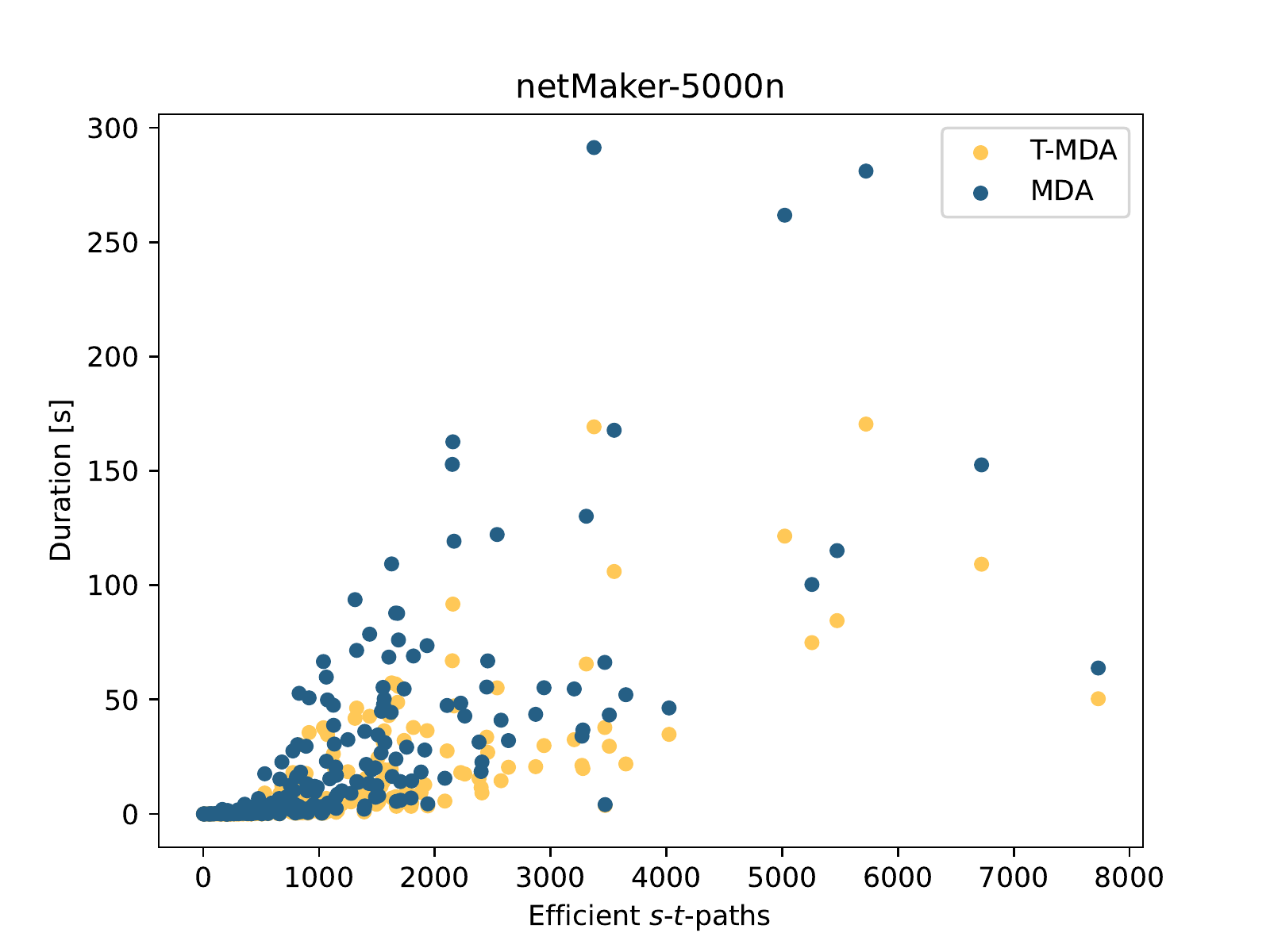}
				\captionof{figure}{NetMaker graphs with $5,000$ nodes.}\label{fig:3d-net-5000}
			\end{minipage}
			\begin{minipage}{.48\linewidth}
				\captionsetup{type=figure}\includegraphics[width=\textwidth]{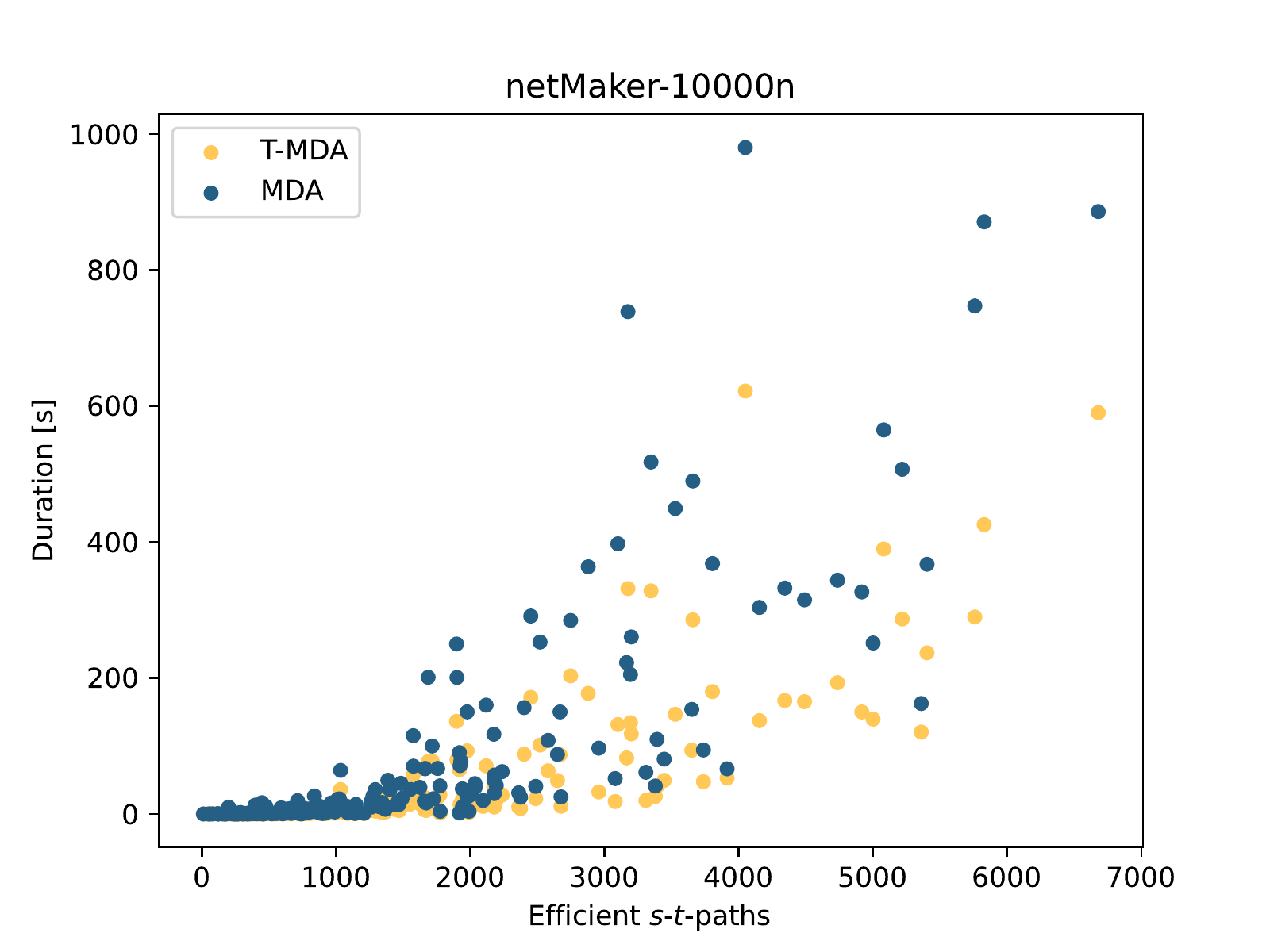}
				\captionof{figure}{NetMaker graphs with $10,000$ nodes.}\label{fig:3d-net-10000}
			\end{minipage}
		\end{figure}
		\begin{figure}[H]
			\begin{minipage}{.48\linewidth}
				\captionsetup{type=figure}\includegraphics[width=\textwidth]{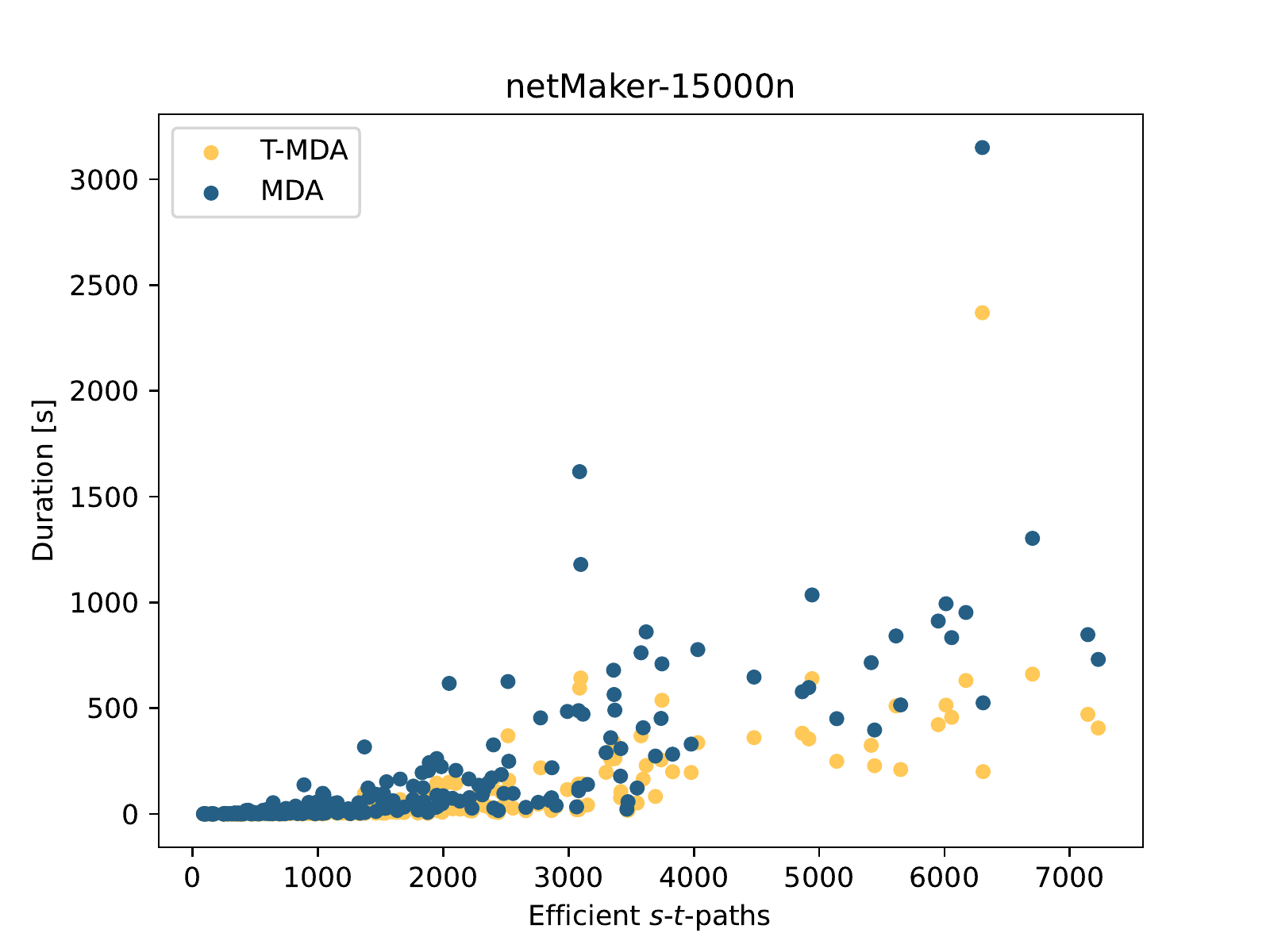}
				\captionof{figure}{NetMaker graphs with $15,000$ nodes.}\label{fig:3d-net-15000}
			\end{minipage}
			\begin{minipage}{.48\linewidth}
				\captionsetup{type=figure}\includegraphics[width=\textwidth]{results/3d-netMaker/NetMaker20000.pdf}
				\captionof{figure}{NetMaker graphs with $20,000$ nodes.}\label{fig:3d-net-20000}
			\end{minipage}
		\end{figure}
		\begin{figure}[H]
			\begin{minipage}{.48\linewidth}
				\captionsetup{type=figure}\includegraphics[width=\textwidth]{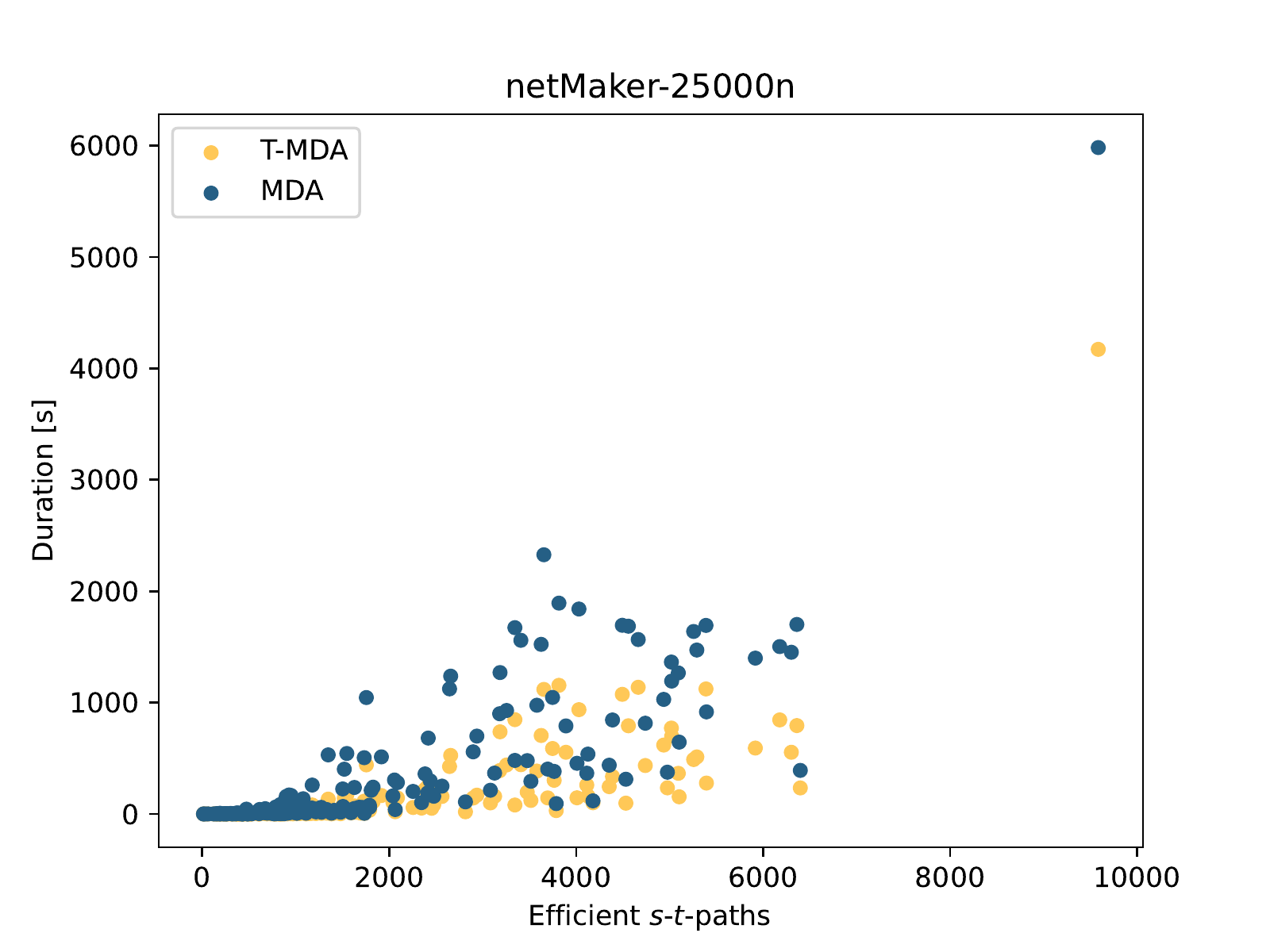}
				\captionof{figure}{NetMaker graphs with $25,000$ nodes.}\label{fig:3d-net-25000}
			\end{minipage}
			\begin{minipage}{.48\linewidth}
				\captionsetup{type=figure}\includegraphics[width=\textwidth]{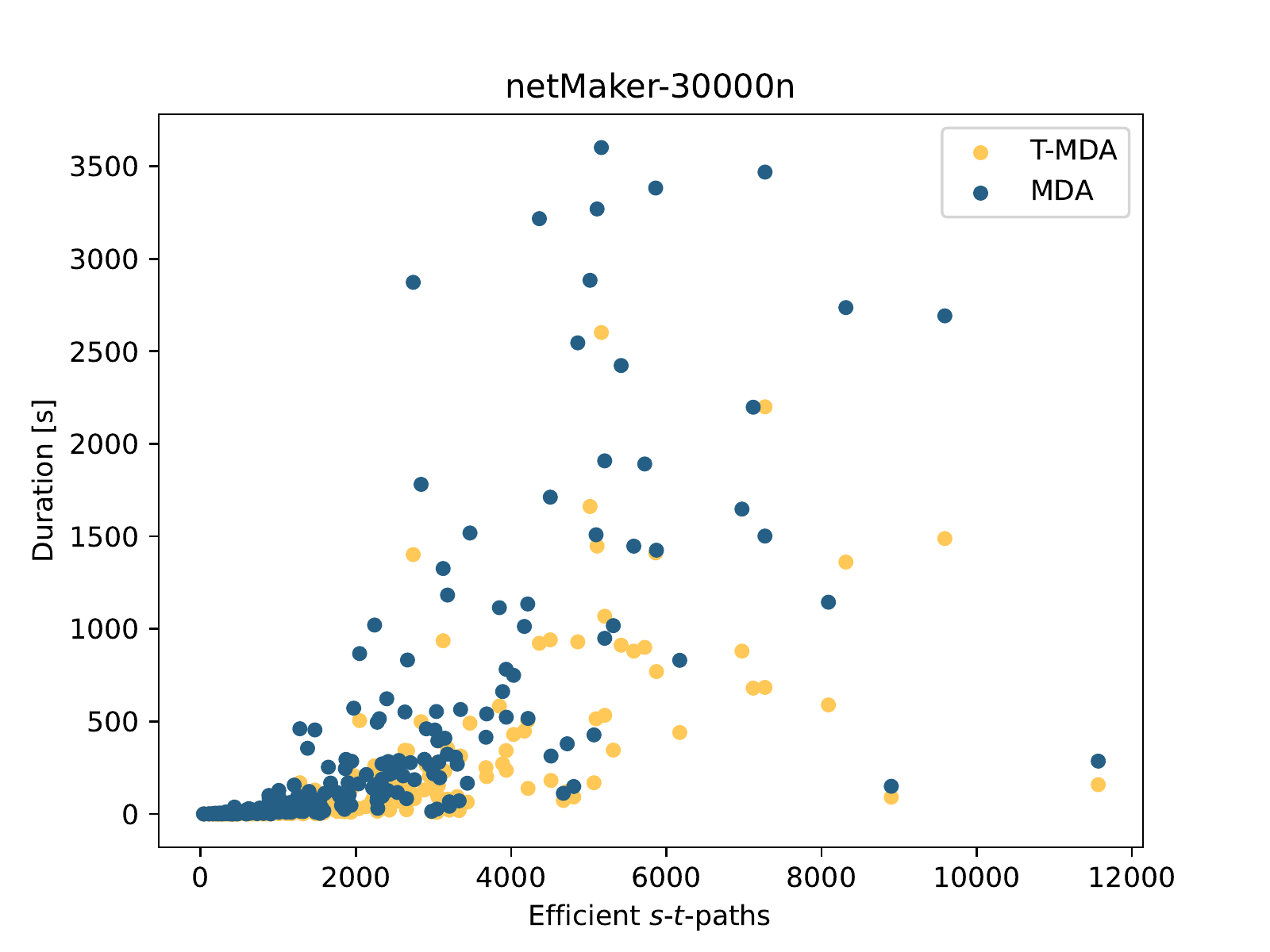}
				\captionof{figure}{NetMaker graphs with $30,000$ nodes.}\label{fig:3d-net-30000}
			\end{minipage}
		\end{figure}
	
		\subsection{Road networks $3$d}
			\begin{figure}[H]
			\begin{minipage}{.48\linewidth}
				\captionsetup{type=figure}\includegraphics[width=\textwidth]{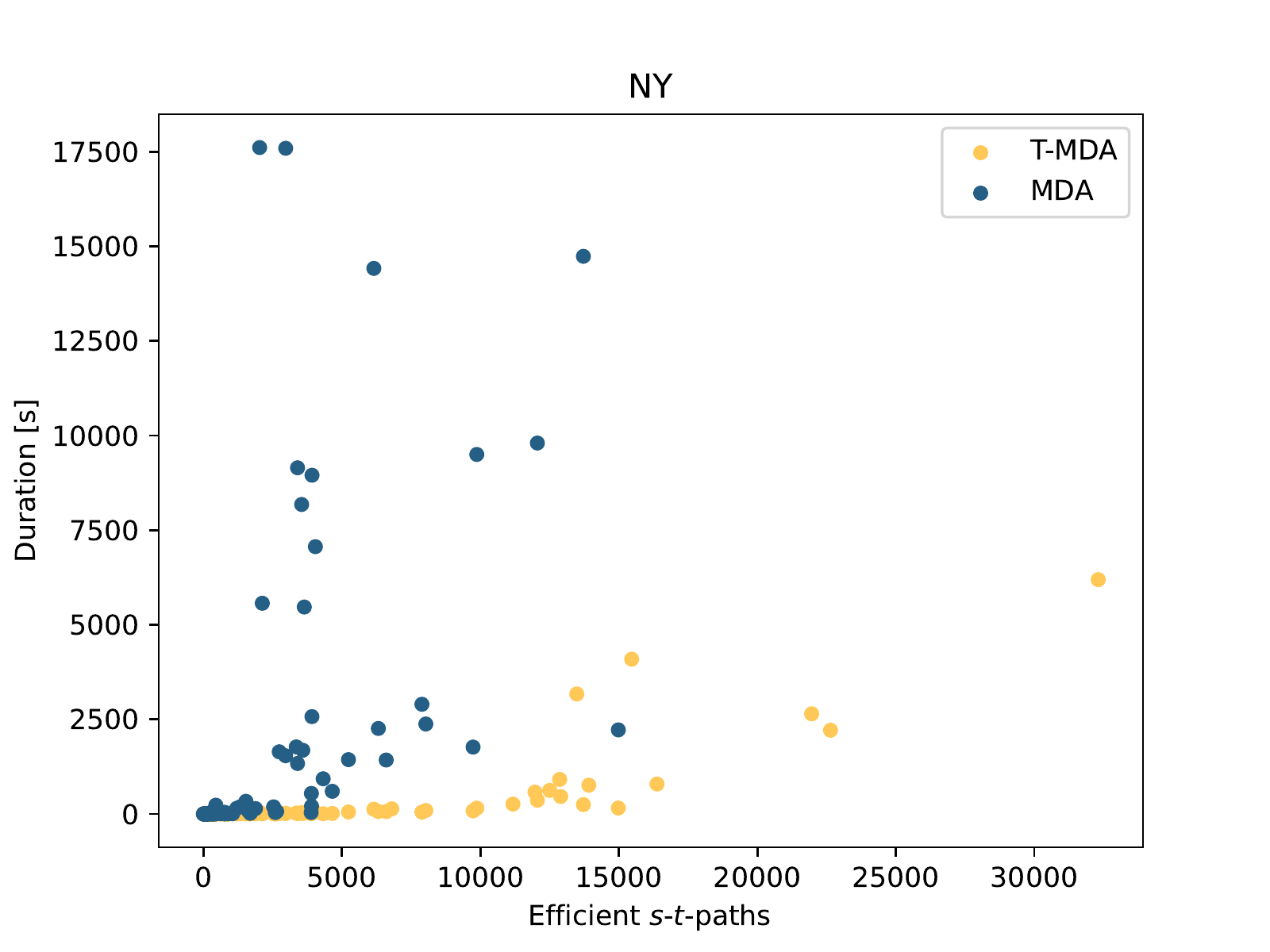}
				\captionof{figure}{MDA vs. \mda{} on NY network.}\label{fig:3d-road-NY}
			\end{minipage}
			\begin{minipage}{.48\linewidth}
				\captionsetup{type=figure}\includegraphics[width=\textwidth]{results/3d-roads/BAY.pdf}
				\captionof{figure}{MDA vs. \mda{} on BAY network.}\label{fig:3d-road-BAY}
			\end{minipage}
		\end{figure}
		\begin{figure}[H]
			\begin{minipage}{.48\linewidth}
				\captionsetup{type=figure}\includegraphics[width=\textwidth]{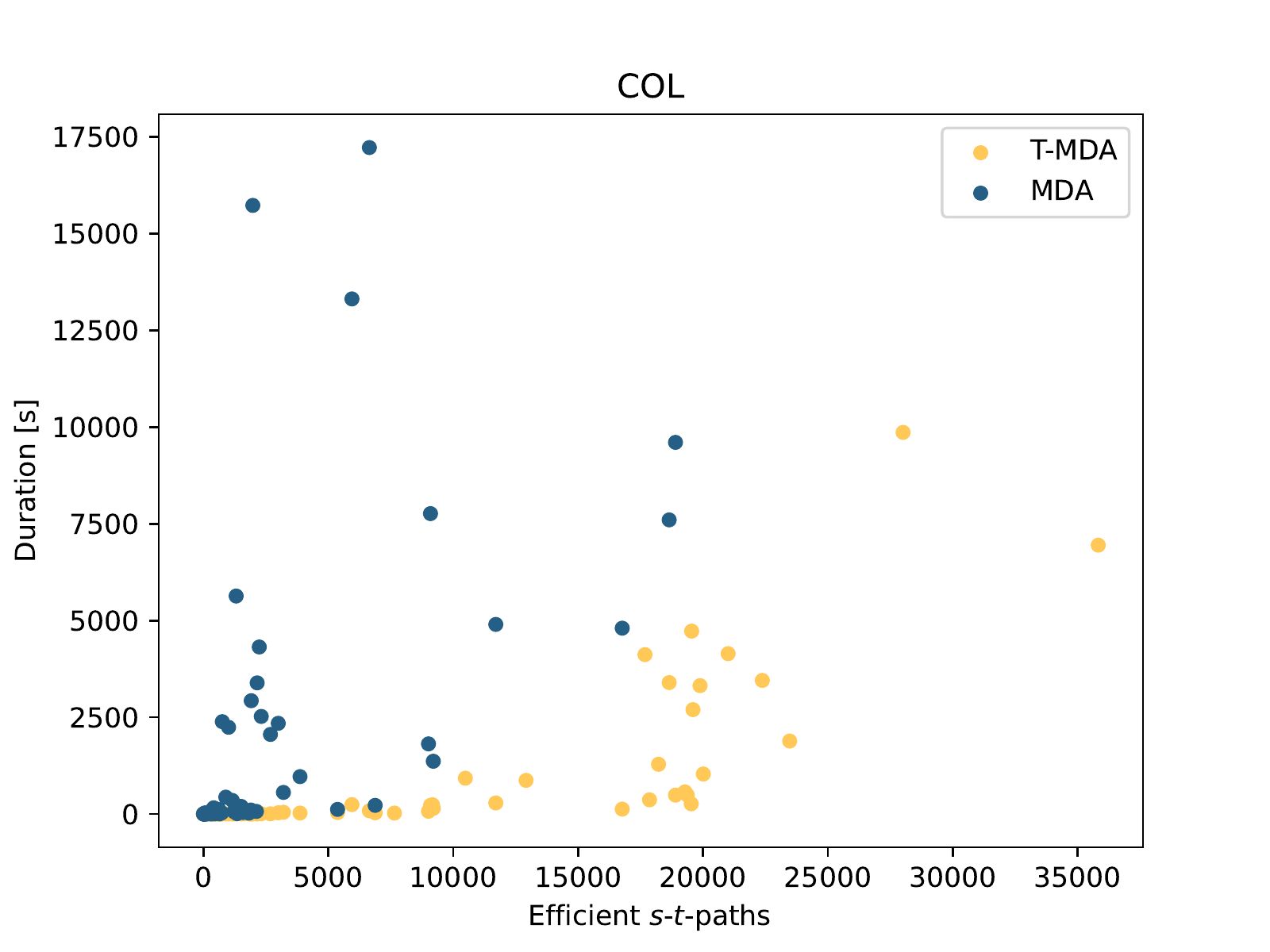}
				\captionof{figure}{MDA vs. \mda{} on COL network.}\label{fig:3d-road-COL}
			\end{minipage}
			\begin{minipage}{.48\linewidth}
				\captionsetup{type=figure}\includegraphics[width=\textwidth]{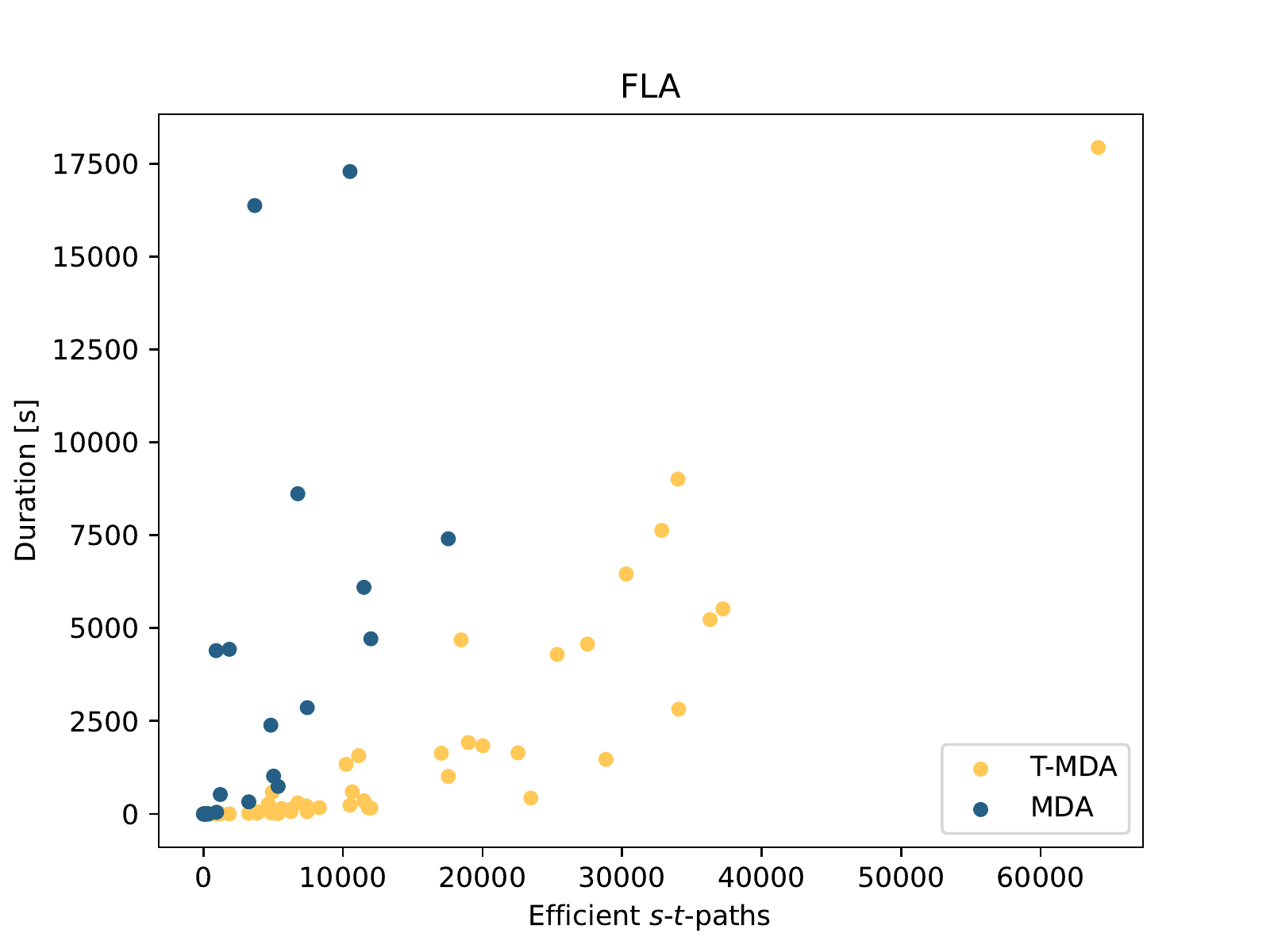}
				\captionof{figure}{MDA vs. \mda{} on FLA network.}\label{fig:3d-road-FLA}
			\end{minipage}
		\end{figure}
		\begin{figure}[H]
			\begin{minipage}{.48\linewidth}
				\captionsetup{type=figure}\includegraphics[width=\textwidth]{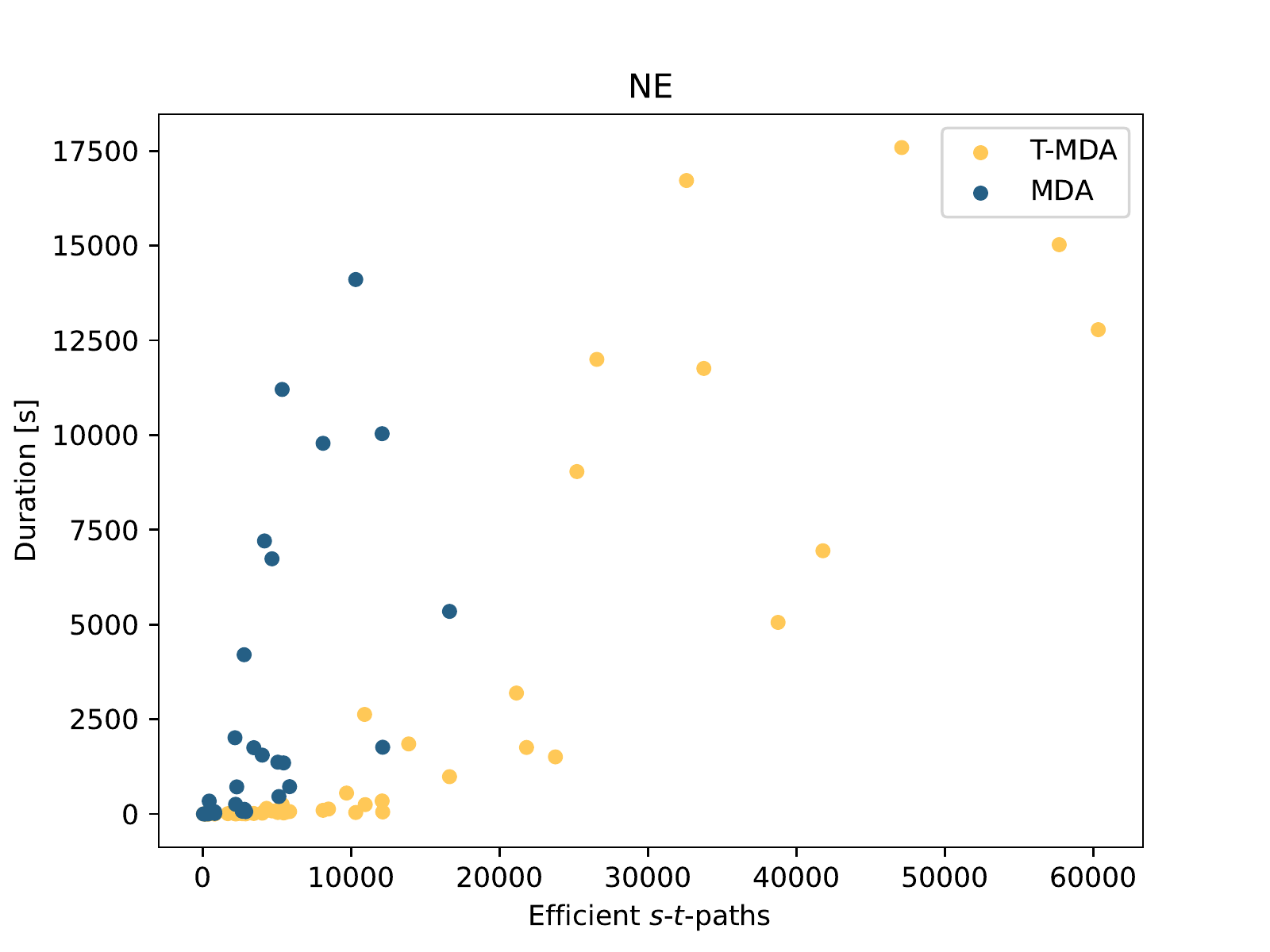}
				\captionof{figure}{MDA vs. \mda{} on NE network.}\label{fig:3d-road-NE}
			\end{minipage}
			\begin{minipage}{.48\linewidth}
				\captionsetup{type=figure}\includegraphics[width=\textwidth]{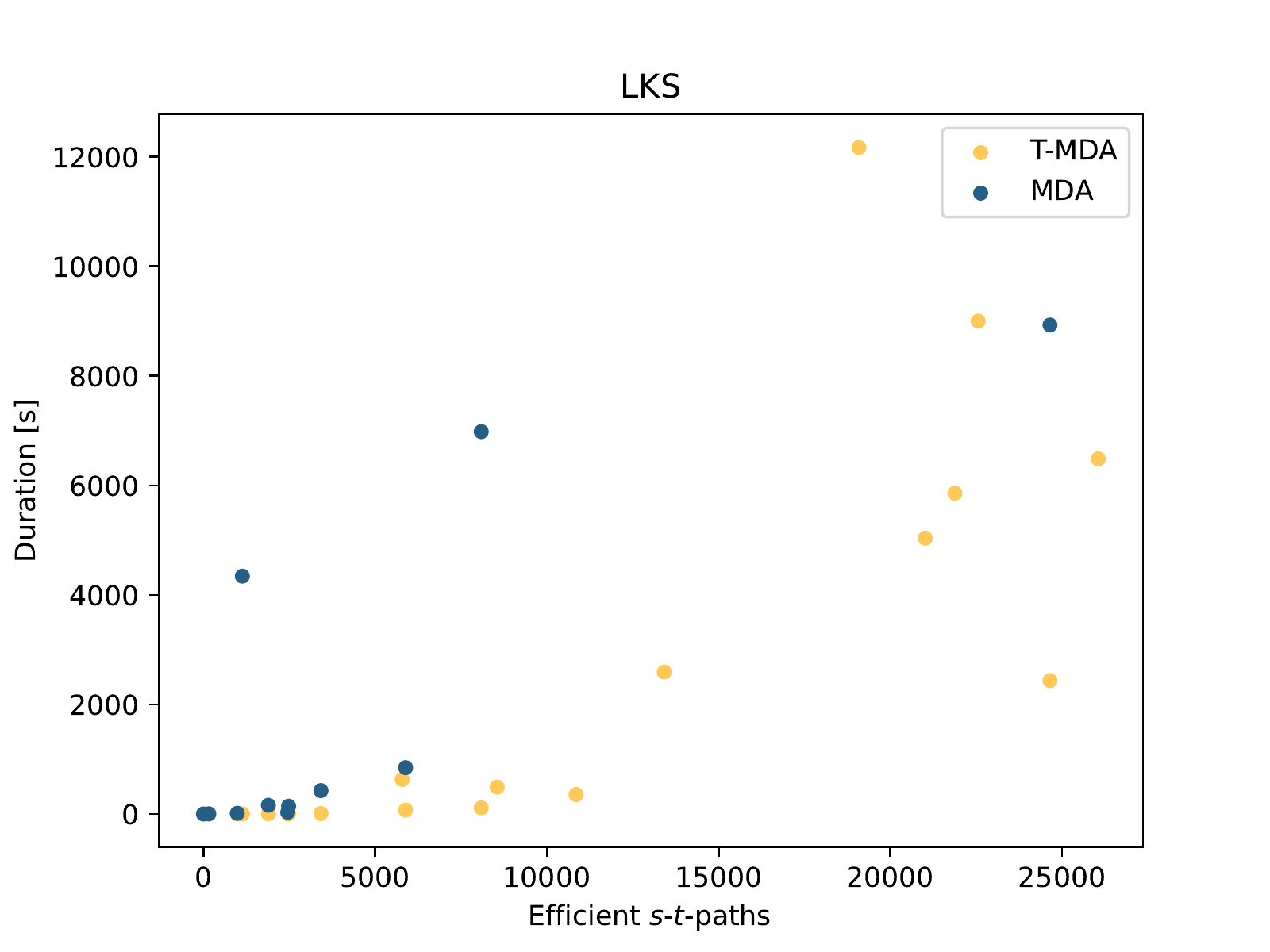}
				\captionof{figure}{MDA vs. \mda{} on LKS network.}\label{fig:3d-road-LKS}
			\end{minipage}
		\end{figure}
		\begin{figure}[H]
			\begin{minipage}{.48\linewidth}
				\captionsetup{type=figure}\includegraphics[width=\textwidth]{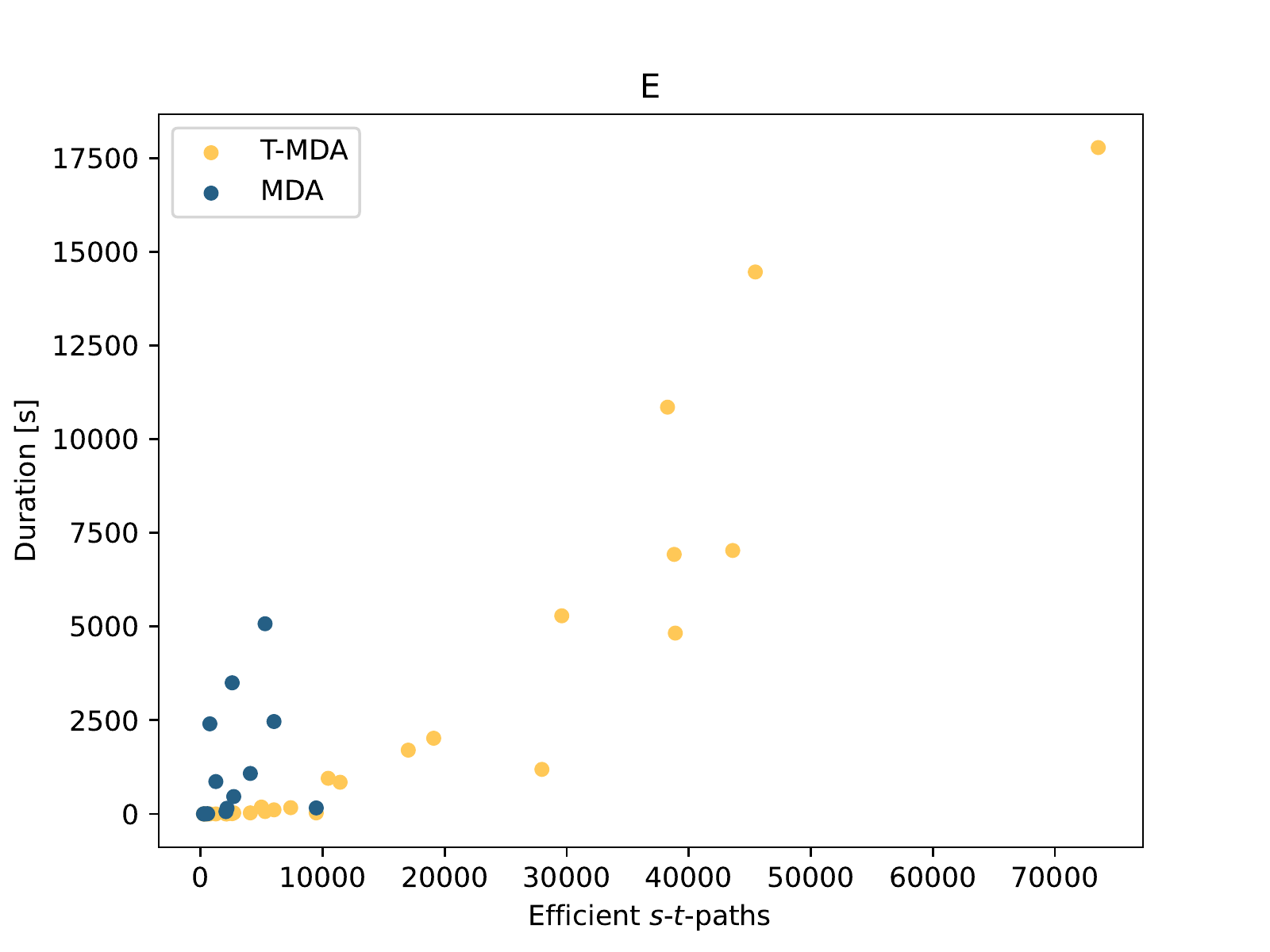}
				\captionof{figure}{MDA vs. \mda{} on E network.}\label{fig:3d-road-E}
			\end{minipage}
			\begin{minipage}{.48\linewidth}
				\captionsetup{type=figure}\includegraphics[width=\textwidth]{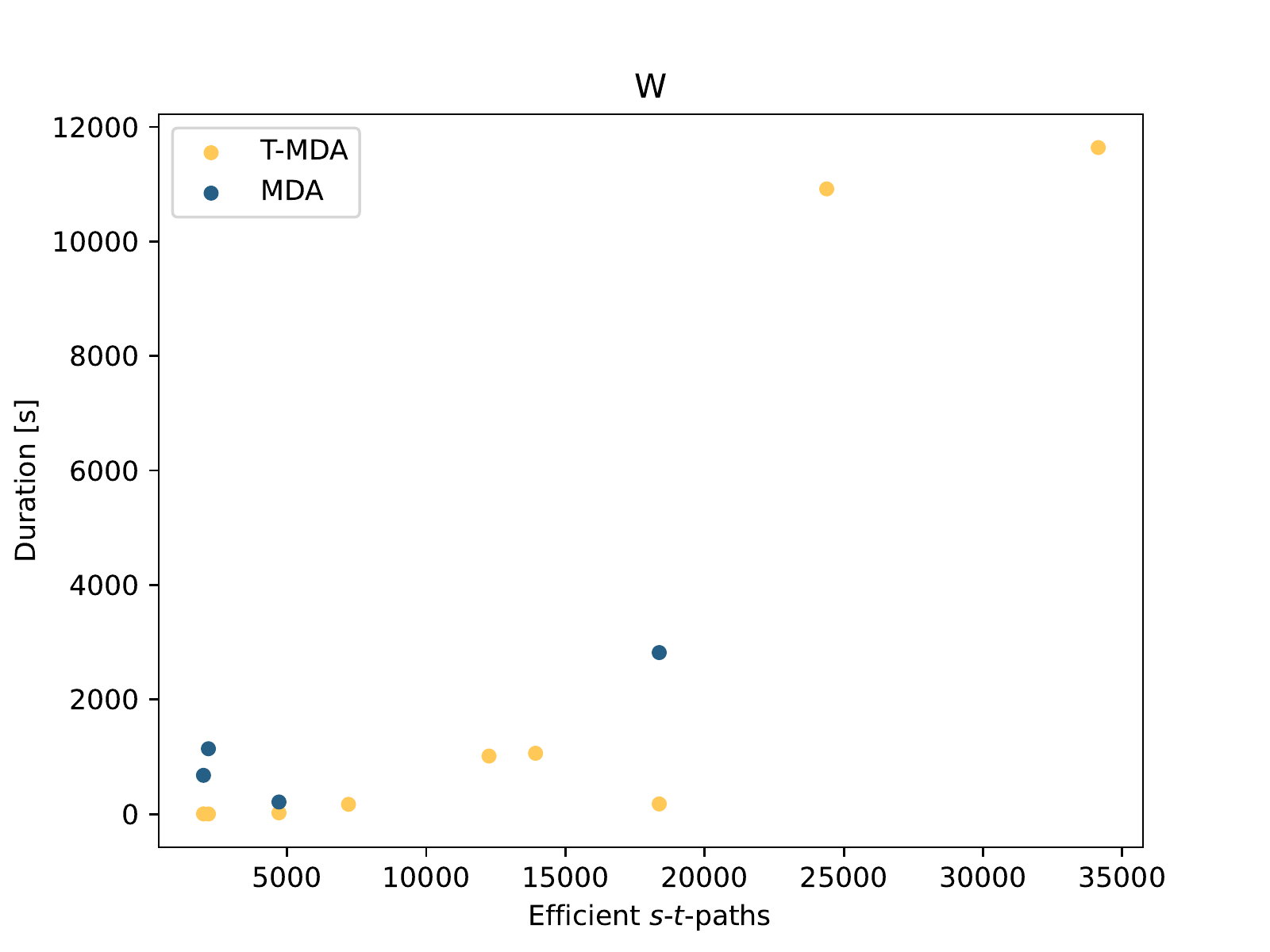}
				\captionof{figure}{MDA vs. \mda{} on W network.}\label{fig:3d-road-W}
			\end{minipage}
		\end{figure}
		\begin{figure}[H]
			\begin{minipage}{.48\linewidth}
				\captionsetup{type=figure}\includegraphics[width=\textwidth]{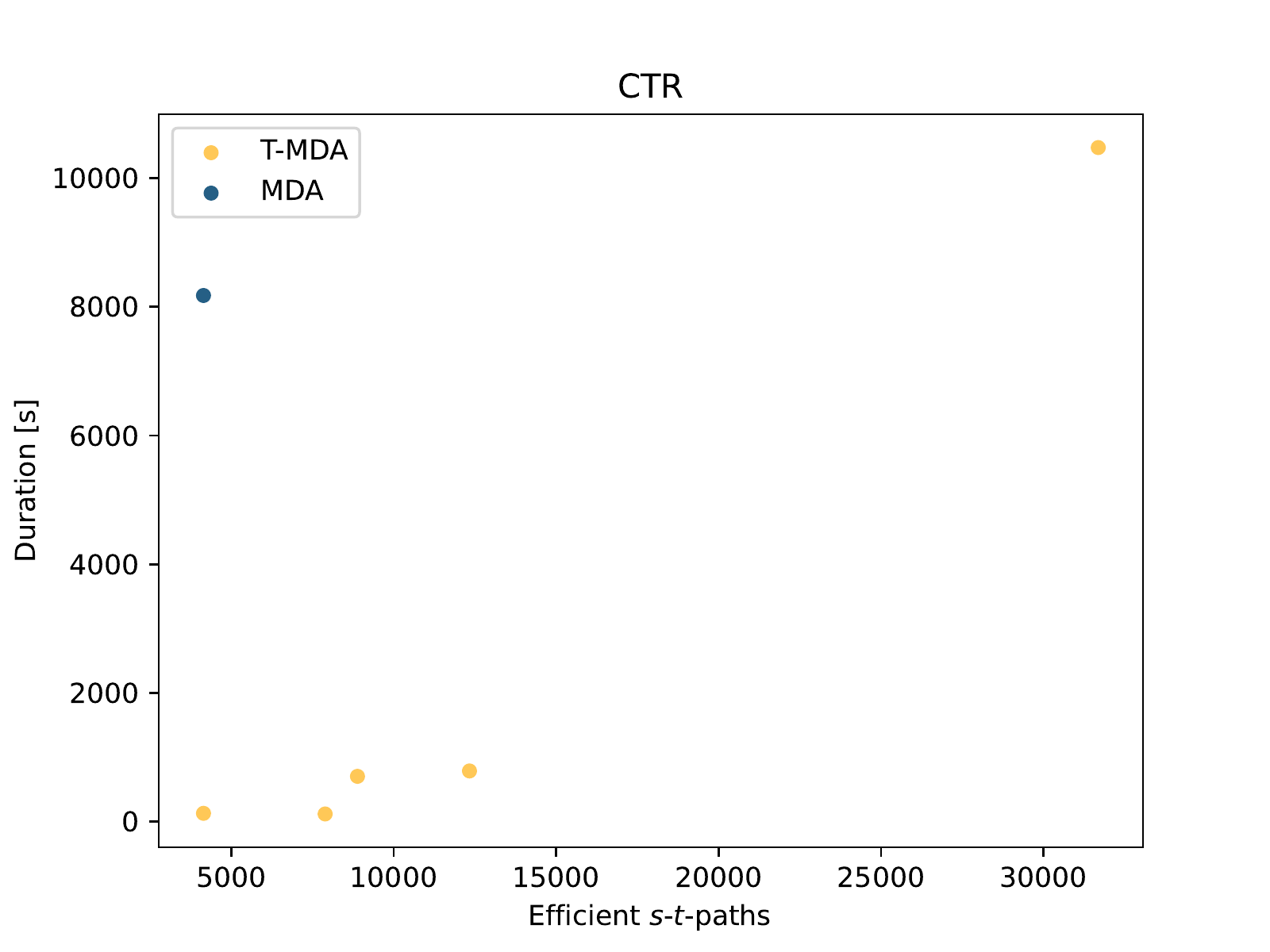}
				\captionof{figure}{MDA vs. \mda{} on CTR network.}\label{fig:3d-road-CTR}
			\end{minipage}
%			\begin{minipage}{.48\linewidth}
%				\captionsetup{type=figure}\includegraphics[width=\textwidth]{results/3d-roads/W.pdf}
%				\captionof{figure}{MDA vs. \mda{} on LKS network.}\label{fig:3d-road-LKS}
%			\end{minipage}
		\end{figure}
	\newpage
	\section{Biobjective Experiments}
	\subsection{Tables}
\begin{table}[H]
	\centering\small
	\caption{Results BT-BDA vs. \boaBidi{} using \textcolor{blue}{bucket queues} on road networks. For every graph, the results of the $100$ instances are reported after dividing the instances into disjoint time intervals based on the time it took the BT-BDA to solve them. We use rounded geometric means.}\label{tab:bi-bi-bucket}
	\begin{tabular}{ll rr rr rr r}
		\toprule
		& \begin{tabular}[c]{@{}l@{}}BT-BDA sol.\\time in $[s]$\end{tabular}  & \begin{tabular}[c]{@{}l@{}}Instance\\count\end{tabular} &   $N_t$      & \multicolumn{2}{c}{BT-BDA} & \multicolumn{2}{c}{\boaBidi{}}  & Speedup \\
		\midrule
		&       &    &           & Inserted   &  Time     & Inserted  & Time      &           \\
		\cmidrule(lr){5-6}          \cmidrule(lr){7-8}
\multirow{2}{*}{BAY} &(0, 0.5] & 99 & 62 & 40625 & 0.02 & 49943 & 0.02 & 0.75 \\
& (0.5, 5] & 1 & 655 & 4618916 & 0.54 & 5351638 & 0.56 & 1.04 \\
\midrule
\multirow{2}{*}{COL} &(0, 0.5] & 91 & 83 & 47464 & 0.03 & 53794 & 0.02 & 0.73 \\
& (0.5, 5] & 9 & 1526 & 10694947 & 1.29 & 12840144 & 1.43 & 1.11 \\
\midrule
\multirow{5}{*}{CTR} &(0, 0.5] & 4 & 105 & 103864 & 0.42 & 107033 & 0.03 & 0.07 \\
& (0.5, 5] & 21 & 1361 & 6528726 & 1.56 & 7664263 & 1.13 & 0.72 \\
& (5, 50] & 23 & 3438 & 73107725 & 17.64 & 85795476 & 18.46 & 1.05 \\
& (50, 500] & 31 & 8648 & 454823235 & 146.81 & 522860049 & 153.02 & 1.04 \\
& (500, 7200] & 21 & 21483 & 2820469600 & 1081.09 & 3254313265 & 1145.45 & 1.06 \\
\midrule
\multirow{5}{*}{E} &(0, 0.5] & 31 & 181 & 210353 & 0.16 & 252177 & 0.05 & 0.30 \\
& (0.5, 5] & 14 & 1497 & 12341423 & 1.84 & 14470630 & 1.87 & 1.02 \\
& (5, 50] & 20 & 2691 & 56652148 & 11.18 & 67838991 & 12.57 & 1.12 \\
& (50, 500] & 28 & 6518 & 427291715 & 124.83 & 515680451 & 131.48 & 1.05 \\
& (500, 7200] & 7 & 14228 & 2113666222 & 735.75 & 2521366438 & 778.47 & 1.06 \\
\midrule
\multirow{3}{*}{FLA} &(0, 0.5] & 74 & 133 & 107532 & 0.06 & 125394 & 0.03 & 0.46 \\
& (0.5, 5] & 19 & 1214 & 9175294 & 1.21 & 10818218 & 1.33 & 1.10 \\
& (5, 50] & 7 & 2582 & 53748512 & 9.80 & 63662475 & 11.53 & 1.18 \\
\midrule
\multirow{5}{*}{LKS} &(0, 0.5] & 28 & 184 & 168986 & 0.13 & 199154 & 0.05 & 0.34 \\
& (0.5, 5] & 20 & 1595 & 11259390 & 1.62 & 13396943 & 1.70 & 1.05 \\
& (5, 50] & 22 & 3067 & 83435641 & 18.22 & 101859555 & 20.11 & 1.10 \\
& (50, 500] & 25 & 9562 & 581410031 & 173.10 & 701058691 & 186.76 & 1.08 \\
& (500, 7200] & 5 & 15199 & 1927723557 & 644.00 & 2271500621 & 698.51 & 1.08 \\
\midrule
\multirow{3}{*}{NE} &(0, 0.5] & 49 & 165 & 185289 & 0.09 & 227362 & 0.04 & 0.46 \\
& (0.5, 5] & 40 & 1035 & 10336396 & 1.58 & 12779331 & 1.71 & 1.08 \\
& (5, 50] & 11 & 2138 & 58734135 & 13.27 & 72864414 & 14.23 & 1.07 \\
\midrule
\multirow{1}{*}{NY} &(0, 0.5] & 100 & 57 & 24230 & 0.02 & 30239 & 0.01 & 0.63 \\
\midrule
\multirow{5}{*}{W} &(0, 0.5] & 16 & 278 & 482099 & 0.26 & 592757 & 0.10 & 0.38 \\
& (0.5, 5] & 29 & 1632 & 11196884 & 1.66 & 13207940 & 1.69 & 1.02 \\
& (5, 50] & 25 & 4009 & 80021210 & 15.03 & 93615464 & 17.28 & 1.15 \\
& (50, 500] & 24 & 10441 & 489914196 & 128.28 & 564094695 & 138.58 & 1.08 \\
& (500, 7200] & 6 & 16522 & 2202981245 & 753.26 & 2540792665 & 808.41 & 1.07 \\
\midrule
		\bottomrule
	\end{tabular}
\end{table}

\begin{table}
	\centering\small
	\caption{Results \bda{} vs. \boaEnh{} using \textcolor{blue}{bucket queues} on road networks. For every graph, the results of the $100$ instances are reported after dividing the instances into disjoint time intervals based on the time it took the BT-BDA to solve them. We use rounded geometric means.}\label{tab:bi-uni-bucket}
	\begin{tabular}{ll rr rr rr r}
	\toprule
	& \begin{tabular}[c]{@{}l@{}}\bda{} sol.\\time in $[s]$\end{tabular}  & \begin{tabular}[c]{@{}l@{}}Instance\\count\end{tabular} &   $N_t$      & \multicolumn{2}{c}{\bda{}} & \multicolumn{2}{c}{\boaEnh{}}  & Speedup \\
	\midrule
	&       &    &           & Inserted   &  Time     & Inserted  & Time      &           \\
	\cmidrule(lr){5-6}          \cmidrule(lr){7-8}
\multirow{2}{*}{BAY} &(0, 0.5] & 95 & 57 & 27703 & 0.02 & 33868 & 0.02 & 0.69 \\
& (0.5, 5] & 5 & 642 & 4012946 & 0.83 & 4844621 & 0.75 & 0.90 \\
\midrule
\multirow{3}{*}{COL} &(0, 0.5] & 84 & 71 & 30264 & 0.03 & 36172 & 0.02 & 0.67 \\
& (0.5, 5] & 14 & 934 & 5361484 & 1.09 & 6613358 & 0.98 & 0.90 \\
& (5, 50] & 2 & 1229 & 32256753 & 7.63 & 39730169 & 7.38 & 0.97 \\
\midrule
\multirow{5}{*}{CTR} &(0, 0.5] & 4 & 105 & 67595 & 0.41 & 79430 & 0.03 & 0.08 \\
& (0.5, 5] & 18 & 1251 & 4889049 & 1.80 & 6055539 & 1.30 & 0.72 \\
& (5, 50] & 23 & 3164 & 50325472 & 18.04 & 61125858 & 17.97 & 1.00 \\
& (50, 500] & 28 & 7351 & 303654790 & 153.83 & 364040207 & 155.71 & 1.01 \\
& (500, 7200] & 27 & 19198 & 1908034860 & 1297.36 & 1988914831 & 1317.76 & 1.02 \\
\midrule
\multirow{5}{*}{E} &(0, 0.5] & 27 & 146 & 113022 & 0.15 & 135219 & 0.04 & 0.25 \\
& (0.5, 5] & 15 & 1121 & 6919429 & 1.73 & 8628146 & 1.49 & 0.86 \\
& (5, 50] & 22 & 2681 & 41868328 & 11.63 & 51221472 & 12.13 & 1.04 \\
& (50, 500] & 27 & 6054 & 316234376 & 145.49 & 398605061 & 151.19 & 1.04 \\
& (500, 7200] & 9 & 13640 & 1563191554 & 955.03 & 1956197401 & 993.52 & 1.04 \\
\midrule
\multirow{3}{*}{FLA} &(0, 0.5] & 70 & 121 & 75779 & 0.07 & 94182 & 0.03 & 0.43 \\
& (0.5, 5] & 22 & 1091 & 6908224 & 1.48 & 8411956 & 1.34 & 0.91 \\
& (5, 50] & 8 & 2383 & 45759407 & 11.65 & 55236464 & 12.45 & 1.07 \\
\midrule
\multirow{5}{*}{LKS} &(0, 0.5] & 24 & 148 & 90095 & 0.12 & 110841 & 0.04 & 0.29 \\
& (0.5, 5] & 19 & 1255 & 7579029 & 1.81 & 9435583 & 1.63 & 0.90 \\
& (5, 50] & 27 & 2826 & 53244055 & 16.58 & 66781137 & 17.39 & 1.05 \\
& (50, 500] & 21 & 9022 & 371893218 & 167.29 & 461995930 & 174.39 & 1.04 \\
& (500, 7200] & 9 & 14168 & 1452305875 & 841.46 & 1779308464 & 879.40 & 1.05 \\
\midrule
\multirow{3}{*}{NE} &(0, 0.5] & 45 & 148 & 131276 & 0.09 & 162577 & 0.04 & 0.42 \\
& (0.5, 5] & 40 & 896 & 7107872 & 1.71 & 9173581 & 1.59 & 0.93 \\
& (5, 50] & 15 & 2193 & 40280288 & 12.84 & 51531456 & 13.09 & 1.02 \\
\midrule
\multirow{2}{*}{NY} &(0, 0.5] & 98 & 55 & 18877 & 0.02 & 24046 & 0.01 & 0.58 \\
& (0.5, 5] & 2 & 573 & 2496976 & 0.58 & 3346304 & 0.47 & 0.81 \\
\midrule
\multirow{5}{*}{W} &(0, 0.5] & 14 & 247 & 338636 & 0.25 & 425815 & 0.09 & 0.37 \\
& (0.5, 5] & 27 & 1424 & 7713009 & 1.84 & 9536562 & 1.62 & 0.88 \\
& (5, 50] & 30 & 3866 & 61677780 & 16.85 & 74762048 & 17.51 & 1.04 \\
& (50, 500] & 21 & 10192 & 410743130 & 147.98 & 488473926 & 159.61 & 1.08 \\
& (500, 7200] & 8 & 16245 & 1567562508 & 842.46 & 1852468610 & 860.99 & 1.02 \\
		\bottomrule
	\end{tabular}
\end{table}

\begin{table}
	\centering\small
	\caption{Results BT-BDA vs. \boaBidi{} using \textcolor{blue}{binary heaps} on road networks. For every graph, the results of the $100$ instances are reported after dividing the instances into disjoint time intervals based on the time it took the BT-BDA to solve them. We use rounded geometric means.}\label{tab:bi-bi-heap}
	\begin{tabular}{ll r rrr rrr r}
		\toprule
		& \begin{tabular}[c]{@{}l@{}}BT-BDA sol.\\time in $[s]$\end{tabular}  &   $N_t$      & \multicolumn{3}{c}{BT-BDA} & \multicolumn{3}{c}{BOB$\text{A}^*$}  & Speedup \\
		\midrule
		&       &               & Solved &    Inserted   &  Time     & Solved & Inserted  & Time      &           \\
		\cmidrule(lr){4-6}          \cmidrule(lr){7-9}
\multirow{2}{*}{BAY} & (0, 0.5] & 60 & 97/97 & 37714 & 0.02 & 97/97 & 43735 & 0.01 & 0.54 \\
& (0.5, 5] & 632 & 3/3 & 3830848 & 0.68 & 3/3 & 4395194 & 0.74 & 1.09 \\
\midrule
\multirow{3}{*}{COL} & (0, 0.5] & 78 & 88/88 & 39187 & 0.02 & 88/88 & 43494 & 0.01 & 0.49 \\
& (0.5, 5] & 1240 & 10/10 & 6377844 & 1.12 & 10/10 & 7628297 & 1.39 & 1.24 \\
& (5, 50] & 1229 & 2/2 & 26245555 & 5.96 & 2/2 & 31313672 & 8.40 & 1.41 \\
\midrule
\multirow{5}{*}{CTR} & (0, 0.5] & 180 & 5/5 & 164244 & 0.33 & 5/5 & 213097 & 0.05 & 0.15 \\
& (0.5, 5] & 1183 & 17/17 & 5579215 & 1.62 & 17/17 & 6648589 & 1.60 & 0.99 \\
& (5, 50] & 3073 & 20/20 & 49779921 & 17.13 & 20/20 & 57364789 & 24.43 & 1.43 \\
& (50, 500] & 6869 & 31/31 & 308476360 & 147.63 & 31/31 & 355402214 & 235.18 & 1.59 \\
& (500, 7200] & 19835 & 27/27 & 2275789128 & 1509.65 & 25/27 & - & 2291.72 & 1.52 \\
\midrule
\multirow{5}{*}{E} & (0, 0.5] & 152 & 28/28 & 160197 & 0.12 & 28/28 & 185554 & 0.03 & 0.28 \\
& (0.5, 5] & 1326 & 13/13 & 7293211 & 1.49 & 13/13 & 8380494 & 1.77 & 1.18 \\
& (5, 50] & 2270 & 21/21 & 42123470 & 11.85 & 21/21 & 49626258 & 19.46 & 1.64 \\
& (50, 500] & 6037 & 28/28 & 342207810 & 157.68 & 28/28 & 410266755 & 281.83 & 1.79 \\
& (500, 7200] & 12307 & 10/10 & 1720688747 & 1052.32 & 10/10 & 2035247152 & 1721.61 & 1.64 \\
\midrule
\multirow{3}{*}{FLA} & (0, 0.5] & 123 & 71/71 & 91592 & 0.06 & 71/71 & 104436 & 0.03 & 0.44 \\
& (0.5, 5] & 1161 & 20/20 & 7495450 & 1.50 & 20/20 & 8643346 & 1.99 & 1.32 \\
& (5, 50] & 2183 & 9/9 & 41251739 & 13.62 & 9/9 & 49092614 & 19.71 & 1.45 \\
\midrule
\multirow{5}{*}{LKS} & (0, 0.5] & 184 & 28/28 & 163527 & 0.13 & 28/28 & 194267 & 0.04 & 0.34 \\
& (0.5, 5] & 1509 & 16/16 & 9374568 & 2.07 & 16/16 & 11055888 & 2.56 & 1.24 \\
& (5, 50] & 2749 & 24/24 & 62695082 & 20.54 & 24/24 & 76806976 & 33.87 & 1.65 \\
& (50, 500] & 7885 & 20/20 & 405592558 & 185.01 & 20/20 & 490492841 & 339.63 & 1.84 \\
& (500, 7200] & 14254 & 12/12 & 1398550016 & 762.38 & 12/12 & 1666695527 & 1308.45 & 1.72 \\
\midrule
\multirow{4}{*}{NE} & (0, 0.5] & 152 & 46/46 & 153034 & 0.08 & 46/46 & 184880 & 0.03 & 0.41 \\
& (0.5, 5] & 853 & 34/34 & 7738980 & 1.92 & 34/34 & 9596334 & 2.39 & 1.25 \\
& (5, 50] & 1954 & 19/19 & 34172231 & 11.90 & 19/19 & 41450095 & 18.14 & 1.52 \\
& (50, 500] & 2073 & 1/1 & 132356676 & 51.02 & 1/1 & 166036909 & 94.70 & 1.86 \\
\midrule
\multirow{1}{*}{NY} & (0, 0.5] & 57 & 100/100 & 23611 & 0.02 & 100/100 & 28374 & 0.01 & 0.54 \\
\midrule
\multirow{5}{*}{W} & (0, 0.5] & 227 & 13/13 & 333907 & 0.20 & 13/13 & 379780 & 0.06 & 0.31 \\
& (0.5, 5] & 1325 & 27/27 & 7956444 & 1.76 & 27/27 & 9406443 & 2.23 & 1.27 \\
& (5, 50] & 3584 & 28/28 & 61540531 & 17.15 & 28/28 & 71560273 & 28.09 & 1.64 \\
& (50, 500] & 9806 & 24/24 & 416197917 & 164.66 & 24/24 & 482695661 & 321.44 & 1.95 \\
& (500, 7200] & 17113 & 8/8 & 1915274474 & 1149.59 & 8/8 & 2231672678 & 1826.39 & 1.59 \\
\bottomrule

	\end{tabular}
\end{table}

\begin{table}
	\centering\small
	\caption{Results \bda{} vs. \boaEnh{} using \textcolor{blue}{binary heaps} on road networks. For every graph, the results of the $100$ instances are reported after dividing the instances into disjoint time intervals based on the time it took the \bda{} to solve them. We use rounded geometric means.}\label{tab:bi-uni-heap}
	\begin{tabular}{ll r rrr rrr r}
		\toprule
		& \begin{tabular}[c]{@{}l@{}}\bda{} sol.\\time in $[s]$\end{tabular}  &   $N_t$      & \multicolumn{3}{c}{\bda{}} & \multicolumn{3}{c}{\boaEnh{}}  & Speedup \\
		\midrule
		&       &               & Solved &    Inserted   &  Time     & Solved & Inserted  & Time      &           \\
		\cmidrule(lr){4-6}          \cmidrule(lr){7-9}
		\multirow{2}{*}{BAY} & (0, 0.5] & 53 & 92/92 & 24134 & 0.02 & 92/92 & 29612 & 0.01 & 0.57 \\
		& (0.5, 5] & 544 & 8/8 & 2984841 & 1.01 & 8/8 & 3622751 & 1.14 & 1.13 \\
		\midrule
		\multirow{3}{*}{COL} & (0, 0.5] & 66 & 82/82 & 27128 & 0.02 & 82/82 & 32586 & 0.01 & 0.53 \\
		& (0.5, 5] & 1008 & 16/16 & 4830522 & 1.52 & 16/16 & 5996713 & 1.98 & 1.30 \\
		& (5, 50] & 1229 & 2/2 & 32212779 & 12.82 & 2/2 & 39852849 & 18.30 & 1.43 \\
		\midrule
		\multirow{5}{*}{CTR} & (0, 0.5] & 105 & 4/4 & 67539 & 0.30 & 4/4 & 79856 & 0.03 & 0.11 \\
		& (0.5, 5] & 1029 & 13/13 & 3417148 & 1.61 & 13/13 & 4330385 & 1.72 & 1.06 \\
		& (5, 50] & 2566 & 21/21 & 29020798 & 15.02 & 21/21 & 35617705 & 23.36 & 1.56 \\
		& (50, 500] & 6001 & 30/30 & 200757115 & 148.93 & 30/30 & 242724359 & 281.38 & 1.89 \\
		& (500, 7200] & 17860 & 32/32 & 1632045110 & 1832.03 & 28/32 & - & 3013.59 & 1.64 \\
		\midrule
		\multirow{5}{*}{E} & (0, 0.5] & 133 & 25/25 & 91225 & 0.12 & 25/25 & 110015 & 0.05 & 0.42 \\
		& (0.5, 5] & 849 & 12/12 & 3770526 & 1.42 & 12/12 & 4770631 & 2.01 & 1.42 \\
		& (5, 50] & 2320 & 25/25 & 32160063 & 13.98 & 25/25 & 39606712 & 29.48 & 2.11 \\
		& (50, 500] & 5464 & 24/24 & 245243763 & 159.53 & 24/24 & 308196779 & 385.59 & 2.42 \\
		& (500, 7200] & 11491 & 14/14 & 1165367046 & 1083.52 & 14/14 & 1135431594 & 2208.60 & 2.04 \\
		\midrule
		\multirow{4}{*}{FLA} & (0, 0.5] & 118 & 69/69 & 71701 & 0.06 & 69/69 & 90697 & 0.03 & 0.48 \\
		& (0.5, 5] & 949 & 21/21 & 5966199 & 2.05 & 21/21 & 7392376 & 2.68 & 1.30 \\
		& (5, 50] & 2507 & 9/9 & 33198809 & 13.53 & 9/9 & 40329084 & 22.08 & 1.63 \\
		& (50, 500] & 2999 & 1/1 & 106379213 & 56.57 & 1/1 & 129750546 & 127.05 & 2.25 \\
		\midrule
		\multirow{5}{*}{LKS} & (0, 0.5] & 113 & 21/21 & 57393 & 0.09 & 21/21 & 71294 & 0.03 & 0.29 \\
		& (0.5, 5] & 1010 & 14/14 & 3914124 & 1.35 & 14/14 & 4901558 & 1.69 & 1.25 \\
		& (5, 50] & 2347 & 29/29 & 32055480 & 14.67 & 29/29 & 40380190 & 25.59 & 1.74 \\
		& (50, 500] & 6681 & 24/24 & 246563476 & 162.97 & 24/24 & 309467525 & 331.72 & 2.04 \\
		& (500, 7200] & 14272 & 12/12 & 1277655419 & 1104.40 & 12/12 & 1571844519 & 2171.67 & 1.97 \\
		\midrule
		\multirow{4}{*}{NE} & (0, 0.5] & 131 & 41/41 & 101825 & 0.07 & 41/41 & 126504 & 0.03 & 0.49 \\
		& (0.5, 5] & 769 & 35/35 & 4983924 & 1.89 & 35/35 & 6524521 & 2.89 & 1.53 \\
		& (5, 50] & 1771 & 23/23 & 26412133 & 12.16 & 23/23 & 33775376 & 21.60 & 1.78 \\
		& (50, 500] & 2073 & 1/1 & 84650339 & 54.69 & 1/1 & 111095228 & 101.84 & 1.86 \\
		\midrule
		\multirow{1}{*}{NY} & (0, 0.5] & 55 & 98/98 & 18834 & 0.02 & 98/98 & 24141 & 0.01 & 0.64 \\
		& (0.5, 5] & 573 & 2/2 & 2494212 & 0.88 & 2/2 & 3362606 & 1.09 & 1.25 \\
		\midrule
		\multirow{5}{*}{W} & (0, 0.5] & 224 & 12/12 & 265327 & 0.19 & 12/12 & 333929 & 0.08 & 0.40 \\
		& (0.5, 5] & 1152 & 22/22 & 5178536 & 1.84 & 22/22 & 6582838 & 2.40 & 1.30 \\
		& (5, 50] & 3006 & 26/26 & 31816717 & 12.67 & 26/26 & 38891925 & 21.34 & 1.68 \\
		& (50, 500] & 6923 & 28/28 & 231002964 & 131.42 & 28/28 & 278461075 & 269.83 & 2.05 \\
		& (500, 7200] & 16509 & 12/12 & 1292686430 & 1045.15 & 12/12 & 1169278518 & 2180.79 & 2.09 \\
		\bottomrule
	\end{tabular}
\end{table}

	\subsection{Graphics -- Bidimensional-Grids}
		\begin{figure}[H]
		\begin{minipage}{.48\linewidth}
			\captionsetup{type=figure}\includegraphics[width=\textwidth]{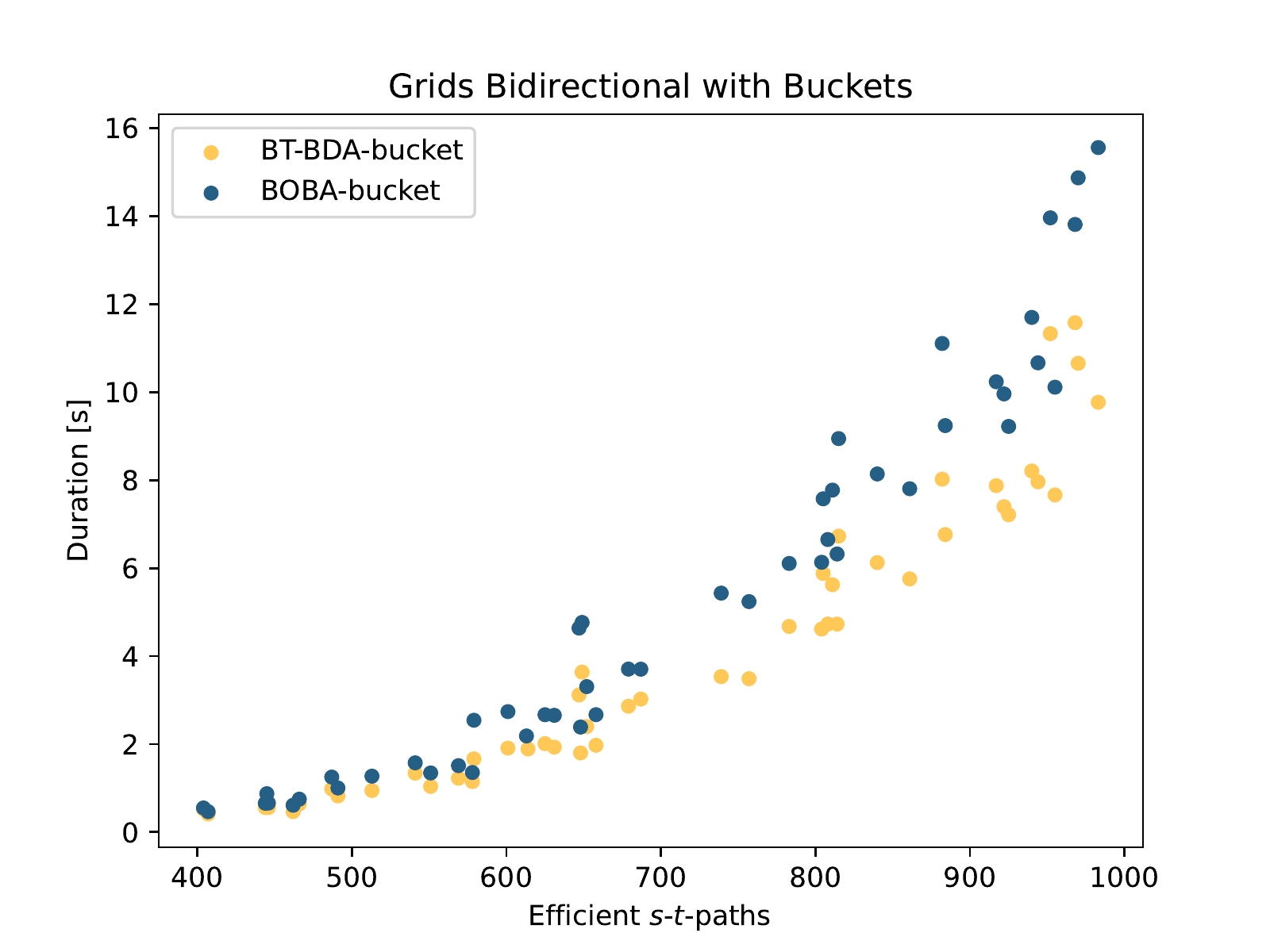}
			\captionof{figure}{}\label{fig:2d-grid-bi-bucket}
		\end{minipage}
		\begin{minipage}{.48\linewidth}
			\captionsetup{type=figure}\includegraphics[width=\textwidth]{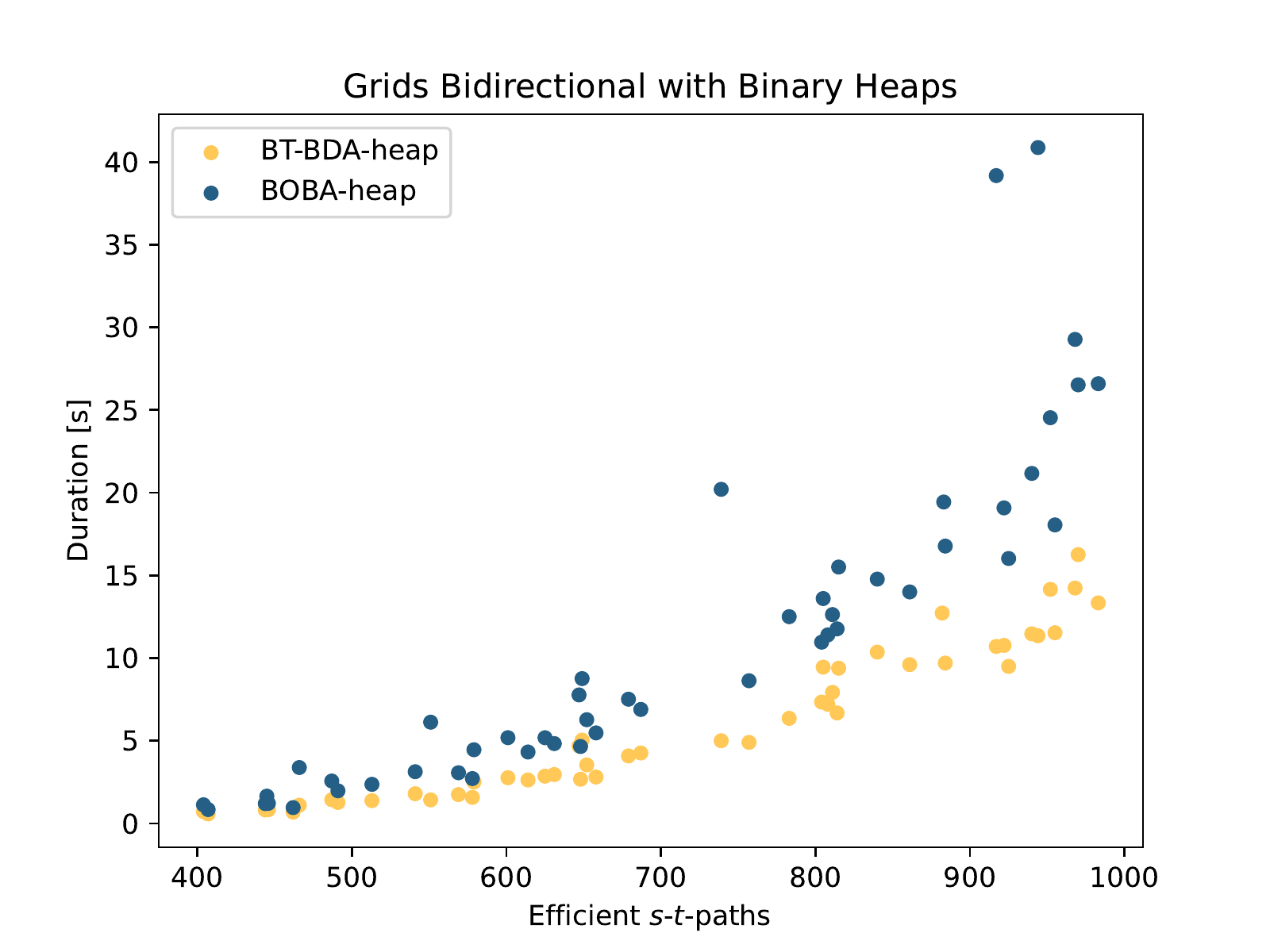}
			\captionof{figure}{}\label{fig:2d-grid-bi-heap}
		\end{minipage}
		\end{figure}
		\begin{figure}[H]
		\begin{minipage}{.48\linewidth}
			\captionsetup{type=figure}\includegraphics[width=\textwidth]{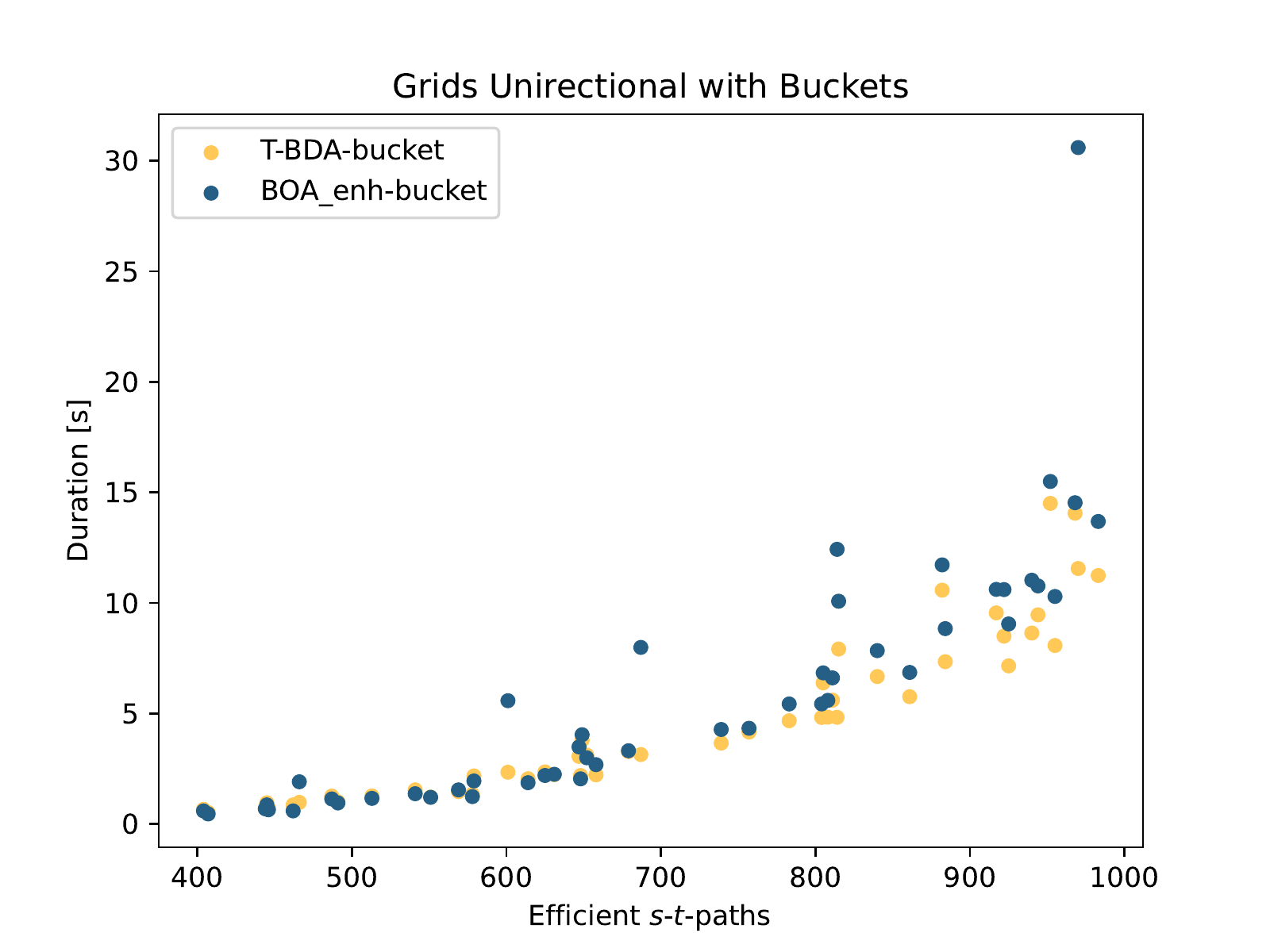}
			\captionof{figure}{}\label{fig:2d-grid-uni-bucket}
		\end{minipage}
		\begin{minipage}{.48\linewidth}
			\captionsetup{type=figure}\includegraphics[width=\textwidth]{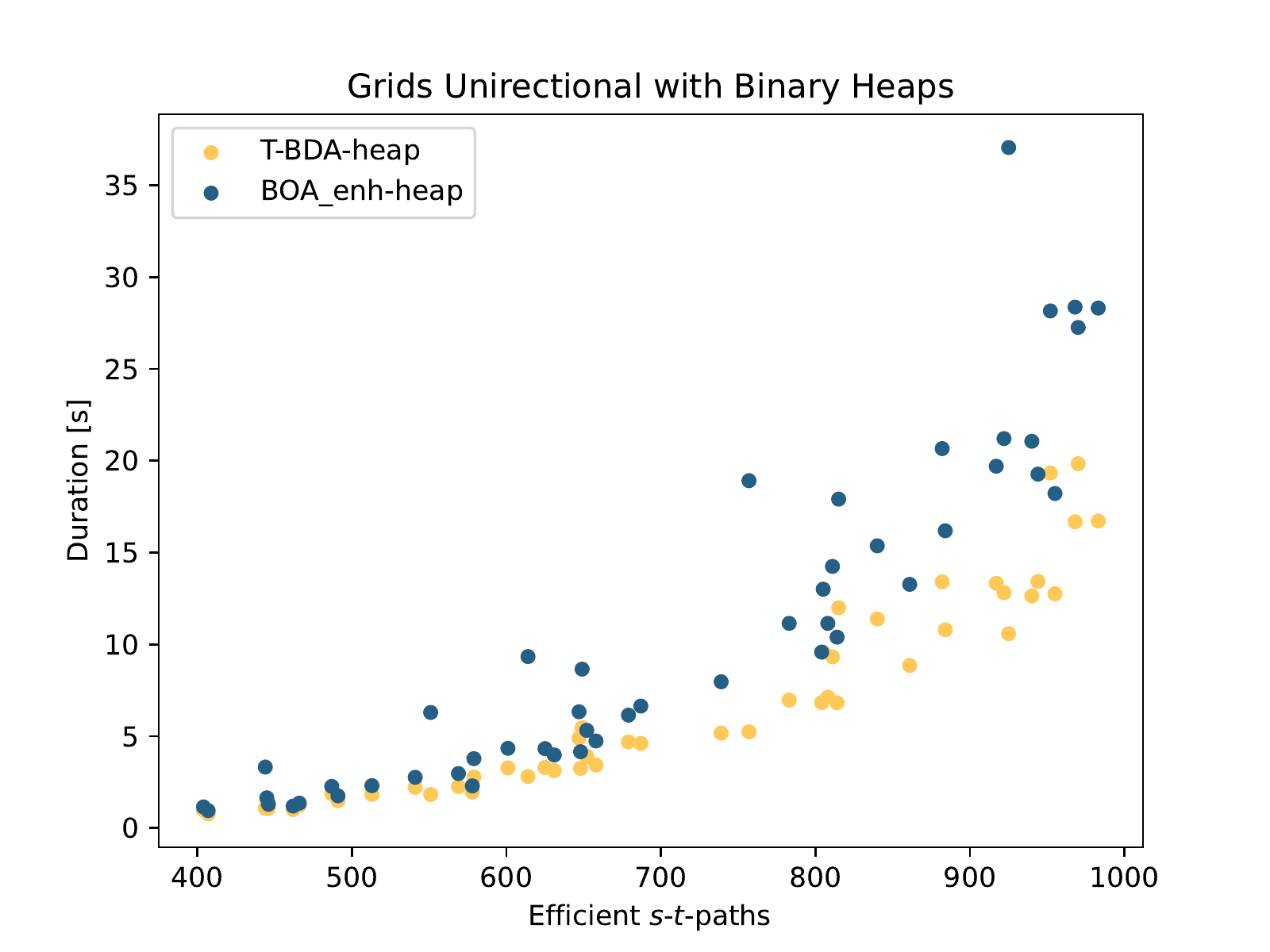}
			\captionof{figure}{}\label{fig:2d-grid-uni-heap}
		\end{minipage}
		\end{figure}

	\subsection{Graphics -- Bidimensional-Bidirectional-Buckets}
			\begin{figure}[H]
		\begin{minipage}{.48\linewidth}
			\captionsetup{type=figure}\includegraphics[width=\textwidth]{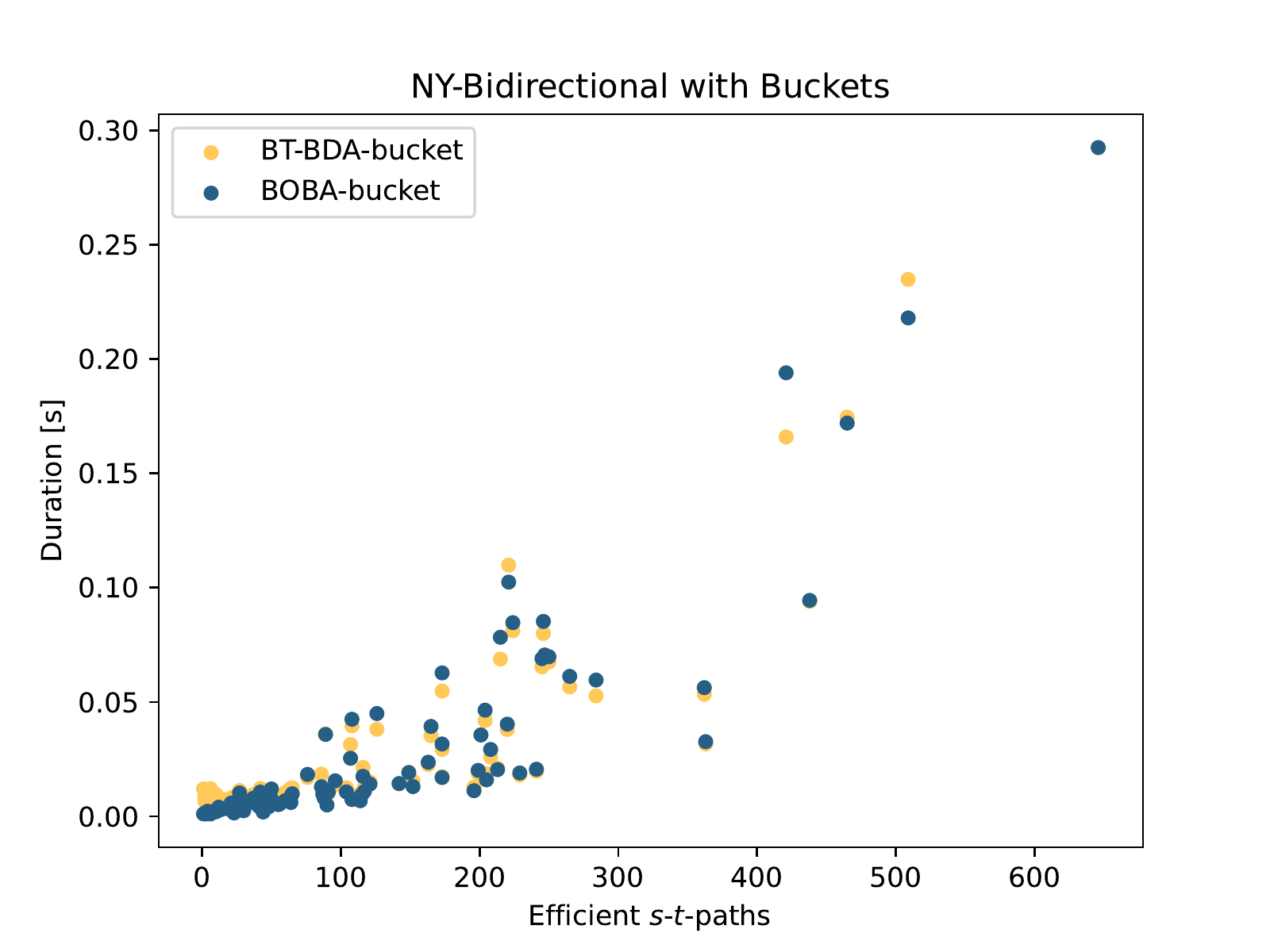}
			\captionof{figure}{}\label{fig:2d-bbda-boba-bucket-NY}
		\end{minipage}
		\begin{minipage}{.48\linewidth}
			\captionsetup{type=figure}\includegraphics[width=\textwidth]{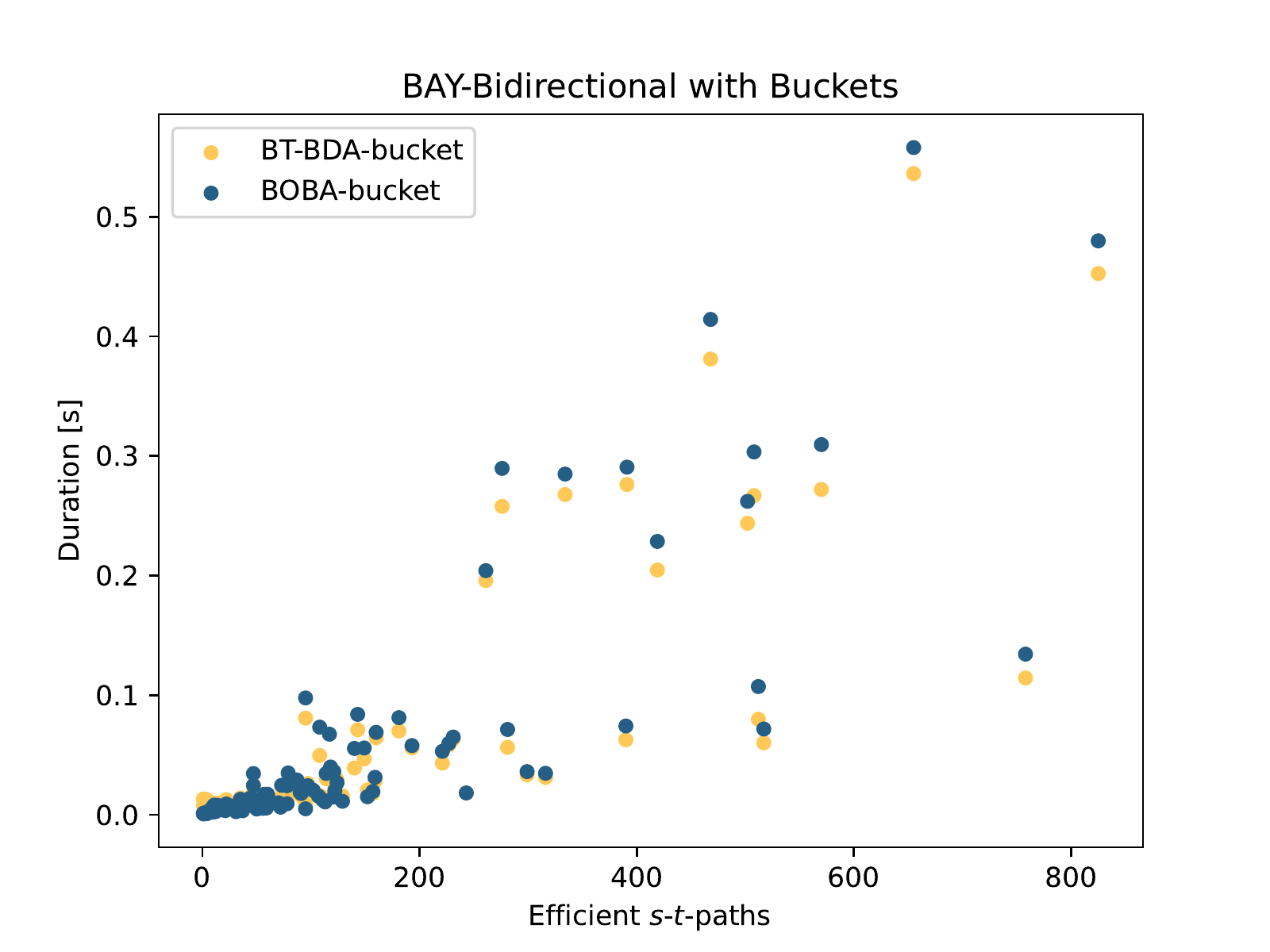}
			\captionof{figure}{}\label{fig:2d-bbda-boba-bucket-BAY}
		\end{minipage}
	\end{figure}
	\begin{figure}[H]
		\begin{minipage}{.48\linewidth}
			\captionsetup{type=figure}\includegraphics[width=\textwidth]{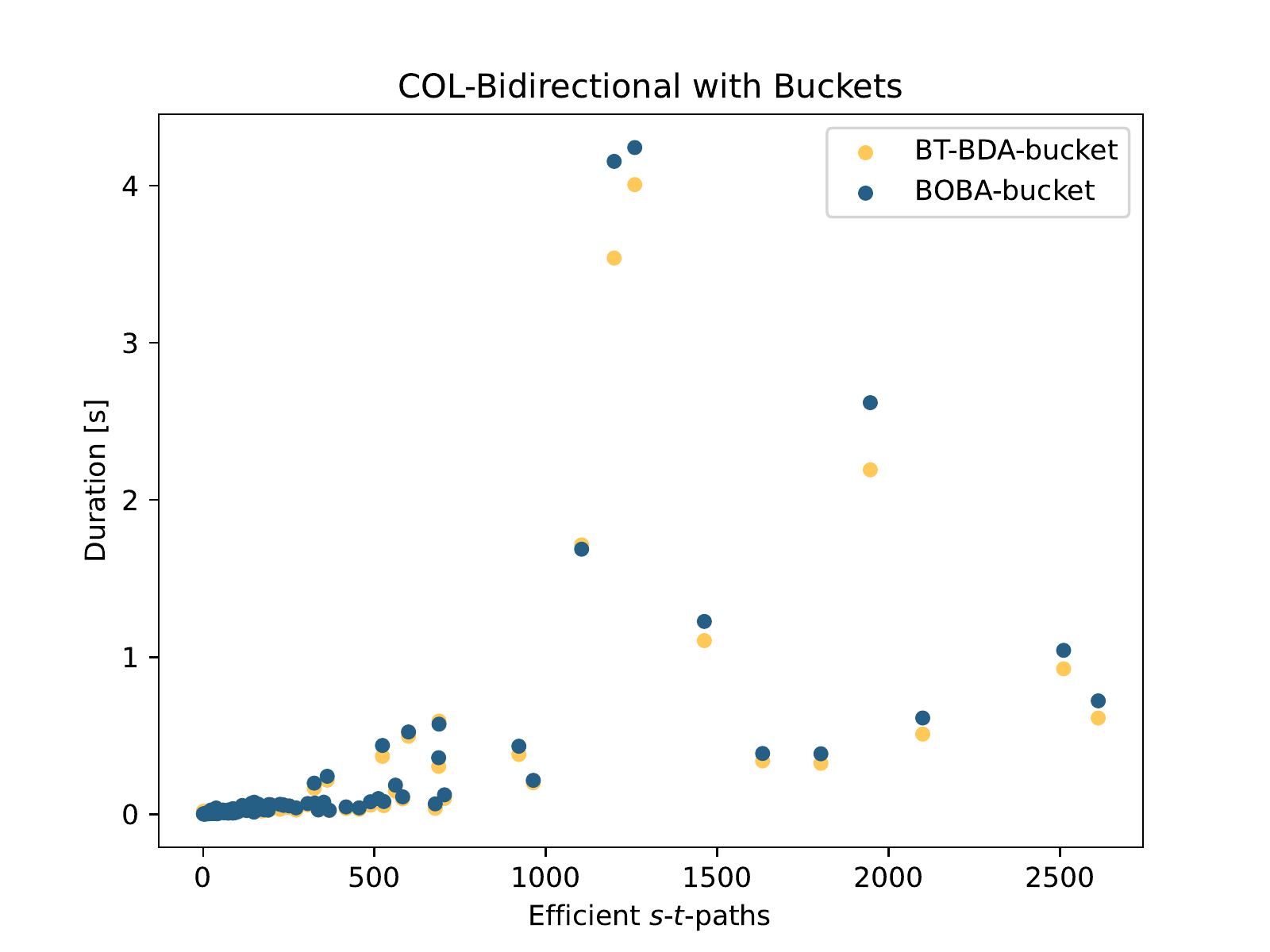}
			\captionof{figure}{}\label{fig:2d-bbda-boba-bucket-COL}
		\end{minipage}
		\begin{minipage}{.48\linewidth}
			\captionsetup{type=figure}\includegraphics[width=\textwidth]{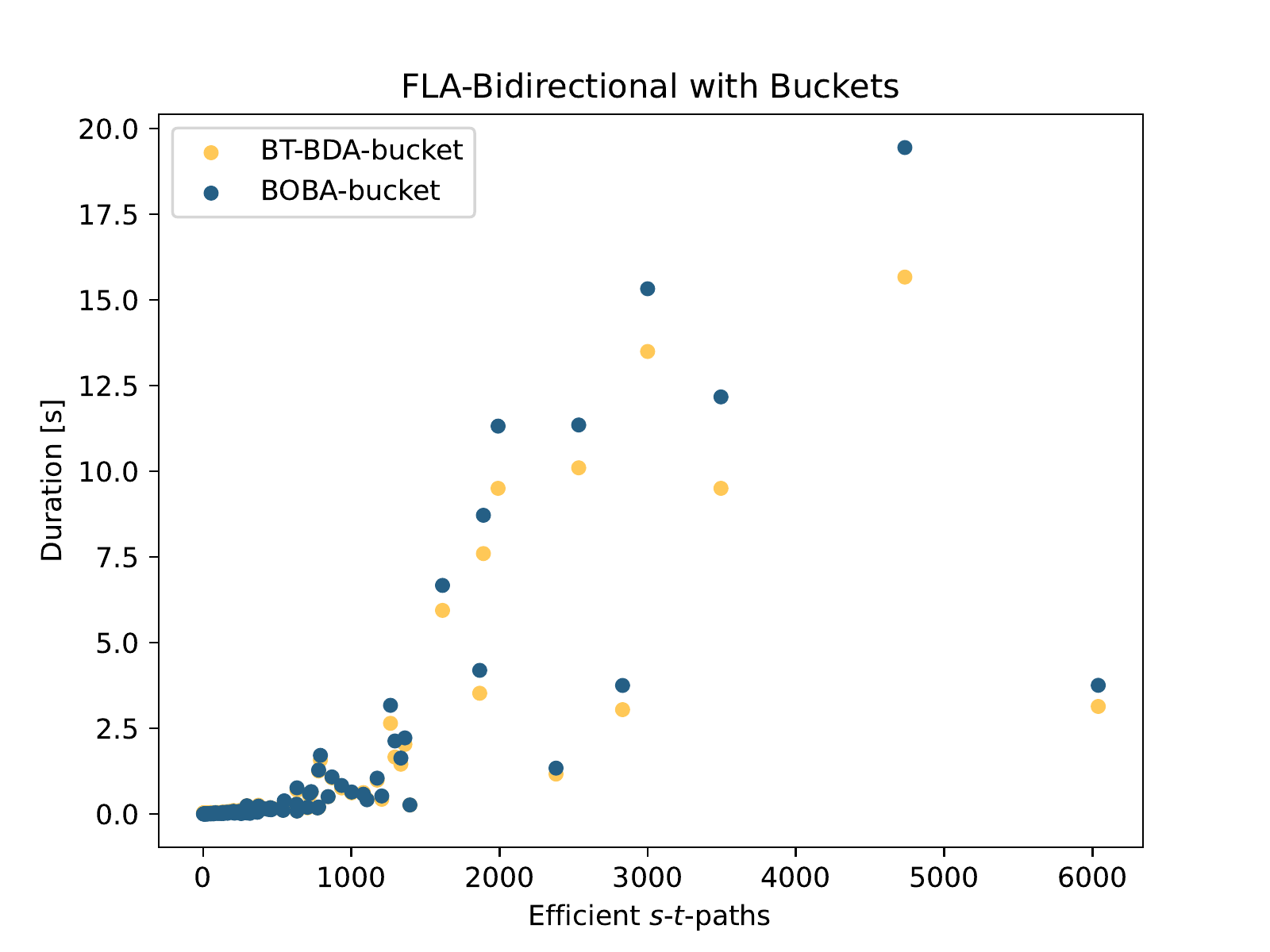}
			\captionof{figure}{}\label{fig:2d-bbda-boba-bucket-FLA}
		\end{minipage}
	\end{figure}
	\begin{figure}[H]
		\begin{minipage}{.48\linewidth}
			\captionsetup{type=figure}\includegraphics[width=\textwidth]{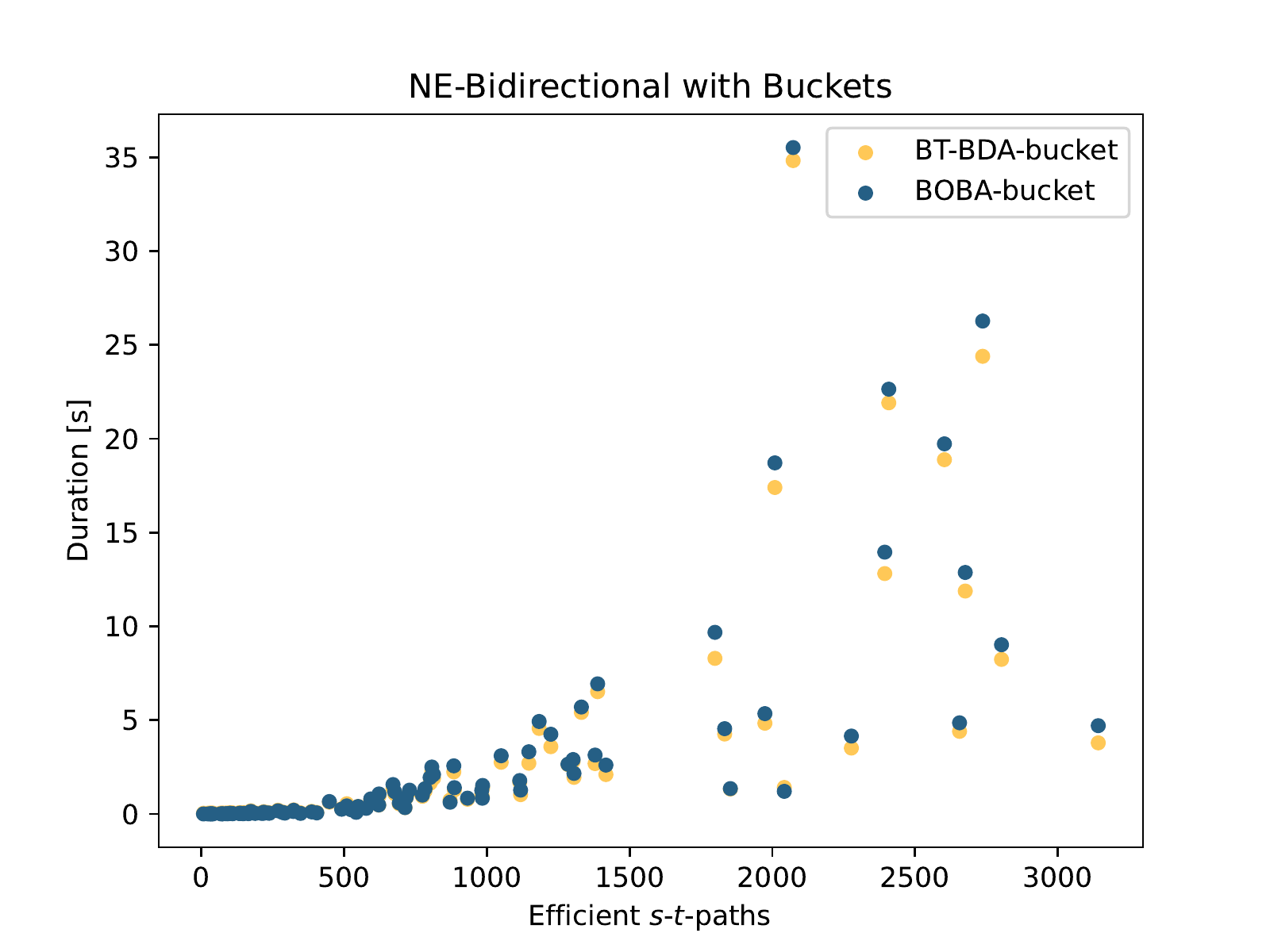}
			\captionof{figure}{}\label{fig:2d-bbda-boba-bucket-NE}
		\end{minipage}
		\begin{minipage}{.48\linewidth}
			\captionsetup{type=figure}\includegraphics[width=\textwidth]{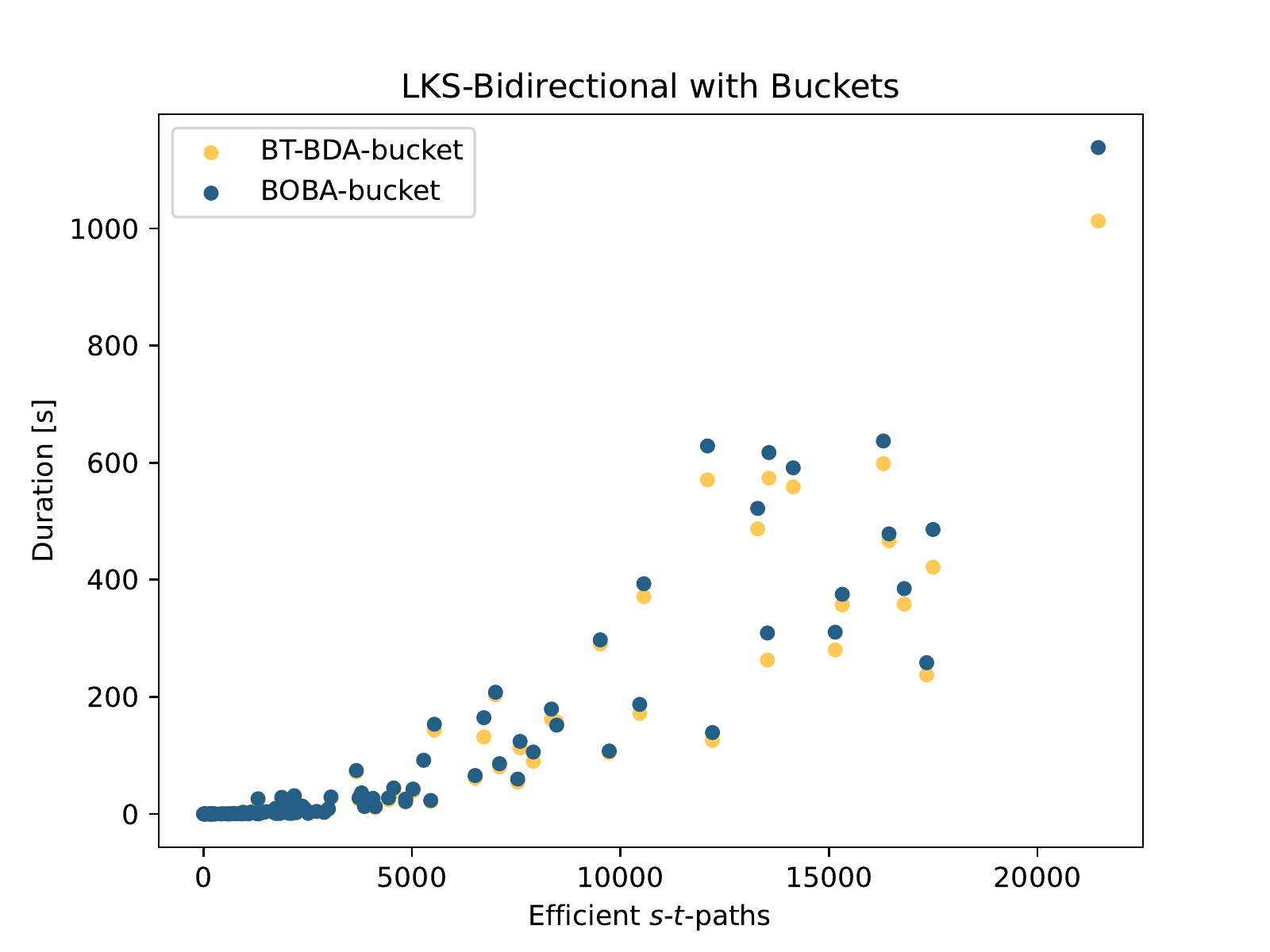}
			\captionof{figure}{}\label{fig:2d-bbda-boba-bucket-LKS}
		\end{minipage}
	\end{figure}
	\begin{figure}[H]
		\begin{minipage}{.48\linewidth}
			\captionsetup{type=figure}\includegraphics[width=\textwidth]{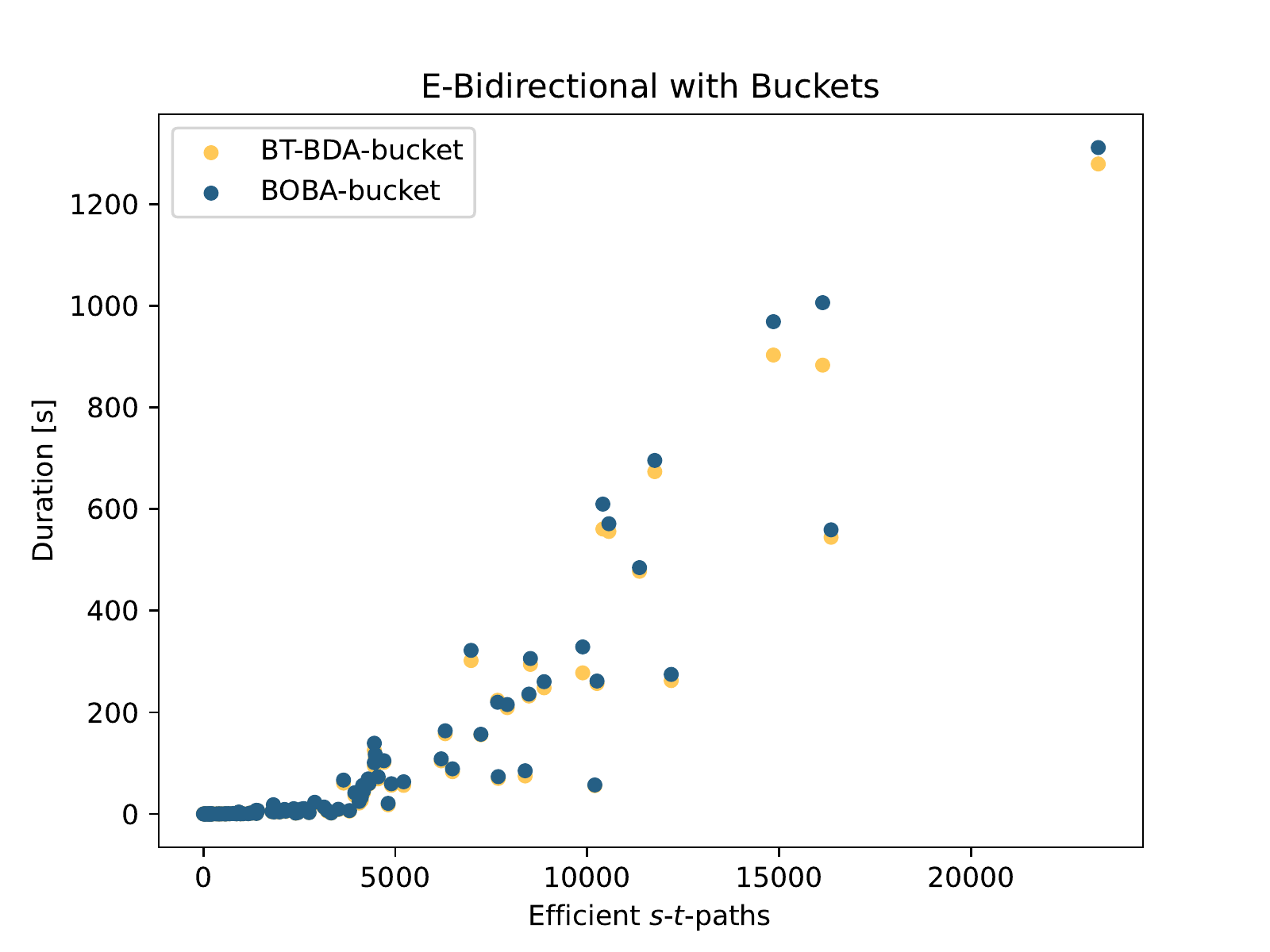}
			\captionof{figure}{}\label{fig:2d-bbda-boba-bucket-E}
		\end{minipage}
		\begin{minipage}{.48\linewidth}
			\captionsetup{type=figure}\includegraphics[width=\textwidth]{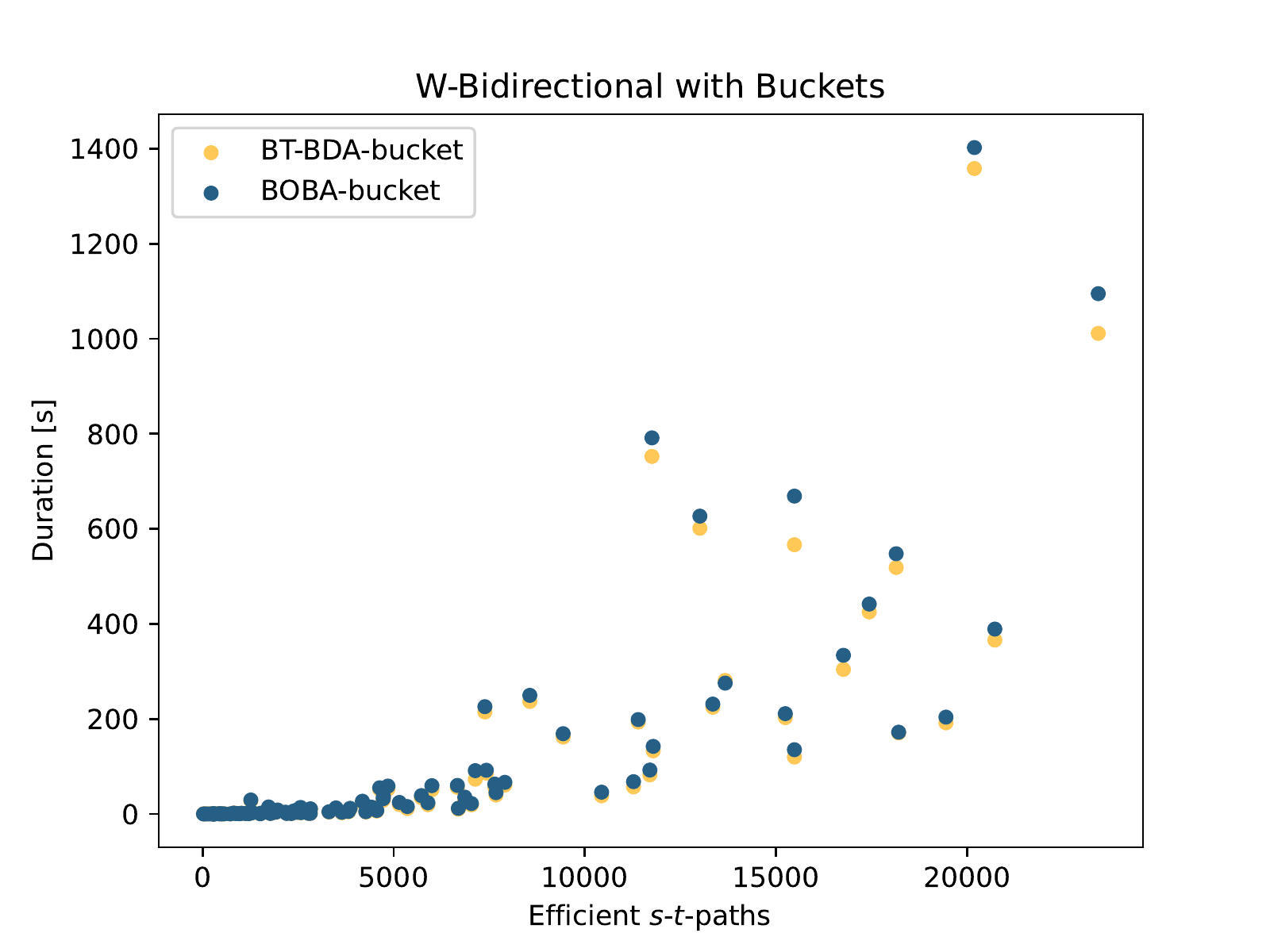}
			\captionof{figure}{}\label{fig:2d-bbda-boba-bucket-W}
		\end{minipage}
	\end{figure}
	\begin{figure}[H]
		\begin{minipage}{.48\linewidth}
			\captionsetup{type=figure}\includegraphics[width=\textwidth]{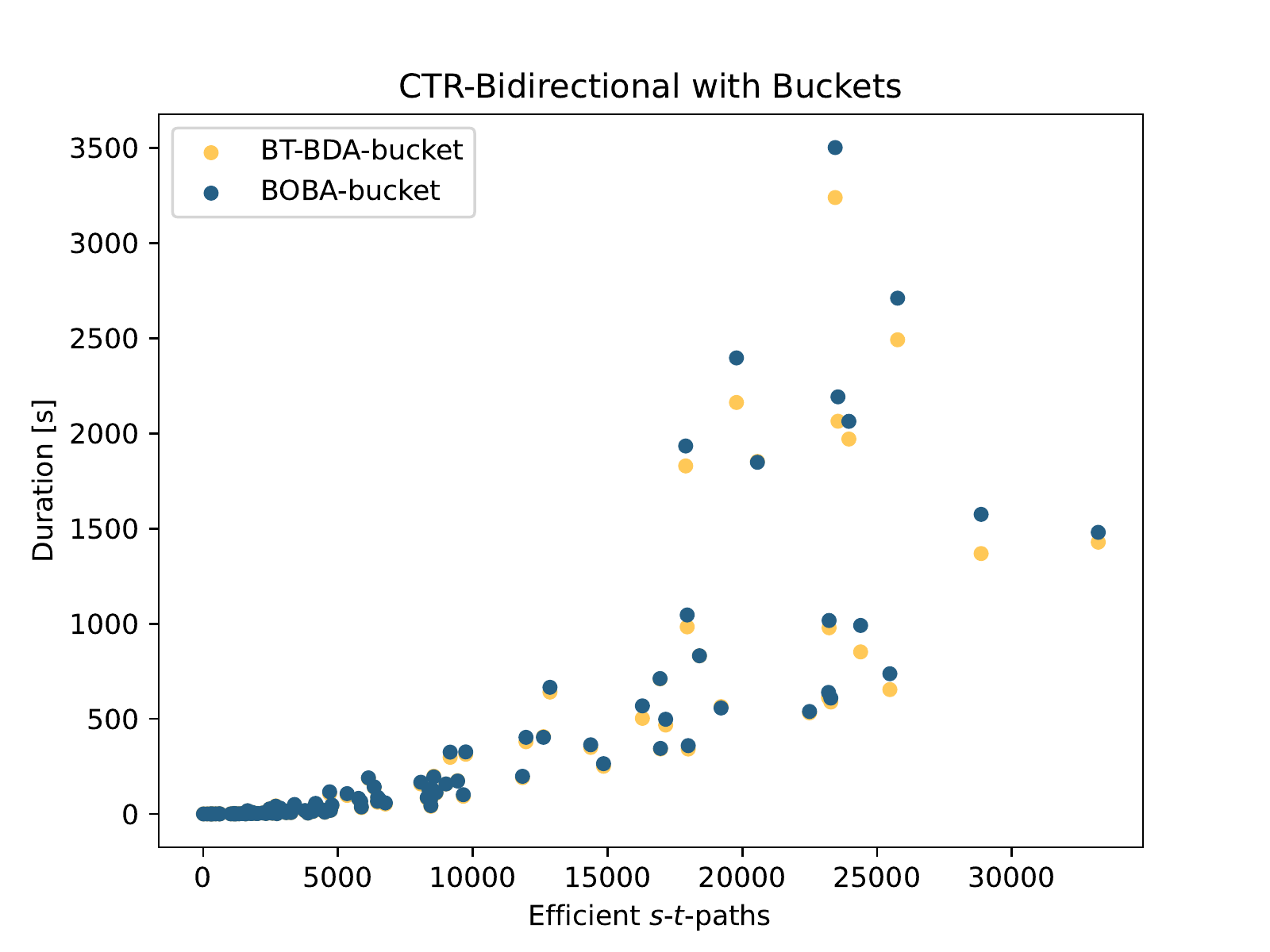}
			\captionof{figure}{}\label{fig:2d-bbda-boba-bucket-CTR}
		\end{minipage}
	\end{figure}

	\subsection{Graphics -- Bidimensional-Unidirectional-Buckets}
		\begin{figure}[H]
			\begin{minipage}{.48\linewidth}
				\captionsetup{type=figure}\includegraphics[width=\textwidth]{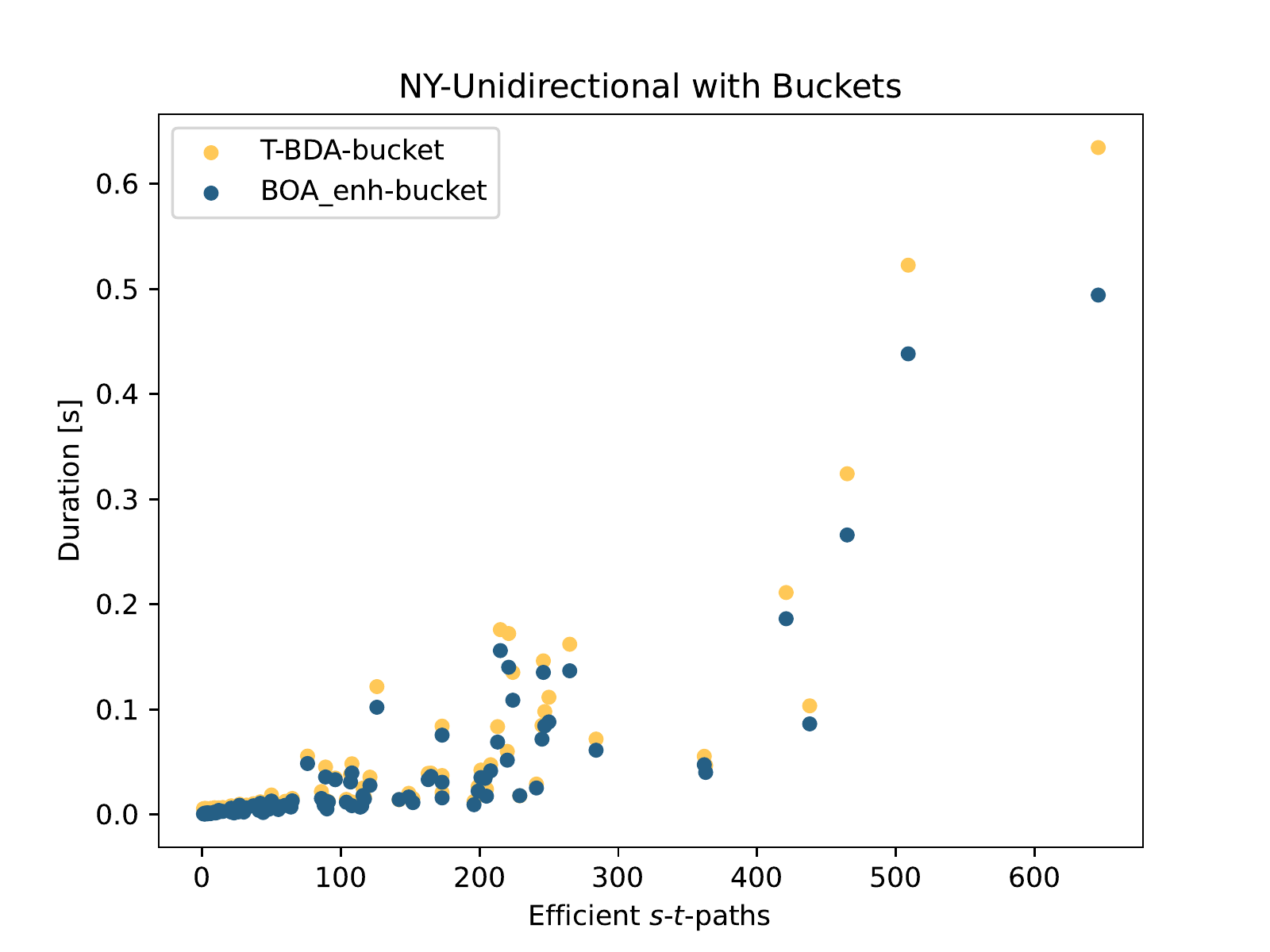}
				\captionof{figure}{}\label{fig:2d-bda-boa-bucket-NY}
			\end{minipage}
			\begin{minipage}{.48\linewidth}
				\captionsetup{type=figure}\includegraphics[width=\textwidth]{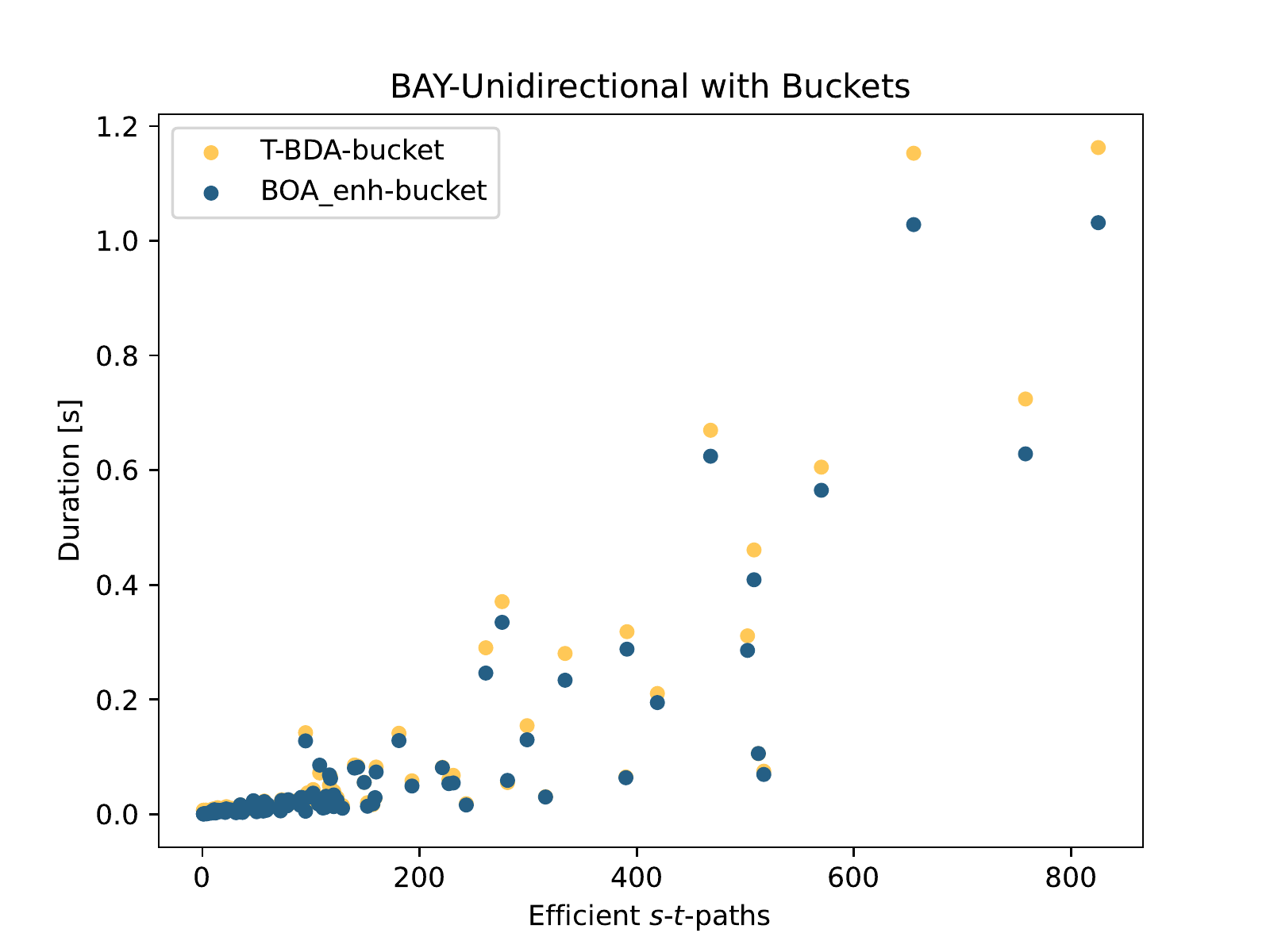}
				\captionof{figure}{}\label{fig:2d-bda-boa-bucket-BAY}
			\end{minipage}
		\end{figure}
		\begin{figure}[H]
			\begin{minipage}{.48\linewidth}
				\captionsetup{type=figure}\includegraphics[width=\textwidth]{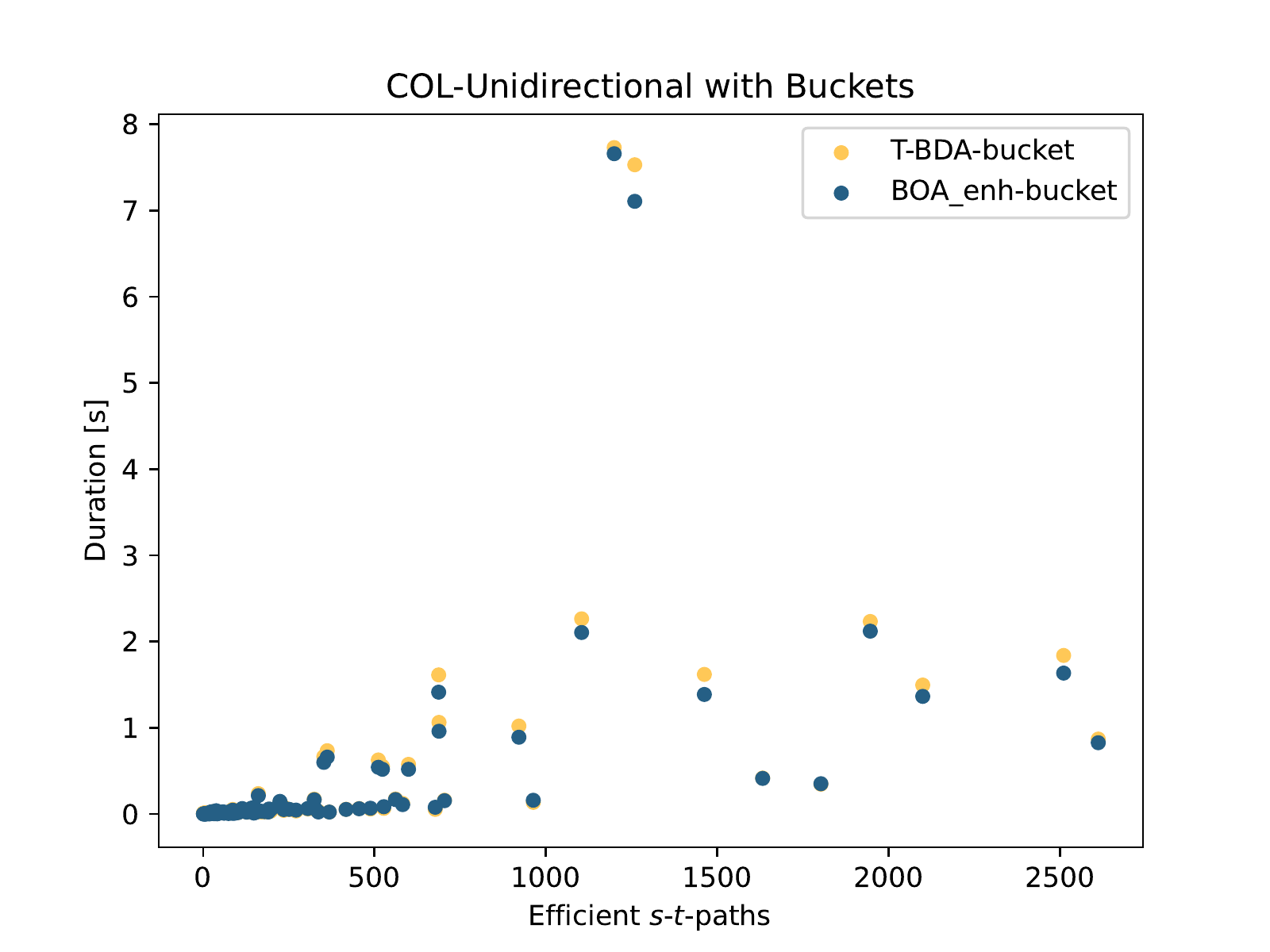}
				\captionof{figure}{}\label{fig:2d-bda-boa-bucket-COL}
			\end{minipage}
			\begin{minipage}{.48\linewidth}
				\captionsetup{type=figure}\includegraphics[width=\textwidth]{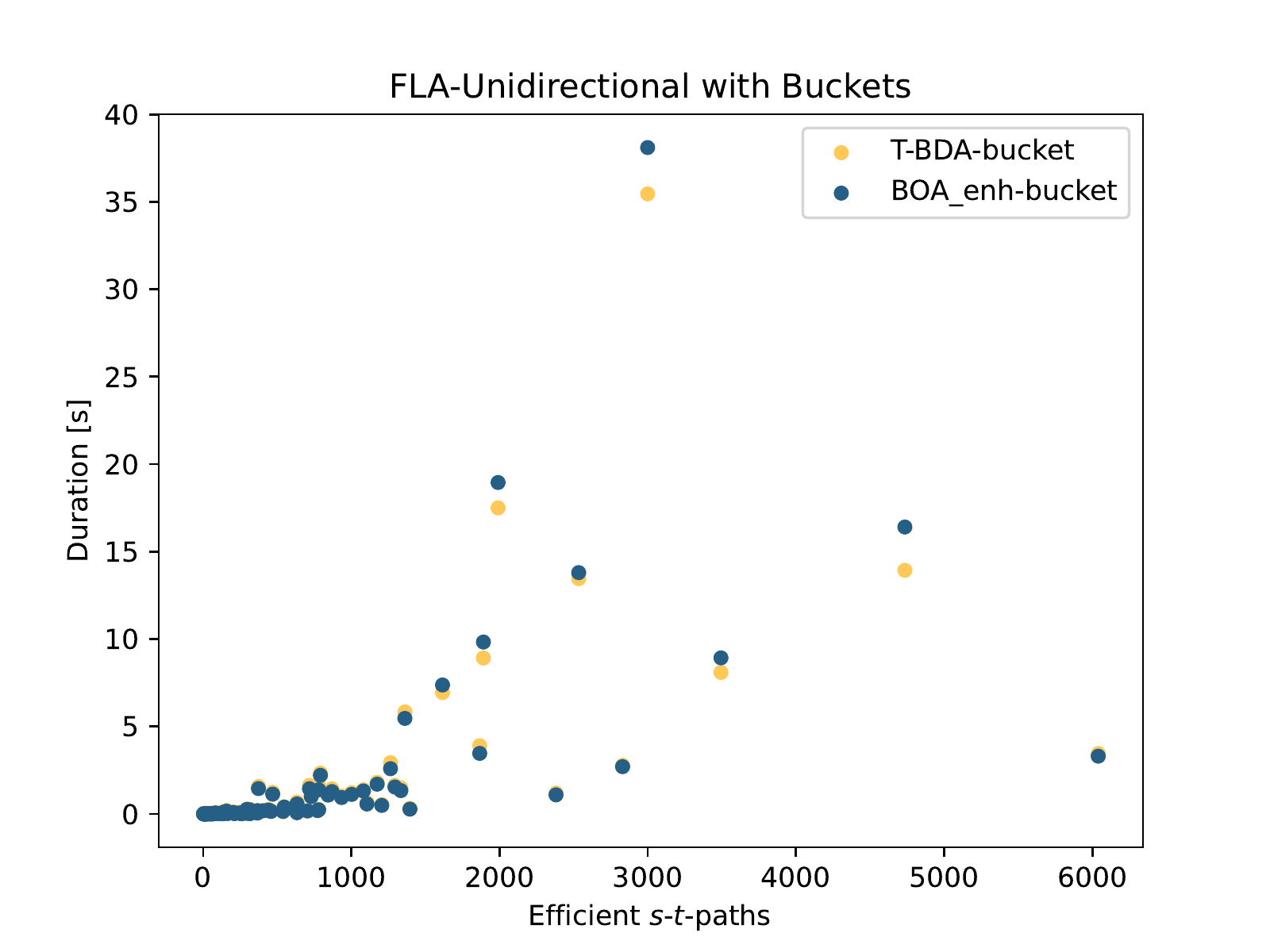}
				\captionof{figure}{}\label{fig:2d-bda-boa-bucket-FLA}
			\end{minipage}
		\end{figure}
		\begin{figure}[H]
			\begin{minipage}{.48\linewidth}
				\captionsetup{type=figure}\includegraphics[width=\textwidth]{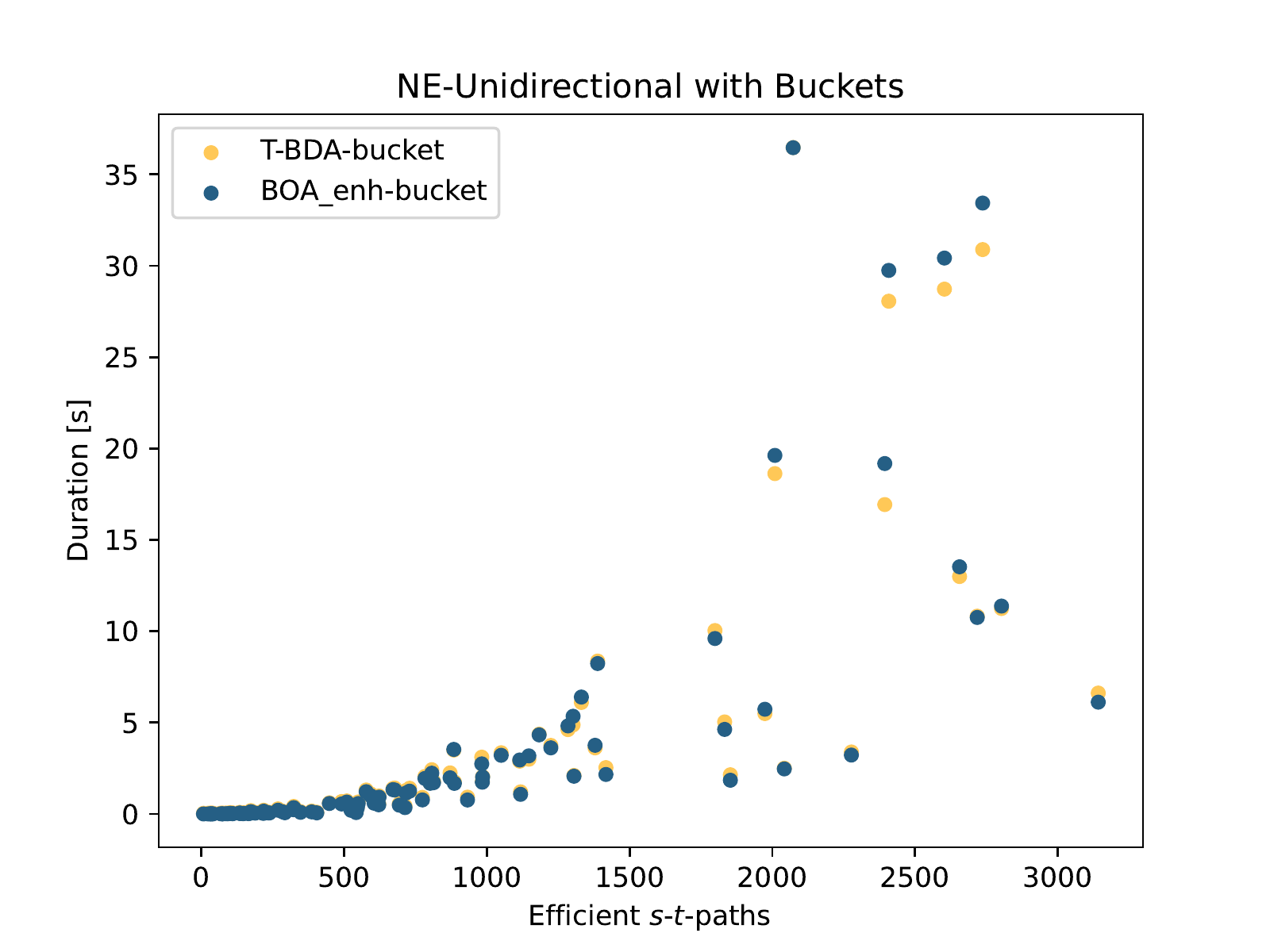}
				\captionof{figure}{}\label{fig:2d-bda-boa-bucket-NE}
			\end{minipage}
			\begin{minipage}{.48\linewidth}
				\captionsetup{type=figure}\includegraphics[width=\textwidth]{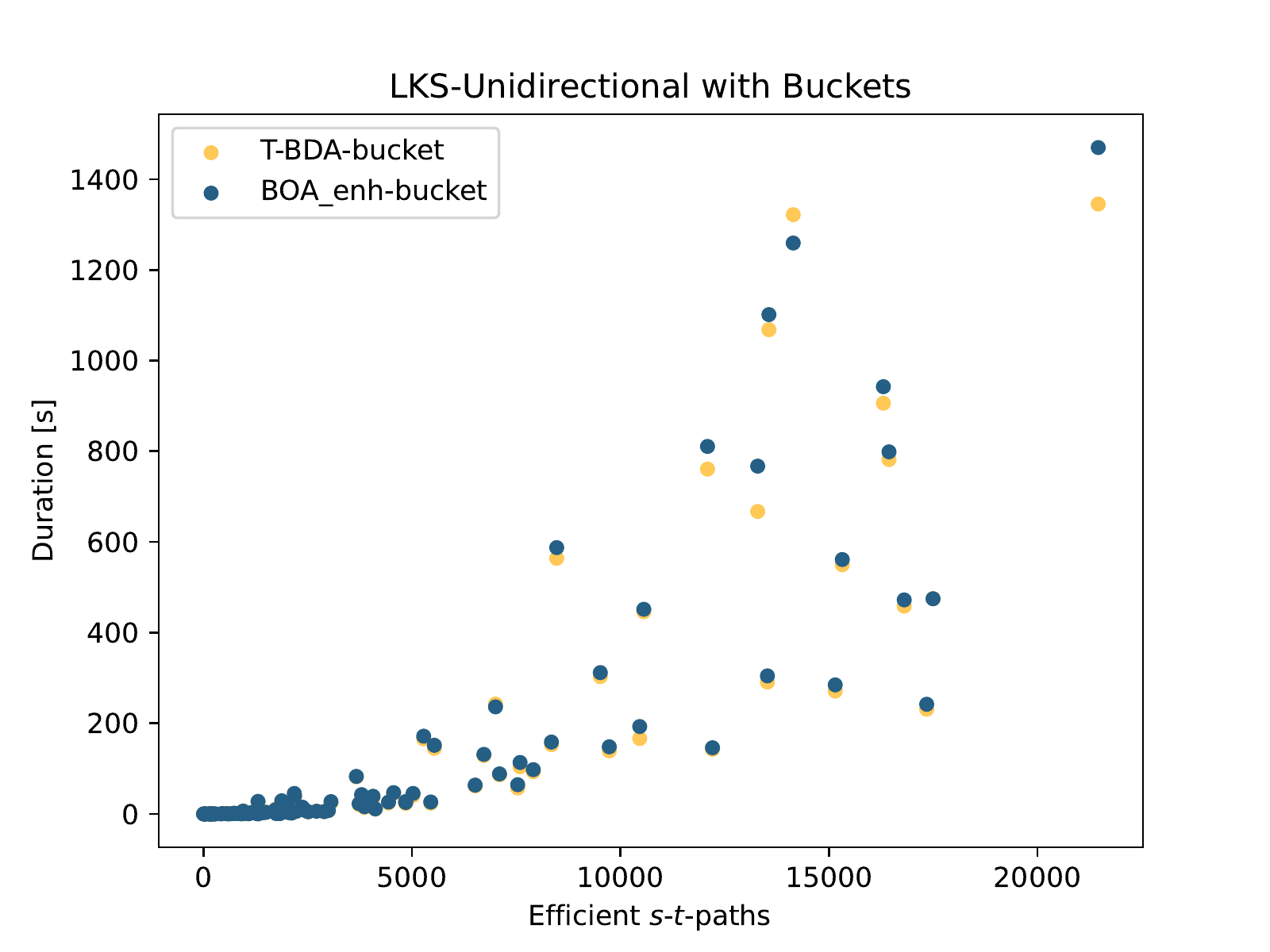}
				\captionof{figure}{}\label{fig:2d-bda-boa-bucket-LKS}
			\end{minipage}
		\end{figure}
		\begin{figure}[H]
			\begin{minipage}{.48\linewidth}
				\captionsetup{type=figure}\includegraphics[width=\textwidth]{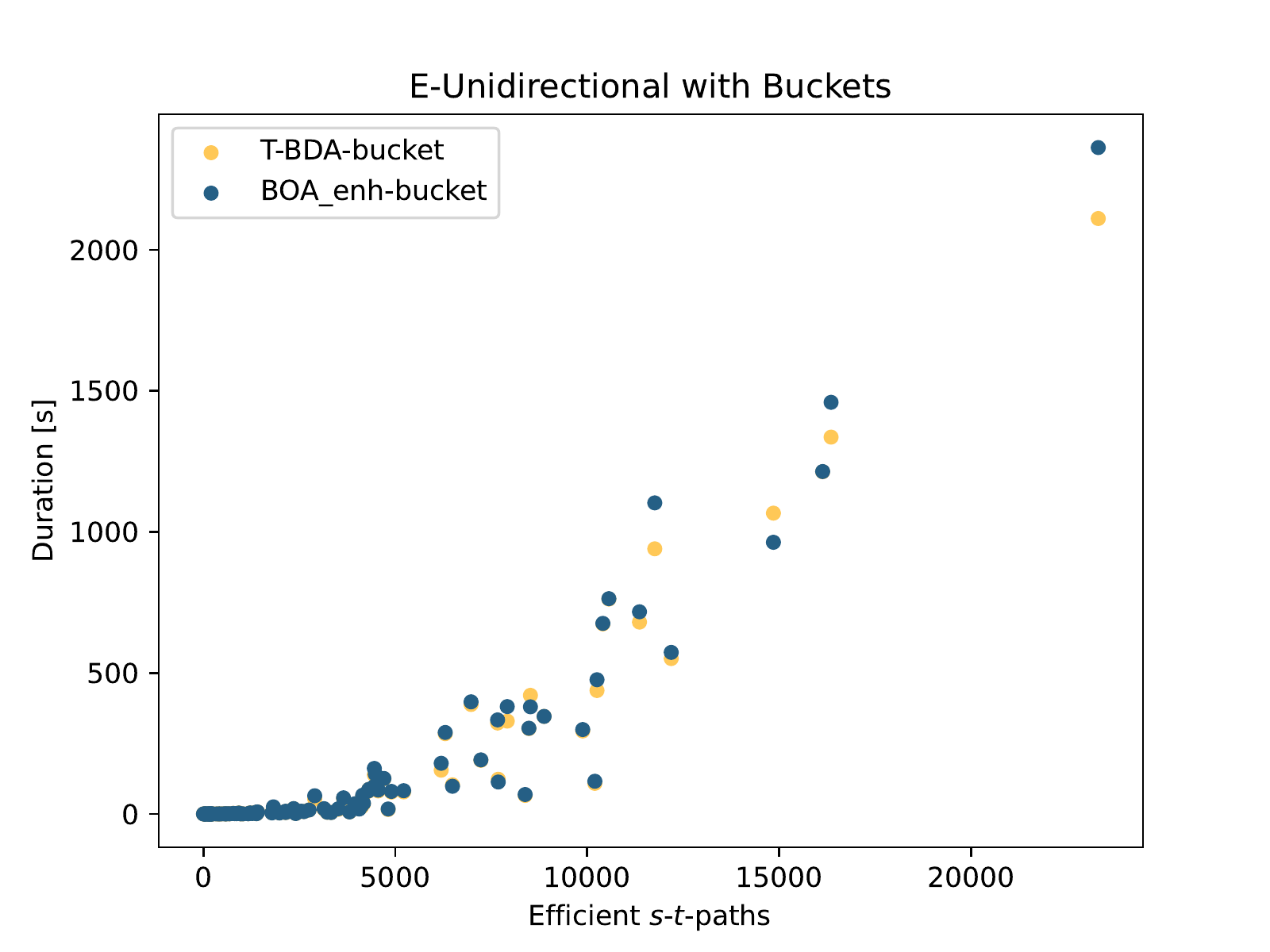}
				\captionof{figure}{}\label{fig:2d-bda-boa-bucket-E}
			\end{minipage}
			\begin{minipage}{.48\linewidth}
				\captionsetup{type=figure}\includegraphics[width=\textwidth]{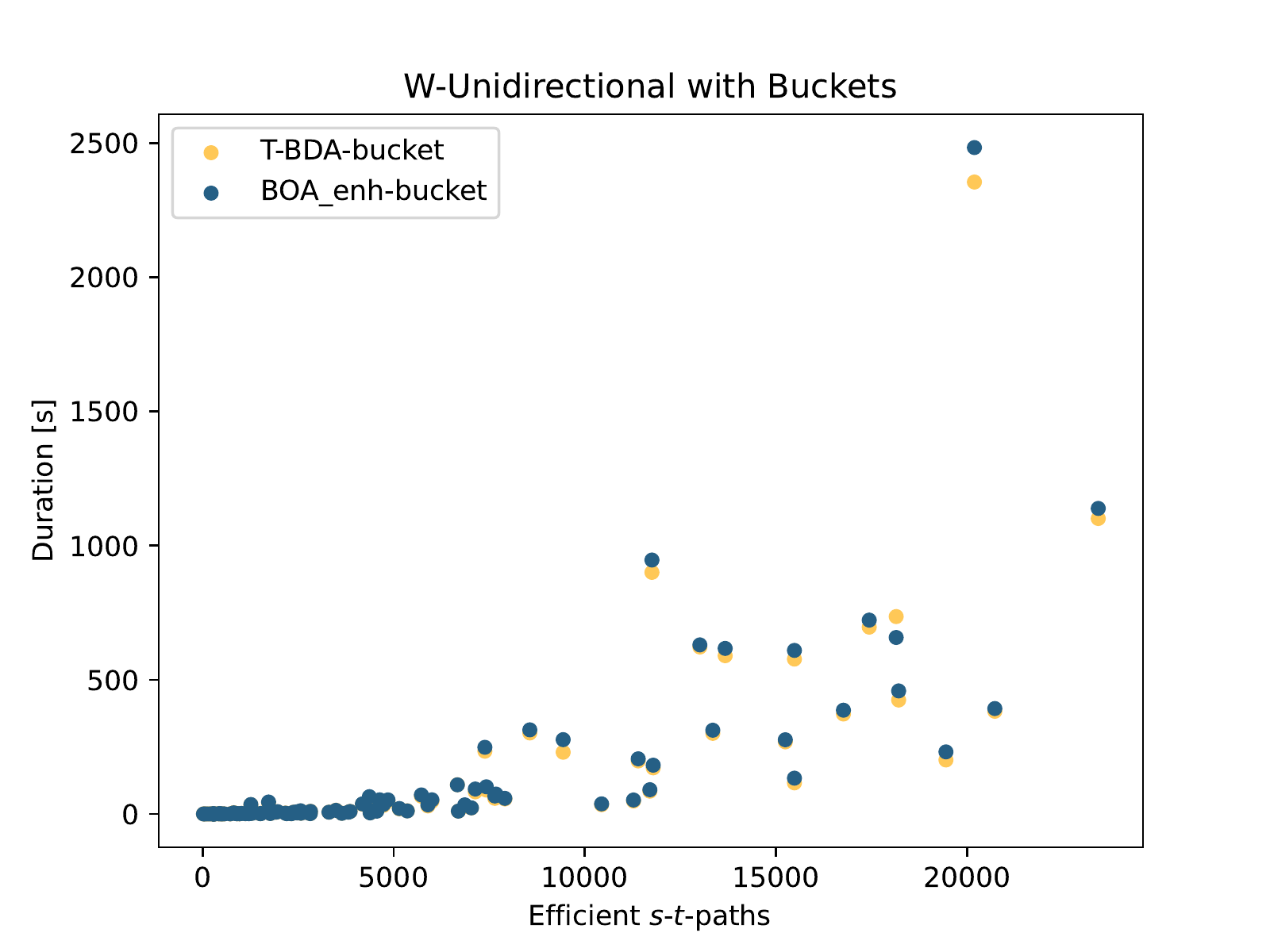}
				\captionof{figure}{}\label{fig:2d-bda-boa-bucket-W}
			\end{minipage}
		\end{figure}
		\begin{figure}[H]
			\begin{minipage}{.48\linewidth}
				\captionsetup{type=figure}\includegraphics[width=\textwidth]{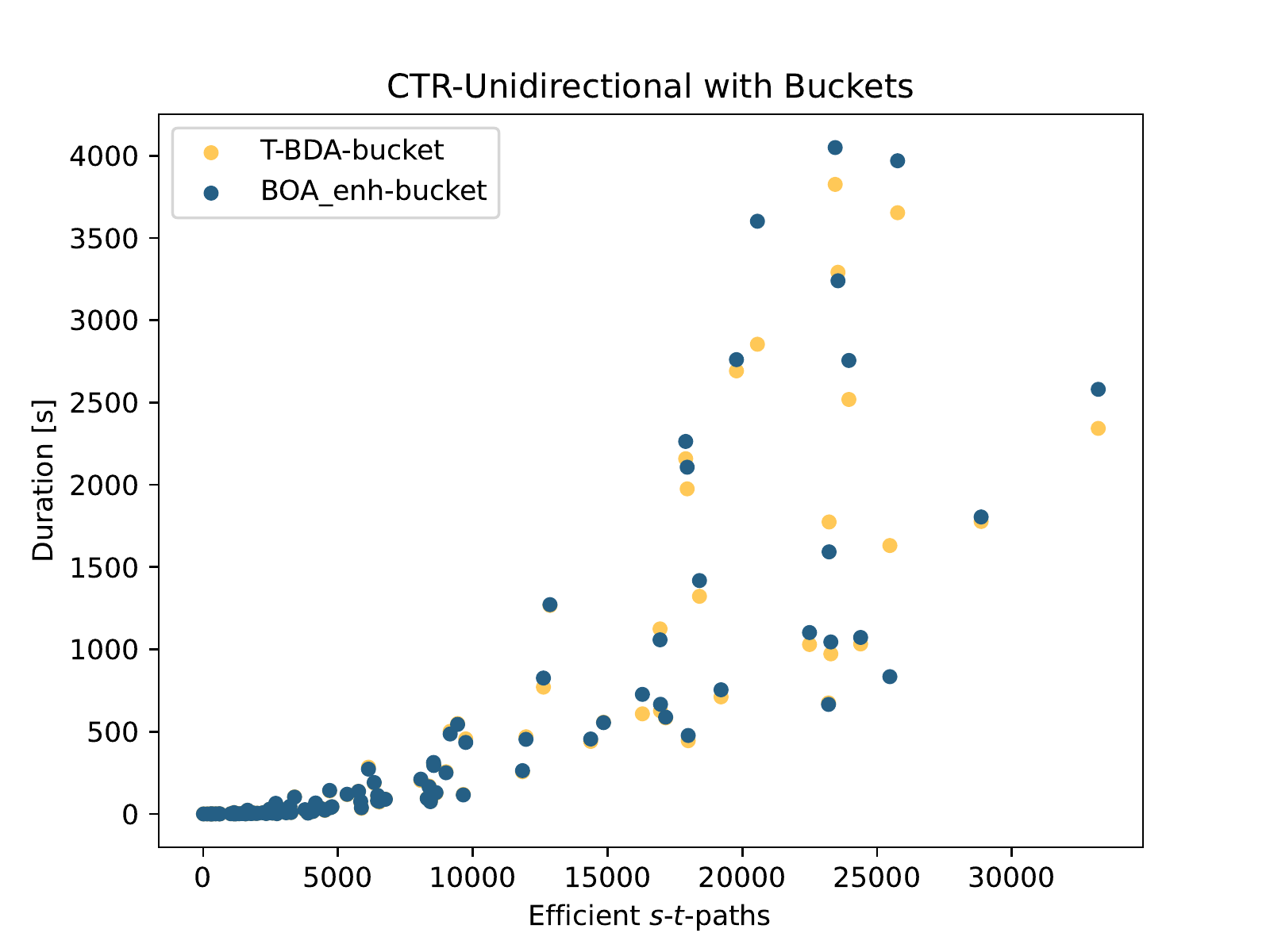}
				\captionof{figure}{}\label{fig:2d-bda-boa-bucket-CTR}
			\end{minipage}
		\end{figure}
	
		\subsection{Graphics -- Bidimensional-Bidirectional-Heaps}
	\begin{figure}[H]
		\begin{minipage}{.48\linewidth}
			\captionsetup{type=figure}\includegraphics[width=\textwidth]{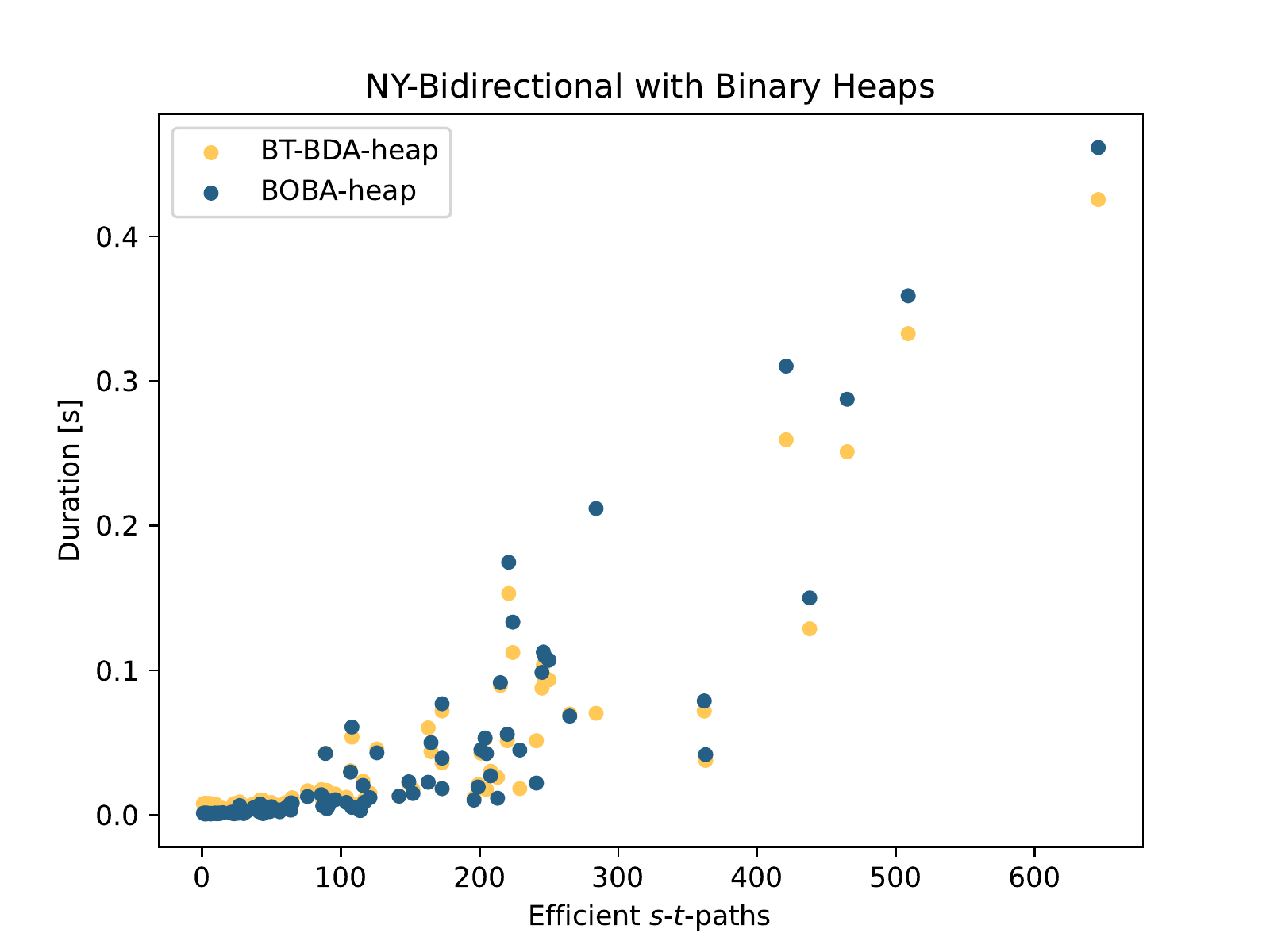}
			\captionof{figure}{}\label{fig:2d-bbda-boba-heap-NY}
		\end{minipage}
		\begin{minipage}{.48\linewidth}
			\captionsetup{type=figure}\includegraphics[width=\textwidth]{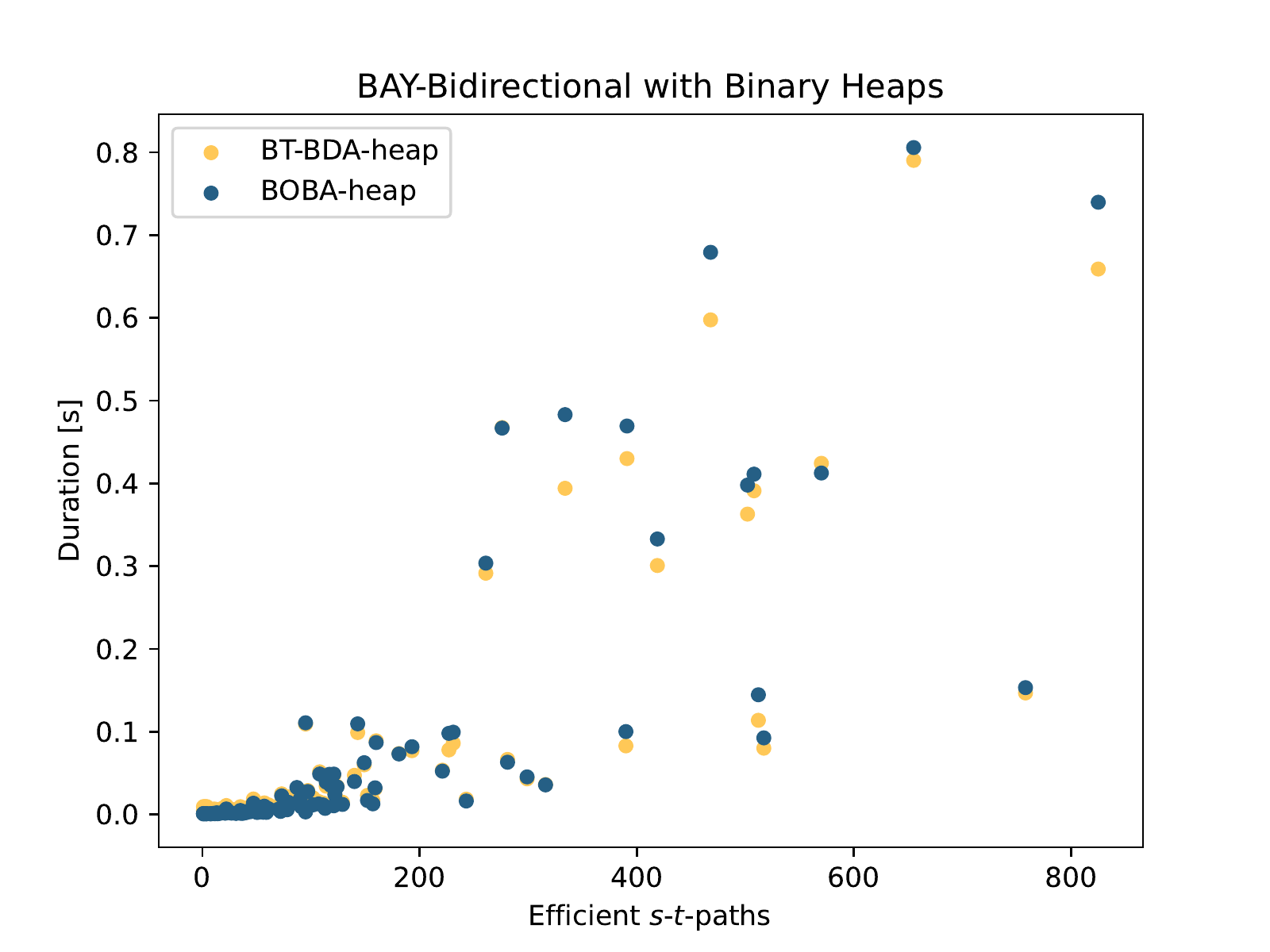}
			\captionof{figure}{}\label{fig:2d-bbda-boba-heap-BAY}
		\end{minipage}
	\end{figure}
	\begin{figure}[H]
		\begin{minipage}{.48\linewidth}
			\captionsetup{type=figure}\includegraphics[width=\textwidth]{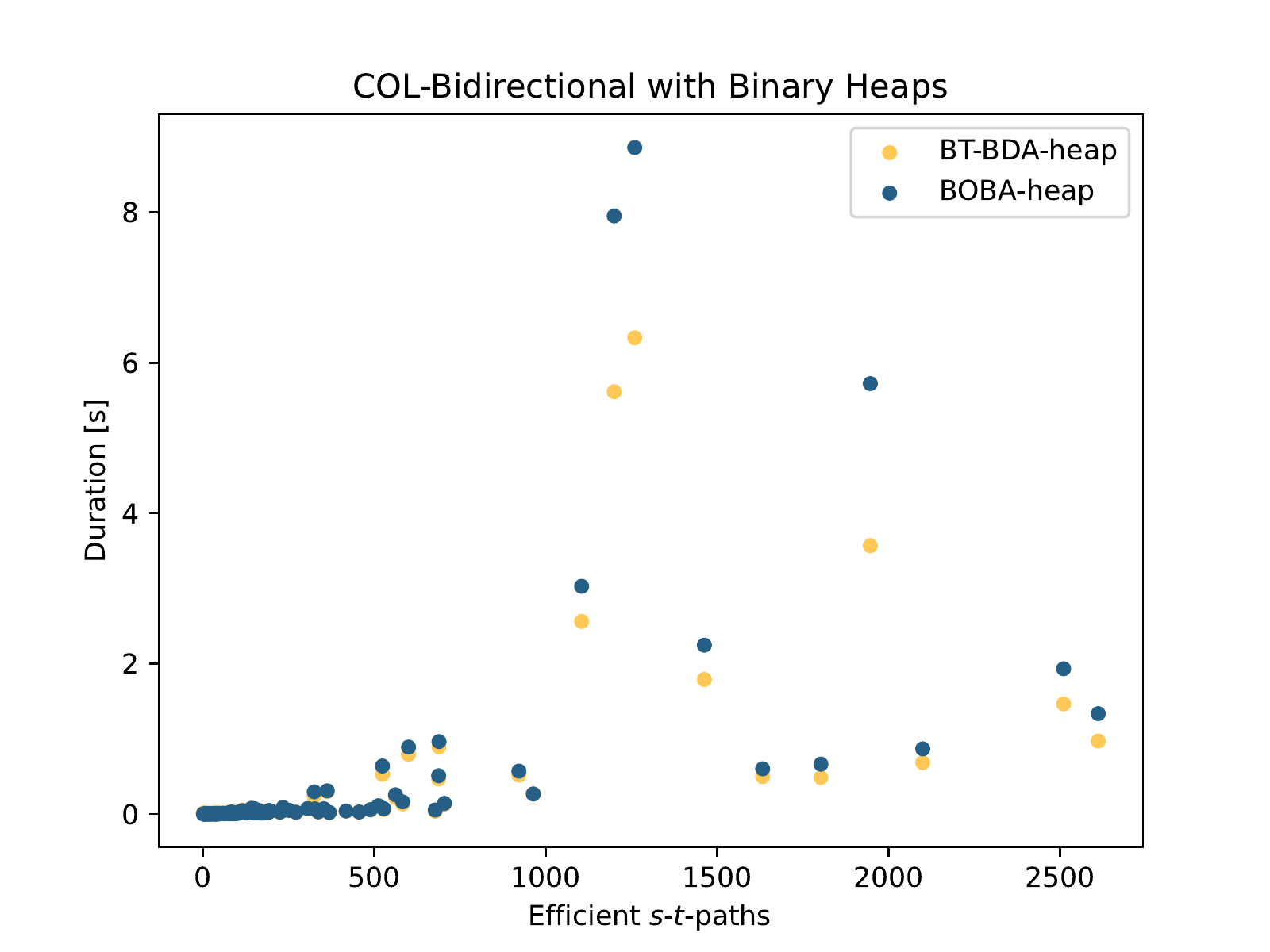}
			\captionof{figure}{}\label{fig:2d-bbda-boba-heap-COL}
		\end{minipage}
		\begin{minipage}{.48\linewidth}
			\captionsetup{type=figure}\includegraphics[width=\textwidth]{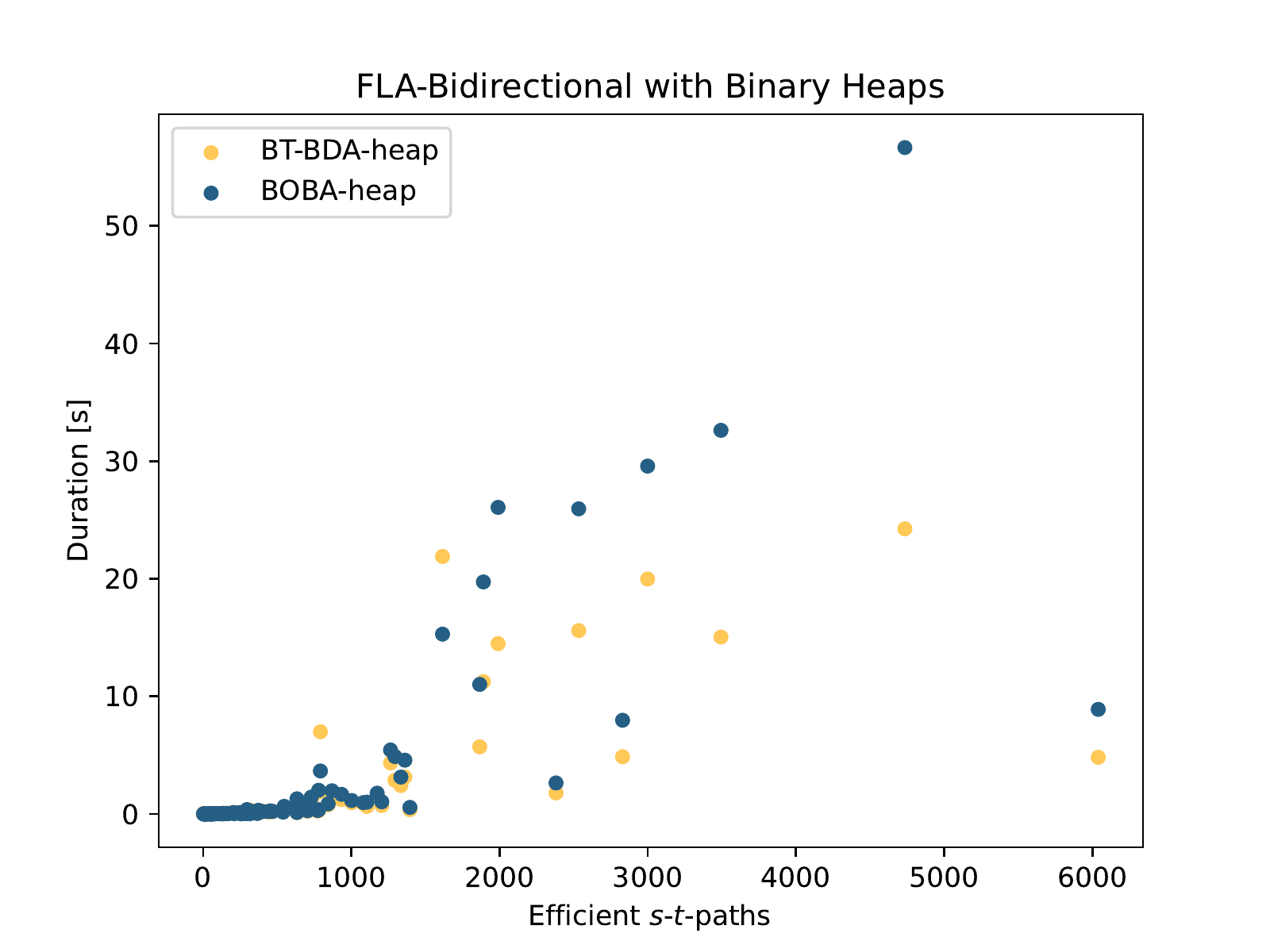}
			\captionof{figure}{}\label{fig:2d-bbda-boba-heap-FLA}
		\end{minipage}
	\end{figure}
	\begin{figure}[H]
		\begin{minipage}{.48\linewidth}
			\captionsetup{type=figure}\includegraphics[width=\textwidth]{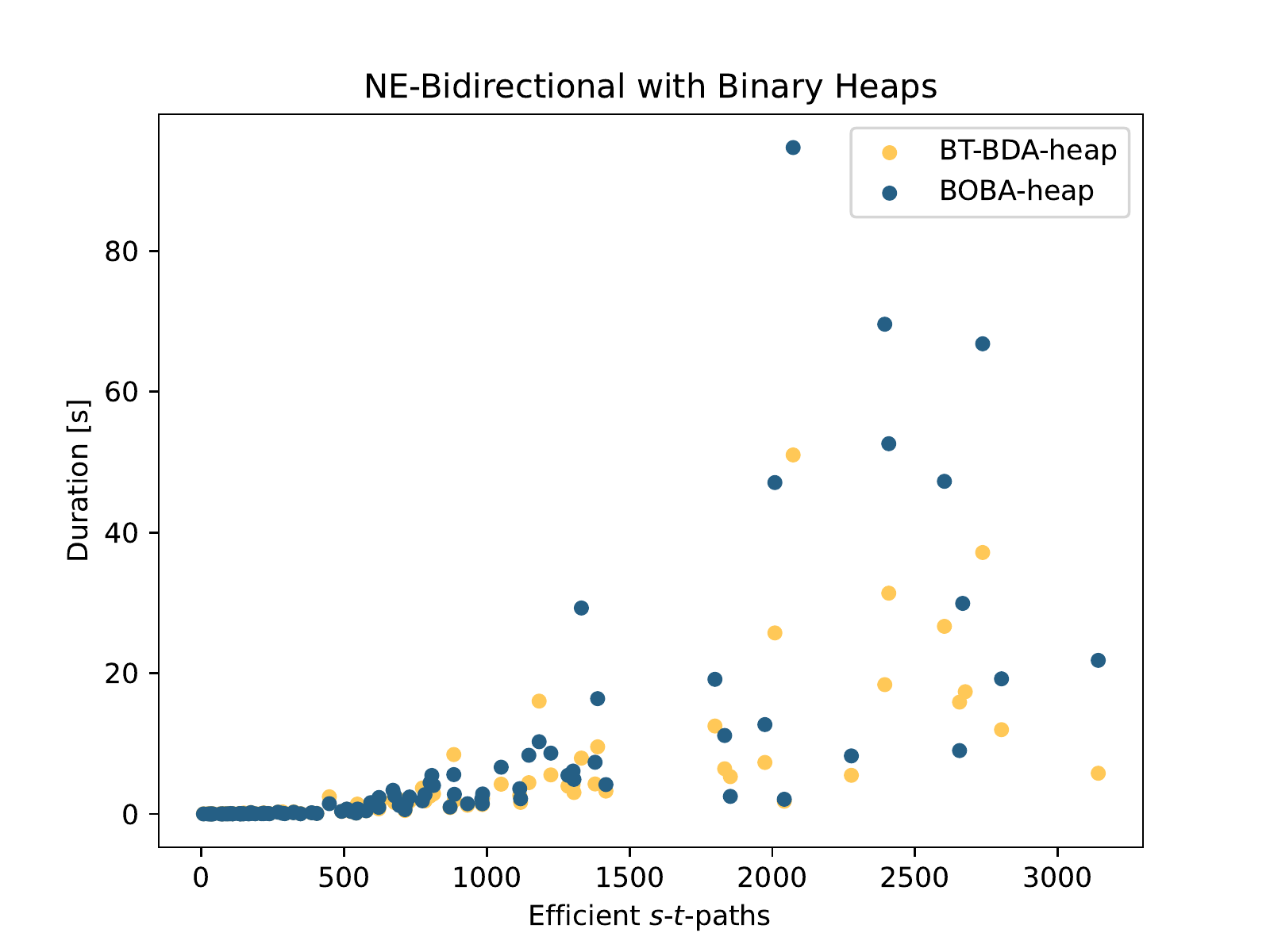}
			\captionof{figure}{}\label{fig:2d-bbda-boba-heap-NE}
		\end{minipage}
		\begin{minipage}{.48\linewidth}
			\captionsetup{type=figure}\includegraphics[width=\textwidth]{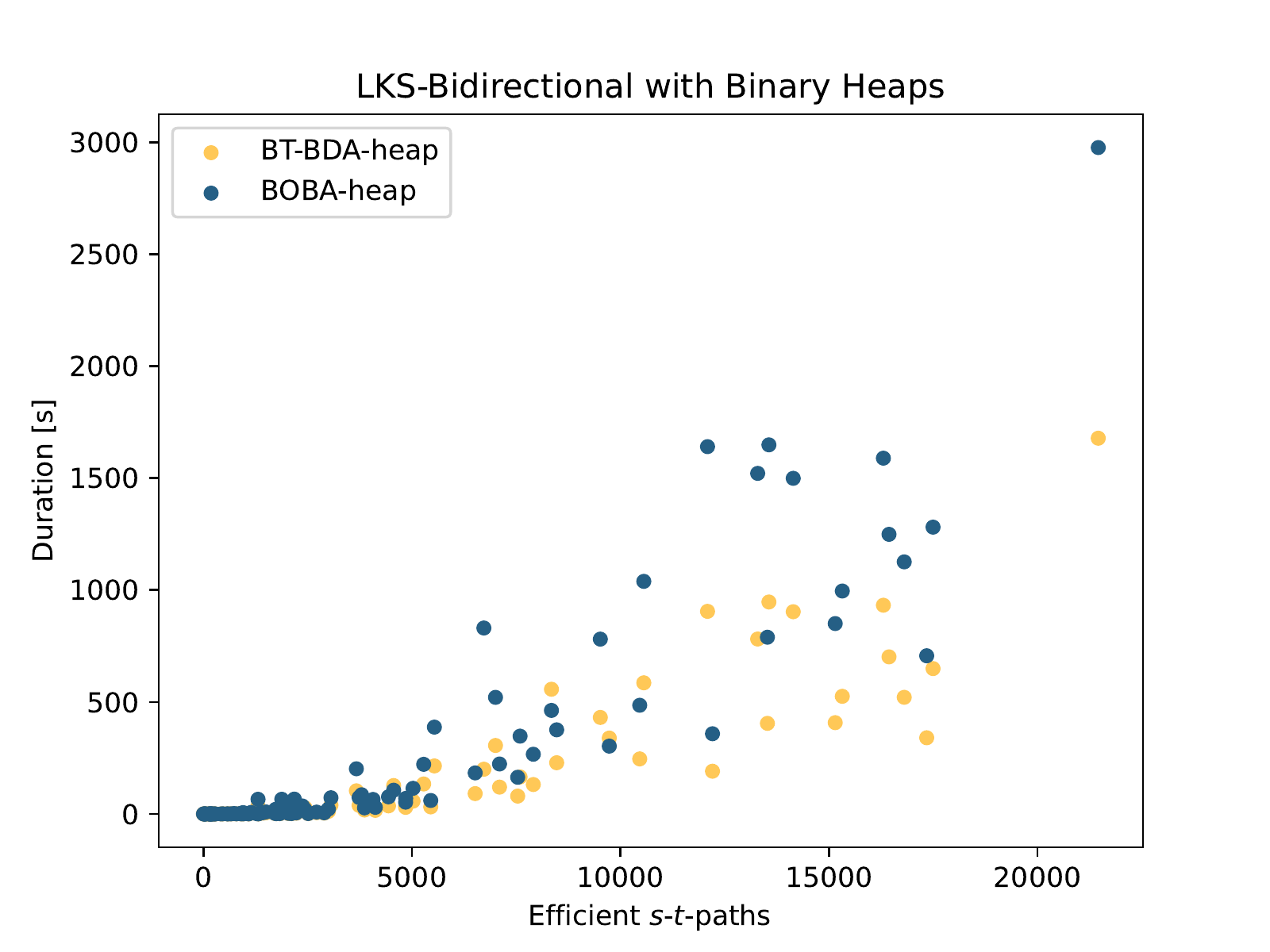}
			\captionof{figure}{}\label{fig:2d-bbda-boba-heap-LKS}
		\end{minipage}
	\end{figure}
	\begin{figure}[H]
		\begin{minipage}{.48\linewidth}
			\captionsetup{type=figure}\includegraphics[width=\textwidth]{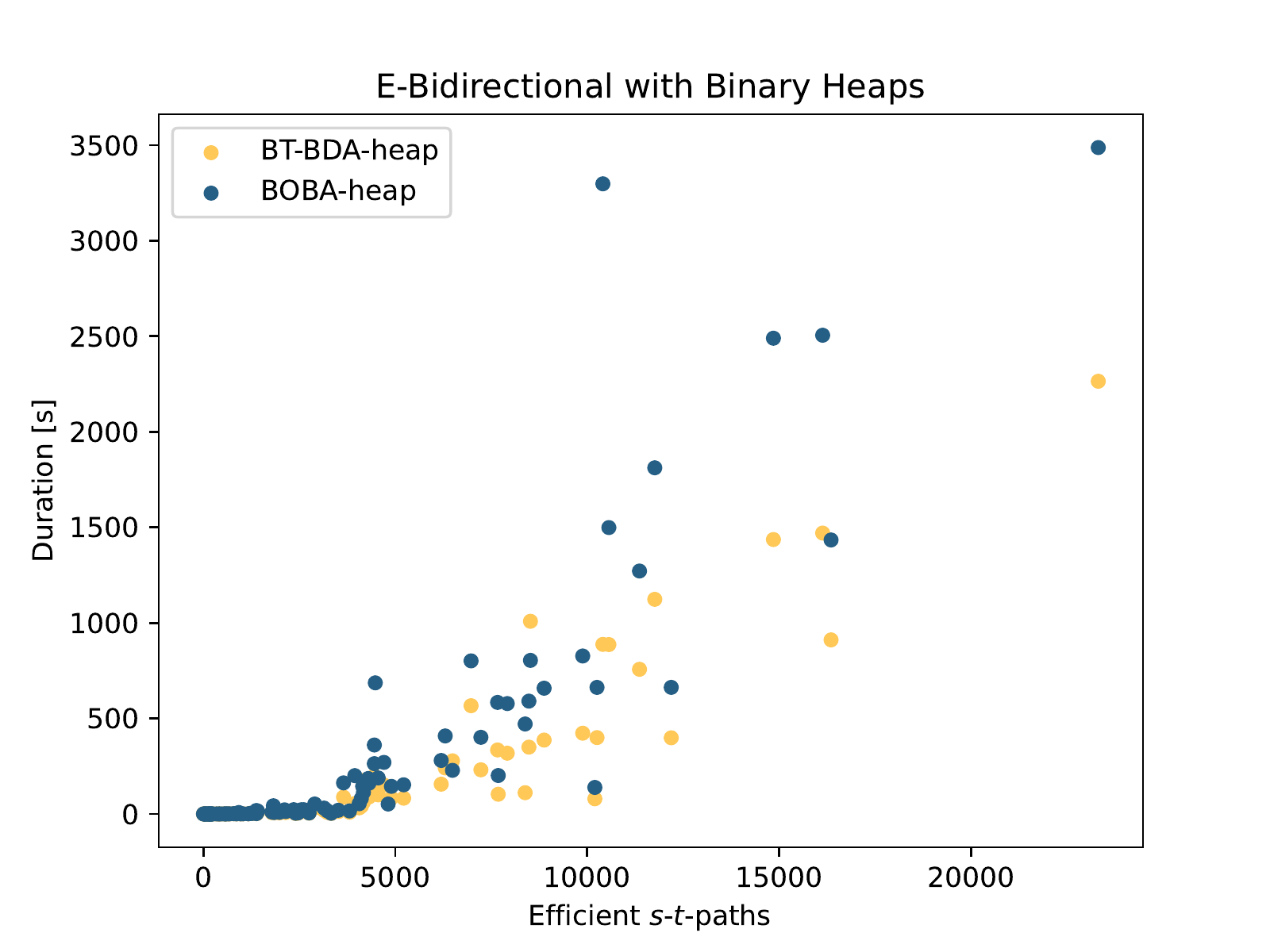}
			\captionof{figure}{}\label{fig:2d-bbda-boba-heap-E}
		\end{minipage}
		\begin{minipage}{.48\linewidth}
			\captionsetup{type=figure}\includegraphics[width=\textwidth]{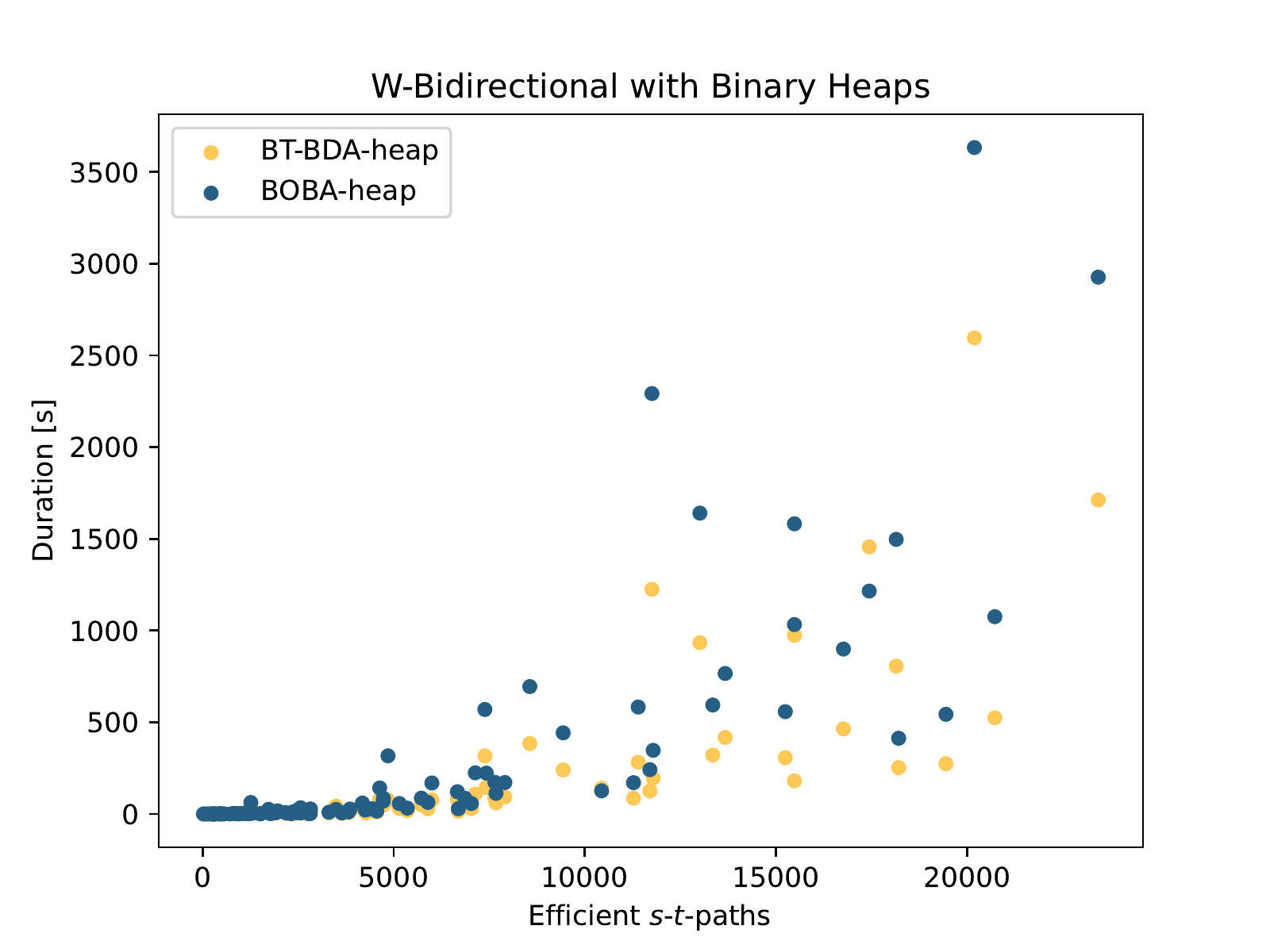}
			\captionof{figure}{}\label{fig:2d-bbda-boba-heap-W}
		\end{minipage}
	\end{figure}
	\begin{figure}[H]
		\begin{minipage}{.48\linewidth}
			\captionsetup{type=figure}\includegraphics[width=\textwidth]{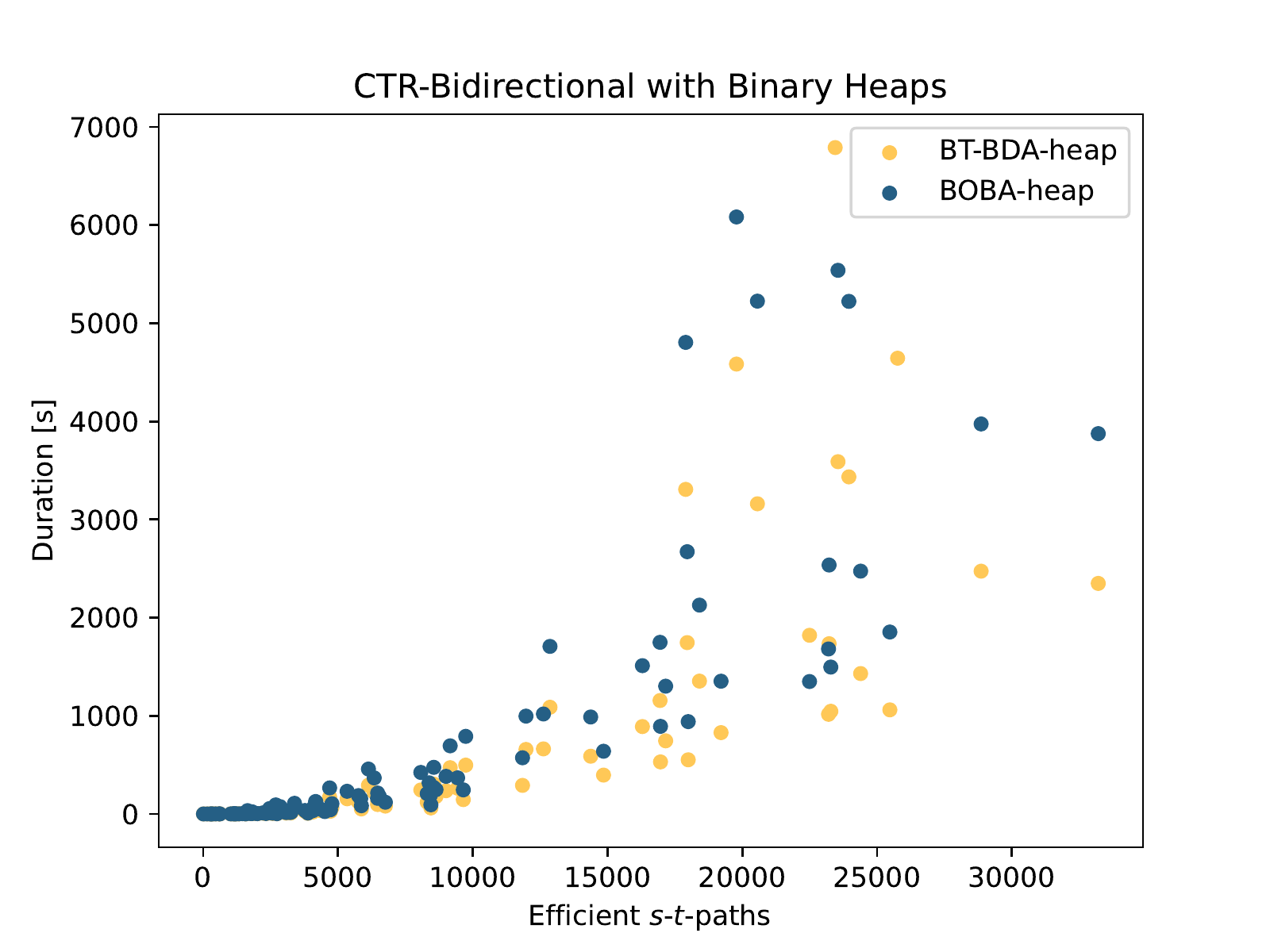}
			\captionof{figure}{}\label{fig:2d-bbda-boba-heap-CTR}
		\end{minipage}
	\end{figure}

		\subsection{Graphics -- Bidimensional-Unidirectional-Heaps}
\begin{figure}[H]
	\begin{minipage}{.48\linewidth}
		\captionsetup{type=figure}\includegraphics[width=\textwidth]{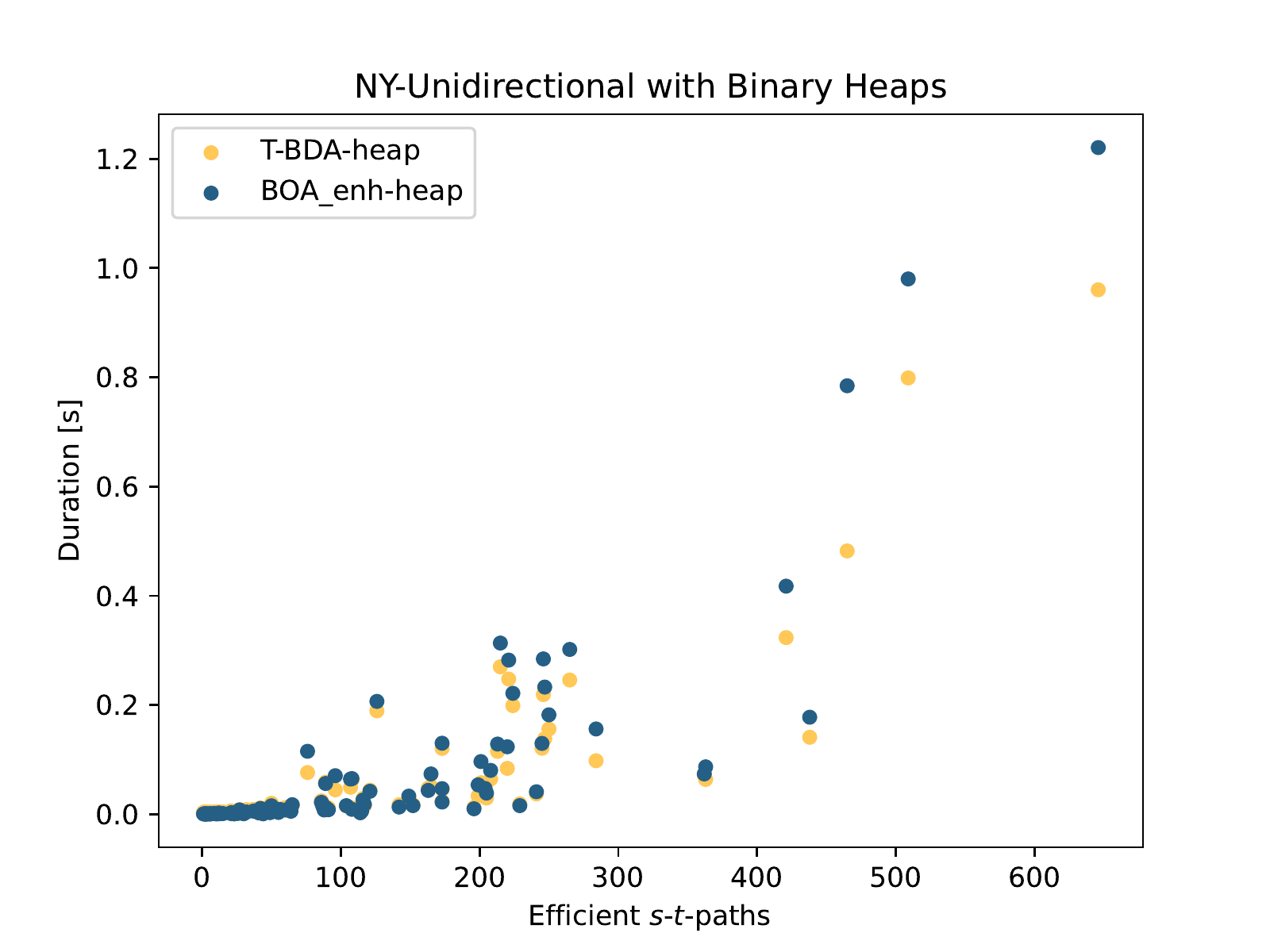}
		\captionof{figure}{}\label{fig:2d-bda-boa-heap-NY}
	\end{minipage}
	\begin{minipage}{.48\linewidth}
		\captionsetup{type=figure}\includegraphics[width=\textwidth]{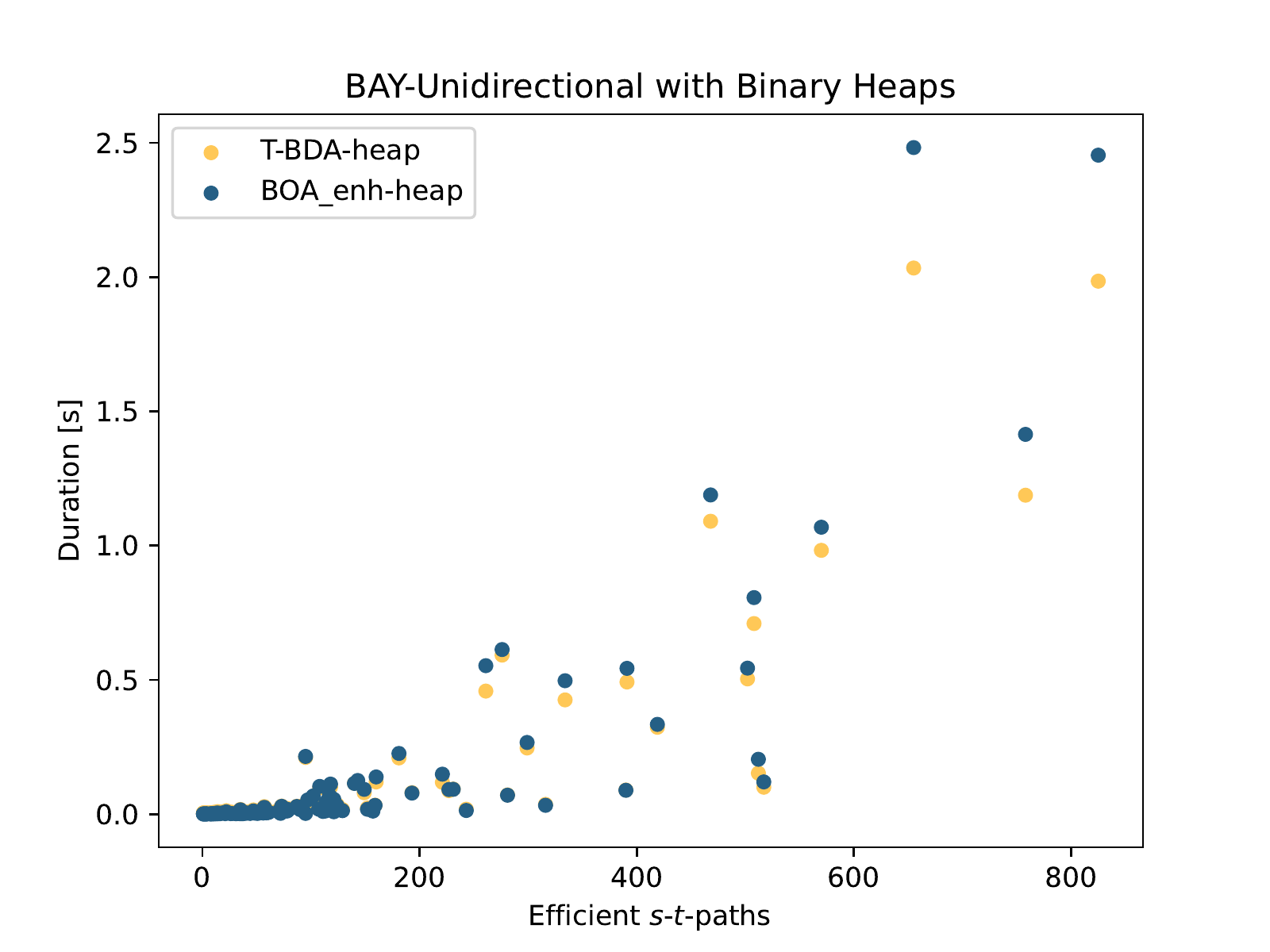}
		\captionof{figure}{}\label{fig:2d-bda-boa-heap-BAY}
	\end{minipage}
\end{figure}
\begin{figure}[H]
	\begin{minipage}{.48\linewidth}
		\captionsetup{type=figure}\includegraphics[width=\textwidth]{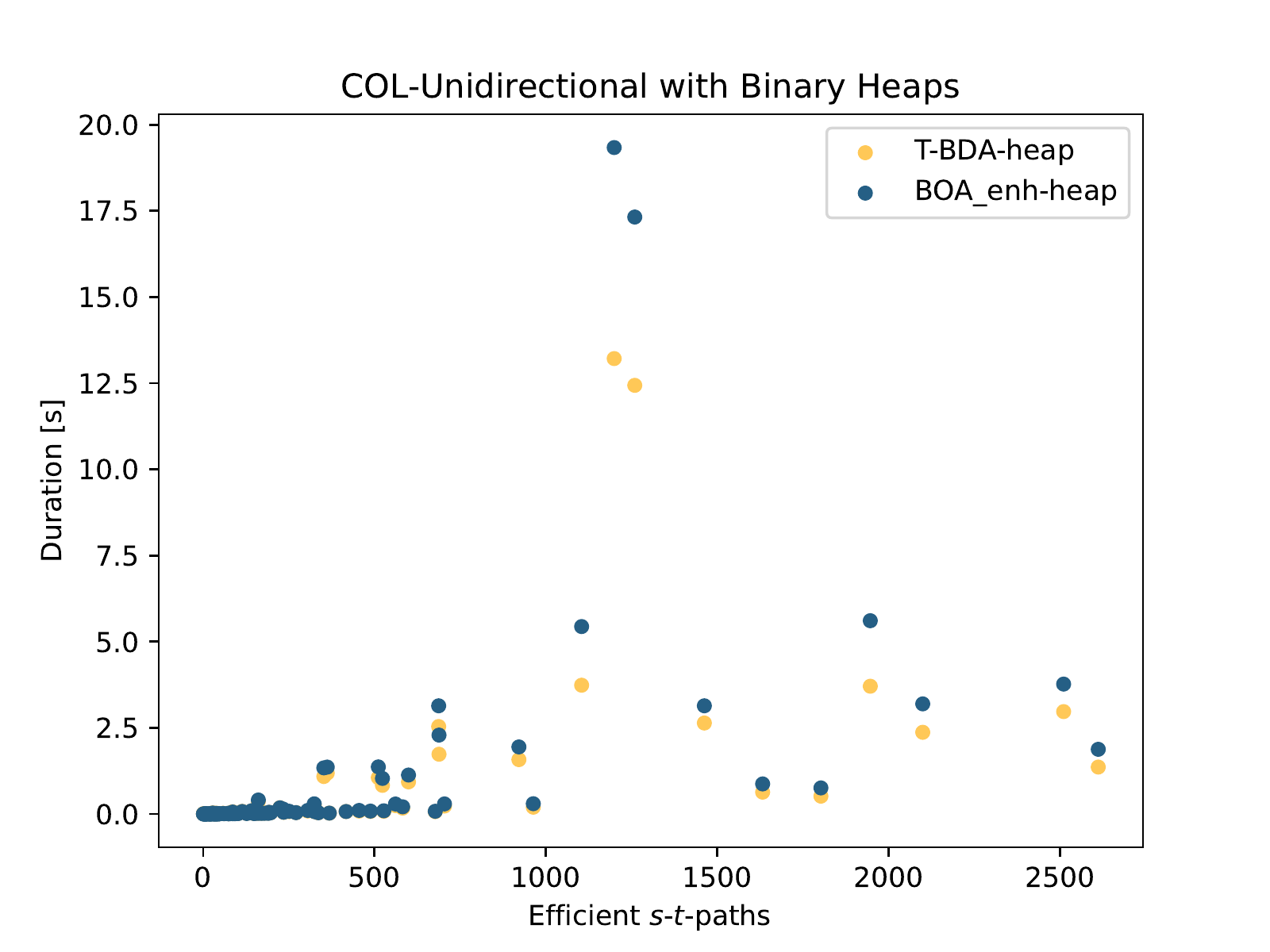}
		\captionof{figure}{}\label{fig:2d-bda-boa-heap-COL}
	\end{minipage}
	\begin{minipage}{.48\linewidth}
		\captionsetup{type=figure}\includegraphics[width=\textwidth]{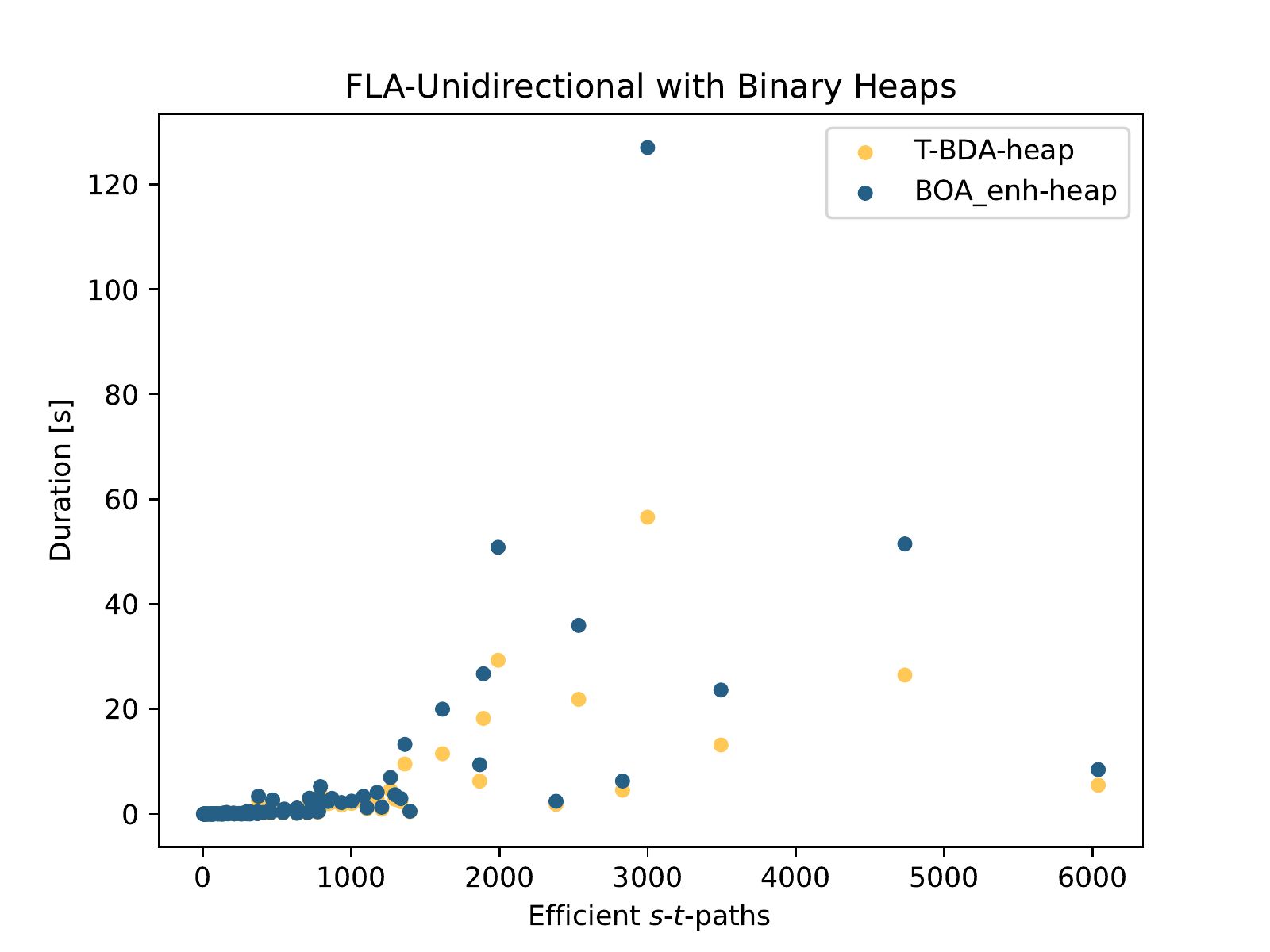}
		\captionof{figure}{}\label{fig:2d-bda-boa-heap-FLA}
	\end{minipage}
\end{figure}
\begin{figure}[H]
	\begin{minipage}{.48\linewidth}
		\captionsetup{type=figure}\includegraphics[width=\textwidth]{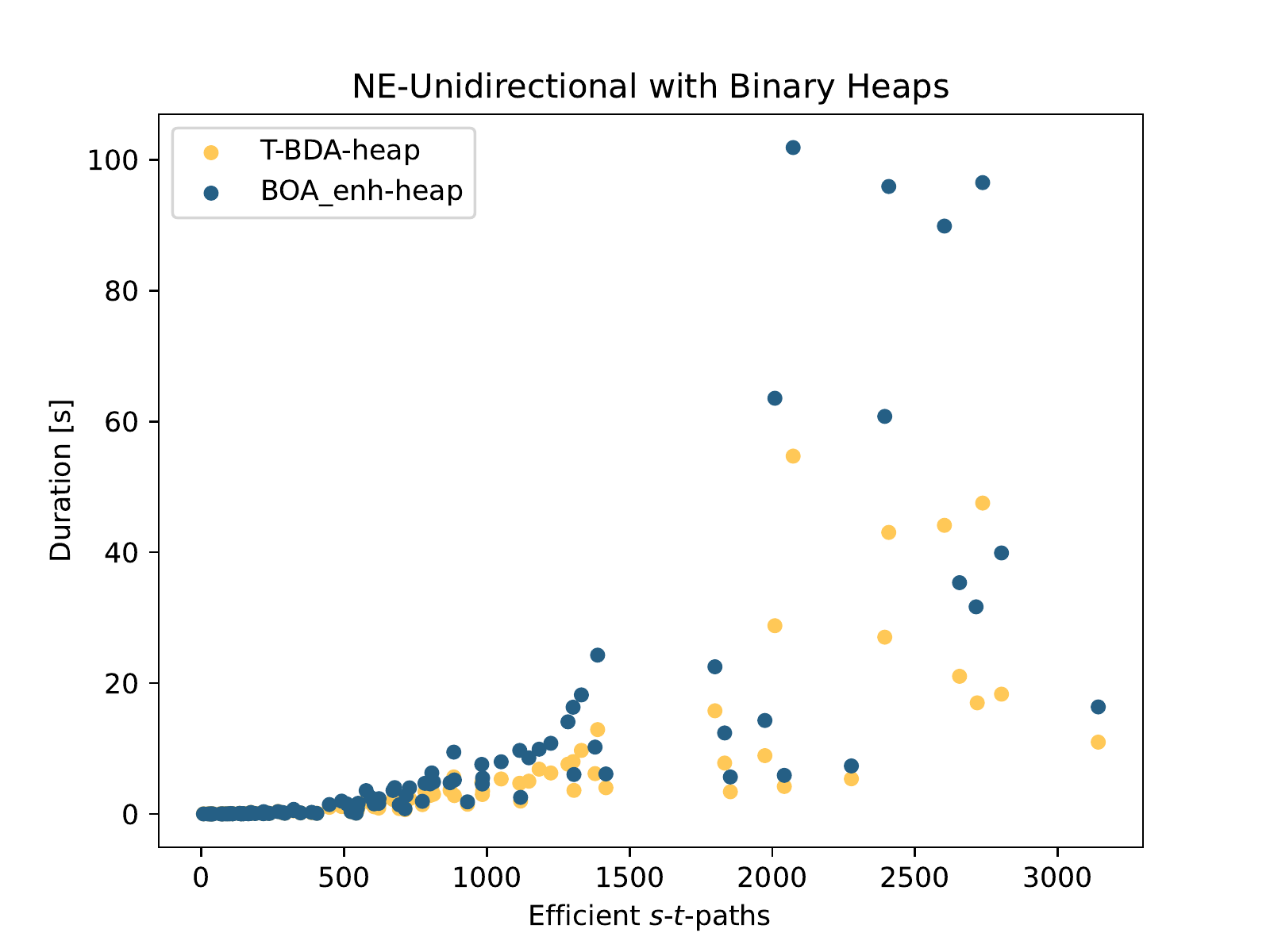}
		\captionof{figure}{}\label{fig:2d-bda-boa-heap-NE}
	\end{minipage}
	\begin{minipage}{.48\linewidth}
		\captionsetup{type=figure}\includegraphics[width=\textwidth]{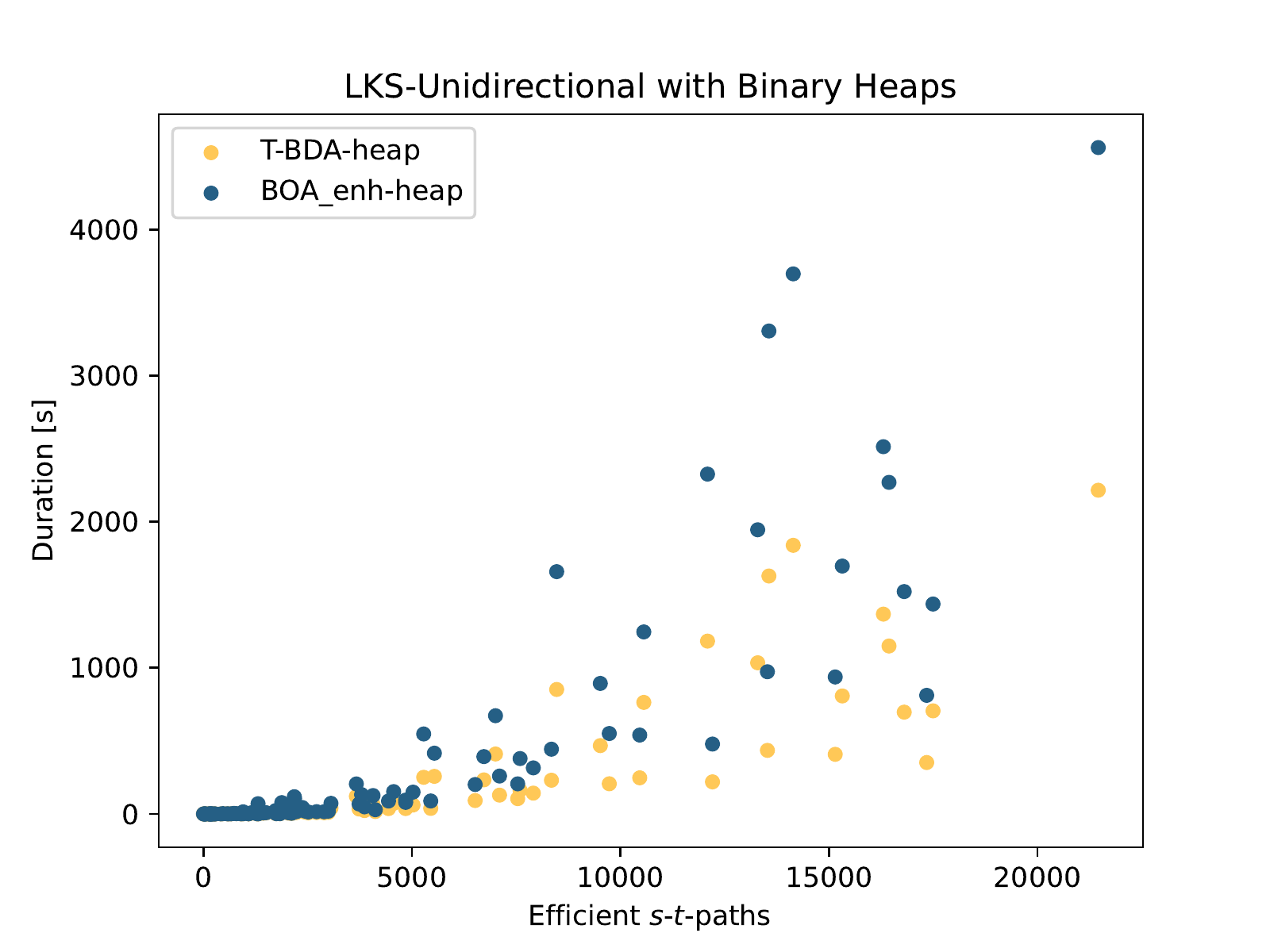}
		\captionof{figure}{}\label{fig:2d-bda-boa-heap-LKS}
	\end{minipage}
\end{figure}
\begin{figure}[H]
	\begin{minipage}{.48\linewidth}
		\captionsetup{type=figure}\includegraphics[width=\textwidth]{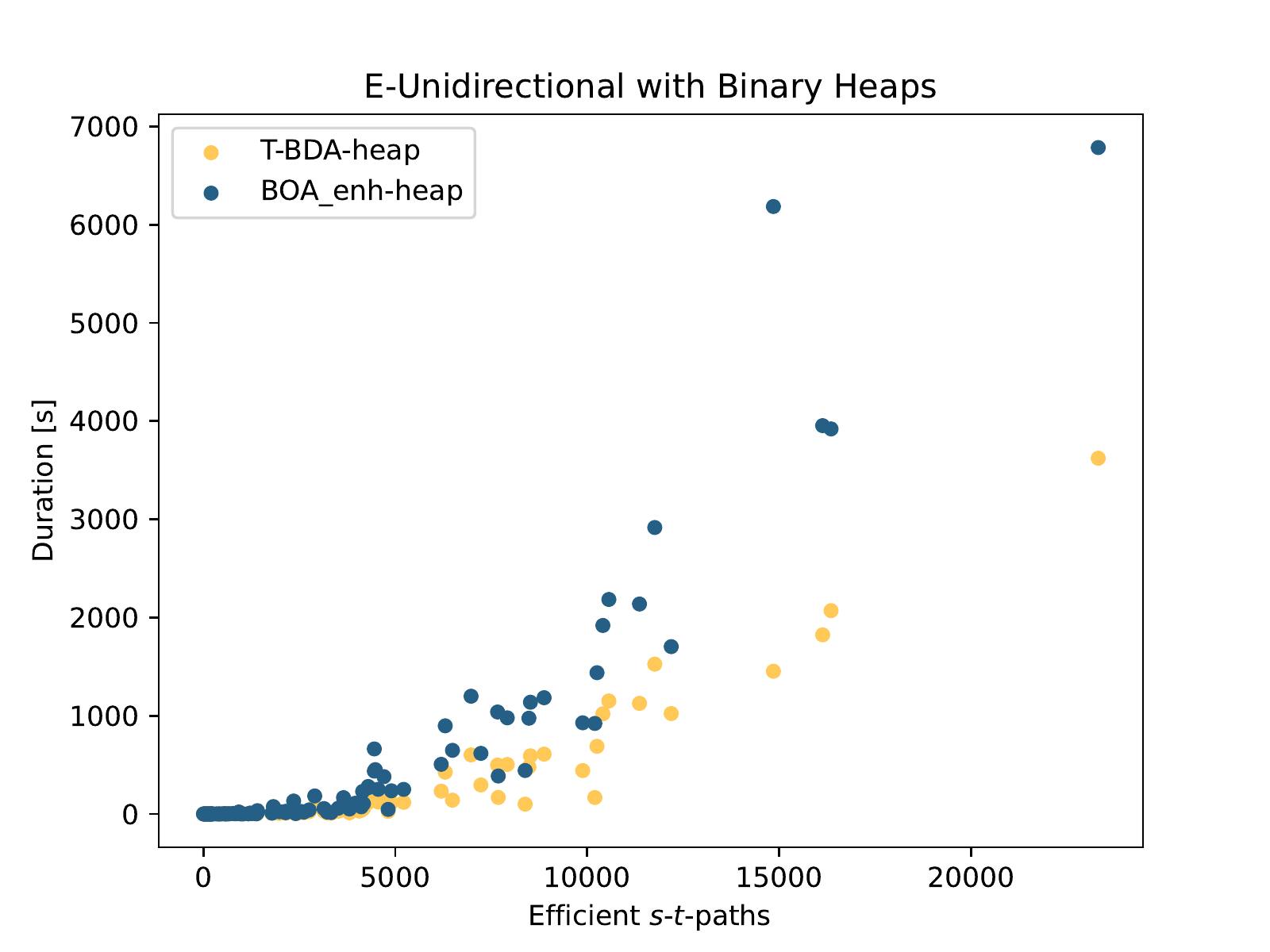}
		\captionof{figure}{}\label{fig:2d-bda-boa-heap-E}
	\end{minipage}
	\begin{minipage}{.48\linewidth}
		\captionsetup{type=figure}\includegraphics[width=\textwidth]{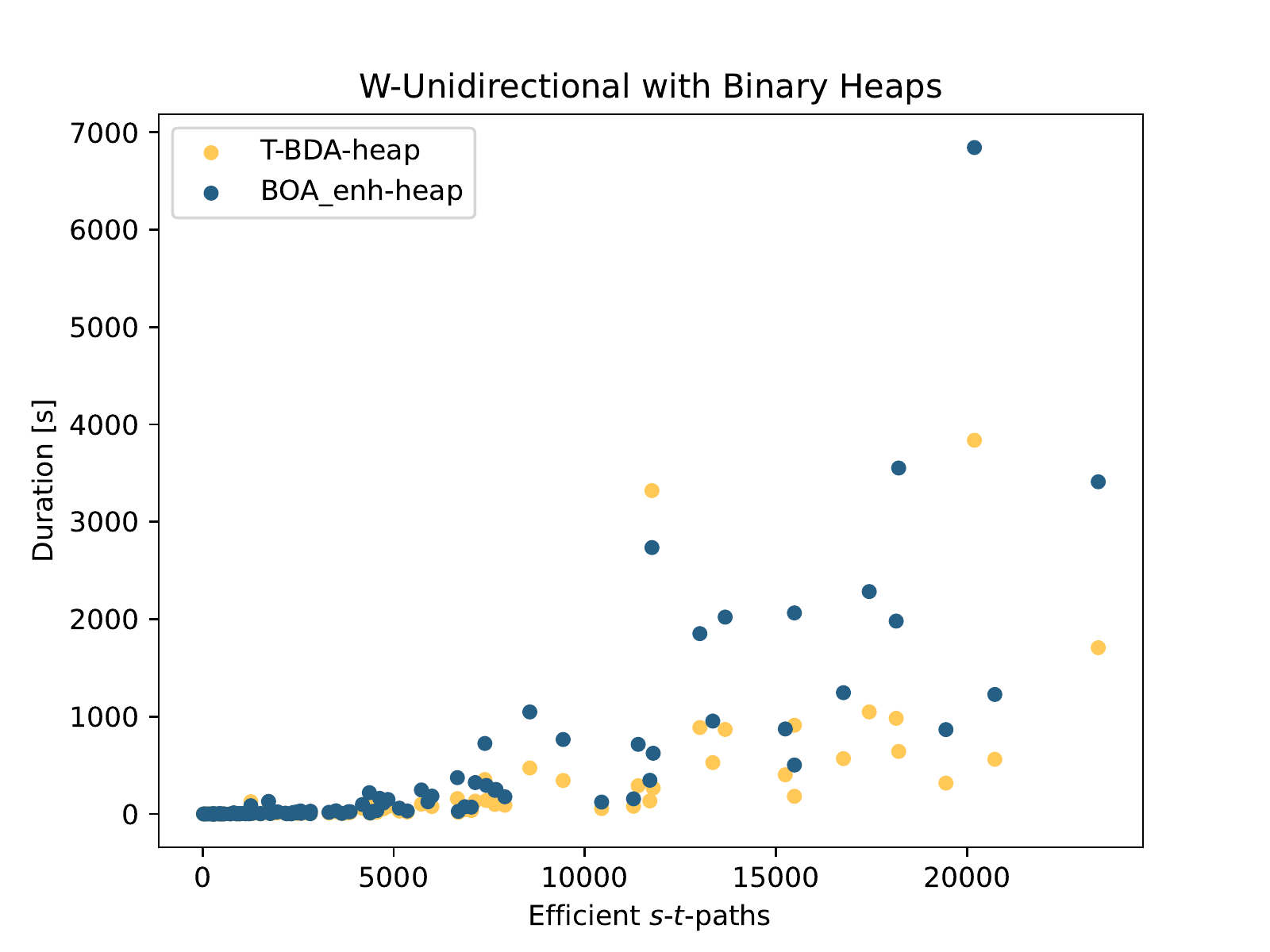}
		\captionof{figure}{}\label{fig:2d-bda-boa-heap-W}
	\end{minipage}
\end{figure}
\begin{figure}[H]
	\begin{minipage}{.48\linewidth}
		\captionsetup{type=figure}\includegraphics[width=\textwidth]{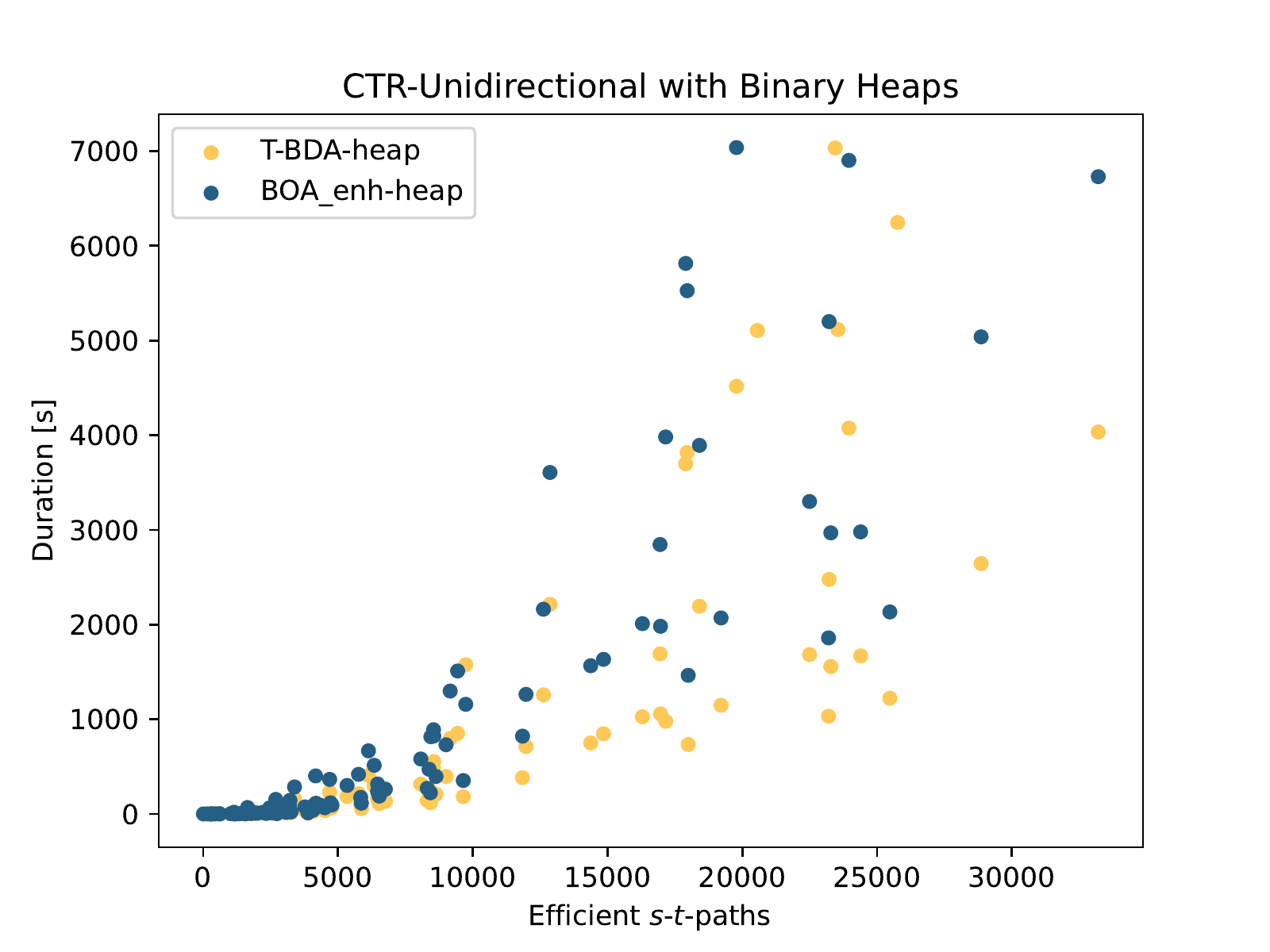}
		\captionof{figure}{}\label{fig:2d-bda-boa-heap-CTR}
	\end{minipage}
\end{figure}
\end{document}